\pgfplotsset{width=10cm,compat=1.9}
\newcommand*{\textlabel}[2]{%
  \edef\@currentlabel{#1}
  \phantomsection
  #1\label{#2}
}
\newcolumntype{L}[1]{>{\raggedright\let\newline\\arraybackslash\hspace{0pt}}m{#1}}
\newcolumntype{C}[1]{>{\centering\let\newline\\arraybackslash\hspace{0pt}}m{#1}}
\newcolumntype{R}[1]{>{\raggedleft\let\newline\\arraybackslash\hspace{0pt}}m{#1}}
\newtheorem{theorem}{Theorem}
\newtheorem{example}{Example}
\newtheorem{proposition}{Proposition}
\newtheorem{observation}{Observation}
\newtheorem{remark}{Remark}
\newtheorem{lemma}{Lemma}
\newtheorem{claim}{Claim}
\newenvironment{manualproposition}[1]{%
  \manualpropositioninner
}{\endmanualpropositioninner}
\newenvironment{manuallemma}[1]{%
  \manuallemmainner
}{\endmanuallemmainner}
\newtheorem{definition}{Definition}
\title{Robust Private Supply of a Public Good}
\author{Wanchang Zhang\thanks{Department of Economics, University of California, San Diego. Email: waz024@ucsd.edu}\thanks{I have  benefited  from discussions with Songzi Du and Joel Sobel.}
}
\date{This Version: Dec 2021\\
First Draft: May 2021}
\begin{document}

\maketitle
\begin{abstract}
   We study the mechanism design problem of  selling a public good to a group of agents by a principal in the correlated private value environment. We assume the  principal  only knows the expectations of the agents' values, but does not know the joint distribution of the values.  The principal evaluates 
 a mechanism by the worst-case expected revenue  over
 joint distributions that are consistent with the known expectations.  We characterize maxmin public good mechanisms among  dominant-strategy incentive compatible and ex-post individually rational mechanisms for the two-agent case and for a special $N$-agent ($N>2$) case.  \\
\vspace{0in}\\
\noindent\textbf{Keywords:} Public good, mechanism design, information design, revenue maximization, correlated private values, max-min, worst-case, dominant strategy incentive compatible, ex-post individual rational, randomization,  deterministic mechanism, excludable good.\\
\vspace{0in}\\
\noindent\textbf{JEL Codes:} C72, D82, D83.\\

\bigskip

\end{abstract}
\newpage
\section{Introduction}
How should a profit-maximizing private sector sell a public good to a group of agents? This question is explored by \cite{guth1986private} in the independent private value environment. However, in the real world, the private sector may not know the joint distribution of the privates values. Instead, the private sector may form an overall  expectation  about the market from, for example, the result of a market survey.   \\
\indent In this paper, we consider the problem of a monopolistic provider (referred to as the \textit{principal}) selling a public good in the correlated private value environment when the principal knows little of the information about the  values of agents. Precisely,  we assume the principal only knows the expectations of the values of the agents, but does not know the joint distribution of the values. The space of mechanisms is restricted to the set of  dominant strategy incentive compatible (DSIC) and ex-post individually rational (EPIR) mechanisms. The principal evaluates a mechanism by its lowest expected revenue across all possible joint distributions consistent with the known expectations, which will be referred to as its \textit{revenue guarantee}. The principal aims to find a mechanism, referred to as a \textit{maxmin public good mechanism},  that maximizes the revenue guarantee.\\
\indent We observe that the principal's problem  can be interpreted as a zero-sum game between the principal and adversarial nature. We will take the saddle point approach for our results. Specifically, we will construct a saddle point, which is a pair of a mechanism and a joint distribution, such that the mechanism maximizes the expected revenue over all DSIC and EPIR public good mechanisms under the joint distribution, and the joint distribution minimizes the expected revenue over all joint distributions consistent with the known expectations under the mechanism. By the properties of a saddle point, the mechanism is a maxmin public good mechanism. The joint distribution will be referred to as a \textit{worst-case joint distribution}. 
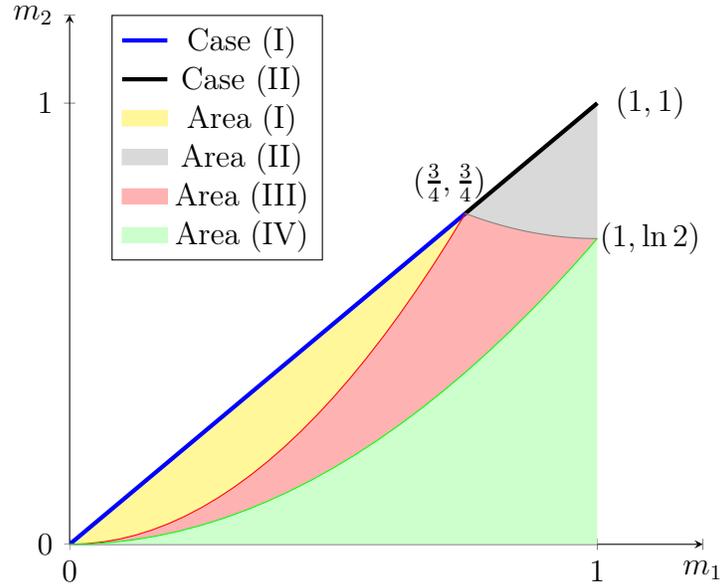
\begin{figure}
\centering
\begin{tikzpicture}
\begin{axis}[
    axis lines = left,
    xmin=0,
        xmax=1.2,
        ymin=0,
        ymax=1.2,
        xtick={0,1,1.2},
        ytick={0,1,1.2},
        xticklabels = {$0$, $1$, $m_1$},
        yticklabels = {$0$, $1$, $m_2$},
        legend style={at={(0.4,1)}}
]
\addplot[blue, ultra thick, name path=D] coordinates {(0,0) (0.75, 0.75)};
\addplot[black, ultra thick,name path=E] coordinates {(1,1) (0.75, 0.75)};
\addplot[domain=0:0.75,color=red, name path=A] {4*x^2/3};
\addplot[domain=0.75:1,color=gray, name path=C]{0.912*(x-1)^2+0.693};
\addplot[domain=0:1,color=green, name path=B] {0.693*x^2};

\path[name path=axis] (axis cs:0,0) -- (axis cs:1,0);
\addplot[area legend, yellow!50] fill between[of=A and D];
\addplot[gray!30] fill between[of=C and E];
\addplot[red!30] fill between[of=A and B, soft clip={domain=0:0.75}];
\addplot[green!20] fill between[of=B and axis];
\addplot[red!30] fill between[of=C and B, soft clip={domain=0.74:1}];
\legend{Case (I), Case (II), , , , Area (I), Area (II), Area (III), Area (IV)};
\node  at (axis cs:  1.1,  .693) {$(1,\ln{2})$};
\node  at (axis cs:  .72,  .82) {$(\frac{3}{4},\frac{3}{4})$};
\node  at (axis cs:  1.1,  1) {$(1,1)$};
\end{axis}
\end{tikzpicture}
\caption{For the symmetric case ($m_1=m_2$), we divide the expectations into two regions:  the thick blue line is \textbf{Case (I)}, and the thick black line is \textbf{Case (II)}. For the asymmetric case ($m_1>m_2$), we use (the red curve) Boundary (I), (the gray curve) Boundary (II) and (the green curve) Boundary (III) (the functional forms of these boundaries will be defined in the main result) to divide the expectations into four regions: the light yellow area is \textbf{Area (I)}, the light gray area (including Boundary (II) ) is \textbf{Area (II)}, the light red area (including Boundary (I)) is \textbf{Area (III)} and the light green area (including Boundary (III)) is \textbf{Area (IV)}. } \label{fig:M1}
\end{figure}\\
\indent Our main result offers a complete characterization of the maxmin public good mechanisms and the worst-case joint distributions when there are two agents. We normalize the support of each agent's value to the unit interval [0,1].  Without loss, we assume that  the expectation of the value of agent 1 ($m_1$) is weakly higher than that of the value of agent 2 ($m_2$). We divide the expectations into two cases and four areas (see Figure 1), and offer a characterization for each case and for each area. Let us first discuss the symmetric cases where the expectations are the same ($m_1=m_2=m$). For the maxmin public good mechanisms, the public good is provided with  a positive probability if and only if  the sum of the reported values exceeds a threshold.  In addition, the provision rule\footnote{This is the probability that the public is provided given a reported value profile. } is separable and  strictly increasing in each agent's  reported value.  Moreover, the public good is provided with a probability of 1 when each agent's reported value is 1. When the symmetric mean is low (Case (I)), the value profiles are divided into four regions and the provision rules are different across these regions. When the symmetric mean is high (Case (II)), there is only one provision rule and it  is linear in each agent's reported value. For the worst-case joint distribution,  the support coincides with the regions where the public good is likely to be provided.  When the symmetric mean is low (Case (I)), its  marginal distribution  is a combination of a uniform distribution and an equal revenue distribution; its conditional distribution is some truncated Pareto distribution with an atom on 1. When the symmetric mean is high (Case (II)), its marginal distribution is a combination of a uniform distribution on [0,1) and an atom on 1; its conditional distribution is some truncated Pareto distribution with an atom on 1.\\
\indent Then let us briefly discuss the asymmetric cases\footnote{We only discuss  maxmin public good mechanisms here, and defer the details of worst-case joint distributions to Section \ref{s6}.} where the expectation of the value of agent 1 is higher than that of the value of agent 2 ($m_1>m_2$). For the maxmin public good mechanisms, the public good is provided with  a positive probability if and only if  the sum of some \textit{weighted} reported values exceeds a threshold. Other properties are similar to those in the symmetric cases (albeit with a different functional form of the provision rule). More specifically, when the expectations are close and low (Area (I)), the maxmin public good mechanism share similar properties with that in Case (I); when the expectations are close and high (Area (II)), the maxmin public good mechanism share similar properties with that in Case (II).  When the  difference between the  expectations are moderate, the value profiles are divided into two regions and the provision rules are different across these regions.    When the difference between the expectations is high  (Area (IV)),  the maxmin public good mechanism  is a \textit{dictatorship} mechanism where the provision rule is determined by agent 1's reported value only\footnote{Equivalently, the weight on agent 2's value is 0.}. Precisely, the public good is provided with a positive probability if and only if the agent 1's reported value exceeds a threshold $w_1\in (0,1]$. In addition, the provision probability is strictly increasing in agent 1's reported value. Moreover, the public good is provided with a probability of 1 when  agent 1's reported value is 1.\\
\indent Now let us discuss the intuitions of  maxmin public good mechanisms. First, the public good will never be provided if the sum of (weighted) values falls short of some threshold for the (a)symmetric cases. This is because the principal exercises monopoly power to raise revenue. Second, the maxmin public good mechanisms require randomization to hedge against uncertainty about the joint distribution. The exact functional form of the randomization is solved by a complementary slackness condition stating that the revenue generated from a value profile is a linear function in the reported values for any value profile in the support of the worst-case joint distribution. Indeed, this condition makes adversarial nature indifferent to any feasible joint distribution whose support is the same  as that of the worst-case joint distribution. To construct the worst-case joint distribution, we first derive  a \textit{weighted virtual value} for our environment so that the expected revenue can be expressed as the inner product of the weighted virtual value and the provision rule. The worst-case joint distribution is solved by a condition requiring the weighted virtual value is 0 for any value profile in the support except for the value profile where each agent's value is 1. The intuition behind is that the principal is indifferent between providing and not providing the public good for these value profiles. Indeed, this condition guarantees that principal is indifferent to  any feasible and monotone mechanism that provides the good with a positive probability only if the value profile is inside the support and provides the good with probability of 1 when each agent's value is 1.\\
\indent These intuitions are useful to study the case when there are $N>2$ agents. For a special $N$-agent case where the symmetric expectation is high,  we characterize the maxmin public good mechanism and the worst-case joint distribution using the above conditions. We find that the provision rule in the maxmin public good mechanism is no longer separable, but admits a simple form: it is a polynomial function of the sum of the reported values. We discuss the main difficulty for general cases  when there are $N>2$ agents in Section \ref{s71}. \\
\indent In addition, we characterize maxmin \textit{deterministic}\footnote{The provision probability of the public good is either 0 or 1.} public good mechanisms for the two-agent case. There are practical concerns for studying deterministic mechanisms. To wit, deterministic mechanisms are easier to understand and more practical than randomized mechanisms in many situations, e.g.,  when the agents do not trust the randomization device. We find that when agent 2's expectation is low, the maxmin deterministic public good mechanism is a dictatorship mechanism: agent 2's reported value is irrelevant and the   public good is provided with probability of 1 if and only if agent 1's reported value is above a threshold. When agent 2's expectation is high, we characterize the class of maxmin deterministic public good mechanisms, including  a \textit{linear mechanism} where  the public good is provided with probability of 1 if and only if the sum of weighted reported value is above some threshold and a \textit{posted price mechanism} where the public good is provided with probability of 1 if and only if each agent's reported value is above some threshold. \\
\indent Lastly, we consider the problem of providing an \textit{excludable} good to $N\ge 2$ agents. In many situations, the principal can restrict access to the good for each agent, i.e., the provision rule can be agent-specific. Examples of excludable goods include movies provided by a movie theater and online courses provided by private educational institutions. We find that   in the maxmin excludable  good mechanism, the provision rule to each agent depends on the agent's reported value only. For the worst-case joint distribution, the marginal distribution for each agent is some equal revenue distribution; the values across agents are independent.\\ 
\indent The remaining of the paper proceeds as follows. Section \ref{s2} provides a literature review.  Section \ref{s3} presents the model. Section \ref{s4} presents preliminary analysis.   Section \ref{s5} characterizes the result for the two-agent symmetric cases. Section \ref{s6} characterizes the  result for the two-agent asymmetric cases. Section \ref{s7} characterizes  maxmin mechanisms for a special $N$-agent case, deterministic maxmin mechanisms for two-agent case and maxmin excludable mechanisms for general $N$-agent cases. Section \ref{s8} is a conclusion. The Appendix A contains details of the characterization of maxmin public good mechanisms and worst-case joint distributions. All proofs are in the Appendix  B and C. 

\section{Literature Review}\label{s2}
This paper is closely related to a group of papers that study robust mechanism design in different models, but all assume that the mechanism designer only knows the expectations of the values and evaluates mechanisms under the worst-case criterion. \cite{che2020distributionally}, \cite{suzdaltsev2020optimal} and \cite{koccyiugit2020distributionally} consider a model of auction design.  \cite{che2020distributionally} shows that a second price auction with an optimal random reserve distribution is a maxmin auction among a class of competitive mechanisms. \cite{suzdaltsev2020optimal} characterizes a linear version of Myerson's optimal auction as a maxmin auction among deterministic DSIC and EPIR mechanisms. \cite{koccyiugit2020distributionally} study, among others, a setting with symmetric  expectations of bidders' values and   characterize a second price auction with a random reserve as a maxmin auction among DSIC and EPIR mechanisms in which only the highest bidder(s) can be allocated. \cite{zhang2021robust} studies a model of bilateral trade and characterizes maxmin trade mechanisms among DSIC and EPIR mechanisms. The maxmin trade mechanism features a fixed commission rate and a uniformly random spread in the symmetric case. \cite{carrasco2018optimal} study the problem of selling a single good to a buyer and characterize maxmin selling mechanisms assuming the seller knows the first $N$ moments of the distribution, which includes known expectations assumption as a special case.\\
\indent In addition, this paper is related to papers that assume the mechanism designer has limited information about the joint distribution of values and evaluates the mechanism under the worst-case criterion. Precisely, a set of papers assume the mechanism designer knows the marginal distribution of values. \cite{carroll2017robustness} studies   a model of multi-dimensional screening and finds separate screening is a maxmin mechanism.  \cite{he2020correlation} study a model of auction and characterize, among other results, optimal random reserve for the second price auction under certain conditions. \cite{zhang2021correlation} also studies the model of auction and shows that, under different conditions,  a second price auction with uniformly distributed random reserve is a maxmin auction among DSIC and EPIR mechanisms for the two-bidder case and a second price auction with $Beta (\frac{1}{N-1},1)$ distributed random reserve is a maxmin auction among DSIC and EPIR mechanisms in which only the highest bidder(s) can be allocated for the $N$-bidder case.\\
\indent More broadly, this  paper is related to a strand of papers that focus on the case where the agents may have arbitrary high-order beliefs about each other unknown to the mechanism designer, e.g., \cite{bergemann2005robust}, \cite{chung2007foundations}, \cite{chen2018revisiting}, \cite{yamashita2018foundations}, \cite{bergemann2016informationally,bergemann2017first,bergemann2019revenue},  \cite{du2018robust},\cite{brooks2021optimal}, \cite{libgober2021informational}. 
\section{Model}\label{s3}
We consider an environment where a  single non-excludable public good is privately provided to $N\ge 2$ risk-neutral agents by a monopolistic provider\footnote{We will refer to the monopolistic provider as the principal.}. The cost of providing the public good is normalized to 0.  We denote by $I = \{1, 2, . . . , N\}$ the set of agents.  Each agent $i$ has private information about her
valuation for the public good, which is modeled as a random variable $v_i$ with cumulative distribution
function $F_i$\footnote{We do not make any assumption on the distributions of these random variables. It could be continuous, discrete, or any mixtures.}.  We use $f_i(v_i)$ to denote
the density of $v_i$ in the distribution $F_i$ when $F_i$ is differentiable at $v_i$; We use $Pr_i(v_i)$ to denote
the probability of $v_i$ in the distribution $F_i$ when $F_i$ has a probability mass at $v_i$.  We denote $V_i$ as the support of  $F_i$. We assume each $V_i$ is bounded. We assume the agents have a common support. Then without loss of generality,  we assume $V_i = [0,1]$ as a normalization.  The joint support of all $F_i$
is denoted as $V:= \times_{i=1}^N V_i=[0,1]^N$ with a typical value profile $v$. The joint distribution is denoted as $F$. We denote agent $i$'s opponent value profiles as $v_{-i}$, i.e., $v_{-i}\in V_{-i}:=\times_{j\neq i}V_j$.\\
\indent The principal only knows the expectation $m_i$ of the private value of each agent $i$  as well as the support,  but does not know the joint distribution of the values \footnote{That is, except for the expectations, the designer  know neither the marginal distributions nor the correlation structure.}.  Formally, let $\mathbf{m}:=(m_1,m_2,\cdots, m_N)$, and we denote by
$$\Pi_N(\mathbf{m}) = \{
\pi \in \Delta V : \forall i \in I, \int v_i \pi(v)dv = m_i\}$$
the collection of such joint distributions. We shall drop the subscript $N$ when there is no ambiguity. \\
\indent The
principal  seeks a  dominant strategy
incentive compatible (DSIC) and ex-post individually
rational (EPIR) mechanism.  A direct mechanism \footnote{It is without loss of generality to restrict attention to direct mechanisms since we focus on dominant strategy mechanisms and therefore the Revelation Principle holds.} $(q,t)$ is defined as a provision rule $q : V \to [0, 1]$ and a payment function $t : V \to \mathbb{R}^N$ where $t(v)=(t_1(v), t_2(v),\cdots, t_N(v))$. With a little abuse of notations, each agent submits a report $v_i\in V_i$ to the auctioneer. Upon receiving the report profile $v=(v_1, v_2, \cdots, v_N)$,  the public good is provided with a probability of  $q(v)\in [0,1]$ and each agent $i$ pays $t_i(v)\in \mathbb{R}$ . The set of all DSIC and EPIR mechanisms is denoted as $\mathbb{D}_N$ when there are $N$ agents. We
shall drop the dependency of $\mathbb{D}_N$  on the number of agents $N$ when there
is no confusion. Formally, the mechanism $(q,t)$ satisfies the following constraints:
\[v_iq(v)-t_i(v)\ge v_iq(v_i',v_{-i})-t_i(v_i',v_{-i})\quad \forall i, v,v_i' \tag{DSIC}\label{dsic}\]
\[v_iq(v)-t_i(v)\ge 0 \quad \forall i,v \tag{EPIR}\label{epir}\]
\indent We are interested in the principal’s expected revenue in the dominant strategy equilibrium in which each agent truthfully reports her valuation of the
public good. Then the expected revenue of a DSIC and EPIR mechanism $(q,t)$ when the joint distribution is $\pi$ is $U((q,t),\pi)\equiv \int_{v\in V}\pi(v)\sum_{i=1}^Nt_i(v) dv$. The principal evaluates each such mechanism $(q,t)$ by its worst-case expected revenue over the uncertainty of joint distributions that are consistent with the known expectations. Formally, the principal evaluates a mechanism $(q,t)$ by $GR((q,t))=\inf_{\pi\in\Pi(\mathbf{m})}\int_{v\in V}\pi(v)\sum_{i=1}^Nt_i(v) dv$, referred to as $(q,t)$'s \textit{revenue guarantee}.   The principal's goal is to find a mechanism   with the maximal revenue guarantee among DSIC and EPIR mechanisms. Formally, the principal aims to find a mechanism $(q^*,t^*)$, referred to as a \textit{maxmin public good mechanism},  that solves the following problem:
\[\sup_{(q,t)\in \mathbb{D}}GR((q,t))\tag{MPGM}\label{mpgm}\]
\indent We observe that the  problem (\ref{mpgm}) can be interpreted as a two-player sequential zero-sum
game. The two players are the principal and adversarial nature. The principal first chooses
a mechanism $(q,t)\in \mathbb{D}$. After observing the principal’s choice of
the mechanism, adversarial nature chooses a joint distribution $\pi\in \Pi(m)$. The principal’s
payoff is $U((q,t),\pi)$, and adversarial nature’s payoff is $-U((q,t),\pi)$. Now instead
of solving directly for such a subgame perfect equilibrium we can solve for a Nash equilibrium
$((q^*,t^*),\pi^*)$ of the simultaneous move version of this zero-sum game, which corresponds to a saddle point of the payoff functional $U$, i.e.,$$U((q^*,t^*), \pi) \ge U((q^*,t^*),\pi^*) \ge U((q,t),\pi^*) $$
for any $(q,t)$ and any $\pi$. The properties of a saddle point imply that the principal’s
equilibrium strategy in the simultaneous move game, $(q^*,t^*)$, is also his maxmin strategy (i.e. his
equilibrium strategy in the subgame perfect equilibrium of the sequential game). For our main result, we will explicitly construct a saddle point $((q^*,t^*),\pi^*)$ for each given expectation vector $\mathbf{m}$ for the two-agent case. 
\section{Preliminary Analysis}\label{s4}
\indent  We  use the following proposition to simplify the problem: its proof is standard
but included in the Appendix B for completeness.
\begin{proposition}[Revenue Equivalence]\label{p1}
\textit{Maxmin public good mechanisms have the following properties:\\1. $q(\cdot,v_{-i})$ is nondecreasing in $v_i$ for all $v_{-i}$}.\\2. $t_i(v_i,v_{-i})=v_iq(v_i,v_{-i})-\int_0^{v_i}q(s,v_{-i})ds$.
\end{proposition}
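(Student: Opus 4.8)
The plan is to prove the two parts separately: Part 1 from \eqref{dsic} alone, and Part 2 from \eqref{dsic} and \eqref{epir} together with the observation that, among DSIC and EPIR mechanisms sharing a fixed provision rule, the one that normalizes the zero type's payment to $0$ has the largest revenue guarantee. Throughout I fix an agent $i$ and an opponent profile $v_{-i}$ and abbreviate $q(v_i):=q(v_i,v_{-i})$ and $t_i(v_i):=t_i(v_i,v_{-i})$.

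For Part 1, I would take $v_i>v_i'$ in $[0,1]$ and add the two \eqref{dsic} inequalities obtained by having type $v_i$ mimic $v_i'$ and type $v_i'$ mimic $v_i$:
\[
v_iq(v_i)-t_i(v_i)\ge v_iq(v_i')-t_i(v_i'),\qquad v_i'q(v_i')-t_i(v_i')\ge v_i'q(v_i)-t_i(v_i).
\]
Summing and cancelling the payment terms gives $(v_i-v_i')\bigl(q(v_i)-q(v_i')\bigr)\ge0$, i.e. $q(\cdot,v_{-i})$ is nondecreasing. This step uses only \eqref{dsic}, so it holds for every mechanism in $\mathbb{D}$.

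For Part 2, let $u_i(v_i):=v_iq(v_i)-t_i(v_i)$ be the indirect utility. By \eqref{dsic}, $u_i(v_i)\ge u_i(v_i')+(v_i-v_i')q(v_i')$ for all $v_i,v_i'$, so $u_i$ is the pointwise supremum of affine functions with slopes in $[0,1]$; hence $u_i$ is convex and $1$-Lipschitz on $[0,1]$, so it is absolutely continuous and differentiable almost everywhere. At any point $v_i$ of differentiability, the inequality $u_i(v_i')\ge u_i(v_i)+(v_i'-v_i)q(v_i)$ shows that $q(v_i)$ is a subgradient of $u_i$ at $v_i$, hence $u_i'(v_i)=q(v_i)$. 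Integrating, $u_i(v_i)=u_i(0)+\int_0^{v_i}q(s,v_{-i})\,ds$, and substituting back yields $t_i(v_i,v_{-i})=v_iq(v_i,v_{-i})-\int_0^{v_i}q(s,v_{-i})\,ds+t_i(0,v_{-i})$; then \eqref{epir} at $v_i=0$ gives $t_i(0,v_{-i})=-u_i(0)\le0$.

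Finally I would show the constant can be taken to be $0$. Given $(q,t)\in\mathbb{D}$, keep $q$ and define $\tilde t_i(v_i,v_{-i}):=v_iq(v_i,v_{-i})-\int_0^{v_i}q(s,v_{-i})\,ds$ for all $i$. By Part 1, $q(\cdot,v_{-i})$ is nondecreasing, so $(q,\tilde t)$ satisfies \eqref{dsic}; its indirect utility is $\int_0^{v_i}q(s,v_{-i})\,ds\ge0$, so \eqref{epir} holds too, and $(q,\tilde t)\in\mathbb{D}$. Since $t_i(0,v_{-i})\le0$ for all $i,v_{-i}$ and $t$ already has the envelope form above, $\sum_i\tilde t_i(v)\ge\sum_i t_i(v)$ for every $v$, hence $U((q,\tilde t),\pi)\ge U((q,t),\pi)$ for every $\pi\in\Pi(\mathbf{m})$ and therefore $GR((q,\tilde t))\ge GR((q,t))$. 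Thus in searching for a maxmin mechanism we lose nothing by imposing $t_i(0,v_{-i})=0$, which is exactly the claimed payment formula. The only non-routine point is the envelope step — extracting the integral representation of $u_i$ from convexity and absolute continuity — and that is entirely standard.
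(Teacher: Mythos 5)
Your proof is correct and follows the same two-step structure as the paper: monotonicity of $q(\cdot,v_{-i})$ from adding the two DSIC inequalities, then the envelope formula followed by a normalization of the constant $t_i(0,v_{-i})$. Part~1 is essentially identical to the paper's argument. Where you differ, and in both cases for the better, is in Part~2. First, you actually justify the integration step: you recognize $u_i$ as a pointwise supremum of affine functions with slopes $q(v_i')\in[0,1]$, so $u_i$ is convex and $1$-Lipschitz, hence absolutely continuous, and integrating $u_i'=q$ is legitimate; the paper's squeeze argument establishes $u_i'=q$ at points of differentiability but passes straight to the integral representation without supplying the absolute-continuity premise. Second, you prove the normalization $t_i(0,v_{-i})=0$ as a genuine WLOG step: you construct $\tilde t$, verify $(q,\tilde t)\in\mathbb{D}$, and show $\sum_i\tilde t_i(v)\ge\sum_i t_i(v)$ pointwise, hence $GR((q,\tilde t))\ge GR((q,t))$. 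The paper instead asserts that lowering $U_i(0,v_{-i})$ to zero makes the value of the problem ``strictly greater,'' which is an overreach: the worst-case $\pi$ need not place mass near the relevant $v_{-i}$, so in general one only gets a weak improvement in the revenue guarantee. Your weak inequality is exactly what the WLOG reading of the proposition requires, so your version of the argument is the tighter one.
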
 
Proposition \ref{p1} is a version of  \cite{myerson1981optimal}. The provision rule is monotonic and the payment rule of the maxmin public good mechanisms can be characterized by the provision rule only.\\
\indent Now  let us  consider the problem that  fixing any joint distribution $\pi$, the principal  designs an optimal mechanism $(q,t)$. For exposition, we assume $\pi$ is differentiable. We denote the density of value profile $v=(v_1,v_2,\cdots, v_N)$ as $\pi(v)$. We define $\pi_i(v_i):=\int_{[0,1]^{N-1}}\pi(v_i,v_{-i})dv_{-i}$. We define $\pi_i(v_{-i}):=\int_{0}^1\pi(v_i,v_{-i})dv_i$.  We denote the density of $v_i$ conditional on $v_{-i}$ as $\pi_i(v_i| v_{-i})$, the cumulative  distribution function of $v_i$ conditional on $v_{-i}$ as $\Pi_i(v_i|v_{-i}): = \int_{s_i\le v_i} \pi_i(s_i|v_{-i})ds_i$. We define $\Pi_i(v_i,v_{-i})\equiv \pi(v_{-i})\Pi_i(v_i|v_{-i})$. An direct implication of Proposition \ref{p1} is that the expected revenue of $(q,t)$ under the joint distribution $\pi$ is $$E[\sum_{i=1}^Nt_i(v)]=\int_vq(v)\Phi(v)dv$$
where $$\Phi(v)= 
\pi(v)\sum_{i=1}^Nv_i-\sum_{i=1}^N[\pi_i(v_{-i})-\Pi_i(v_i,v_{-i})] $$
We refer to $\Phi(v)$  as \textit{the weighted  virtual value}\footnote{Note $\Phi(v)=\pi(v)\sum_{i=1}^N(v_i-\frac{1-\Pi_i(v_i|v_{-i})}{\pi_i(v_i|v_{-i})})$ when $\pi(v)$ is not 0. Here $\phi(v):= \sum_{i=1}^N(v_i-\frac{1-\Pi_i(v_i|v_{-i})}{\pi_i(v_i|v_{-i})})$ is the virtual value in our environment, which is the sum of   the conditional virtual value of each agent. However, it turns out the weighted virtual values is more convenient for our analysis because it is well defined even for $\pi(v)=0$. Henceforth we directly work with the weighted virtual values. }  when the value profile is $v$. Thus the problem of designing an optimal mechanism given a joint distribution can be viewed as maximizing the product of the provision rule and the weighted virtual values given that the provision rule is feasible and satisfies the monotonicity condition stated in Proposition \ref{p1}.  \\
\indent Next let us consider the problem that fixing any mechanism $(q,t)$, adversarial nature chooses a joint distribution $\pi$ that minimizes the expected revenue.   We observe this is a semi-infinite dimensional linear program.  By Theorem 3.12 in \cite{anderson1987linear}, we establish the strong duality, which implies the following lemma.
\begin{lemma}\label{l1}
If $\pi$ is a best response for adversarial nature to a given mechanism $(q,t)$, then there exists some real numbers $\lambda_1,\cdots, \lambda_N,\mu$ such that 
\begin{equation} \label{eq15}
\sum_{i=1}^N\lambda_iv_i+\mu \le \sum_{i=1}^N t_i(v)\quad \forall v \in V
\end{equation}
\begin{equation} \label{eq16}
\sum_{i=1}^N\lambda_iv_i+\mu =\sum_{i=1}^N t_i(v)\quad \forall v \in supp(\pi)
\end{equation}
\end{lemma}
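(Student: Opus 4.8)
The plan is to set up adversarial nature's minimization problem as a semi-infinite linear program in the decision variable $\pi \in \Delta V$ and read off the two conditions from linear programming duality. First I would write the primal problem: minimize $\int_V \pi(v)\sum_i t_i(v)\,dv$ over all nonnegative measures $\pi$ on $V$ subject to the $N$ expectation constraints $\int_V v_i\,\pi(v)\,dv = m_i$ for each $i$ and the normalization $\int_V \pi(v)\,dv = 1$. This is a linear program over the cone of nonnegative measures with $N+1$ linear equality constraints, so it fits the abstract framework of Anderson–Nash; the dual variables are $\lambda_1,\dots,\lambda_N$ (attached to the expectation constraints) and $\mu$ (attached to the normalization). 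The dual problem is to maximize $\sum_i \lambda_i m_i + \mu$ subject to the dual feasibility condition that, for every $v \in V$, the "reduced cost" is nonnegative, i.e. $\sum_i \lambda_i v_i + \mu \le \sum_i t_i(v)$ — which is exactly \eqref{eq15}.

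Next I would invoke Theorem 3.12 in \cite{anderson1987linear} to establish strong duality and the absence of a duality gap: here I need to check the hypotheses of that theorem, namely that the primal is consistent (the set $\Pi(\mathbf{m})$ is nonempty, which holds since $\mathbf{m}\in[0,1]^N$ — e.g. the product of Bernoulli marginals with means $m_i$ is feasible) and that the relevant value is finite (the objective is bounded below by $0$ since $t_i(v)\ge v_i q(v)\ge 0$ by \eqref{epir}, as the mechanism is EPIR, hence $\sum_i t_i(v)\ge 0$ on all of $V$), plus whatever closedness/boundedness condition the cited theorem requires so that the dual optimum is attained. Given strong duality with attainment, let $(\lambda_1,\dots,\lambda_N,\mu)$ be an optimal dual solution. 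Then for the optimal primal $\pi$ (the best response), complementary slackness for conic linear programs gives $\int_V \big(\sum_i t_i(v) - \sum_i\lambda_i v_i - \mu\big)\pi(v)\,dv = 0$; since the integrand is nonnegative by \eqref{eq15} and $\pi\ge0$, the integrand must vanish $\pi$-almost everywhere, and since the integrand is continuous (the payments $t_i$ are continuous — they are given by the integral formula in Proposition \ref{p1} with $q$ bounded) it vanishes on all of $\mathrm{supp}(\pi)$, which is \eqref{eq16}.

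The main obstacle I anticipate is the careful verification that the cited strong-duality theorem applies: semi-infinite / infinite-dimensional linear programs can have duality gaps or fail to attain the dual optimum without extra regularity, so one must match the primal exactly to the hypotheses of Theorem 3.12 in \cite{anderson1987linear} (choosing the right paired vector spaces — signed measures on $V$ against continuous functions on $V$ — and confirming the required consistency and boundedness/closed-range conditions). A secondary technical point is the passage from "almost everywhere with respect to $\pi$" to "everywhere on $\mathrm{supp}(\pi)$," which relies on continuity of $v\mapsto \sum_i t_i(v)$; this is where I would lean on Proposition \ref{p1} to write $t_i$ explicitly and note it is continuous in $v$ whenever $q$ is (and more generally the argument only needs lower semicontinuity of the gap). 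Everything else — writing the dual, identifying \eqref{eq15} as dual feasibility, and deriving \eqref{eq16} from complementary slackness — is routine once duality is in hand.
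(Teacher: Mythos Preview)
Your proposal is essentially the same as the paper's proof: set up the primal as a semi-infinite LP, write the dual, invoke Theorem~3.12 in \cite{anderson1987linear} for strong duality (the paper verifies the hypotheses by noting the primal value is bounded and that the point mass at $(m_1,\dots,m_N)$ lies in the interior of the primal cone), then read off \eqref{eq15} as dual feasibility and \eqref{eq16} from complementary slackness.

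One small correction: you write ``$t_i(v)\ge v_i q(v)\ge 0$ by \eqref{epir},'' but the EPIR inequality goes the other way --- it says $v_iq(v)-t_i(v)\ge 0$, i.e.\ $t_i(v)\le v_iq(v)$. Finiteness of the primal value is still immediate, though, since $\sum_i t_i$ is bounded on the compact set $V=[0,1]^N$ (indeed bounded above by $N$ via EPIR and bounded below because the mechanism is fixed), which is all the duality theorem needs. Your discussion of the passage from $\pi$-a.e.\ to all of $\mathrm{supp}(\pi)$ via continuity of $t$ is more careful than the paper, which simply asserts complementary slackness without comment.
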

Here \eqref{eq15} is the feasibility constraint in the dual program; \eqref{eq16} is the complementary slackness condition, which states that the revenue from the value profile $v$ is some linear function in each agent's value if $v$ belongs to the support of the worst-case joint distribution.

\section{Symmetric Case}\label{s5}
In this section, we focus on the two-agent symmetric case, i.e., $N=2$ and $m_1=m_2:=m$. We divide $m$ into two cases and then characterize the maxmin public good mechanism and the worst-case joint distribution for each case.
\subsection{Case (I): Low Expectations ($m<\frac{3}{4}$)}
\textbf{Symmetric Maxmin Public Good Mechanism (I)}\\  Let $v=(v_1,v_2)$ be the reported value profile of the two agents. Let $r_1$ be the solution to the following equation:
\begin{equation}
    -\frac{r_1\ln{r_1}}{2}+\frac{3}{4}r_1=m
\end{equation}
Let $a:=\frac{1}{1-2\ln{r_1}}$.
Divide the value profiles into four regions as follows: $SR^I(1):=\{(v_1,v_2)|v_1+v_2\ge r_1, v_1\le r_1, v_2\le r_1\}; SR^I(2): =\{(v_1,v_2)|v_1\ge r_1, v_2\le r_1\}; SR^I(3): = \{(v_1,v_2)|v_1\le r_1, v_2\ge r_1\}; SR^I(4): =\{(v_1,v_2)|v_1\ge r_1, v_2\ge r_1\}$. The provision rule is as follows: $$ q^*(v_1,v_2)=  \left\{
\begin{array}{lll}
 \frac{a}{r_1}(v_1+v_2-r_1)     &      & { v\in SR^I(1)}\\
a\ln{v_1}+\frac{a}{r_1}v_2-a\ln{r_1}    &      & {v\in SR^I(2)}\\
a\ln{v_2}+\frac{a}{r_1}v_1-a\ln{r_1}    &      & {v\in SR^I(3)}\\
a\ln{v_1}+a\ln{v_2}+1    &      & {v\in SR^I(4)}
\\
0    &      & {otherwise}
\end{array} \right. $$
The payment rule is characterized by Proposition \ref{p1}.
\begin{figure}
\centering
\begin{tikzpicture}
\begin{axis}[
    axis lines = left,
    xmin=0,
        xmax=1.3,
        ymin=0,
        ymax=1.3,
        xtick={0,0.4,1,1.3},
        ytick={0,0.4,1,1.3},
        xticklabels = {$0$, $r_1$, $1$, $v_1$},
        yticklabels = {$0$, $r_1$, $1$, $v_2$},
        legend style={at={(1.1,1)}}
]

\path[name path=axis] (axis cs:0,0) -- (axis cs:1,0);
\path[name path=A] (axis cs:0,0.4) -- (axis cs:0.4,0);
\path[name path=B] (axis cs:0,0.4) -- (axis cs:1,0.4);
\path[name path=C] (axis cs:0,1) -- (axis cs:1,1);
\addplot[area legend, red!30] fill between[of=A and B,  soft clip={domain=0:0.4}];
\addplot[black!50] fill between[of=axis and B, soft clip={domain=0.4:1}];
\addplot[gray!50] fill between[of=B and C,  soft clip={domain=0:0.4}];
\addplot[blue!30] fill between[of=C and B, soft clip={domain=0.4:1}];
\legend{$SR^I(1)$,$SR^I(2)$, $SR^I(3)$, $SR^I(4)$};
\end{axis}
\end{tikzpicture}
\caption{Provision regions of Symmetric Maxmin Public Good Mechanism (I) } \label{fig:S1}
\end{figure}
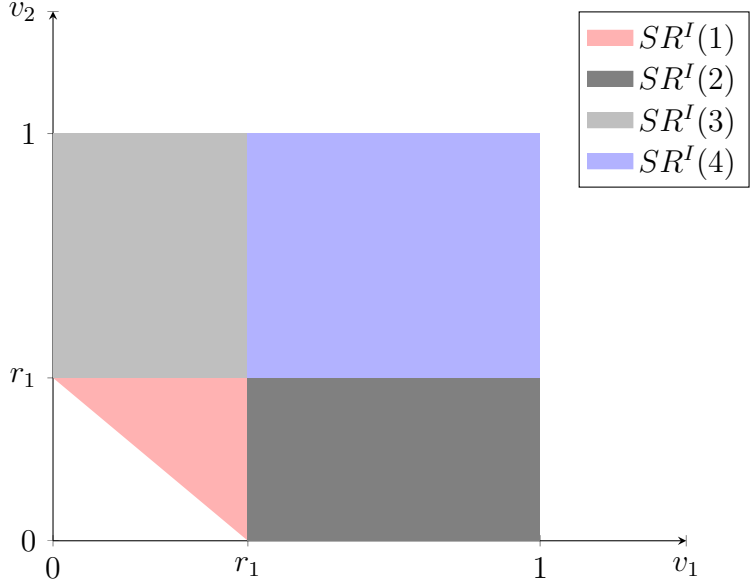\\
\textbf{Symmetric Worst-Case Joint Distribution (I)}\\
Let  $\pi^*(v_1,v_2)$ denote the density of the value profile $(v_1,v_2)$ whenever the density exists. Let $Pr^*(v_1,v_2)$ denote the probability mass of the value profile $(v_1,v_2)$ whenever there is some probability mass on $(v_1,v_2)$. Let $SV(I):=\{v|v_1+v_2\ge r_1\}$.    Symmetric Worst-Case Joint Distribution (I) has the support $SV(I)$ and  is defined as follows:
$$\pi^*(v_1,v_2)=  \left\{
\begin{array}{lll}
\frac{r_1}{(v_1+v_2)^3}     &      & { v_1+v_2\ge r_1, v_1 \neq 1 ,v_2 \neq 1}\\
\frac{r_1}{2(1+v_2)^2}    &      & {v_1=1,0\le v_2 < 1}\\
\frac{r_1}{2(v_1+1)^2}    &      & {0\le v_1 < 1,v_2=1}
\end{array} \right. $$
$$Pr^*(1,1)=\frac{r_1}{4}$$
Equivalently, \textbf{Symmetric Worst-Case Joint Distribution (I)} can be described by its marginal distributions and conditional distributions.  The marginal distributions are as follows: $\pi_1^*(v_1)=\pi_2^*(v_2)=\frac{1}{2r_1}$ for $0\le v_1,v_2 \le r_1$, $\pi_1^*(v_1)=\frac{r_1}{2v_1^2}$ and $\pi_2^*(v_2)=\frac{r_1}{2v_2^2}$ for $r_1<v_1,v_2<1$, $Pr^*_1(1)=Pr^*_2(1)=\frac{r_1}{2}$. That is, the marginal distribution of each agent is a combination of a uniform distribution and some equal revenue distribution. The conditional distributions are as follows: if $0\le v_j\le r_1$, then $\pi_i^*(v_i|v_j)=\frac{2r_1^2}{(v_i+v_j)^3}$ for $r_1-v_j\le v_i<1$ and $Pr_i^*(v_i=1|v_j)=\frac{r_1^2}{(1+v_j)^2}$; if $r_1< v_j<1$, then $\pi_i^*(v_i|v_j)=\frac{2v_j^2}{(v_i+v_j)^3}$ for $0\le v_i<1$ and $Pr_i^*(v_i=1|v_j)=\frac{v_j^2}{(1+v_j)^2}$; if $v_j=1$,  then $\pi_i^*(v_i|v_j=1)=\frac{1}{(v_i+1)^2}$ for $0\le v_i<1$ and $Pr_i^*(v_i=1|v_j=1)=\frac{1}{2}$. That is, the conditional distribution is some truncated generalized Pareto distribution with some mass on 1 (the exact distribution depends on the other agent's value).
\begin{remark}
Symmetric Worst Case Joint Distribution (I) exhibits negative correlation when $v_i\in [0,r_1)$ and positive correlation when $v_i\in [r_1,1]$. \footnote{Precisely, when $0\le v_i<v_i'<r_1$, the  distribution of $v_j$ conditional on $v_i'$ is first order stochastic dominated by the  distribution of $v_j$ conditional on $v_i$; when $r_1\le v_i<v_i'\le 1$, the  distribution of $v_j$ conditional on $v_i'$ first order stochastic dominates the  distribution of $v_j$ conditional on $v_i$.} 
\end{remark}
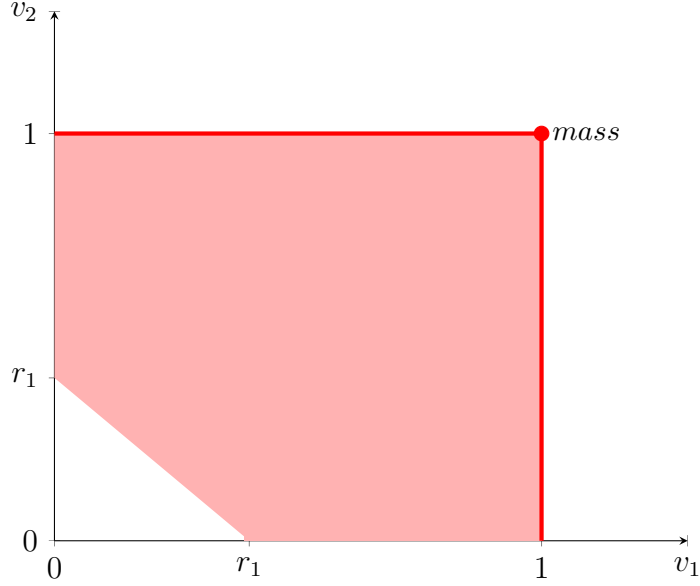
\begin{figure}
\centering
\begin{tikzpicture}
\begin{axis}[
    axis lines = left,
    xmin=0,
        xmax=1.3,
        ymin=0,
        ymax=1.3,
        xtick={0,0.4,1,1.3},
        ytick={0,0.4,1,1.3},
        xticklabels = {$0$, $r_1$, $1$, $v_1$},
        yticklabels = {$0$, $r_1$, $1$, $v_2$},
        legend style={at={(1.1,1)}}
]

\path[name path=axis] (axis cs:0,0) -- (axis cs:1,0);
\path[name path=A] (axis cs:0,0.4) -- (axis cs:0.4,0);
\path[name path=B] (axis cs:0,0.4) -- (axis cs:1,0.4);
\path[name path=C] (axis cs:0,1) -- (axis cs:1,1);
\addplot[area legend, red!30] fill between[of=A and C,  soft clip={domain=0:0.4}];
\addplot[red!30] fill between[of=axis and C, soft clip={domain=0.39:1}];
\addplot[red, ultra thick] coordinates {(0, 1) (1, 1)};
\addplot[red, ultra thick] coordinates {(1, 0) (1, 1)};
\node[black,right] at (axis cs:1,1){\small{$mass$}};
\node at (axis cs:1,1) [circle, scale=0.5, draw=red,fill=red] {};
\end{axis}
\end{tikzpicture}
\caption{Symmetric Worst-Case Joint Distribution (I)} \label{fig:S2}
\end{figure}
\begin{theorem}\label{t1}
When $m< \frac{3}{4}$, \textbf{Symmetric Maxmin Public Good Mechanism (I)} and  \textbf{Symmetric Worst-Case Joint Distribution (I)} form a Nash equilibrium. In addition, the revenue guarantee is $\frac{1}{2}\exp(W_{-1}(-2m\exp(-\frac{3}{2}))+\frac{3}{2})$.
\end{theorem}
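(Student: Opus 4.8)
The plan is to verify directly that the pair $((q^*,t^*),\pi^*)$ is a saddle point of the revenue functional $U$, i.e.\ that
$$U((q^*,t^*),\pi)\ \ge\ U((q^*,t^*),\pi^*)\ \ge\ U((q,t),\pi^*)$$
for every $\pi\in\Pi(\mathbf m)$ and every $(q,t)\in\mathbb D$; the revenue guarantee is then the common value $U((q^*,t^*),\pi^*)$. Before the two inequalities I would check admissibility. First, $(q^*,t^*)\in\mathbb D$: $q^*$ is nonnegative, is continuous across the four regions $SR^I(1),\dots,SR^I(4)$ (the gluing at $v_1=r_1$, at $v_2=r_1$, and along the top edges is exactly what forces the constant $a=1/(1-2\ln r_1)$), is bounded above by $1$, and is nondecreasing in each coordinate because $a\in(0,1)$ — which uses $r_1\in(0,1)$, and this in turn holds precisely when $m<\tfrac34$ since $r\mapsto\tfrac34 r-\tfrac12 r\ln r$ is strictly increasing from $0$ to $\tfrac34$ on $(0,1]$; \eqref{epir} and \eqref{dsic} then follow from monotonicity together with the payment formula of Proposition \ref{p1}. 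Second, $\pi^*\in\Pi(\mathbf m)$: nonnegativity is immediate, and (after integrating the absolutely continuous part, the two one-dimensional edge densities, and the atom) total mass $1$ is automatic while the marginal-mean condition reduces to exactly the defining equation $-\tfrac12 r_1\ln r_1+\tfrac34 r_1=m$; one also checks that the marginal/conditional description of $\pi^*$ is consistent with its density/atom description.

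\textbf{Nature's side (left inequality).} Here I would use Lemma \ref{l1} in the sufficiency (weak LP duality) direction: it is enough to exhibit multipliers $\lambda_1,\lambda_2,\mu$ with $\sum_i t_i^*(v)\ge\lambda_1 v_1+\lambda_2 v_2+\mu$ for all $v\in V$ and equality on $supp(\pi^*)=\{v_1+v_2\ge r_1\}$, since then $U((q^*,t^*),\pi)=\int(\sum_i t_i^*)\,d\pi\ge\lambda_1 m_1+\lambda_2 m_2+\mu$ for every $\pi\in\Pi(\mathbf m)$, with equality at $\pi^*$. Writing $R^*(v):=t_1^*(v)+t_2^*(v)$ and computing $t_i^*$ from $q^*$ via Proposition \ref{p1} on each region, one finds $R^*\equiv 0$ on the complement of $\{v_1+v_2\ge r_1\}$ (where $q^*$ vanishes along every relevant segment of integration) and $R^*(v)=a(v_1+v_2-r_1)$ on each of $SR^I(1),\dots,SR^I(4)$ and on the two top edges. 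Hence $\lambda_1=\lambda_2=a$, $\mu=-a r_1$ work: the inequality $0\ge a(v_1+v_2-r_1)$ off the support is immediate from $a>0$, and complementary slackness \eqref{eq16} holds by construction, giving $U((q^*,t^*),\pi^*)=2am-ar_1=\min_{\pi\in\Pi(\mathbf m)}U((q^*,t^*),\pi)$.

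\textbf{Principal's side (right inequality).} For this I would compute the weighted virtual value $\Phi^*$ of Section \ref{s4} under $\pi^*$. From the conditional description of $\pi^*$ one gets $\tfrac{1-\Pi^*_i(v_i\mid v_j)}{\pi^*_i(v_i\mid v_j)}=\tfrac{v_i+v_j}{2}$ on the support (both the $v_j\le r_1$ and $r_1<v_j<1$ branches of the conditional density give the same value), so each agent's conditional virtual value is $\tfrac{v_i-v_j}{2}$ and these cancel: $\Phi^*\equiv 0$ on the interior of the support. Off the support one checks $\Phi^*\le 0$, while at $(1,1)$ the absence of an upward incentive constraint makes the virtual value a strictly positive atom. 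Since any $(q,t)\in\mathbb D$ has $q$ nondecreasing and $t_i(v)\le v_iq(v)-\int_0^{v_i}q(s,v_{-i})ds$ by \eqref{dsic}–\eqref{epir}, the revenue identity $E[\sum_i t_i]=\int q\,\Phi^*$ (now read as integration of $q$ against the signed measure with the edge and atom contributions included) together with $q\ge 0$, $q(1,1)\le 1$, $q^*\equiv 0$ off the support, $q^*(1,1)=1$ yields $U((q,t),\pi^*)\le U((q^*,t^*),\pi^*)$. This establishes the saddle point. It remains to identify the value: substituting $a=1/(1-2\ln r_1)$ into $2am-ar_1$ and using the defining equation gives $2am-ar_1=r_1/2$; and solving $-\tfrac12 r_1\ln r_1+\tfrac34 r_1=m$ for $r_1$ (set $u=\ln r_1$; this is $(3-2u)e^{u}=4m$, equivalently $\bigl(-\tfrac12(3-2u)\bigr)e^{-\frac12(3-2u)}=-2m e^{-3/2}$) gives $\ln r_1=\tfrac32+W_{-1}(-2m e^{-3/2})$, the branch $W_{-1}$ being forced by $\ln r_1<0$ (equivalently $r_1<1$), so the revenue guarantee is $r_1/2=\tfrac12\exp\!\bigl(W_{-1}(-2m\exp(-\tfrac32))+\tfrac32\bigr)$.

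\textbf{Main obstacle.} The substantive difficulty is the bookkeeping in the two identities when $\pi^*$ fails to be absolutely continuous: both the payment computation of $R^*$ and the weighted-virtual-value computation of $\Phi^*$ must correctly handle the one-dimensional densities on $\{v_1=1\}$ and $\{v_2=1\}$ and the atom at $(1,1)$. Within that, the longest routine step is the region-by-region verification that $R^*$ collapses to the single affine function $a(v_1+v_2-r_1)$ on all of $\{v_1+v_2\ge r_1\}$, with matching across the boundaries of the $SR^I(k)$ and along the top edges; it is exactly this matching that pins down $a$ and the threshold $r_1$ (and hence makes the dual feasibility constant and the complementary-slackness support line coincide).
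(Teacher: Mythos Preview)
Your proposal is correct and follows essentially the same approach as the paper: the saddle-point is verified via LP duality on Nature's side (with the multipliers $\lambda_1=\lambda_2=a$, $\mu=-ar_1$ and the complementary-slackness identity $R^*(v)=a(v_1+v_2-r_1)$ on the support) and via the weighted virtual value on the Principal's side (showing $\Phi^*\equiv 0$ on the support except for the positive atom at $(1,1)$), and the value $r_1/2$ is then identified exactly as you do. Your additional explicit checks of admissibility and the Lambert-$W$ inversion are more detailed than the paper's treatment but entirely in line with it.
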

Let us  illustrate \textbf{Symmetric Maxmin Public Good Mechanism (I)}. First,  we guess  \textlabel{(A1)}{a} that in the maxmin solution, the principle provides the  public good with positive probability if and only if the sum of the reported values  exceeds certain threshold $r_1\in (0,1)$, i.e., $v_1+v_2>r_1$ where $r_1\in (0,1)$. Second,  note that in the maxmin solution, it is without loss to assume  \textlabel{(B)}{b} that the provision probability is 1 when the value profile is $(1,1)$.\footnote{This does not affect the monotonicity constraints because the value profile $(1,1)$ is the highest type in our environment.} Third,  we conjecture that the support of the worst-case joint distribution $\pi^*$ coincides with the provision region. We assume that $q^*(r_1,r_1)=a$ for some $a\in [0,1]$. Then we divide the support into four regions: $SR^I(1), SR^I(2), SR^I(3)$ and $SR^I(4)$.  The idea is to solve for the provision probability $q^*$ for each region sequentially using the complementary slackness condition \eqref{eq16} and finally solve for $a$ by using \ref{b}. The details are deferred to the Appendix A. \\
\indent Next, we illustrate \textbf{Symmetric Worst-Case Joint Distribution (I)}. The worst-case joint distribution exhibits the property that the weighted virtual value is positive only for the highest type (1,1), zero for the other value profiles in the support and weakly negative for value profiles outside the support\footnote{By the definition of the weighted virtual values, the weighted virtual values are weakly negative for value profiles outside the support.}. Formally,  in the worst-case joint distribution, we have
\begin{equation}\label{eq53}
    \Phi(1,1)>0
\end{equation}
\begin{equation}\label{eq54}
  \Phi(v)=0  \quad \forall v_1+v_2\ge r_1\quad and \quad v \neq (1,1)  
\end{equation}
\begin{equation}\label{eq55}
  \Phi(v)\le 0  \quad \forall v_1+v_2< r_1
\end{equation} 
Now if the joint distribution satisfies \eqref{eq53}, \eqref{eq54} and \eqref{eq55}, then any feasible and monotone mechanism in which the public good is provided with some positive probability if and only if $v_1+v_2>r_1$ and the public good is provided with probability of  1 when $(v_1,v_2)=(1,1)$  is optimal for the principal. Then, the only remaining issue is whether we can construct a  joint distribution satisfying \eqref{eq53}, \eqref{eq54} and \eqref{eq55}. We give an affirmative answer by taking a constructive approach. The details of the construction are deferred to the Appendix A.  
\subsection{Case (II): High Expectations ($m\ge \frac{3}{4}$)}
\textbf{Symmetric Maxmin Public Good Mechanism (II)}\\  Let $v=(v_1,v_2)$ be the reported value profile of the two agents.
Let $r_2:=1-2\sqrt{1-m}$. Let $SR^{II}:=\{(v_1,v_2)|v_1+v_2\ge 1+r_2\}$.  The provision rule is as follows: $$ q^*(v_1,v_2)=  \left\{
\begin{array}{lll}
\frac{1}{1-r_2}(v_1+v_2-1-r_2)    &      & {v\in SR^{II}}
\\
0    &      & {otherwise}
\end{array} \right. $$
The payment rule is characterized by Proposition \ref{p1}.
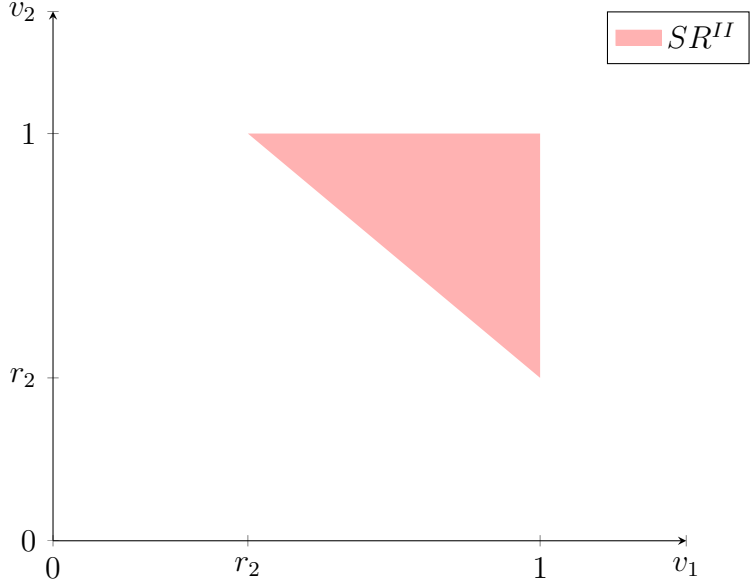
\begin{figure}
\centering
\begin{tikzpicture}
\begin{axis}[
    axis lines = left,
    xmin=0,
        xmax=1.3,
        ymin=0,
        ymax=1.3,
        xtick={0,0.4,1,1.3},
        ytick={0,0.4,1,1.3},
        xticklabels = {$0$, $r_2$, $1$, $v_1$},
        yticklabels = {$0$, $r_2$, $1$, $v_2$},
        legend style={at={(1.1,1)}}
]

\path[name path=axis] (axis cs:0,0) -- (axis cs:1,0);
\path[name path=A] (axis cs:0.4,1) -- (axis cs:1,0.4);
\path[name path=B] (axis cs:0,0.4) -- (axis cs:1,0.4);
\path[name path=C] (axis cs:0,1) -- (axis cs:1,1);
\addplot[area legend, red!30] fill between[of=A and C,  soft clip={domain=0.4:1}];

\legend{$SR^{II}$};
 
\end{axis}
\end{tikzpicture}
\caption{Provision regions of Symmetric Maxmin Public Good Mechanism (II) } \label{fig:S1}
\end{figure}\\
\textbf{Symmetric Worst-Case Joint Distribution (II)}\\
Let $\pi^*(v_1,v_2)$ denote the density of the value profile $(v_1,v_2)$ whenever the density exists. Let $Pr^*(v_1,v_2)$ denote the probability mass of the value profile $(v_1,v_2)$ whenever there is some probability mass on $(v_1,v_2)$.   Symmetric Worst-Case Joint Distribution (II) has the support $SR^{II}$ and is defined as follows:
$$\pi^*(v_1,v_2)=  \left\{
\begin{array}{lll}
\frac{(r_2+1)^2}{(v_1+v_2)^3}   &      & { v_1+v_2\ge 1+r_2, v_1 \neq 1 ,v_2 \neq 1}\\
\frac{(r_2+1)^2}{2(1+v_2)^2}    &      & {v_1=1,r_2\le v_2 < 1}\\
\frac{(r_2+1)^2}{2(v_1+1)^2}    &      & {r_2\le v_1 < 1,v_2=1}
\end{array} \right. $$
$$Pr^*(1,1)=\frac{(r_2+1)^2}{4}$$
Equivalently, \textbf{Symmetric Worst-Case Joint Distribution (II)} can be described by its marginal distributions and conditional distributions.  The marginal distributions are as follows: $\pi_1^*(v_1)=\pi_2^*(v_2)=\frac{1}{2}$ for $r_2\le v_1,v_2 < 1$, $Pr^*_1(1)=Pr^*_2(1)=\frac{1+r_2}{2}$. That is, the marginal distribution of each agent is a combination of a uniform distribution on $[r_2,1)$ and an atom on 1. The conditional distributions are as follows: if $v_j=r_2$, then $Pr_i^*(v_i=1|v_j=r_2)=1$; if $r_2< v_j< 1$, then $\pi_i^*(v_i|v_j)=\frac{2(1+r_2)^2}{(v_i+v_j)^3}$ for $1+r_2-v_j\le v_i<1$ and $Pr_i^*(v_i=1|v_j)=\frac{(1+r_2)^2}{(1+v_j)^2}$;  if $v_j=1$,  then $\pi_i^*(v_i|v_j=1)=\frac{1+r_2}{(v_i+1)^2}$ for $r_2\le v_i<1$ and $Pr_i^*(v_i=1|v_j=1)=\frac{1+r_2}{2}$. That is, the conditional distribution is (generically) some truncated generalized Pareto distribution with some mass on 1 (the exact distribution depends on the other agent's value). 
\begin{remark}
Symmetric Worst Case Joint Distribution (II) exhibits negative correlation when $v_i\in [1+r_2,1)$;  the negative  correlation breaks when $v_i=1$.
\end{remark}

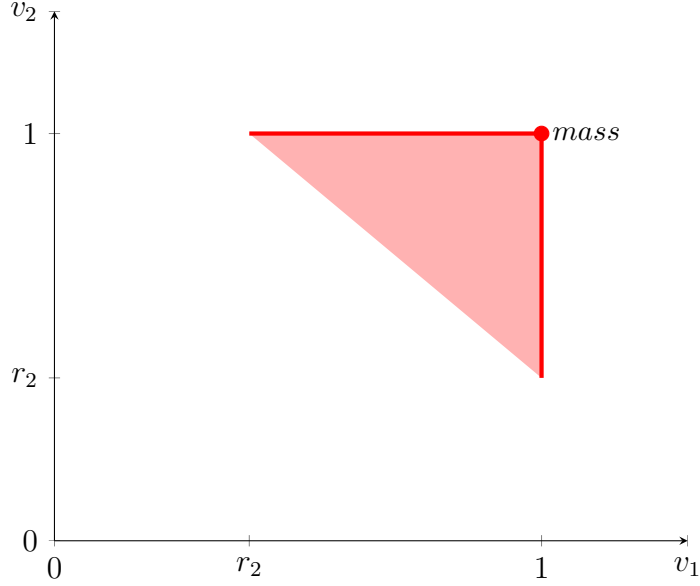
\begin{figure}
\centering
\begin{tikzpicture}
\begin{axis}[
    axis lines = left,
    xmin=0,
        xmax=1.3,
        ymin=0,
        ymax=1.3,
        xtick={0,0.4,1,1.3},
        ytick={0,0.4,1,1.3},
        xticklabels = {$0$, $r_2$, $1$, $v_1$},
        yticklabels = {$0$, $r_2$, $1$, $v_2$},
        legend style={at={(1.1,1)}}
]

\path[name path=axis] (axis cs:0,0) -- (axis cs:1,0);
\path[name path=A] (axis cs: 0.4,1) -- (axis cs:1,0.4);
\path[name path=B] (axis cs:0,0.4) -- (axis cs:1,0.4);
\path[name path=C] (axis cs:0,1) -- (axis cs:1,1);
\addplot[area legend, red!30] fill between[of=A and C,  soft clip={domain=0.4:1}];
\addplot[red, ultra thick] coordinates {(0.4, 1) (1, 1)};
\addplot[red, ultra thick] coordinates {(1, 0.4) (1, 1)};
\node[black,right] at (axis cs:1,1){\small{$mass$}};
\node at (axis cs:1,1) [circle, scale=0.5, draw=red,fill=red] {};
\end{axis}
\end{tikzpicture}
\caption{Symmetric Worst-Case Joint Distribution (II)} \label{fig:S2}
\end{figure}
\begin{theorem}\label{t2}
When $m\ge \frac{3}{4}$, \textbf{Symmetric Maxmin Public Good Mechanism (II)} and  \textbf{ Symmetric Worst-Case Joint Distribution (II)} form a Nash equilibrium. In addition, the revenue guarantee is $2(2-m-2\sqrt{1-m})$.
\end{theorem}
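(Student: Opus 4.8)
The plan is to exhibit $((q^*,t^*),\pi^*)$ --- where $(q^*,t^*)$ is \textbf{Symmetric Maxmin Public Good Mechanism (II)} and $\pi^*$ is \textbf{Symmetric Worst-Case Joint Distribution (II)} --- as a saddle point of $U$, i.e.\ to verify $U((q^*,t^*),\pi)\ge U((q^*,t^*),\pi^*)\ge U((q,t),\pi^*)$ for every $\pi\in\Pi(\mathbf{m})$ and every $(q,t)\in\mathbb{D}$, and then to compute the common value. First I would confirm the two objects are admissible. The provision rule $q^*(v)=\tfrac{1}{1-r_2}\max\{0,\,v_1+v_2-1-r_2\}$ is continuous, nondecreasing in each coordinate, and takes values in $[0,1]$ with $q^*(1,1)=1$; defining $t^*$ from $q^*$ by the formula in Proposition \ref{p1} makes $(q^*,t^*)$ DSIC and EPIR, so $(q^*,t^*)\in\mathbb{D}$. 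For $\pi^*$ I would check it is a probability measure and that its marginal for each agent is the uniform density $\tfrac12$ on $[r_2,1)$ plus an atom $\tfrac{1+r_2}{2}$ at $1$ --- this amounts to verifying that the stated joint density together with its line-atoms on $\{v_1=1\}$, $\{v_2=1\}$ and the point mass at $(1,1)$ aggregate correctly. The resulting marginal mean is $\tfrac{1-r_2^2}{4}+\tfrac{1+r_2}{2}=\tfrac{(1+r_2)(3-r_2)}{4}$; equating this to $m$ and taking the root in $[0,1)$ returns exactly $r_2=1-2\sqrt{1-m}$, which lies in $[0,1)$ precisely when $m\ge\tfrac34$, so $\pi^*\in\Pi(\mathbf{m})$.

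For the first saddle inequality I would compute the total payment $T(v):=t_1^*(v)+t_2^*(v)$ via Proposition \ref{p1}. Because $q^*$ has a single linear branch the integrals are elementary: writing $z=v_1+v_2$ one gets $T=\tfrac{1}{1-r_2}\big(z(z-1-r_2)-(z-1-r_2)^2\big)=\lambda(v_1+v_2)+\mu$ on $SR^{II}$ with $\lambda=\tfrac{1+r_2}{1-r_2}$ and $\mu=-\tfrac{(1+r_2)^2}{1-r_2}$, while $T\equiv0$ off $SR^{II}$, where $\lambda(v_1+v_2)+\mu=\lambda(v_1+v_2-1-r_2)<0$. Thus $T(v)\ge\lambda(v_1+v_2)+\mu$ everywhere, with equality on $supp(\pi^*)=SR^{II}$; this is the dual-feasibility-plus-complementary-slackness pair of the linear program behind Lemma \ref{l1}, so for any $\pi\in\Pi(\mathbf{m})$ we get $U((q^*,t^*),\pi)=\int T\,d\pi\ge 2\lambda m+\mu$, with equality at $\pi=\pi^*$. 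Finally $2\lambda m+\mu=\tfrac{(1+r_2)(2m-1-r_2)}{1-r_2}$, and substituting $r_2=1-2\sqrt{1-m}$ collapses this to $2\big(1-\sqrt{1-m}\big)^2=2\big(2-m-2\sqrt{1-m}\big)$, the asserted revenue guarantee.

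For the second saddle inequality I would pass to the weighted virtual value $\Phi$ of $\pi^*$, extended to absorb the atoms of $\pi^*$, so that $U((q,t),\pi^*)$ is the pairing of $q$ with $\Phi$ for every $(q,t)\in\mathbb{D}$ (by Proposition \ref{p1}), and then verify the analogues of \eqref{eq53}--\eqref{eq55}: on the interior of $SR^{II}$ the conditional tail is $\Pr(V_i>v_i\mid v_j)=(1+r_2)^2/(v_i+v_j)^2$, each of the two subtracted terms equals $\tfrac12(1+r_2)^2/(v_1+v_2)^2$, and hence $\Phi(v)=\pi^*(v)(v_1+v_2)-(1+r_2)^2/(v_1+v_2)^2=0$; an analogous computation gives $\Phi=0$ on the edges of $SR^{II}$, $\Phi(1,1)>0$, and $\Phi\le0$ outside $SR^{II}$. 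Since $q^*$ is feasible, monotone, vanishes off $SR^{II}$, and has $q^*(1,1)=1$, it maximizes the pairing, and the resulting value is again $\tfrac{(1+r_2)^2}{2}=2(2-m-2\sqrt{1-m})$, consistent with the previous paragraph. I expect this last step to be the main obstacle: the revenue-equivalence/virtual-value identity of Section \ref{s4} was derived for differentiable $\pi$, so the care needed to define $\Phi$ correctly given $\pi^*$'s line- and point-atoms and to confirm $\Phi\le0$ throughout the complement of $SR^{II}$ (including along $\{v_i=1\}$, where $\pi^*$ carries atomic mass) is the delicate part; everything else reduces to the one-variable computations above, which are lighter than in Case (I) because $q^*$ has only one branch.
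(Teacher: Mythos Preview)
Your proposal is correct and follows essentially the same saddle-point approach as the paper: you verify nature's best response via the LP dual variables $\lambda=\tfrac{1+r_2}{1-r_2}$, $\mu=-\tfrac{(1+r_2)^2}{1-r_2}$ and complementary slackness (the paper's part (ii)), and the principal's best response via the weighted-virtual-value sign pattern $\Phi(1,1)>0$, $\Phi=0$ on $SR^{II}\setminus\{(1,1)\}$, $\Phi\le 0$ elsewhere (the paper's part (i)). Your explicit computation of $T(v)$ to extract $\lambda,\mu$ is a bit more self-contained than the paper's treatment, and your caveat about extending $\Phi$ to accommodate the line- and point-atoms of $\pi^*$ is well placed---the paper handles this only implicitly by referring back to the Theorem~\ref{t1} argument.
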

Let us illustrate \textbf{Symmetric Maxmin Public Good Mechanism (II)}. First, we  guess   \textlabel{(A2)}{a2} that in the maxmin solution, the principle provides the  public good with some positive probability if and only if the sum of the reported values  exceeds certain threshold $1+r_2$ in which $r_2\in [0,1]$, i.e., $v_1+v_2>1+r_2$ where $r_2\in [0,1]$. Note the threshold here is higher than that in Case (I). Intuitively, as the expectations of the values become higher, the principal exercises more monopoly power to raise revenue.  Second, similarly,  in the maxmin solution, it is without loss  to assume  \ref{b}. Third, similarly, we conjecture that the support of the worst case joint distribution $\pi^*$ is the area in which $v\in SR^{II}$. Together with (iv) in Proposition \ref{p1}, \ref{a2} and \eqref{eq16}, we obtain that for any $v\in SR^{II}$,
\begin{equation} \label{eq65}
\lambda_1v_1+\lambda_2v_2+\mu = (v_1+v_2)q^*(v)-\int_{1+r_2-v_2}^{v_1}q^*(x,v_2)dx-\int_{1+r_2-v_1}^{v_2}q^*(v_1,x)dx
\end{equation}
Then following similar procedures for solving for provision probability $q^*(v)$ when $v\in SR^I(1)$ in Case (I), we obtain  \textbf{Symmetric Maxmin Public Good Mechanism (II)}. For \textbf{Symmetric Worst-Case Joint Distribution (II)}, we have 
\begin{equation}
    \Phi(1,1)>0
\end{equation}
\begin{equation}
  \Phi(v)=0  \quad \forall v_1+v_2\ge 1+r_2\quad and \quad v \neq (1,1)  
\end{equation}
\begin{equation}
  \Phi(v)\le 0  \quad \forall v_1+v_2< 1+r_2
\end{equation}  
The construction procedure for the joint distribution is similar. Therefore we omit it.  To make sure that \textbf{Symmetric Worst-Case Joint Distribution (II)} satisfies the mean constraints, we must have \begin{equation}\label{eq66}
    \int_{r_2}^1\frac{1}{2}xdx +\frac{1+r_2}{2}=m 
\end{equation} 
Therefore we obtain that $r_2=1-2\sqrt{1-m}$ for any $m\ge \frac{3}{4}$.
\section{Asymmetric Case}\label{s6}
In this section, we focus on the two-agent asymmetric case, i.e., $N=2$ and $m_1\neq m_2$. Without loss of generality, we restrict attention to the case in which $m_1> m_2$. Let $A:=\{(m_1,m_2)|1\ge m_1>m_2\ge 0\}$ be the set of all  asymmetric expectations. We will divide the set $A$  into four areas and then characterize the maxmin public good mechanism and the worst-case joint distribution for each area.
\subsection{Area (I): Close and Low Expectations}
Let $Boundary (I): = \{(m_1,m_2)|\frac{r}{1+r}(\frac{2r-1}{(1-r)^2}\ln{r}+\frac{1}{1-r})=m_1, \frac{r}{1+r}(-\frac{1}{(1-r)^2}\ln{r}-\frac{r}{1-r})=m_2, 0<r<1\}$. It can be shown that Boundary (I) is indeed equivalent to some increasing function $B_I(m_1))$ where $0<m_1<1$. To see this, note $Z^I_1(r):= \frac{r}{1+r}(\frac{2r-1}{(1-r)^2}\ln{r}+\frac{1}{1-r})$ and $Z^I_2(r):= \frac{r}{1+r}(-\frac{1}{(1-r)^2}\ln{r}-\frac{r}{1-r})$ are both increasing w.r.t. $r$. In addition, $m_1>B_I(m_1)$. To see this, it can be shown that $Z^I(r): = Z^I_1(r)-Z^I_2(r)>0$ for $0<r<1$. Now let $Area (I): =\{(m_1,m_2)|m_2<m_1, m_2>B_{I}(m_1), \frac{3}{4}> m_1> 0\}$. We propose a pair of strategy profile as follows. \\
\textbf{Asymmetric Maxmin Public Good Mechanism (I)}\\   
Let $v=(v_1,v_2)$ be the reported value profile of the two agents. Let $s_1\in [0,1], s_2\in [0,1]$ be the  solution to the following system of  equations:
\begin{equation}\label{eq2}
   m_1=\frac{s_1s_2}{s_1+s_2}(\frac{s_1^2}{(s_1-s_2)^2}\ln{\frac{s_1}{s_2}}-\ln{s_1}+\frac{s_2}{s_2-s_1}):= H^I_1(s_1,s_2)
\end{equation}
\begin{equation}\label{eq3}
    m_2=\frac{s_1s_2}{s_1+s_2}(\frac{s_2^2}{(s_1-s_2)^2}\ln{\frac{s_2}{s_1}}-\ln{s_2}+\frac{s_1}{s_1-s_2}):=H^I_2(s_1,s_2)
\end{equation}
Let $c:=\frac{1}{1-\frac{(1-\frac{s_2}{s_1})\ln{s_1}-(1-\frac{s_1}{s_2})\ln{s_2}}{\ln{\frac{s_1}{s_2}}}}$.
Divide the value profiles into four regions as follows: $AR^I(1):=\{(v_1,v_2)|s_2v_1+s_1v_2\ge s_1s_2, v_1\le s_1, v_2\le s_2\}; AR^I(2): =\{(v_1,v_2)|v_1\ge s_1, v_2\le s_2\}; AR^I(3): = \{(v_1,v_2)|v_1\le s_1, v_2\ge s_2\}; AR^I(4): =\{(v_1,v_2)|v_1\ge s_1, v_2\ge s_2\}$. The provision rule is as follows: $$ q^*(v_1,v_2)=  \left\{
\begin{array}{lll}
 \frac{c}{\ln{\frac{s_1}{s_2}}}(\ln{(v_1+s_2-\frac{s_2}{s_1}v_1)}-\ln{(v_2+s_1-\frac{s_1}{s_2}v_2)})     &      & { v\in AR^I(1)}\\
\frac{c}{\ln{\frac{s_1}{s_2}}}((1-\frac{s_2}{s_1})\ln{v_1}-\ln{(v_2+s_1-\frac{s_1}{s_2}v_2)}+\frac{s_2}{s_1}\ln{s_1})   &      & {v\in AR^I(2)}\\
\frac{c}{\ln{\frac{s_1}{s_2}}}(\ln{(v_1+s_2-\frac{s_2}{s_1}v_1)}-(1-\frac{s_1}{s_2})\ln{v_2}-\frac{s_1}{s_2}\ln{s_2})     &      & {v\in AR^I(3)}\\
\frac{c}{\ln{\frac{s_1}{s_2}}}((1-\frac{s_2}{s_1})\ln{v_1}-(1-\frac{s_1}{s_2})\ln{v_2})+1   &      & {v\in AR^I(4)}
\\
0    &      & {otherwise}
\end{array} \right. $$
The payment rule is characterized by Proposition \ref{p1}.
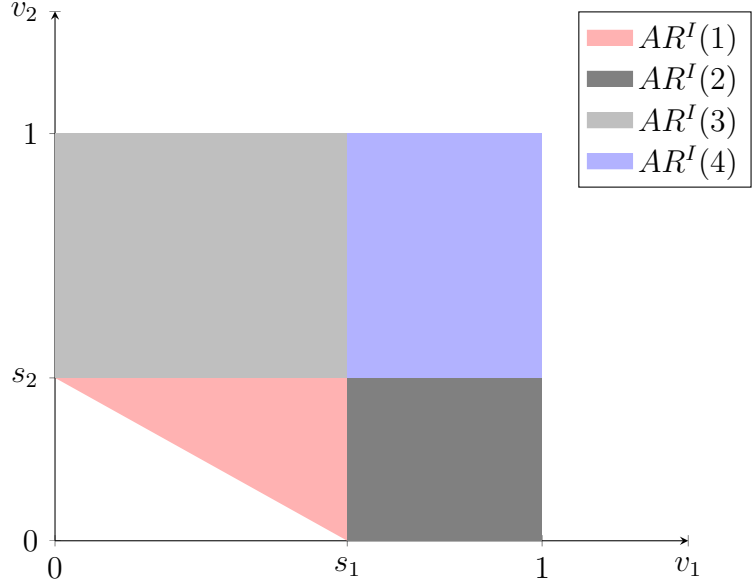
\begin{figure}
\centering
\begin{tikzpicture}
\begin{axis}[
    axis lines = left,
    xmin=0,
        xmax=1.3,
        ymin=0,
        ymax=1.3,
        xtick={0,0.6,1,1.3},
        ytick={0,0.4,1,1.3},
        xticklabels = {$0$, $s_1$, $1$, $v_1$},
        yticklabels = {$0$, $s_2$, $1$, $v_2$},
        legend style={at={(1.1,1)}}
]

\path[name path=axis] (axis cs:0,0) -- (axis cs:1,0);
\path[name path=A] (axis cs:0,0.4) -- (axis cs:0.6,0);
\path[name path=B] (axis cs:0,0.4) -- (axis cs:1,0.4);
\path[name path=C] (axis cs:0,1) -- (axis cs:1,1);
\addplot[area legend, red!30] fill between[of=A and B,  soft clip={domain=0:0.6}];
\addplot[black!50] fill between[of=axis and B, soft clip={domain=0.6:1}];
\addplot[gray!50] fill between[of=B and C,  soft clip={domain=0:0.6}];
\addplot[blue!30] fill between[of=C and B, soft clip={domain=0.6:1}];
\legend{$AR^I(1)$,$AR^I(2)$, $AR^I(3)$, $AR^I(4)$};
 
\end{axis}
\end{tikzpicture}
\caption{Provision regions of Asymmetric Maxmin Public Good Mechanism (I) } \label{fig:S1}
\end{figure}\\
\textbf{Asymmetric Worst-Case Joint Distribution (I)}\\
Let  $\pi^*(v_1,v_2)$ denote the density of the value profile $(v_1,v_2)$ whenever the density exists. Let $Pr^*(v_1,v_2)$ denote the probability mass of the value profile $(v_1,v_2)$ whenever there is some probability mass on $(v_1,v_2)$. Let $AV(I):=\{v|s_2v_1+s_1v_2\ge s_1s_2\}$.     Asymmetric Worst-Case Joint Distribution (I) has the support $AV(I)$ and  is defined as follows:
$$\pi^*(v_1,v_2)=  \left\{
\begin{array}{lll}
\frac{2s_1s_2}{(s_1+s_2)(v_1+v_2)^3}     &      & { s_2v_1+s_1v_2\ge s_1s_2, v_1 \neq 1 ,v_2 \neq 1}\\
\frac{s_1s_2}{(s_1+s_2)(1+v_2)^2}  &      & {v_1=1,0\le v_2 < 1}\\
\frac{s_1s_2}{(s_1+s_2)(1+v_1)^2}   &      & {0\le v_1 < 1,v_2=1}
\end{array} \right. $$
$$Pr^*(1,1)=\frac{s_1s_2}{2(s_1+s_2)}$$
Equivalently, \textbf{Asymmetric Worst-Case Joint Distribution (I)} can be described by its marginal distributions and conditional distributions.  The marginal distributions are as follows: $\pi_i^*(v_i)=\frac{s_1s_2}{(s_1+s_2)(\frac{s_j}{s_i}(s_i-v_i)+v_i)^2}$ for $0\le v_i \le s_i$, $\pi_i^*(v_i)=\frac{s_1s_2}{(s_1+s_2)(v_i)^2}$ for $s_i<v_i<1$, $Pr^*_i(1)=\frac{s_1s_2}{(s_1+s_2)}$. That is, the marginal distribution of each agent is a combination of some generalized Pareto distribution and some equal revenue distribution. The conditional distributions are as follows: if $0\le v_j\le s_j$, then $\pi_i^*(v_i|v_j)=\frac{2(\frac{s_i}{s_j}(s_j-v_j)+v_j)^2}{(v_i+v_j)^3}$ for $s_i-\frac{s_i}{s_j}v_j\le v_i<1$ and $Pr_i^*(v_i=1|v_j)=\frac{(\frac{s_i}{s_j}(s_j-v_j)+v_j)^2}{(1+v_j)^2}$; if $s_j< v_j<1$, then $\pi_i^*(v_i|v_j)=\frac{2v_j^2}{(v_i+v_j)^3}$ for $0\le v_i<1$ and $Pr_i^*(v_i=1|v_j)=\frac{v_j^2}{(1+v_j)^2}$; if $v_j=1$,  then $\pi_i^*(v_i|v_j=1)=\frac{1}{(v_i+1)^2}$ for $0\le v_i<1$ and $Pr_i^*(v_i=1|v_j=1)=\frac{1}{2}$. That is, the conditional distribution is some truncated generalized Pareto distribution with some mass on 1 (the exact distribution depends on the other agent's value). 
\begin{remark}
When $(m_1,m_2)\in Boundary (I)$, $s_2=1$ and $s_1\in (0,1)$.
\end{remark}
\begin{figure}
\centering
\begin{tikzpicture}
\begin{axis}[
    axis lines = left,
    xmin=0,
        xmax=1.3,
        ymin=0,
        ymax=1.3,
        xtick={0,0.6,1,1.3},
        ytick={0,0.4,1,1.3},
        xticklabels = {$0$, $s_1$, $1$, $v_1$},
        yticklabels = {$0$, $s_2$, $1$, $v_2$},
        legend style={at={(1.1,1)}}
]

\path[name path=axis] (axis cs:0,0) -- (axis cs:1,0);
\path[name path=A] (axis cs:0,0.4) -- (axis cs:0.6,0);
\path[name path=B] (axis cs:0,0.4) -- (axis cs:1,0.4);
\path[name path=C] (axis cs:0,1) -- (axis cs:1,1);
\addplot[area legend, red!30] fill between[of=A and C,  soft clip={domain=0:0.6}];
\addplot[red!30] fill between[of=axis and C, soft clip={domain=0.59:1}];
\addplot[red, ultra thick] coordinates {(0, 1) (1, 1)};
\addplot[red, ultra thick] coordinates {(1, 0) (1, 1)};
\node[black,right] at (axis cs:1,1){\small{$mass$}};
\node at (axis cs:1,1) [circle, scale=0.5, draw=red,fill=red] {};
\end{axis}
\end{tikzpicture}
\caption{Asymmetric Worst-Case Joint Distribution (I)} \label{fig:S2}
\end{figure}
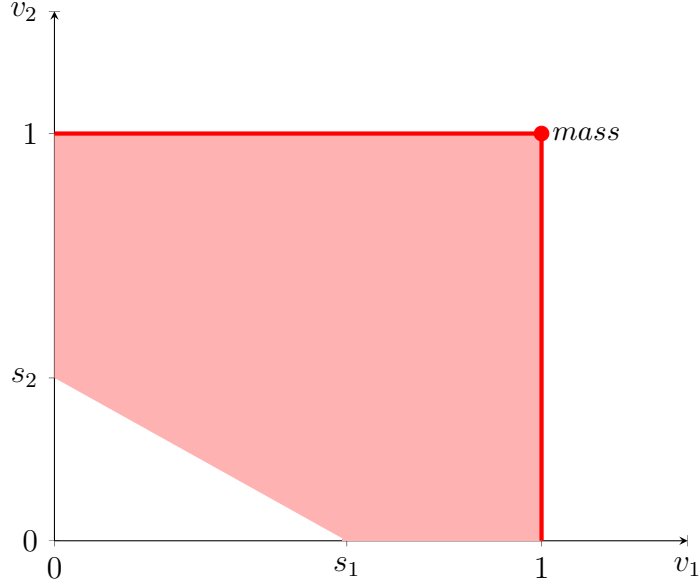
\begin{lemma}\label{l2}
For any given $(m_1,m_2)\in Area (I)$, there exists a  solution $s_1,s_2$ to the system of equations \eqref{eq2} and \eqref{eq3}.
\end{lemma}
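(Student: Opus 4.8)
The plan is to prove this by a winding-number (topological degree) argument for the map $H^I:=(H^I_1,H^I_2)$ on the closed triangle $T:=\{(s_1,s_2):0\le s_1\le s_2\le 1\}$, whose vertices are $(0,0)$, $(0,1)$ and $(1,1)$. The first step is to note that, although $H^I_1,H^I_2$ carry a factor $(s_1-s_2)^{2}$ in a denominator, they extend to continuous functions on all of $T$: along the diagonal $\{s_1=s_2\}$ the singularity is removable, since a Taylor expansion in $h=s_1-s_2$ shows the $h^{-2}$ contribution of $\tfrac{s_1^2}{(s_1-s_2)^2}\ln\tfrac{s_1}{s_2}$ cancels against $\tfrac{s_2}{s_2-s_1}$, and the common limit of $H^I_1$ and $H^I_2$ at $(r,r)$ equals $g(r):=\tfrac34 r-\tfrac12 r\ln r$ --- exactly the left-hand side of the equation that defines the Case~(I) threshold $r_1$. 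One also checks $H^I(s_1,s_2)\to(0,0)$ as $s_1\to0$, because the prefactor $\tfrac{s_1s_2}{s_1+s_2}$ vanishes faster than $-\ln s_1$ blows up. Hence $H^I$ is continuous on the compact set $T$.

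The crux is the second step: computing the image of $\partial T$. Along the hypotenuse $\{s_1=s_2=r\}$ one gets $H^I(r,r)=(g(r),g(r))$, which traces the diagonal segment from $(0,0)$ to $(g(1),g(1))=(\tfrac34,\tfrac34)$, injectively since $g'(r)=\tfrac14-\tfrac12\ln r>0$ on $(0,1]$. Along $\{s_2=1\}$, substituting $s_2=1$ into the formulas for $H^I_1,H^I_2$ returns precisely the functions $Z^I_1(s_1),Z^I_2(s_1)$ defining Boundary~(I); this edge therefore maps onto Boundary~(I), injectively because $Z^I_1$ is strictly increasing, with endpoints $(0,0)$ (as $s_1\to0$) and $(\tfrac34,\tfrac34)$ (as $s_1\to1$, by a short l'H\^opital computation; note $Z^I(r)=Z^I_1(r)-Z^I_2(r)\to0$ there). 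The remaining edge $\{s_1=0\}$ collapses to $(0,0)$. Since $Z^I(r)>0$ on $(0,1)$, the diagonal and Boundary~(I) meet only at the shared endpoints $(0,0)$ and $(\tfrac34,\tfrac34)$, so $H^I|_{\partial T}$ traces a simple closed curve, which is exactly the boundary of $\overline{\text{Area (I)}}$.

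To conclude, fix $(m_1,m_2)\in\text{Area (I)}$. Because Area (I) is the bounded open region enclosed by the Jordan curve $H^I(\partial T)$, the winding number of $H^I|_{\partial T}$ about $(m_1,m_2)$ is $\pm1\ne0$; by the standard identity relating the Brouwer degree of $H^I$ on $T$ at $(m_1,m_2)$ to this winding number, there is $(s_1,s_2)\in T$ with $H^I(s_1,s_2)=(m_1,m_2)$, i.e.\ a solution of \eqref{eq2} and \eqref{eq3}. Moreover the preimage lies in the interior of $T$: were it on $\partial T$, then $(m_1,m_2)\in H^I(\partial T)=\partial\overline{\text{Area (I)}}$, contradicting $(m_1,m_2)\in\text{Area (I)}$. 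Hence $0<s_1<s_2<1$, so in particular $s_1,s_2\in[0,1]$, as claimed.

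The main obstacle is the boundary analysis in the second step: confirming that the expressions for $H^I_1,H^I_2$ genuinely degenerate to $g$ along $\{s_1=s_2\}$ and to $Z^I_1,Z^I_2$ along $\{s_2=1\}$, and that both ``ends'' of the boundary curve converge to $(\tfrac34,\tfrac34)$. These are elementary but delicate limit computations; the rest of the argument is soft topology. If one wishes to avoid degree theory altogether, the same existence conclusion follows by reparametrizing $T$ as the unit square through $s_1=\sigma$, $s_2=\sigma+(1-\sigma)\theta$ and applying the Poincar\'e--Miranda theorem, the required sign conditions on opposite faces being read off from the very same boundary computations.
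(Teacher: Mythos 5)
Your proof is correct, but it takes a genuinely different route from the paper. The paper's argument (which also relies on the continuity across the diagonal that you establish in your first step, via its Claim \ref{cl1}) proceeds essentially one-dimensionally: it proves strict monotonicity of $H_1^{I^*}$ in each variable separately (Claims \ref{cl2} and \ref{cl3}, with the diagonal limits \ref{cl9}, \ref{cl10}), uses the global implicit function theorem to extract a continuous implicit function $s_1=F^I(s_2)$ solving the first equation for a fixed $m_1$, and then applies the intermediate value theorem to $G^I(s_2):=H_2^{I^*}(F^I(s_2),s_2)$ to hit the target $m_2$. Your degree-theoretic (winding-number / Brouwer degree) argument works directly in two dimensions: the technical burden shifts entirely to the boundary analysis of $\partial T$ under $H^I$, and the interior existence then follows ``for free'' from a nonzero degree. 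This is a real simplification in one respect --- you never need the interior monotonicity of $H_1^{I^*}$ (the paper's Claims \ref{cl2}, \ref{cl3}), only the injectivity of the boundary parametrizations (monotonicity of $g$ on the diagonal and of $Z^I_1$ along $\{s_2=1\}$) together with $Z^I>0$, both of which the paper already records in the definition of Boundary (I). The trade-off is that the paper's construction yields extra structural information (a continuous, monotone implicit curve $s_1=F^I(s_2)$) that a pure degree argument does not, and the paper's tools (IFT and IVT) are more elementary than Brouwer degree; you anticipate this by noting the Poincar\'e--Miranda alternative. One small point worth making explicit in a polished write-up: the parametrization of $\partial T$ is not injective (the edge $\{s_1=0\}$ collapses to a point), so $H^I|_{\partial T}$ is not an embedding of the circle; the winding-number computation is still valid because the collapsed edge contributes nothing and the other two edges map injectively onto arcs that meet only at their shared endpoints, which is exactly the combinatorial data needed for the degree to be $\pm1$.
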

\begin{theorem}\label{t3}
When $(m_1,m_2)\in Area (I)$, \textbf{Asymmetric Maxmin Public Good  Mechanism (I)} and  \textbf{Asymmetric Worst-Case Joint Distribution (I)} form a Nash equilibrium. The revenue guarantee is $\frac{s_1s_2}{s_1+s_2}$.
\end{theorem}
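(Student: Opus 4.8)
The plan is to verify that the pair $\bigl((q^*,t^*),\pi^*\bigr)$ specified above is a saddle point of $U$, i.e.\ $U((q^*,t^*),\pi)\ge U((q^*,t^*),\pi^*)\ge U((q,t),\pi^*)$ for every feasible $(q,t)$ and every $\pi\in\Pi(\mathbf m)$; by the argument in Section~\ref{s3} the common value is then simultaneously the maxmin value and $GR((q^*,t^*))$, and I will compute it to be $\frac{s_1s_2}{s_1+s_2}$. Four things must be checked: (i) $(q^*,t^*)\in\mathbb D$; (ii) $\pi^*\in\Pi(\mathbf m)$; (iii) $(q^*,t^*)$ is a best response to $\pi^*$; (iv) $\pi^*$ is a best response to $(q^*,t^*)$.

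For (i), Lemma~\ref{l2} gives a solution $(s_1,s_2)$ of \eqref{eq2}--\eqref{eq3}, and one checks $0<s_2<s_1\le 1$ (the strict ordering because $m_1>m_2$), so that $c$ and $\ln\frac{s_1}{s_2}$ are well defined and positive. I would then check that the four branches of $q^*$ agree on the shared boundaries of $AR^I(1),\dots,AR^I(4)$ (continuity), that $q^*$ is nondecreasing in $v_1$ and in $v_2$, and that $0\le q^*\le 1$ with $q^*\equiv 0$ off $AV(I)$ and $q^*(1,1)=1$; Proposition~\ref{p1} then yields that the induced $t^*$ makes $(q^*,t^*)$ DSIC and EPIR. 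For (ii), $\pi^*$ is visibly nonnegative, so it remains to check that its total mass (interior density, the two one-dimensional densities on $\{v_1=1\}$ and $\{v_2=1\}$, and the atom $Pr^*(1,1)$) integrates to $1$ and that $\int v_i\,d\pi^*=m_i$; the last two are exactly \eqref{eq2}--\eqref{eq3}, so this reduces to a direct integration together with Lemma~\ref{l2}.

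For (iii), I would use the reduced-form identity $E_{\pi^*}\bigl[\sum_i t_i\bigr]=\int_V q(v)\,\Phi(v)\,dv$ from Section~\ref{s4} (extended to distributions with atoms via the primitive definition of the weighted virtual value $\Phi$, which is well defined even where $\pi^*=0$), so that it suffices to show $\Phi(1,1)>0$, $\Phi(v)=0$ for all $v\in AV(I)\setminus\{(1,1)\}$, and $\Phi(v)\le 0$ for all $v\notin AV(I)$; granting these, the feasible, monotone $q^*$ — supported on $AV(I)$ with $q^*(1,1)=1$ — maximizes $\int q\,\Phi$ over all feasible monotone provision rules. On the interior of $AV(I)$ I would compute the conditional c.d.f.s $\Pi_i^*(v_i\mid v_j)$ from the stated conditional densities and verify, region by region, the cancellation $\bigl(v_1-\tfrac{1-\Pi_1^*(v_1\mid v_2)}{\pi_1^*(v_1\mid v_2)}\bigr)+\bigl(v_2-\tfrac{1-\Pi_2^*(v_2\mid v_1)}{\pi_2^*(v_2\mid v_1)}\bigr)=0$; the atoms on $\{v_i=1\}$ and the point mass at $(1,1)$ are handled through the primitive formula $\Phi(v)=\pi(v)\sum_i v_i-\sum_i[\pi_i(v_{-i})-\Pi_i(v_i,v_{-i})]$, which also gives $\Phi(1,1)>0$; for $v\notin AV(I)$ one has $\pi^*(v)=0$ and $\Pi_i^*(v_i\mid v_j)=0$, so $\Phi(v)=-\sum_i\pi_i^*(v_{-i})\le 0$.

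For (iv), by Lemma~\ref{l1} it is enough to exhibit $\lambda_1,\lambda_2,\mu$ with $\lambda_1v_1+\lambda_2v_2+\mu\le t_1^*(v)+t_2^*(v)$ for all $v$ and equality on $\mathrm{supp}(\pi^*)=AV(I)$. Writing $R^*(v):=t_1^*(v)+t_2^*(v)=(v_1+v_2)q^*(v)-\int_0^{v_1}q^*(s,v_2)\,ds-\int_0^{v_2}q^*(v_1,s)\,ds$ from Proposition~\ref{p1} and substituting the piecewise $q^*$ (splitting the inner integrals at the region boundaries), I would show $R^*$ is affine on $AV(I)$, which reads off $\lambda_1,\lambda_2,\mu$; off $AV(I)$, $q^*\equiv0$ forces $R^*\equiv0$, so the inequality becomes $\lambda_1v_1+\lambda_2v_2+\mu\le 0$ there, which follows from the explicit constants. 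Then $U((q^*,t^*),\pi^*)=E_{\pi^*}[R^*]=\lambda_1m_1+\lambda_2m_2+\mu$, and substituting $\lambda_1,\lambda_2,\mu$ and \eqref{eq2}--\eqref{eq3} collapses this to $\frac{s_1s_2}{s_1+s_2}$, whence $GR((q^*,t^*))=\sup_{(q,t)\in\mathbb D}GR((q,t))=\frac{s_1s_2}{s_1+s_2}$. The main obstacle is the bookkeeping in steps (iii) and (iv): verifying the two-agent virtual-value terms cancel on each of $AR^I(1)$--$AR^I(4)$ and that $\Phi\le0$ outside $AV(I)$ (which hinges on monotonicity of the relevant one-variable functions), and that after integrating the piecewise provision rule the total payment $R^*$ is genuinely affine across region boundaries; continuity of $q^*$ at internal boundaries and the bounds $0\le q^*\le1$ are the other places where the precise forms of $s_1,s_2,c$ — and hence Lemma~\ref{l2} — enter.
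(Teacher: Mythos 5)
Your proposal is correct and takes essentially the same approach as the paper: verify a saddle point by (a) showing $q^*$ is optimal against $\pi^*$ via the weighted virtual value conditions $\Phi(1,1)>0$, $\Phi\equiv 0$ on $AV(I)\setminus\{(1,1)\}$, $\Phi\le0$ off $AV(I)$, and (b) showing $\pi^*$ is optimal against $(q^*,t^*)$ via strong duality and the explicit multipliers $\lambda_1,\lambda_2,\mu$ satisfying complementary slackness on $AV(I)$. The paper compresses your housekeeping steps (i)--(ii) into "by construction," cites \eqref{eq67}, \eqref{eq79}, \eqref{eq92} for the affineness of total payment on $AV(I)$, and uses $\lambda_2=\frac{s_1}{s_2}\lambda_1$ together with $\lambda_1s_1+\mu=0$ to get the strict inequality off $AV(I)$, but the logical structure is identical to yours.
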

Let us illustrate \textbf{Asymmetric Maxmin Public Good Mechanism (I)}. First, we  guess that  \textlabel{(A3)}{a3} that in the maxmin solution, the principal provides the  public good with positive probability if and only if the sum of  weighted reported values  exceeds certain threshold, i.e., $s_2v_1+ s_1v_2>s_1s_2$ where $s_1, s_2 \in (0,1]$ and $s_1\neq s_2$. Second, similarly,  in the maxmin solution, it is without loss to assume  \ref{b}. Third, similarly,  we conjecture that the support of the worst case joint distribution $\pi^*$ is the area in which $v\in AV(I)$. We assume that $q^*(r_1,r_1)=c$ for some $c\in [0,1]$. Then we divide the support into four regions: $AR^I(1), AR^I(2), AR^I(3)$ and $AR^I(4)$. Similar to Case (I), the idea is to solve for the provision probability $q^*$ for each region sequentially using the complentary slackness condition \eqref{eq16} and finally solve for $c$ by using \ref{b}. The details are deferred to the Appendix A. For \textbf{Asymmetric Worst-Case Joint Distribution (I)}, we have 
\begin{equation}
    \Phi(1,1)>0
\end{equation}
\begin{equation}
  \Phi(v)=0  \quad \forall s_2v_1+s_1v_2\ge s_1s_2\quad and \quad v \neq (1,1)  
\end{equation}
\begin{equation}
  \Phi(v)\le 0  \quad \forall s_2v_1+s_1v_2< s_1s_2
\end{equation}   
The construction procedure for the joint distribution is similar. Therefore we omit it.  To make sure that \textbf{Asymmetric Worst-Case Joint Distribution (I)} satisfies the mean constraints, we have a system of two equations \eqref{eq2} and \eqref{eq3}. Lemma \ref{l2} states that a solution exists.
\subsection{Area (II): Close and High Expectations}
Let $Boundary (II): \{(m_1,m_2)|\frac{(1+r)(1-r^2)}{2r^2}\ln{\frac{1}{1+r}}+\frac{1+r^2}{2r}=m_1, \frac{1+r}{2r^2}\ln{(1+r)}-\frac{1}{2r}+\frac{1}{2}=m_2, 0<r\le 1 \}$. It can be shown that $Boundary (II)$ is indeed equivalent to some decreasing function $B_{II}(m_1)$ where $0<m_1\le 1$. To see this, note $Z^{II}_1(r):= \frac{(1+r)(1-r^2)}{2r^2}\ln{\frac{1}{1+r}}+\frac{1-r}{2r}+\frac{1+r}{2}$ is increasing w.r.t $r$ and $Z^{II}_2(r):= \frac{1+r}{2r^2}\ln{(1+r)}-\frac{1}{2r}+\frac{1}{2}$ is decreasing w.r.t. $r$. In addition, $m_1>B_{II}(m_1)$. To see this, it can be shown that $Z^{II}(r): = Z^{II}_1(r)-Z^{II}_2(r)>0$ for $0<r<1$. Now let $Area (II): =\{(m_1,m_2)|m_2<m_1, m_2\ge B_{II}(m_1), 1\ge m_1> \frac{3}{4}\}$. We propose a pair of strategy profile as follows. \\
\textbf{Asymmetric Maxmin Public Good Mechanism (II)}\\   
Let $v=(v_1,v_2)$ be the reported value profile of the two agents. Let $t_1, t_2$ be the unique solution to the following equation:
\begin{equation}\label{eq4}
   m_1=\frac{(1+t_1)(1+t_2)(1-t_1)^2}{2(t_1-t_2)^2}\ln{\frac{1+t_2}{1+t_1}}+\frac{(1-t_1t_2)(1-t_1)}{2(t_1-t_2)}+\frac{1+t_1}{2}:= H^{II}_1(t_1,t_2)
\end{equation}
\begin{equation}\label{eq5}
   m_2=\frac{(1+t_1)(1+t_2)(1-t_2)^2}{2(t_1-t_2)^2}\ln{\frac{1+t_1}{1+t_2}}+\frac{(1-t_1t_2)(1-t_2)}{2(t_2-t_1)}+\frac{1+t_2}{2}:=H^{II}_2(t_1,t_2)
\end{equation}
Let $AR^{II}: = \{(v_1,v_2)|(1-t_2)v_1+(1-t_1)v_2\ge 1-t_1t_2\}$. The provision rule is as follows: $$ q^*(v_1,v_2)=  \left\{
\begin{array}{lll}
\frac{1}{\ln{\frac{1+t_2}{1+t_1}}}(\ln{(v_1+\frac{t_2-1}{1-t_1}(v_1-t_1)+1)}-\ln{(v_2+\frac{1-t_1}{t_2-1}(v_2-1)+t_1)})    &      & {v\in AR^{II}}
\\
0    &      & {otherwise}
\end{array} \right. $$
The payment rule is characterized by Proposition \ref{p1}.
\begin{figure}
\centering
\begin{tikzpicture}
\begin{axis}[
    axis lines = left,
    xmin=0,
        xmax=1.3,
        ymin=0,
        ymax=1.3,
        xtick={0,0.6,1,1.3},
        ytick={0,0.4,1,1.3},
        xticklabels = {$0$, $t_1$, $1$, $v_1$},
        yticklabels = {$0$, $t_2$, $1$, $v_2$},
        legend style={at={(1.1,1)}}
]

\path[name path=axis] (axis cs:0,0) -- (axis cs:1,0);
\path[name path=A] (axis cs:0.6,1) -- (axis cs:1,0.4);
\path[name path=B] (axis cs:0,0.4) -- (axis cs:1,0.4);
\path[name path=C] (axis cs:0,1) -- (axis cs:1,1);
\addplot[area legend, red!30] fill between[of=A and C,  soft clip={domain=0.6:1}];
\legend{$AR^{II}$};
 
\end{axis}
\end{tikzpicture}
\caption{Provision regions of Asymmetric Maxmin Public Good Mechanism (II) } \label{fig:S1}
\end{figure}
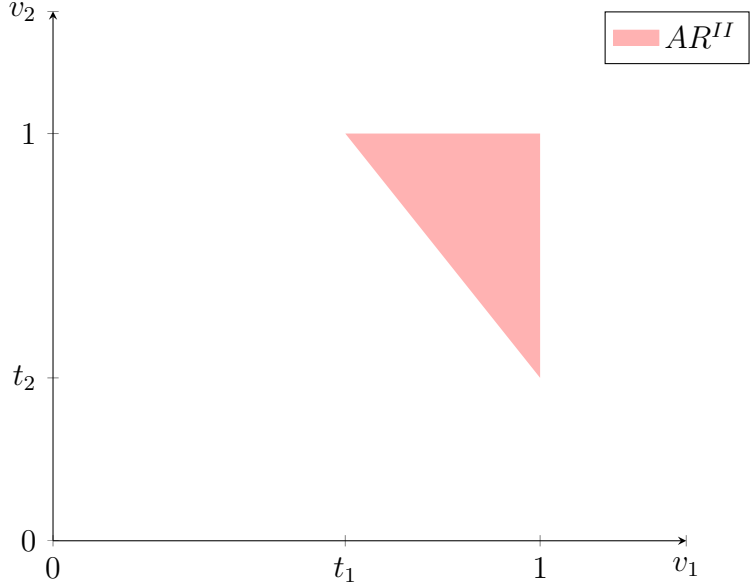\\
\textbf{Asymmetric Worst-Case Joint Distribution (II)}\\
Let  $\pi^*(v_1,v_2)$ denote the density of the value profile $(v_1,v_2)$ whenever the density exists. Let $Pr^*(v_1,v_2)$ denote the probability mass of the value profile $(v_1,v_2)$ whenever there is some probability mass on $(v_1,v_2)$.  Asymmetric Worst-Case Joint Distribution (II) has the support $SR^{II}$ and  is defined as follows:
$$\pi^*(v_1,v_2)=  \left\{
\begin{array}{lll}
\frac{(1+t_1)(1+t_2)}{(v_1+v_2)^3}     &      & { (1-t_2)v_1+(1-t_1)v_2\ge 1-t_1t_2, v_1 \neq 1 ,v_2 \neq 1}\\
\frac{(1+t_1)(1+t_2)}{2(1+v_2)^2}  &      & {v_1=1,t_2\le v_2 < 1}\\
\frac{(1+t_1)(1+t_2)}{2(1+v_1)^2}   &      & {t_1\le v_1 < 1,v_2=1}
\end{array} \right. $$
$$Pr^*(1,1)= \frac{(1+t_1)(1+t_2)}{4}$$
Equivalently, \textbf{Asymmetric Worst-Case Joint Distribution (II)} can be described by its marginal distributions and conditional distributions.  The marginal distributions are as follows: $\pi_i^*(v_i)=\frac{(1+t_1)(1+t_2)}{2(\frac{t_j-t_i}{1-t_i}v_i+\frac{1-t_1t_2}{1-t_i})^2}$ for $t_i\le v_i < 1$, $Pr^*_i(1)=\frac{1+t_i}{2}$. That is, the marginal distribution of each agent is a combination of a uniform distribution on $[t_i,1)$ and an atom on 1. The conditional distributions are as follows: if $v_j=t_j$, then $Pr_i^*(v_i=1|v_j=t_j)=1$; if $t_j< v_j< 1$, then $\pi_i^*(v_i|v_j)=\frac{2(\frac{t_i-t_j}{1-t_j}v_j+\frac{1-t_1t_2}{1-t_j})^2}{(v_i+v_j)^3}$ for $\frac{1-t_1t_2}{1-t_j}-\frac{1-t_i}{1-t_j}v_j\le v_i<1$ and $Pr_i^*(v_i=1|v_j)=\frac{(\frac{t_i-t_j}{1-t_j}v_j+\frac{1-t_1t_2}{1-t_j})^2}{(1+v_j)^2}$;  if $v_j=1$,  then $\pi_i^*(v_i|v_j=1)=\frac{1+t_i}{(v_i+1)^2}$ for $t_i\le v_i<1$ and $Pr_i^*(v_i=1|v_j=1)=\frac{1+t_i}{2}$. That is, the conditional distribution is (generically) some truncated generalized Pareto distribution with some mass on 1 (the exact distribution depends on the other agent's value). 
\begin{remark}
When $(m_1,m_2)\in Boundary (II)$, $t_2=0$ and $t_1\in (0,1]$.
\end{remark}
\begin{figure}
\centering
\begin{tikzpicture}
\begin{axis}[
    axis lines = left,
    xmin=0,
        xmax=1.3,
        ymin=0,
        ymax=1.3,
        xtick={0,0.6,1,1.3},
        ytick={0,0.4,1,1.3},
        xticklabels = {$0$, $t_1$, $1$, $v_1$},
        yticklabels = {$0$, $t_2$, $1$, $v_2$},
        legend style={at={(1.1,1)}}
]

\path[name path=axis] (axis cs:0,0) -- (axis cs:1,0);
\path[name path=A] (axis cs: 0.6,1) -- (axis cs:1,0.4);
\path[name path=B] (axis cs:0,0.4) -- (axis cs:1,0.4);
\path[name path=C] (axis cs:0,1) -- (axis cs:1,1);
\addplot[area legend, red!30] fill between[of=A and C,  soft clip={domain=0.6:1}];
\addplot[red, ultra thick] coordinates {(0.6, 1) (1, 1)};
\addplot[red, ultra thick] coordinates {(1, 0.4) (1, 1)};
\node[black,right] at (axis cs:1,1){\small{$mass$}};
\node at (axis cs:1,1) [circle, scale=0.5, draw=red,fill=red] {};
\end{axis}
\end{tikzpicture}
\caption{Asymmetric Worst-Case Joint Distribution (II)} \label{fig:S2}
\end{figure}
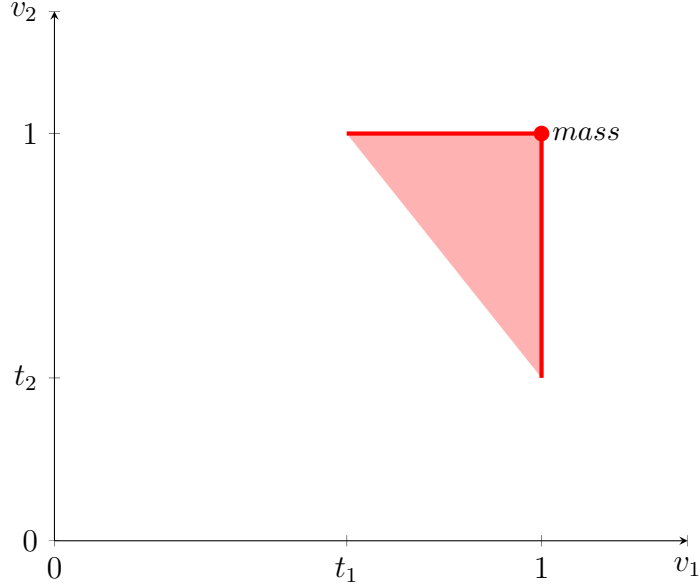
\begin{lemma}\label{l3}
For any given $(m_1,m_2)\in Area (II)$, there exists a solution $t_1,t_2$ to  to the system of equations \eqref{eq4} and \eqref{eq5}.
\end{lemma}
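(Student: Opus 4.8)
The existence claim for the system \eqref{eq4}--\eqref{eq5} is governed by the geometry of the map $\Psi:=(H^{II}_1,H^{II}_2)$ on the parameter triangle $\mathcal{T}:=\{(t_1,t_2):0\le t_2\le t_1\le 1\}$, and the plan is a two‑step shooting argument exploiting that the three sides of $\mathcal{T}$ are carried onto the three boundary curves of $Area(II)$.

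The first step is to record the regularity of $\Psi$ and to identify the image of $\partial\mathcal{T}$. On the open region $t_2<t_1$ the expressions are smooth; on the diagonal the apparent $(t_1-t_2)^{-2}$ singularity is removable, and a second‑order expansion in $t_1-t_2$ gives $H^{II}_1(t,t)=H^{II}_2(t,t)=\tfrac{3+2t-t^2}{4}$, i.e.\ exactly the symmetric relation between $m$ and $r_2$ underlying \eqref{eq66} with $t=r_2$ (cf.\ Theorem \ref{t2}). Hence $\Psi$ extends continuously to $\overline{\mathcal{T}}$, the diagonal is mapped onto $\{m_1=m_2,\ \tfrac34<m_1\le 1\}$; the side $\{t_2=0\}$ is, by the construction of $Boundary(II)$ (after matching its parameter $r$ to $t_1$), mapped onto the $m_1>\tfrac34$ portion of $Boundary(II)$; and setting $t_1=1$ annihilates the logarithmic and middle terms of $H^{II}_1$, so the side $t_1=1$ is mapped onto $\{m_1=1,\ \ln 2\le m_2\le 1\}$. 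The corners $(0,0),(1,0),(1,1)$ go to $(\tfrac34,\tfrac34),(1,\ln 2),(1,1)$.

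The second step fixes $m_1\in(\tfrac34,1]$ and shoots in $t_2$. The key monotonicity claim is that $t_1\mapsto H^{II}_1(t_1,t_2)$ is continuous and strictly increasing for each fixed $t_2$, with range covering $m_1$. I would prove this not by brute differentiation of the closed form but from the integral representation $H^{II}_1(t_1,t_2)=\int_{t_1}^{1}v\,\pi^*_1(v)\,dv+\tfrac{1+t_1}{2}$ furnished by the marginal of the Asymmetric Worst‑Case Joint Distribution (II): the law with density $\pi^*_1$ on $[t_1,1)$ and atom $\tfrac{1+t_1}{2}$ at $1$ is increasing in $t_1$ in the first‑order stochastic dominance order (its CDF is an explicit decreasing function of $t_1$), so its mean is increasing in $t_1$. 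Granting this, for each $t_2\in[0,\bar t_2]$, where $\bar t_2$ solves $\tfrac{3+2\bar t_2-\bar t_2^{\,2}}{4}=m_1$, there is a unique $t_1(t_2)\in[t_2,1]$ with $H^{II}_1(t_1(t_2),t_2)=m_1$; the map $t_2\mapsto t_1(t_2)$ is continuous, with $t_1(\bar t_2)=\bar t_2$ and $t_1(0)$ the $Boundary(II)$‑parameter over $m_1$. Then $g(t_2):=H^{II}_2(t_1(t_2),t_2)$ is continuous with $g(0)=B_{II}(m_1)$ ($t_2=0$ side) and $g(\bar t_2)=m_1$ (diagonal), so by the Intermediate Value Theorem $g$ attains every value in $[B_{II}(m_1),m_1)$. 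Since $Area(II)=\{(m_1,m_2):\tfrac34<m_1\le 1,\ B_{II}(m_1)\le m_2<m_1\}$, for each $(m_1,m_2)\in Area(II)$ the pair $(t_1(t_2),t_2)$ with $g(t_2)=m_2$ solves \eqref{eq4}--\eqref{eq5}, which is the lemma.

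The main obstacle is the interior monotonicity of $H^{II}_1$ in $t_1$ (and, if uniqueness is wanted, of $g$ in $t_2$): the closed forms are transcendental and not visibly monotone, so the stochastic‑dominance argument on the underlying marginal family is the clean route, with direct calculus on the explicit formula as a painful fallback. A more robust alternative that bypasses interior monotonicity altogether is a degree‑theoretic argument: the first step shows $\Psi|_{\partial\mathcal{T}}$ traverses the Jordan curve $\partial\,\overline{Area(II)}$ once, whence $\deg(\Psi,\mathcal{T},p)=\pm 1$ for every $p\in Area(II)$ and thus $p\in\Psi(\mathcal{T})$.
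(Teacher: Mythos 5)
Your proof follows essentially the same shooting scheme as the paper's Appendix B proof: fix $m_1$, use strict monotonicity of $H^{II}_1$ in $t_1$ to solve for a continuous branch $t_1(t_2)$, evaluate $H^{II}_2$ along this branch, and apply the Intermediate Value Theorem between the endpoint values $m_1$ (at the diagonal) and $B_{II}(m_1)$ (at $t_2=0$). The paper carries this out by defining extensions $H^{II^*}_1, H^{II^*}_2$ on the diagonal (Claim \ref{cl4}), proving strict monotonicity of $H^{II^*}_1$ in $t_1$ and in $t_2$ by direct differentiation and a change of variables $z_3=\tfrac{1+t_2}{1+t_1}$ (Claims \ref{cl5}, \ref{cl6}), computing the diagonal limits of the partials (Claims \ref{cl11}, \ref{cl12}) to invoke a global implicit function theorem, and finally treating $m_1=1$ separately. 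So the overall architecture is the same.

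Where you genuinely depart is in the proposed route to the key monotonicity of $H^{II}_1$ in $t_1$: instead of brute differentiation, you invoke the integral identity $H^{II}_1(t_1,t_2)=\int_{t_1}^1 v\,\pi^*_1(v)\,dv+\tfrac{1+t_1}{2}$ and claim the family of marginals $\pi^*_1(\cdot;t_1,t_2)$ is first‑order stochastically increasing in $t_1$. This is a nice idea and does lead to a shorter computation: after simplification, the CDF of agent $1$'s marginal equals $\Pi^*_1(v_1)=\tfrac{(1+t_2)(1-t_1)(v_1-t_1)}{2[(t_2-t_1)v_1+1-t_1t_2]}$ on $[t_1,1)$, and $\partial\Pi^*_1/\partial t_1$ has numerator $(1-t_2)v_1^2+(2t_1t_2-t_1^2-1)v_1+(2t_1-1-t_1^2t_2)$, an upward parabola whose values at $v_1=t_1$ and $v_1=1$ are $-(1-t_1)^2(1+t_1)$ and $-(1+t_2)(1-t_1)^2$ respectively, both negative — so the CDF is indeed strictly decreasing in $t_1$ on the whole support. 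But as written your proposal only asserts this FOSD order; it does not verify it, and the verification (while cleaner than the paper's Claim \ref{cl5}) is not trivial because both the support endpoint and the normalization depend on $t_1$. The same remark applies to the continuity of $t_1(\cdot)$ through the diagonal: the closed form of $H^{II}_1$ is singular at $t_1=t_2$, so you need either the removable-singularity computation the paper does in Claim \ref{cl4} (and the diagonal limits of the partials in Claims \ref{cl11}–\ref{cl12} if you invoke the implicit function theorem), or an argument from the distributional representation that sidesteps the closed form. Your degree-theoretic alternative is a legitimate, more robust route that avoids interior monotonicity entirely, but it is also only sketched; it would still require the boundary analysis to be made precise and the map to be shown continuous across the diagonal.
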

\begin{theorem}\label{t4}
When $(m_1,m_2)\in Area (II)$, \textbf{Asymmetric Maxmin Public Good  Mechanism (II)} and  \textbf{Asymmetric Worst-Case Joint Distribution (II)} form a Nash equilibrium. The revenue guarantee is $\frac{(1+t_1)(1+t_2)}{2}$.
\end{theorem}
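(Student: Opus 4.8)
The plan is to verify, exactly as in the proofs of Theorems \ref{t1}--\ref{t3}, that the pair $\big((q^*,t^*),\pi^*\big)$ — with $q^*$ the provision rule displayed in \textbf{Asymmetric Maxmin Public Good Mechanism (II)}, $t^*$ the payment rule pinned down by Proposition \ref{p1}, and $\pi^*$ the \textbf{Asymmetric Worst-Case Joint Distribution (II)} — is a saddle point of $U$, i.e.\ $U((q^*,t^*),\pi)\ge U((q^*,t^*),\pi^*)\ge U((q,t),\pi^*)$ for every $\pi\in\Pi(\mathbf m)$ and every $(q,t)\in\mathbb D$. I would first invoke Lemma \ref{l3} to obtain a solution of the system \eqref{eq4}--\eqref{eq5} and check, using the descriptions of $Area (II)$ and $Boundary (II)$, that it satisfies $0\le t_2<t_1\le 1$ (with $t_2=0$ precisely on $Boundary (II)$), so that $q^*$ and $\pi^*$ are well defined; I would also verify directly that $\pi^*$ is a probability measure supported on $SR^{II}$ — summing the density over the triangle, the edge atoms on $\{v_i=1\}$, and the atom $Pr^*(1,1)$ — and that $\int v_i\,\pi^*(dv)=m_i$, which is exactly the content of \eqref{eq4}--\eqref{eq5}.

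For the inequality $U((q^*,t^*),\pi)\ge U((q^*,t^*),\pi^*)$ I would exhibit the dual certificate of Lemma \ref{l1}. Writing $\sum_i t^*_i(v)=(v_1+v_2)q^*(v)-\int_0^{v_1}q^*(s,v_2)\,ds-\int_0^{v_2}q^*(v_1,s)\,ds$ via Proposition \ref{p1}, and using the conjectured provision region $(1-t_2)v_1+(1-t_1)v_2\ge 1-t_1t_2$ (the analogue of \ref{a3}) together with $q^*\equiv0$ below the threshold, this becomes the functional equation
\[
\lambda_1v_1+\lambda_2v_2+\mu=(v_1+v_2)q^*(v)-\int_{\frac{1-t_1t_2}{1-t_2}-\frac{1-t_1}{1-t_2}v_2}^{v_1}q^*(x,v_2)\,dx-\int_{\frac{1-t_1t_2}{1-t_1}-\frac{1-t_2}{1-t_1}v_1}^{v_2}q^*(v_1,x)\,dx
\]
on $AR^{II}$, the analogue of \eqref{eq65}; by construction $q^*$ is the solution of this equation subject to the normalization \ref{b} that $q^*(1,1)=1$. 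So the first concrete step is to differentiate the claimed $q^*$ and confirm that the right-hand side collapses to an affine function of $(v_1,v_2)$, read off $\lambda_1,\lambda_2,\mu$, and check $\lambda_1m_1+\lambda_2m_2+\mu=\tfrac{(1+t_1)(1+t_2)}{2}$; this establishes \eqref{eq16} on $supp(\pi^*)=SR^{II}$. For \eqref{eq15} I must show $\sum_i t^*_i(v)\ge\lambda_1v_1+\lambda_2v_2+\mu$ for $v\notin SR^{II}$, where $q^*\equiv0$ forces $\sum_i t^*_i(v)=0$, so the claim reduces to $\lambda_1v_1+\lambda_2v_2+\mu\le0$ on the sub-threshold region — a statement about a single affine function on a triangle, which follows from the signs of $\lambda_1,\lambda_2,\mu$ found above. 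Integrating \eqref{eq15} against an arbitrary $\pi\in\Pi(\mathbf m)$ and \eqref{eq16} against $\pi^*$ yields $U((q^*,t^*),\pi)\ge\lambda_1m_1+\lambda_2m_2+\mu=U((q^*,t^*),\pi^*)$.

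For the reverse inequality $U((q^*,t^*),\pi^*)\ge U((q,t),\pi^*)$ I would compute the weighted virtual value $\Phi$ of $\pi^*$ from the formula in Section \ref{s4} (interpreting the identity $U((q,t),\pi^*)=\int q(v)\Phi(v)\,dv$ so as to include the atoms of $\pi^*$, as in the earlier proofs) and verify the three sign conditions — the analogue, for $Area (II)$, of the conditions displayed after Theorem \ref{t3}: $\Phi(1,1)>0$, $\Phi(v)=0$ for $v\in SR^{II}\setminus\{(1,1)\}$, and $\Phi(v)\le0$ for $v\notin SR^{II}$. This requires the conditional distributions $\Pi_i(v_i\mid v_{-i})$ of the truncated generalized Pareto pieces of $\pi^*$ together with the edge and corner atoms; the vanishing of $\Phi$ on the interior of $SR^{II}$ is precisely the condition used to reverse-engineer $\pi^*$, so it should follow by direct substitution, while $\Phi\le0$ off $SR^{II}$ needs a short sign/monotonicity argument on the complement. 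Since by Proposition \ref{p1} every admissible $(q,t)$ has $q$ nondecreasing in each coordinate with $q\in[0,1]$, it then suffices to check $q(v)\Phi(v)\le q^*(v)\Phi(v)$ pointwise: on $SR^{II}\setminus\{(1,1)\}$ both sides vanish; at $(1,1)$ use $q^*(1,1)=1\ge q(1,1)$ and $\Phi(1,1)>0$; off $SR^{II}$ use $q^*\equiv0$ and $\Phi\le0$. The last ingredient is to confirm that the displayed $q^*$ is itself admissible, i.e.\ $q^*(v)\in[0,1]$ and $q^*(\cdot,v_{-i})$ nondecreasing on $AR^{II}$, after which the common saddle value $U((q^*,t^*),\pi^*)=\tfrac{(1+t_1)(1+t_2)}{2}$ is the revenue guarantee.

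The main obstacle I anticipate is the algebra in the two ``construction'' steps: verifying that the cumbersome logarithmic $q^*$ simultaneously solves the complementary-slackness functional equation with an affine right-hand side \emph{and} stays in $[0,1]$ with the correct monotonicity on $AR^{II}$, and that the piecewise generalized-Pareto $\pi^*$ makes $\Phi$ vanish on its support and stay nonpositive off it. These are the analogues of the computations deferred to Appendix A for Case (I)/Area (I), but the weights $(1-t_2,1-t_1)$ and the shifted threshold $1-t_1t_2$ make the bookkeeping heavier; I would handle the degenerate case $t_2=0$ (on $Boundary (II)$) separately and check by continuity that it matches the adjacent Area (I) description and the symmetric Case (II).
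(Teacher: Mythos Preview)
Your proposal is correct and follows essentially the same saddle-point approach as the paper: part (ii) of the paper's proof uses the LP duality certificate $(\lambda_1,\lambda_2,\mu)$ from Lemma \ref{l1} together with \eqref{eq103} exactly as you outline, and part (i) invokes the weighted-virtual-value sign structure $\Phi(1,1)>0$, $\Phi=0$ on $AR^{II}\setminus\{(1,1)\}$, $\Phi\le0$ elsewhere, just as you propose (the paper simply defers this to ``similar to (i) of Theorem \ref{t1}''). The only cosmetic differences are that the paper computes the saddle value via $Pr^*(1,1)\cdot 2$ rather than via $\lambda_1m_1+\lambda_2m_2+\mu$ (both equal $\tfrac{(1+t_1)(1+t_2)}{2}$), and that the relevant assumption label is \ref{a4} rather than \ref{a3}.
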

Let us  illustrate \textbf{Asymmetric Maxmin Public Good Mechanism (II)}. We  guess   \textlabel{(A4)}{a4} that in the maxmin solution, the principal provides the  public good with positive probability if and only if the sum of weighted reported values exceeds certain threshold, i.e., $(1-t_2)v_1+(1-t_1)v_2>1-t_1t_2$ where $t_1,t_2\in [0,1)$. Second, similarly,  in the maxmin solution, it is without loss  to assume  \ref{b}. Third, similarly, we conjecture that the support of the worst case joint distribution $\pi^*$ is the area in which $v\in AR^{II}$. Together with (iv) in Proposition \ref{p1}, \ref{a4} and \eqref{eq16}, we obtain that for any $v\in AR^{II}$,
\begin{equation} \label{eq103}
\lambda_1v_1+\lambda_2v_2+\mu = (v_1+v_2)q^*(v)-\int_{\frac{1-t_1t_2}{1-t_2}-\frac{1-t_1}{1-t_2}v_2}^{v_1}q^*(x,v_2)dx-\int_{\frac{1-t_1t_2}{1-t_1}-\frac{1-t_2}{1-t_1}v_1}^{v_2}q^*(v_1,x)dx
\end{equation}
Then following similar procedures for solving for provision probability $q^*(v)$ when $v\in AR^I(1)$ in Area (I), we obtain  \textbf{Asymmetric Maxmin Public Good Mechanism (II)}.  For \textbf{Asymmetric Worst-Case Joint Distribution (II)}, we have 
\begin{equation}
    \Phi(1,1)>0
\end{equation}
\begin{equation}
  \Phi(v)=0  \quad \forall (1-t_2)v_1+(1-t_1)v_2\ge 1-t_1t_2\quad and \quad v \neq (1,1)  
\end{equation}
\begin{equation}
  \Phi(v)\le 0  \quad \forall (1-t_2)v_1+(1-t_1)v_2< 1-t_1t_2
\end{equation}   The construction procedure for the joint distribution is similar. Therefore we omit it.  To make sure that \textbf{Asymmetric Worst-Case Joint Distribution (II)} satisfies the mean constraints, we have a system of two equations \eqref{eq4} and \eqref{eq5}. Lemma \ref{l3} states the solution exists and is unique.
\subsection{Area (III): Moderate-Distance Expectations}
Let $Boundary (III):= \{(m_1,m_2)|r(1-\ln{r})=m_1,  r\ln{\frac{r+1}{r}}=m_2, 0<r\le 1\}$. It can be shown that $Boundary (III)$ is indeed equivalent to some increasing function $B_{III}(m_1)$ where $0<m_1\le 1$. To see this, note $Z^{III}_1(r):= r(1-\ln{r})$ and $Z^{III}_2(r):= r\ln{\frac{r+1}{r}}$ are both increasing w.r.t. $r$. In addition, $m_1>B_{III}(m_1)$. To see this, it can be shown that $Z^{III}(r): = Z^{III}_1(r)-Z^{III}_2(r)>0$ for $0<r\le 1$. Now let $Area (III): =\{(m_1,m_2)|m_2\le B_{I}(m_1), m_2< B_{II}(m_1), m_2> B_{III}(m_1),  1\ge m_1> 0\}$. We propose a pair of strategy profile as follows. \\
\textbf{Asymmetric Maxmin Public Good Mechanism (III)}\\   
Let $v=(v_1,v_2)$ be the reported value profile of the two agents. Let $u_1, u_2$ be the unique solution to the following equation:
\begin{equation}\label{eq6}
   m_1=\frac{u_1(u_2+1)}{u_1+1}(\frac{(u_2-u_1)^2}{(u_2-u_1+1)^2}\ln{\frac{u_1}{1+u_2}}-\ln{u_1}+1-\frac{u_2-u_1}{(1+u_2)(1+u_2-u_1)}):= H^{III}_1(u_1,u_2)
\end{equation}
\begin{equation}\label{eq7}
   m_2=\frac{u_1(u_2+1)}{u_1+1}(\frac{1}{(u_2-u_1+1)^2}\ln{\frac{1+u_2}{u_1}}+\frac{1}{1+u_2}-\frac{1}{(1+u_2)(1+u_2-u_1)}):=H^{III}_2(u_1,u_2)
\end{equation}
Let $d:=\frac{\ln{\frac{u_1}{1+u2}}}{\frac{1}{u_1-u_2}\ln{u_1}-\ln{(1+u_2)}}$.
Divide the value profiles into two regions as follows: $AR^{III}(1):=\{(v_1,v_2)|v_1+(u_1-u_2)v_2\ge u_1, v_1\le u_1, v_2\le 1\}; AR^{III}(2): =\{(v_1,v_2)|v_1\ge u_1,  v_2\le 1\}$. The provision rule is as follows: $$ q^*(v_1,v_2)=  \left\{
\begin{array}{lll}
 \frac{d}{\ln{\frac{u_1}{1+u_2}}}(\ln{(v_1+\frac{1}{u_2-u_1}(v_1-u_1))}-\ln{(v_2+(u_2-u_1)v_2+u_1)})    &      & { v\in AR^{III}(1)}\\
\frac{d}{\ln{\frac{u_1}{1+u_2}}}((1+\frac{1}{u_2-u_1})\ln{v_1}-\ln{(v_2+(u_2-u_1)v_2+u_1)}-\frac{1}{u_2-u_1}\ln{u_1})   &      & {v\in AR^{III}(2)}\\
0    &      & {otherwise}
\end{array} \right. $$
The payment rule is characterized by Proposition \ref{p1}.
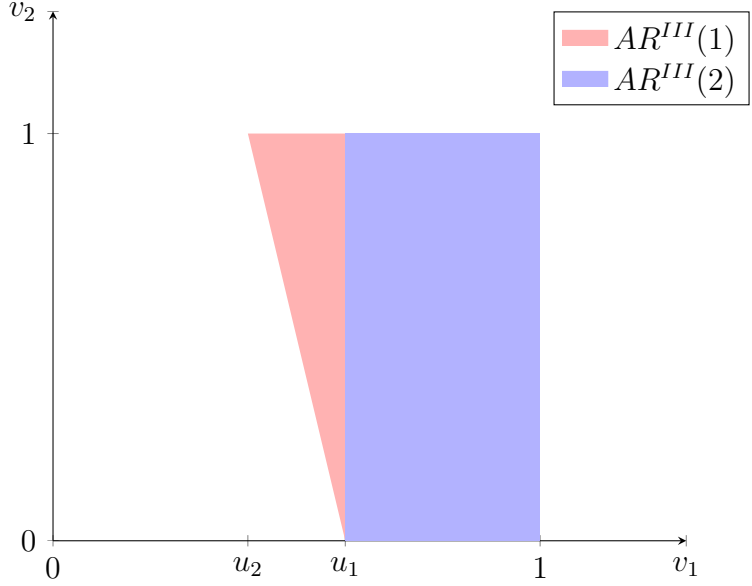
\begin{figure}
\centering
\begin{tikzpicture}
\begin{axis}[
    axis lines = left,
    xmin=0,
        xmax=1.3,
        ymin=0,
        ymax=1.3,
        xtick={0,0.4,0.6,1,1.3},
        ytick={0,1,1.3},
        xticklabels = {$0$, $u_2$, $u_1$, $1$, $v_1$},
        yticklabels = {$0$, $1$, $v_2$},
        legend style={at={(1.1,1)}}
]

\path[name path=axis] (axis cs:0,0) -- (axis cs:1,0);
\path[name path=A] (axis cs:0.4,1) -- (axis cs:0.6,0);
\path[name path=B] (axis cs:0,0.4) -- (axis cs:1,0.4);
\path[name path=C] (axis cs:0,1) -- (axis cs:1,1);
\addplot[area legend, red!30] fill between[of=A and C,  soft clip={domain=0.4:0.6}];
\addplot[blue!30] fill between[of=axis and C, soft clip={domain=0.6:1}];
\legend{$AR^{III}(1)$,$AR^{III}(2)$};
 
\end{axis}
\end{tikzpicture}
\caption{Provision regions of Asymmetric Maxmin Public Good Mechanism (III) } \label{fig:S1}
\end{figure}\\
\textbf{Asymmetric Worst-Case Joint Distribution (III)}\\
Let  $\pi^*(v_1,v_2)$ denote the density of the value profile $(v_1,v_2)$ whenever the density exists. Let $Pr^*(v_1,v_2)$ denote the probability mass of the value profile $(v_1,v_2)$ whenever there is some probability mass on $(v_1,v_2)$. Let $AV(III):=\{v|v_1+(u_1-u_2)v_2\ge u_1\}$.     Asymmetric Worst-Case Joint Distribution (III) has the support $AV(III)$ and  is defined as follows:
$$\pi^*(v_1,v_2)=  \left\{
\begin{array}{lll}
\frac{2u_1(u_2+1)}{(u_1+1)(v_1+v_2)^3}     &      & {v_1+(u_1-u_2)v_2\ge u_1, v_1 \neq 1 ,v_2 \neq 1}\\
\frac{u_1(u_2+1)}{(u_1+1)(1+v_2)^2}  &      & {v_1=1,0\le v_2 < 1}\\
\frac{u_1(u_2+1)}{(u_1+1)(1+v_1)^2}   &      & {u_2\le v_1 < 1,v_2=1}
\end{array} \right. $$
$$Pr^*(1,1)=\frac{u_1(u_2+1)}{2(u_1+1)}$$
Equivalently, \textbf{Asymmetric Worst-Case Joint Distribution (III)} can be described by its marginal distributions and conditional distributions.  The marginal distributions are as follows:  $\pi_1^*(v_1)=\frac{u_1(u_2+1)}{(u_1+1)(\frac{1}{u_2-u_1}(v_1-u_1)+v_1)^2}$ for $ u_2 \le v_1 \le u_1$, $\pi_1^*(v_1)=\frac{u_1(u_2+1)}{(u_1+1)(v_1)^2}$ for $u_1<v_1<1$, $Pr^*_1(1)=\frac{u_1(u_2+1)}{(u_1+1)}$. $\pi_2^*(v_2)=\frac{u_1(u_2+1)}{(u_1+1)((u_2-u_1)v_2+u_1+v_2)^2}$ for $ 0 \le v_2 < 1$,  $Pr^*_2(1)=\frac{u_1}{u_1+1}$. That is, the marginal distribution of  agent 1 is a combination of some generalized Pareto distribution and some equal revenue distribution, while the marginal distribution of  agent 2 is some generalized Pareto distribution with an atom on 1. The conditional distributions are as follows: if $0\le v_2< 1$, then $\pi_1^*(v_1|v_2)=\frac{2((u_2-u_1)v_2+u_1+v_2)^2}{(v_1+v_2)^3}$ for $u_1-(u_1-u_2)v_2\le v_1<1$ and $Pr_1^*(v_1=1|v_2)=\frac{((u_2-u_1)v_2+u_1+v_2)^2}{(1+v_2)^2}$;  if $v_2=1$,  then $\pi_1^*(v_1|v_2=1)=\frac{1+u_2}{(v_1+1)^2}$ for $u_2\le v_1<1$ and $Pr_1^*(v_1=1|v_2=1)=\frac{1+u_2}{2}$. If $u_2\le v_1< u_1$, then $\pi_2^*(v_2|v_1)=\frac{2(\frac{1}{u_2-u_1}(v_1-u_1)+v_1)^2}{(v_1+v_2)^3}$ for $\frac{u_1-v_1}{u_1-u_2}\le v_2<1$ and $Pr_2^*(v_2=1|v_1)=\frac{(\frac{1}{u_2-u_1}(v_1-u_1)+v_1)^2}{(1+v_2)^2}$; if $u_1\le  v_1<1$, then $\pi_2^*(v_2|v_1)=\frac{2v_1^2}{(v_1+v_2)^3}$ for $0\le v_2<1$ and $Pr_2^*(v_2=1|v_1)=\frac{v_1^2}{(1+v_1)^2}$; if $v_1=1$,  then $\pi_2^*(v_2|v_1=1)=\frac{1}{(v_1+1)^2}$ for $0\le v_2<1$ and $Pr_2^*(v_2=1|v_1=1)=\frac{1}{2}$.   That is, the conditional distribution is some truncated generalized Pareto distribution with some mass on 1 (the exact distribution depends on the other agent's value). 
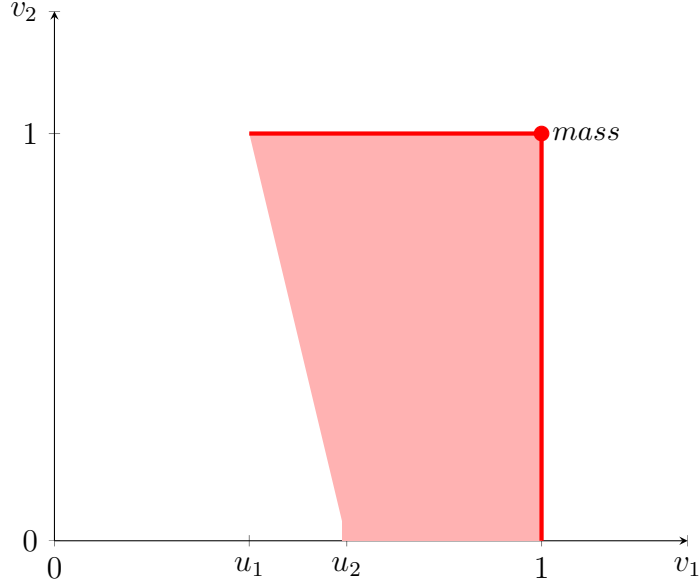
\begin{figure}
\centering
\begin{tikzpicture}
\begin{axis}[
    axis lines = left,
    xmin=0,
        xmax=1.3,
        ymin=0,
        ymax=1.3,
        xtick={0,0.4,0.6,1,1.3},
        ytick={0,1,1.3},
        xticklabels = {$0$, $u_1$, $u_2$, $1$, $v_1$},
        yticklabels = {$0$, $1$, $v_2$},
        legend style={at={(1.1,1)}}
]

\path[name path=axis] (axis cs:0,0) -- (axis cs:1,0);
\path[name path=A] (axis cs: 0.4,1) -- (axis cs:0.6,0);
\path[name path=B] (axis cs:0,0.4) -- (axis cs:1,0.4);
\path[name path=C] (axis cs:0,1) -- (axis cs:1,1);
\addplot[area legend, red!30] fill between[of=A and C,  soft clip={domain=0.4:0.6}];
\addplot[area legend, red!30] fill between[of=axis and C,  soft clip={domain=0.59:1}];
\addplot[red, ultra thick] coordinates {(0.4, 1) (1, 1)};
\addplot[red, ultra thick] coordinates {(1, 0) (1, 1)};
\node[black,right] at (axis cs:1,1){\small{$mass$}};
\node at (axis cs:1,1) [circle, scale=0.5, draw=red,fill=red] {};
\end{axis}
\end{tikzpicture}
\caption{Asymmetric Worst-Case Joint Distribution (III)} \label{fig:S2}
\end{figure}
\begin{lemma}\label{l4}
For any given $(m_1,m_2)\in Area (III)$, there exists a  solution $u_1,u_2$ to to the system of equations \eqref{eq6} and \eqref{eq7}. In addition, $u_1>u_2$.
\end{lemma}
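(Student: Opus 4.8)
The plan is to recast the system \eqref{eq6}--\eqref{eq7} as the equation $H^{III}(u_1,u_2)=(m_1,m_2)$ for the map $H^{III}=(H^{III}_1,H^{III}_2)$ on the closed triangle $D:=\{(u_1,u_2):0\le u_2\le u_1\le 1\}$, and to show that $H^{III}$ is continuous on $D$ and carries $\partial D$ homeomorphically onto the Jordan curve that bounds $Area (III)$. A standard degree argument then yields $\overline{Area (III)}\subseteq H^{III}(D)$, which gives existence for every $(m_1,m_2)\in Area (III)$; and $u_1>u_2$ follows for free, since a point of the open set $Area (III)$ cannot be the $H^{III}$-image of a boundary point of $D$, so its preimage lies in the interior of $D$, where $u_2<u_1$.

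First I would verify continuity of $H^{III}$ on all of $D$. The only delicate spots are the diagonal $u_1=u_2$ (where the terms carrying the factor $(u_2-u_1)^2$ tend to $0$, so that $H^{III}_i(r,r)=Z^{III}_i(r)$), the corner $(0,0)$ (with limit $(0,0)$), and the corner $(1,0)$ (where the apparent $\tfrac{1}{(1-u_1)^2}$ singularity cancels, leaving the finite value $(\tfrac34,\tfrac34)$). Next I would identify the images of the three sides of $\partial D$: setting $u_2=0$ and simplifying (using identities such as $\tfrac{u_1^2}{(1-u_1)^2}\ln u_1-\ln u_1=\tfrac{2u_1-1}{(1-u_1)^2}\ln u_1$ and $1+\tfrac{u_1}{1-u_1}=\tfrac{1}{1-u_1}$) collapses $H^{III}_i(u_1,0)$ to $Z^I_i(u_1)$, so this side maps onto $Boundary (I)$; setting $u_1=1$ collapses $H^{III}_i(1,u_2)$ to the Area~(II) parametrization $(H^{II}_1,H^{II}_2)$ evaluated at $t_2=0$, which traces $Boundary (II)$; and by the first step the diagonal maps onto $Boundary (III)$. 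Each of these parametrizations is strictly monotone, hence injective, and the three image arcs meet only at the shared corner images $(0,0)$, $(\tfrac34,\tfrac34)$, $(1,\ln 2)$, so $H^{III}|_{\partial D}$ is a homeomorphism onto $\partial(Area (III))$.

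With this in hand, the degree-theoretic fact that a continuous map of a closed disk whose restriction to the boundary is a homeomorphism onto a Jordan curve $J$ has image containing the closed bounded component of $\mathbb{R}^2\setminus J$ gives $\overline{Area (III)}\subseteq H^{III}(D)$; in particular every $(m_1,m_2)\in Area (III)$ has a preimage $(u_1,u_2)\in D$. If such a preimage had $u_1=u_2=r$, then $(m_1,m_2)=(Z^{III}_1(r),Z^{III}_2(r))\in Boundary (III)$, which is impossible since $Area (III)$ is open and disjoint from its boundary; hence $u_1>u_2$, the additional claim.

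I expect the main obstacle to be the edge computations together with the geometry of the boundary arcs: checking that $H^{III}$ restricted to $\{u_2=0\}$, $\{u_1=1\}$, and the diagonal really reproduces the parametrizations of $Boundary (I)$, $Boundary (II)$, $Boundary (III)$, and that these three arcs assemble into a simple closed curve bounding exactly $Area (III)$ with no spurious intersections. The $H^{III}_i$ are cumbersome transcendental expressions, so the reductions are delicate bookkeeping, and the non-crossing claim reduces to the monotonicity and ordering properties of $B_I,B_{II},B_{III}$ recorded just before the statement. A more computational alternative --- fix $m_1$, follow the level curve $\{H^{III}_1=m_1\}$ across $D$ via the implicit function theorem, then apply the intermediate value theorem to $H^{III}_2$ along it --- sidesteps the topology but instead requires controlling the signs of the partials $\partial H^{III}_i/\partial u_j$, which is no easier.
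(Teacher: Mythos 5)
Your main route --- the degree-theoretic argument --- is a genuinely different approach from the paper's, which instead carries out the computational alternative you sketch at the end (with the roles of $H^{III}_1$ and $H^{III}_2$ swapped): the paper fixes $m_2$, proves via Claims 7 and 8 that $H^{III}_2$ is strictly decreasing in $u_2$ and strictly increasing in $u_1$ (with Claims 13 and 14 supplying the corner limits of the partials), uses the implicit function theorem to follow the level set $H^{III}_2=m_2$ as a continuous function $u_2=F^{III}(u_1)\le u_1$, and then applies the intermediate value theorem to $H^{III}_1$ restricted to that curve. Your boundary reductions check out: $H^{III}_i(u_1,0)=Z^I_i(u_1)$, $H^{III}_i(r,r)=Z^{III}_i(r)$ (the $(u_2-u_1)^2$ factor kills the logarithmic term and the last two terms cancel), and $H^{III}_i(1,u_2)=H^{II}_i(u_2,0)$, so the three sides of $D$ trace Boundaries (I), (III), (II). Granting continuity of $H^{III}$ on $D$ and that $H^{III}|_{\partial D}$ is a homeomorphism onto the Jordan curve $\partial\,Area(III)$, the degree argument correctly gives $\overline{Area(III)}\subseteq H^{III}(D)$. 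The topological route is conceptually cleaner but shifts the load onto facts the paper never proves and that your outline does not yet establish: (a) continuity at the corner $(1,0)$, where the first and fourth terms inside each $H^{III}_i$ each diverge like $1/(u_2-u_1+1)$ and the cancellation must be checked order by order; and (b) that the three arcs meet only at the shared corner images, i.e.\ $B_I>B_{III}$ on $(0,\tfrac34)$ and $B_{II}>B_{III}$ on $(\tfrac34,1)$. So the proposal is sound as a strategy but, as you acknowledge, the hard part has not been discharged.

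One small imprecision: $Area(III)$ as defined (with the non-strict $m_2\le B_I(m_1)$) is not open, so a point of $Area(III)$ \emph{can} be the image of a boundary point of $D$ (one with $u_2=0$). Your deduction of $u_1>u_2$ is still correct, because what it actually uses is only that $Area(III)$ is disjoint from $Boundary(III)$ --- the inequality $m_2>B_{III}(m_1)$ is strict --- which rules out a preimage on the diagonal; and on the $u_2=0$ side one has $u_1>0=u_2$ except at $(0,0)$, whose image lies outside $Area(III)$.
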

\begin{theorem}\label{t5}
When $(m_1,m_2)\in Area (III)$, \textbf{Asymmetric Maxmin Public Good  Mechanism (III)} and  \textbf{Asymmetric Worst-Case Joint Distribution (III)} form a Nash equilibrium. The revenue guarantee is $\frac{u_1(u_2+1)}{u_1+1}$.
\end{theorem}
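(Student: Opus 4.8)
The plan is to follow the saddle-point recipe used throughout Sections~\ref{s3}--\ref{s6}: verify that (i) the proposed mechanism lies in $\mathbb{D}$, (ii) the proposed distribution lies in $\Pi(\mathbf{m})$, (iii) the mechanism is a best reply to the distribution, and (iv) the distribution is a best reply to the mechanism. Then the saddle-point inequalities $U((q^*,t^*),\pi)\ge U((q^*,t^*),\pi^*)\ge U((q,t),\pi^*)$ hold, so $((q^*,t^*),\pi^*)$ is a Nash equilibrium, $(q^*,t^*)$ is maxmin, and the revenue guarantee equals $U((q^*,t^*),\pi^*)$, which one evaluates to $\frac{u_1(u_2+1)}{u_1+1}$.

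For step (i), I would check that $q^*$ takes values in $[0,1]$ and that $q^*(\cdot,v_{-i})$ is nondecreasing in each coordinate, and then let $t^*$ be determined by Proposition~\ref{p1}(2), which delivers DSIC and EPIR automatically. Since $q^*$ is defined piecewise on $AR^{III}(1)$, $AR^{III}(2)$ and the complement, the content is: (a) continuity of $q^*$ across the internal boundary $v_1=u_1$ and across the outer boundary $v_1+(u_1-u_2)v_2=u_1$, where it must vanish, together with $q^*(1,1)=1$ --- this last identity pins down the constant $d$; (b) monotonicity of each logarithmic branch in $v_1$ and in $v_2$, where one differentiates and uses $u_1>u_2$ from Lemma~\ref{l4}; and (c) the bound $q^*\le 1$, which given monotonicity reduces to $q^*(1,1)=1$. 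For step (ii), $\pi^*$ is visibly nonnegative on $AV(III)$; one checks it integrates to $1$ from the marginal/conditional description, and that $\int v_i\,\pi^*(v)\,dv=m_i$ --- these are exactly $H^{III}_1(u_1,u_2)=m_1$ and $H^{III}_2(u_1,u_2)=m_2$, i.e.\ \eqref{eq6}--\eqref{eq7}, whose solvability with $u_1>u_2$ is Lemma~\ref{l4}; one also confirms that $(m_1,m_2)\in Area(III)$ is compatible with $u_1,u_2\in(0,1]$, which is where the boundary functions $B_I,B_{II},B_{III}$ enter.

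For step (iii), I would compute the weighted virtual value $\Phi(v)$ of $\pi^*$ and verify $\Phi(1,1)>0$, $\Phi(v)=0$ for $v\in AV(III)$ with $v\ne(1,1)$, and $\Phi(v)\le 0$ for $v\notin AV(III)$; the last is immediate since $\pi^*=0$ off $AV(III)$ so $\Phi$ there collects only the nonpositive information-rent terms. Given this, the discussion in Section~\ref{s4} shows that any feasible, monotone mechanism supported on $AV(III)$ with $q(1,1)=1$ maximizes $\int q(v)\Phi(v)\,dv$, and $q^*$ is such a mechanism by step (i). For step (iv), I would exhibit multipliers $\lambda_1,\lambda_2,\mu$ as in Lemma~\ref{l1}: since $q^*$ was reverse-engineered from the complementary-slackness condition \eqref{eq16}, the equality $\sum_i t_i^*(v)=\lambda_1v_1+\lambda_2v_2+\mu$ on $supp(\pi^*)=AV(III)$ holds by construction once $d$ and the multipliers are identified, so the real content is the global dual-feasibility inequality \eqref{eq15}, $\lambda_1v_1+\lambda_2v_2+\mu\le\sum_i t_i^*(v)$ for all $v\in V$. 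Writing $\sum_i t_i^*(v)=(v_1+v_2)q^*(v)-\int_0^{v_1}q^*(s,v_2)\,ds-\int_0^{v_2}q^*(v_1,s)\,ds$ via Proposition~\ref{p1}, this becomes an inequality about the piecewise-logarithmic $q^*$ that one checks on the no-provision region and on $AR^{III}(1)$, $AR^{III}(2)$ separately, using the monotonicity and boundary values from step (i). Finally, $GR((q^*,t^*))=U((q^*,t^*),\pi^*)=\mu+\lambda_1m_1+\lambda_2m_2$ is evaluated to $\frac{u_1(u_2+1)}{u_1+1}$.

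The main obstacle is the pair of global inequalities: the monotonicity-and-$q^*\le 1$ check in step (i) and, especially, the off-support dual-feasibility inequality \eqref{eq15} in step (iv). Everything on the support holds by construction, but Area~(III) is the first case whose provision rule is genuinely asymmetric and has two distinct provision regions, so the cross-region comparisons and the behavior near the kink $v_1=u_1$ require real (though elementary) care, as does confirming that $u_1>u_2$ is precisely what gives the relevant logarithms their needed signs.
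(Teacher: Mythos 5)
Your proposal follows essentially the same saddle-point-plus-LP-duality route as the paper: verify the weighted-virtual-value conditions for $\pi^*$ (so the mechanism is a best reply), verify the complementary slackness and dual feasibility with the multipliers coming from the construction of $q^*$ (so the distribution is a best reply), rely on Lemma~\ref{l4} for solvability of \eqref{eq6}--\eqref{eq7} with $u_1>u_2$, and read off the guarantee from $Pr^*(1,1)\times 2 = \lambda_1 m_1 + \lambda_2 m_2 + \mu = \frac{u_1(u_2+1)}{u_1+1}$.

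One calibration point: you flag the off-support dual inequality \eqref{eq15} as the ``main obstacle,'' but it is actually the easy half. Off $AV(III)$ the mechanism provides nothing, so $\sum_i t_i^*(v)=0$ and \eqref{eq15} collapses to $\lambda_1 v_1 + \lambda_2 v_2 + \mu \le 0$. Since $(\lambda_1,\lambda_2)$ is proportional to $(1,\,u_1-u_2)$ with both components positive, and the zero set of $\lambda_1 v_1 + \lambda_2 v_2 + \mu$ is the boundary line $v_1+(u_1-u_2)v_2=u_1$, the inequality holds strictly throughout the no-provision half-plane with no cross-region comparison and no kink analysis; this is exactly the one-line observation the paper uses. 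The place where care is genuinely owed is your step (i): confirming that the piecewise-logarithmic $q^*$ is monotone in each coordinate, stays in $[0,1]$, and matches continuously across the internal boundary $v_1=u_1$ --- the paper disposes of this with ``it is easy to see,'' so your instinct to treat it as a real check rather than a throwaway is the right one. With that reweighting of effort, the argument goes through.
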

Let us illustrate \textbf{Asymmetric Maxmin Public Good Mechanism (III)}. We guess   \textlabel{(A5)}{a5} that in the maxmin solution, the principal provides the  public good with positive probability if and only if the sum of  weighted reported values exceeds certain threshold, i.e., $v_1+(u_1-u_2)v_2>u_1$ where $0\le u_2<u_1<1$. Second, similarly,  in the maxmin solution, it is without loss to assume  \ref{b}. Third, similarly, we conjecture that the support of the worst case joint distribution $\pi^*$ is the area in which $v\in AV^{III}$. Together with (iv) in Proposition \ref{p1}, \ref{a5} and \eqref{eq16}, we obtain that for any $v\in AR^{III}(1)$,
\begin{equation} \label{eq104}
\lambda_1v_1+\lambda_2v_2+\mu = (v_1+v_2)q^*(v)-\int_{u_1-(u_1-u_2)v_2}^{v_1}q^*(x,v_2)dx-\int_{\frac{u_1-v_1}{u_1-u_2}}^{v_2}q^*(v_1,x)dx
\end{equation}
and that for any $v\in AR^{III}(2)$,
\begin{equation} \label{eq104}
\lambda_1v_1+\lambda_2v_2+\mu = (v_1+v_2)q^*(v)-\int_{u_1-(u_1-u_2)v_2}^{v_1}q^*(x,v_2)dx-\int_{0}^{v_2}q^*(v_1,x)dx
\end{equation}
Then following similar procedures for solving for provision probability $q^*(v)$ when $v\in AR^I(1)$ and $v\in AR^I(2)$ in Area (I), we obtain  \textbf{Asymmetric Maxmin Public Good Mechanism (III)}. For \textbf{Asymmetric Worst-Case Joint Distribution (III)}, we have 
\begin{equation}
    \Phi(1,1)>0
\end{equation}
\begin{equation}
  \Phi(v)=0  \quad \forall v_1+(u_1-u_2)v_2\ge u_1\quad and \quad v \neq (1,1)  
\end{equation}
\begin{equation}
  \Phi(v)\le 0  \quad \forall v_1+(u_1-u_2)v_2< u_1
\end{equation}   
The construction procedure for the joint distribution is similar. Therefore we omit it.  To make sure that \textbf{Asymmetric Worst-Case Joint Distribution (III)} satisfies the mean constraints, we have a system of two equations \eqref{eq6} and \eqref{eq7}. Lemma \ref{l4} states the solution exists and is unique. 
\subsection{Area (IV): Distant Expectations}
Let $ Area (IV):=\{ (m_1,m_2)|m_2\le B_{III}(m_1), 1 \ge m_1\ge 0\}$.  We propose a pair of strategy profile as follows. \\
\textbf{Asymmetric Maxmin Public Good Mechanism (IV)}\\   
Let $v=(v_1,v_2)$ be the reported value profile of the two agents. Let $w_1, w_2$ be the unique solution to the following equation:
\begin{equation}\label{eq8}
    m_1=w_1(1-\ln{w_1}):= H^{IV}(w_1,w_2)
\end{equation}
\begin{equation}\label{eq9}
    m_2=w_1\ln{\frac{w_1+w_2}{w_1}}:= H^{IV}(w_1,w_2)
\end{equation}
Let $AR^{IV}:=\{(v_1,v_2)|v_1\ge w_1\}$.  The provision rule is as follows: $$ q^*(v_1,v_2)=  \left\{
\begin{array}{lll}
-\frac{\ln{v_1}}{\ln{w_1}}+1    &      & {v\in AR^{IV}}
\\
0    &      & {otherwise}
\end{array} \right. $$
The payment rule is characterized by Proposition \ref{p1}.
\begin{figure}
\centering
\begin{tikzpicture}
\begin{axis}[
    axis lines = left,
    xmin=0,
        xmax=1.3,
        ymin=0,
        ymax=1.3,
        xtick={0,0.4,1,1.3},
        ytick={0,1,1.3},
        xticklabels = {$0$, $w_1$, $1$, $v_1$},
        yticklabels = {$0$,  $1$, $v_2$},
        legend style={at={(1.1,1)}}
]

\path[name path=axis] (axis cs:0,0) -- (axis cs:1,0);
\path[name path=A] (axis cs:0,0.4) -- (axis cs:0.4,0);
\path[name path=B] (axis cs:0,0.4) -- (axis cs:1,0.4);
\path[name path=C] (axis cs:0,1) -- (axis cs:1,1);
\addplot[area legend, red!30] fill between[of=axis and C,  soft clip={domain=0.4:1}];
\legend{$AR^{IV}$};
 
\end{axis}
\end{tikzpicture}
\caption{Provision regions of Symmetric Maxmin Public Good Mechanism (I) } \label{fig:S1}
\end{figure}
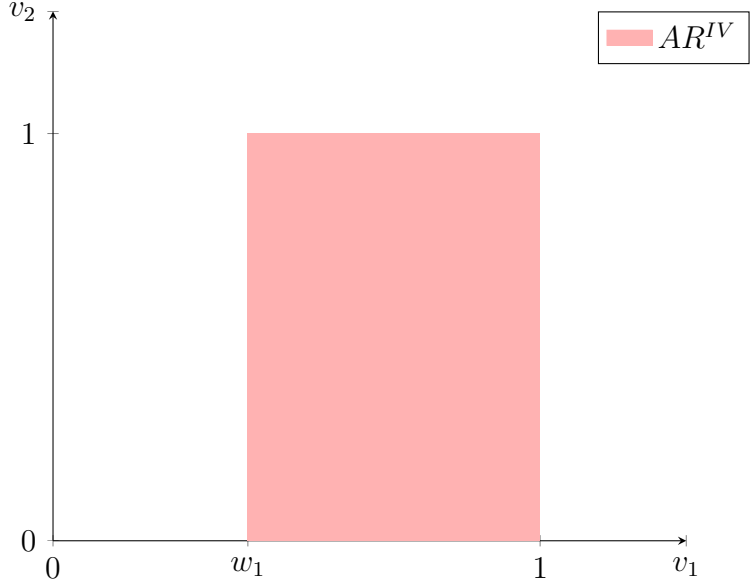\\
\textbf{Asymmetric Worst-Case Joint Distribution (IV)}\\
Let  $\pi^*(v_1,v_2)$ denote the density of the value profile $(v_1,v_2)$ whenever the density exists. Let $Pr^*(v_1,v_2)$ denote the probability mass of the value profile $(v_1,v_2)$ whenever there is some probability mass on $(v_1,v_2)$. Let $AV(IV):=[w_1,1]\times [0,w_2]$.     Asymmetric Worst-Case Joint Distribution (IV) has the support $AV(IV)$ and  is defined as follows:
$$\pi^*(v_1,v_2)=  \left\{
\begin{array}{lll}
\frac{2w_1}{(v_1+v_2)^3}     &      & {v_1\ge w_1 , v_1 \neq 1 ,v_2 \neq w_2}\\
\frac{w_1}{(1+v_2)^2}  &      & {v_1=1,0\le v_2 < w_2}\\
\frac{w_1}{(w_2+v_1)^2}  &      & {r_1\le v_1 < 1,v_2=w_2}
\end{array} \right. $$
$$Pr^*(1,w_2)=\frac{w_1}{w_2+1}$$
Equivalently, \textbf{Asymmetric Worst-Case Joint Distribution (IV)} can be described by its marginal distributions and conditional distributions.  The marginal distributions are as follows: $\pi_1^*(v_1)=\frac{w_1}{v_1^2}$ for $w_1\le v_1<1$, $Pr^*_1(1)=w_1$; $\pi_2^*(v_2)=\frac{w_1}{(w_1+v_2)^2}$ for $0\le v_2 < w_2$, $Pr^*_2(w_2)=\frac{w_1}{w_1+w_2}$.  That is, the marginal distribution of  agent 1 is some equal revenue distribution, while the marginal distribution of agent 2 is   some generalized Pareto distribution with some probability mass on $w_2$. The conditional distributions are as follows: if $0\le v_2< w_2$, then $\pi_1^*(v_1|v_2)=\frac{2(w_1+v_2)^2}{(v_1+v_2)^3}$ for $w_1\le v_1<1$ and $Pr_1^*(v_1=1|v_2)=\frac{(w_1+v_2)^2}{(1+v_2)^2}$; if $v_2=w_2$, then $\pi_1^*(v_1|v_2=w_2)=\frac{w_1+w_2}{(w_2+v_1)^2}$ for $w_1 \le v_1<1$ and $Pr_1^*(v_1=1|v_2=w_2)=\frac{w_1+w_2}{w_2+1}$. If $w_1\le v_1< 1$, then $\pi_2^*(v_2|v_1)=\frac{2(v_1)^2}{(v_1+v_2)^3}$ for $0\le v_2<w_2$ and $Pr_2^*(v_2=w_2|v_1)=\frac{(v_1)^2}{(w_2+v_1)^2}$; if $v_1=1$, then $\pi_2^*(v_2|v_1=1)=\frac{1}{(1+v_2)^2}$ for $0 \le v_2<w_2$ and $Pr_2^*(v_2=w_2|v_1=1)=\frac{1}{w_2+1}$.  That is, the conditional distribution is some truncated generalized Pareto distribution with some probability mass on 1 for agent 1 or on $w_2$ for agent 2 (the exact distribution depends on the other agent's value). 
\begin{remark}
When $(m_1,m_2)\in Boundary (III)$, $w_2=1$ and $w_1\in (0,1]$.
\end{remark}
\begin{remark}
Asymmetric Maxmin Public Good Mechanism (IV) exhibits positive correlation. 
\end{remark}
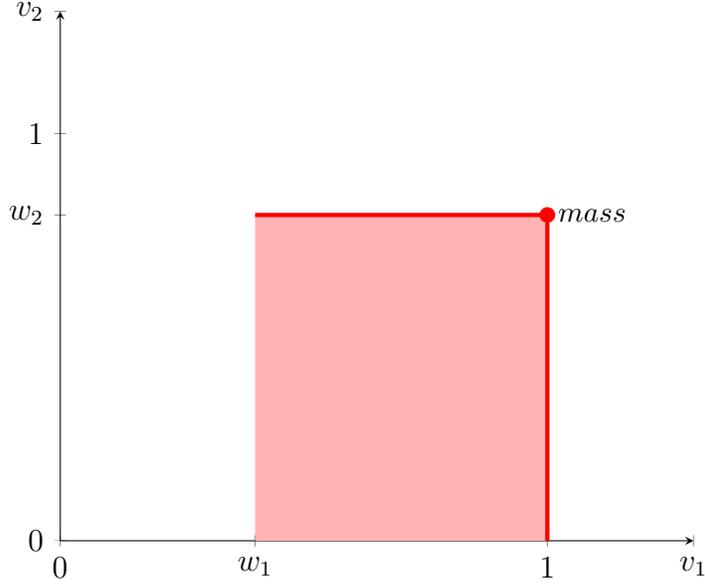
\begin{figure}
\centering
\begin{tikzpicture}
\begin{axis}[
    axis lines = left,
    xmin=0,
        xmax=1.3,
        ymin=0,
        ymax=1.3,
        xtick={0,0.4, 1,1.3},
        ytick={0,0.8, 1,1.3},
        xticklabels = {$0$, $w_1$, $1$, $v_1$},
        yticklabels = {$0$, $w_2$, $1$, $v_2$},
        legend style={at={(1.1,1)}}
]

\path[name path=axis] (axis cs:0,0) -- (axis cs:1,0);
\path[name path=A] (axis cs: 0.4,0.8) -- (axis cs:1,0.8);
\path[name path=B] (axis cs:0,0.4) -- (axis cs:1,0.4);
\path[name path=C] (axis cs:0,1) -- (axis cs:1,1);
\addplot[area legend, red!30] fill between[of=A and axis,  soft clip={domain=0.4:1}];
\addplot[red, ultra thick] coordinates {(0.4, 0.8) (1, 0.8)};
\addplot[red, ultra thick] coordinates {(1, 0) (1, 0.8)};
\node[black,right] at (axis cs:1,0.8){\small{$mass$}};
\node at (axis cs:1,0.8) [circle, scale=0.5, draw=red,fill=red] {};
\end{axis}
\end{tikzpicture}
\caption{Asymmetric Worst-Case Joint Distribution (IV)} \label{fig:S2}
\end{figure}
\begin{lemma}\label{l5}
For any given $(m_1,m_2)\in Area (IV)$, there exists a  solution $w_1,w_2$ to  the system of equations \eqref{eq8} and \eqref{eq9}.
\end{lemma}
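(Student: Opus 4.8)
The plan is to exploit the triangular structure of the system: equation~\eqref{eq8} involves $w_1$ alone, and once $w_1$ is determined, equation~\eqref{eq9} pins down $w_2$ in closed form.

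First I would study $g(w):=w(1-\ln w)$ on $(0,1]$. Since $g'(w)=-\ln w\ge 0$ there, $g$ is strictly increasing; moreover $g(w)\to 0$ as $w\to 0^+$ and $g(1)=1$, so $g$ is a continuous increasing bijection from $(0,1]$ onto $(0,1]$. Hence for each $m_1\in(0,1]$ there is a unique $w_1\in(0,1]$ with $g(w_1)=m_1$; the corner $m_1=0$ is degenerate (it forces $w_1=0$ and, through~\eqref{eq9}, $m_2=0$) and can be excluded or handled separately. In particular $w_1>0$ whenever $m_1>0$.

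Next, with $w_1>0$ fixed, \eqref{eq9} reads $m_2=w_1\ln\bigl(1+w_2/w_1\bigr)$, which inverts explicitly to $w_2=w_1\bigl(e^{m_2/w_1}-1\bigr)$. This is the unique solution, it is nonnegative because $m_2\ge 0$ on $Area (IV)$, and the map $m_2\mapsto w_2$ is strictly increasing. It then remains to verify $w_2\le 1$, and this is exactly where the defining constraint $m_2\le B_{III}(m_1)$ is used: in the parametrization of $Boundary (III)$, the first coordinate $r(1-\ln r)$ equals $m_1$ precisely when $r=w_1$, and then the second coordinate equals $r\ln\frac{r+1}{r}=w_1\ln(1+1/w_1)$, which is exactly the value of $m_2$ at which the formula above returns $w_2=1$. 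By monotonicity of $m_2\mapsto w_2$, the inequality $m_2\le B_{III}(m_1)$ gives $w_2\le 1$; together with $w_2\ge 0$ this shows that $(w_1,w_2)\in(0,1]\times[0,1]$ solves the system, as claimed.

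The main obstacle is conceptual rather than computational: one must correctly identify the free parameter $r$ in the definition of $Boundary (III)$ with the root $w_1$ of~\eqref{eq8}, so that $B_{III}(m_1)$ is recognized as the largest admissible value of $m_2$ for that $m_1$. Since the text has already established that $B_{III}$ is a well-defined increasing function of $m_1$ (via the monotonicity of $Z^{III}_1$ and $Z^{III}_2$), this identification is immediate, and everything else is monotonicity bookkeeping together with the routine treatment of the corner $m_1=0$.
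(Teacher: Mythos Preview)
Your proposal is correct and follows essentially the same approach as the paper: both exploit the triangular structure by first solving \eqref{eq8} for $w_1$ via the monotonicity of $w\mapsto w(1-\ln w)$, then use \eqref{eq9} together with the constraint $m_2\le B_{III}(m_1)$ to obtain $w_2\in[0,1]$. The only cosmetic difference is that the paper invokes the Intermediate Value Theorem on $w_2\mapsto w_1\ln\frac{w_1+w_2}{w_1}$ (noting its values at $w_2=0$ and $w_2=1$), whereas you invert explicitly to $w_2=w_1(e^{m_2/w_1}-1)$ and then bound it; the identification $r=w_1$ in $Boundary(III)$ is the same in both.
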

\begin{theorem}\label{t6}
When $(m_1,m_2)\in Area (IV)$, \textbf{Asymmetric Maxmin Public Good  Mechanism (IV)} and  \textbf{Asymmetric Worst-Case Joint Distribution (IV)} form a Nash equilibrium. The revenue guarantee is $exp(W_{-1}(-\frac{w_1}{exp(1)})+1)$.
\end{theorem}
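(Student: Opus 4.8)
The plan is to verify directly that the pair $((q^*,t^*),\pi^*)$ is a saddle point of the revenue functional $U$; by the reduction in Section~\ref{s3} this gives the Nash-equilibrium assertion, and the revenue guarantee is then the common value $U((q^*,t^*),\pi^*)$. Four things are needed: (i) $(q^*,t^*)\in\mathbb{D}$; (ii) $\pi^*\in\Pi(\mathbf{m})$; (iii) $(q^*,t^*)$ is a best reply against $\pi^*$; (iv) $\pi^*$ is a best reply against $(q^*,t^*)$. Step (i) is immediate: $q^*$ depends on $v_1$ only and $q^*(v_1)=1-\ln v_1/\ln w_1$ rises from $q^*(w_1)=0$ to $q^*(1)=1$ on $[w_1,1]$, so $q^*$ is monotone and $[0,1]$-valued, and Proposition~\ref{p1} supplies the payments and hence \eqref{dsic} and \eqref{epir}. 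For (ii), I would check by direct integration that the stated density together with the two line densities on $\{v_1=1\}$ and $\{v_2=w_2\}$ and the atom at $(1,w_2)$ form a probability measure on $AV(IV)$, and that $E_{\pi^*}[v_1]=w_1(1-\ln w_1)$ and $E_{\pi^*}[v_2]=w_1\ln\frac{w_1+w_2}{w_1}$; these equal $m_1$ and $m_2$ exactly because $(w_1,w_2)$ solves \eqref{eq8}--\eqref{eq9}, whose solvability is guaranteed by Lemma~\ref{l5}.

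For (iii) the crux is the weighted virtual value $\Phi$ of $\pi^*$. Using the marginals and conditionals listed after \textbf{Asymmetric Worst-Case Joint Distribution (IV)}, one computes $\pi_i(v_{-i})-\Pi_i(v_i,v_{-i})=w_1/(v_1+v_2)^2$ for each $i$ throughout the provision region $\{v_1\ge w_1\}$, so the absolutely continuous part of $\Phi$ vanishes there --- both inside $\mathrm{supp}\,\pi^*$ and on $\{v_1\ge w_1,\ v_2>w_2\}$ --- whereas $\Phi\le 0$ on $\{v_1<w_1\}$ and $\Phi$ has strictly positive total mass on the edge $\{v_1=1\}$. Given this sign pattern, $\int q\,d\Phi\le\Phi(\{v_1=1\})$ for every monotone feasible $q$, with equality for $q^*$ (which is $0$ off $\{v_1\ge w_1\}$ and $1$ on $\{v_1=1\}$); since $E_{\pi^*}[\sum_i t_i]\le\int q\,d\Phi$ for any mechanism in $\mathbb{D}$, with equality under the Myerson payments (Section~\ref{s4}), $(q^*,t^*)$ maximizes $U(\cdot,\pi^*)$.

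For (iv), I would use Lemma~\ref{l1}. Since $q^*$ is independent of $v_2$, the Myerson payment gives $t_2^*\equiv 0$, and for $v_1\ge w_1$ one finds $t_1^*(v)=(w_1-v_1)/\ln w_1$, so $\sum_i t_i^*(v)=-\tfrac{1}{\ln w_1}v_1+\tfrac{w_1}{\ln w_1}$ on $\{v_1\ge w_1\}$ and $\sum_i t_i^*(v)=0$ on $\{v_1<w_1\}$. With $\lambda_1=-1/\ln w_1>0$, $\lambda_2=0$, $\mu=w_1/\ln w_1$, the complementary-slackness equality \eqref{eq16} holds on $\mathrm{supp}\,\pi^*\subseteq\{v_1\ge w_1\}$, and the dual feasibility inequality \eqref{eq15} holds everywhere: it is an equality for $v_1\ge w_1$, and for $v_1<w_1$ it reads $(w_1-v_1)/\ln w_1\le 0$, true because $w_1-v_1>0$ and $\ln w_1<0$. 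By the duality behind Lemma~\ref{l1}, $\pi^*$ minimizes $U((q^*,t^*),\cdot)$. Finally, the common value is $\lambda_1 m_1+\lambda_2 m_2+\mu=(w_1-m_1)/\ln w_1$, which by \eqref{eq8} simplifies to $w_1$; inverting \eqref{eq8} through the Lambert-$W$ function on the $W_{-1}$ branch (the relevant branch because $w_1\in(0,1]$ forces $\ln w_1\le 0$) puts the revenue guarantee in the stated closed form.

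I expect step (iii) to be the main obstacle. The formula $\Phi(v)=\pi(v)\sum_i v_i-\sum_i[\pi_i(v_{-i})-\Pi_i(v_i,v_{-i})]$ was derived in Section~\ref{s4} under differentiability, but $\pi^*$ is not absolutely continuous, so one must recompute $\pi_i(v_{-i})$ and $\Pi_i(v_i,v_{-i})$ with the line masses on $\{v_1=1\}$ and $\{v_2=w_2\}$ included, and then verify both the exact cancellation making $\Phi$ vanish on all of $\{v_1\ge w_1\}\setminus\{v_1=1\}$ and the inequality $\Phi\le 0$ on $\{v_1<w_1\}$.
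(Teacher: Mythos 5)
Your proposal is correct and follows essentially the same saddle-point/duality argument as the paper: verify $\pi^*\in\Pi(\mathbf m)$ via \eqref{eq8}--\eqref{eq9}, show $(q^*,t^*)$ is a best reply by making the weighted virtual value vanish on the provision region and nonpositive off it, show $\pi^*$ is a best reply via the dual variables $\lambda_1=-1/\ln w_1$, $\lambda_2=0$, $\mu=w_1/\ln w_1$ and strong duality, and compute the common value $(w_1-m_1)/\ln w_1=w_1$. (One incidental observation: the closed form in the statement, $\exp(W_{-1}(-w_1/e)+1)$, appears to be a typo for $\exp(W_{-1}(-m_1/e)+1)$, which as your calculation shows equals $w_1$.)
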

Let us illustrate \textbf{Asymmetric Maxmin Public Good Mechanism (IV)}. We guess   \textlabel{(A6)}{a6} that in the maxmin solution, the principal provides the  public good with positive probability if and only if the  reported  value of agent 1 exceeds certain threshold, i.e., $v_1>w_1$ where $w_1\in (0,1]$. Second,  in the maxmin solution, it is without loss  to assume  $q^*(1,v_2)=1$ for any $v_2$. Third, we conjecture that the support of the worst case joint distribution $\pi^*$ is the area in which $v\in AV^{IV}$. Together with (iv) in Proposition \ref{p1}, \ref{a6} and \eqref{eq16}, we obtain that for any $v\in AV^{IV}$,
\begin{equation} \label{eq105}
\lambda_1v_1+\lambda_2v_2+\mu = (v_1+v_2)q^*(v)-\int_{w_1}^{v_1}q^*(x,v_2)dx-\int_{0}^{v_2}q^*(v_1,x)dx
\end{equation}
In the maxmin solution, $q^*(v_1,v_2)$ is independent of $v_2$. It is essentially reduced to one-agent case, to which the solution is known, e.g., \cite{carrasco2018optimal}. Thus we obtain  \textbf{Asymmetric Maxmin Public Good Mechanism (IV)}. For \textbf{Asymmetric Worst-Case Joint Distribution (IV)}, we have 
\begin{equation}
    \Phi(1,1)>0
\end{equation}
\begin{equation}
  \Phi(v)=0  \quad \forall v_1\ge w_1\quad and \quad v_1 \neq 1
\end{equation}
\begin{equation}
  \Phi(v)\le 0  \quad \forall v_1< w_1
\end{equation}    The construction procedure for the joint distribution is similar. Therefore we omit it.  To make sure that \textbf{Asymmetric Worst-Case Joint Distribution (IV)} satisfies the mean constraints, we have a system of two equations \eqref{eq8} and \eqref{eq9}. Lemma \ref{l5} states the solution exists and is unique.

\section{Discussions}\label{s7}
\subsection{More than two agents}\label{s71}
The idea and the methodology are useful to study  the case when there is $N>2$ (any general number) agents. Specifically, we need to use the complementary slackness condition \eqref{eq16} to solve for the maxmin public good mechanism, and use $\Phi(v)=0$ for any value profile in the support except for $v=(\underbrace{1,\cdots, 1}_{N})$ to solve for the worst-case joint distribution. However,  we are not able to provide a complete characterization for the maxmin public good mechanism. The difficulty arises from two sources: we have to divide the expectations into many cases as $N$ is getting large; for each case, we may have to divide the value profiles into many regions and characterize the provision rule for each region. Nonetheless, we are able to provide a characterization of the maxmin public good mechanism and the worst-case joint distribution for a special case in which the symmetric expectation $m\ge 1-\frac{(N-1)^{N-1}}{N^N}$. \\
\textbf{$N$-agent Symmetric Maxmin Public Good Mechanism }\\  Let $v=(v_1,v_2,\cdots, v_N)$ be the reported value profile of the $N$ agents.
Let $r$ be the solution to $m=\frac{(r+N-1)(N^N-(r+N-1)^{N-1})}{(N-1)N^N}$. Let $SR_N:=\{(v_1,v_2,\cdots,v_N)|\sum_{i=1}^Nv_i\ge N-1+r\}$.  The provision rule is as follows: $$ q^*(v_1,v_2,\cdots, v_N)=  \left\{
\begin{array}{lll}
\frac{(\sum_{i=1}^N{v_i})^{N-1}-(r+N-1)^{N-1}}{N^{N-1}-(r+N-1)^{N-1}}    &      & {v\in SR_N}
\\
0    &      & {otherwise}
\end{array} \right. $$
The payment rule is characterized by Proposition \ref{p1}.\\
\textbf{$N$-agent Symmetric Worst-Case Joint Distribution }\\
Let  $\pi^*(v_1,v_2,\cdots, v_N)$ denote the density of the value profile $(v_1,v_2,\cdots, v_N)$ whenever the density exists. Let $Pr^*(v_1,v_2,\cdots, v_N)$ denote the probability mass of the value profile $(v_1,v_2,\cdots, v_N)$ whenever there is some probability mass on $(v_1,v_2,\cdots, v_N)$. Let $A(v)$ denote the set of agents whose value is not 1 given a value profile $v$. Formally, $A(v):=\{i|v_i\neq 1\quad \text{given a value profile $v$}\}$.  $N$-agent Symmetric Worst-Case Joint Distribution  has the support $SR_N$ and is defined as follows:
$$\pi^*(v)=  \frac{|A(v)|!(r+N-1)^N}{N^{N-1}(\sum_{i=1}^Nv_i)^{|A(v)|+1}},\quad v\in SR_N, v\neq (\underbrace{1,\cdots, 1}_{N})$$
$$Pr^*(\underbrace{1,\cdots, 1}_{N})=\frac{(r+N-1)^N}{N^N}$$
\begin{theorem}\label{t8}
For $N$-agent symmetric case, when $m\ge 1-\frac{(N-1)^{N-1}}{N^N}$, \textbf{$N$-agent Symmetric Maxmin Public Good Mechanism} and  \textbf{$N$-agent Symmetric Worst-Case Joint Distribution} form a Nash equilibrium. In addition, the revenue guarantee is $\frac{(r+N-1)^N}{N^{N-1}}$.

\end{theorem}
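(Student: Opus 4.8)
The plan is to show that $((q^{*},t^{*}),\pi^{*})$ is a saddle point of $U$, exactly as in Theorems \ref{t1}--\ref{t6}: (a) $(q^{*},t^{*})\in\mathbb{D}_{N}$; (b) $\pi^{*}\in\Pi_{N}(\mathbf{m})$; (c) $\pi^{*}$ is a best response of nature against $(q^{*},t^{*})$; (d) $(q^{*},t^{*})$ is a best response of the principal against $\pi^{*}$. Throughout write $S:=\sum_{i=1}^{N}v_{i}$, $c:=r+N-1$, and $D:=N^{N-1}-c^{N-1}$, so that $q^{*}(v)=(S^{N-1}-c^{N-1})/D$ on $SR_{N}$. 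Part (a) is immediate: $q^{*}$ is a nondecreasing function of $S$, hence nondecreasing in each $v_{i}$; it rises from $0$ at $S=c$ to $1$ at $S=N$, so $q^{*}\in[0,1]$; and since $t^{*}$ is obtained from $q^{*}$ through Proposition \ref{p1}, the mechanism is DSIC and EPIR by the standard Myerson argument.

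For part (c) I would apply Lemma \ref{l1}: it suffices to exhibit multipliers $\lambda_{1}=\dots=\lambda_{N}=:\lambda$ and $\mu$ with $\lambda S+\mu\le\sum_{i}t_{i}^{*}(v)$ for all $v$ and equality on $SR_{N}=\mathrm{supp}(\pi^{*})$. Using $t_{i}^{*}(v)=v_{i}q^{*}(v)-\int_{0}^{v_{i}}q^{*}(s,v_{-i})\,ds$, the one subtlety is the effective lower endpoint: since $r\ge 0$ forces $\sum_{j\ne i}v_{j}\le N-1\le c$, for $v\in SR_{N}$ the integrand $q^{*}(s,v_{-i})$ vanishes for $s$ below $L_{i}:=c-\sum_{j\ne i}v_{j}\ (\ge 0)$, so each of the $N$ integrals effectively starts at $L_{i}$ (with $L_{i}+\sum_{j\ne i}v_{j}=c$) and has the same closed form. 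Summing them and subtracting from $q^{*}(v)S$ makes the $S^{N}$-terms cancel and leaves $\sum_{i}t_{i}^{*}(v)=\frac{(N-1)c^{N-1}}{D}(S-c)$ on $SR_{N}$, while off $SR_{N}$ one has $q^{*}\equiv 0$, hence $\sum_{i}t_{i}^{*}\equiv 0$. Thus $\lambda:=(N-1)c^{N-1}/D>0$ and $\mu:=-\lambda c$ satisfy feasibility and complementary slackness, so $\pi^{*}$ is optimal for nature; combined with the mean constraint this also yields $U((q^{*},t^{*}),\pi^{*})=\lambda(Nm-c)=c^{N}/N^{N-1}$, the claimed revenue guarantee.

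For part (d) I would compute the weighted virtual value $\Phi$ of $\pi^{*}$ and show $\Phi(v)=0$ for $v\in SR_{N}\setminus\{(1,\dots,1)\}$, $\Phi(1,\dots,1)>0$, and $\Phi(v)\le 0$ off $SR_{N}$; then, as in the proof of Theorem \ref{t2}, any feasible monotone mechanism provisioning with positive probability only on $SR_{N}$ and with probability $1$ at $(1,\dots,1)$ --- in particular $(q^{*},t^{*})$ --- maximizes $U(\cdot,\pi^{*})$. The computation is clean because of an exponent-drop identity: integrating $\tfrac{k!\,c^{N}}{N^{N-1}S^{k+1}}$ in one coordinate from its lower endpoint up to $1$ yields $\tfrac{(k-1)!\,c^{N}}{N^{N-1}S^{k}}$ plus a boundary term cancelled exactly by the face-mass where that coordinate equals $1$. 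On the top stratum this gives $\pi_{i}(v_{-i})\big(1-\Pi_{i}(v_{i}\mid v_{-i})\big)=\tfrac{(N-1)!\,c^{N}}{N^{N-1}S^{N}}$, whose sum over $i$ is exactly $\pi^{*}(v)\,S$, so $\Phi(v)=0$; the same cancellation handles the lower strata ($|A(v)|<N$), at $(1,\dots,1)$ the survival weights vanish so $\Phi=N\cdot Pr^{*}(1,\dots,1)=c^{N}/N^{N-1}>0$, and negativity off $SR_{N}$ is automatic since $\pi^{*}$ vanishes there.

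Part (b) is where the real labor lies, and I expect it to be the main obstacle. One must verify that the stratified density $\pi^{*}$ --- full-dimensional on $\{v\in[0,1)^{N}:S\ge c\}$, with successively lower-dimensional pieces on the faces where coordinates equal $1$ and an atom at $(1,\dots,1)$ --- integrates to $1$ and has every marginal mean equal to $m$. This is done by eliminating coordinates one at a time, using the same exponent-drop identity and repeatedly reconciling the boundary contributions against the next-lower stratum until only the atom $c^{N}/N^{N}$ remains (one checks this explicitly for $N=2$, recovering Symmetric Worst-Case Joint Distribution (II)); carrying an extra factor $v_{i}$ through the same recursion produces the mean and forces $m=\frac{c(N^{N}-c^{N-1})}{(N-1)N^{N}}$, i.e. the $r$ in the statement. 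A one-line check that this right-hand side is strictly increasing in $r$ on $[0,1]$ (its derivative in $c$ is $\tfrac{1}{N-1}(1-(c/N)^{N-1})\ge 0$) shows such an $r$ exists and is unique precisely when $m\ge 1-\frac{(N-1)^{N-1}}{N^{N}}$. Combining (a)--(d) then gives the Nash equilibrium and the revenue guarantee $(r+N-1)^{N}/N^{N-1}$.
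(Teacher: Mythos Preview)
Your proposal is correct and follows essentially the same saddle-point approach as the paper's proof: the same dual multipliers $\lambda_i=(N-1)c^{N-1}/D$, $\mu=-\lambda c$ for nature's side, the same $\Phi(v)=0$ verification on $SR_N\setminus\{(1,\dots,1)\}$ for the principal's side (the paper does this explicitly for $|A(v)|=1$ and $|A(v)|>1$ rather than via your exponent-drop identity, but the computation is identical), and the same marginal computation to check total mass and the mean constraint, yielding the monotone map $r\mapsto m$ and the threshold $m\ge 1-\tfrac{(N-1)^{N-1}}{N^{N}}$. Your revenue identity $\sum_i t_i^{*}(v)=\tfrac{(N-1)c^{N-1}}{D}(S-c)$ on $SR_N$ is exactly what the paper derives, and your observation that $\Phi(v)\le 0$ off $SR_N$ is automatic from $\pi^*\equiv 0$ there matches the paper's footnoted remark on weighted virtual values.
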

\subsection{Deterministic mechanisms}
In this section, we restrict attention to deterministic DSIC and EPIR public good mechanisms and  characterize the maxmin public good mechanisms in this class of mechanisms for the two-agent case. Note that  Proposition \ref{p1} still holds, with an additional property that $q(v)$ is either 0 or 1 for any $v$. 
\begin{definition}
\textbf{Provision boundary} of a given deterministic DSIC and EPIR public good mechanism with a provision rule $q$ is a set of value profiles  $\mathcal{B}:=\{\bar{v}=(\bar{v_1},\bar{v_2})| q(\bar{v})=0;\text{for any small}\quad \epsilon>0, \text{}q(\bar{v_1}+\epsilon,\bar{v_2})=1\quad \text{or}\quad q(\bar{v_1},\bar{v_2}+\epsilon)=1\}$.\footnote{For technical reasons, we assume the public good provision probability on the provision boundary is 0. This is to have a minimization problem for Nature. Otherwise we have to replace $\min$ with $\inf$. See also in \cite{carrasco2018optimal}. }
\end{definition}
We observe the provision boundary exhibits a monotone property, which is summarized below. \footnote{To see this, since $\bar{v}' \in \mathcal{B}$ and $\bar{v_1}>\bar{v_1}'$, $q(v_1,\bar{v_2}')=1$. Then by definition, $\bar{v_2}\le \bar{v_2}'$.}
\begin{observation}\label{ob1}
If $\bar{v},\bar{v}' \in \mathcal{B}$ and $\bar{v_1}>\bar{v_1}'$, then $\bar{v_2}\le \bar{v_2}'$.
\end{observation}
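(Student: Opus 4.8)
The plan is a short argument by contradiction that uses nothing beyond the monotonicity in Proposition~\ref{p1} (which, as noted, survives in the deterministic class) together with the definition of the provision boundary $\mathcal{B}$.

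First I would record the only structural input needed: since $q(\cdot,v_{-i})$ is nondecreasing in $v_i$ for each coordinate $i$, the rule $q$ is nondecreasing in the coordinatewise (product) order on $[0,1]^2$, so the provision set $\{v:q(v)=1\}$ is upward closed. Then I would suppose, toward a contradiction, that $\bar v,\bar v'\in\mathcal{B}$ satisfy $\bar v_1>\bar v_1'$ and $\bar v_2>\bar v_2'$, i.e.\ $\bar v$ strictly dominates $\bar v'$ in both coordinates. Pick $\epsilon>0$ small enough that (a) the defining property of $\mathcal{B}$ holds at $\bar v'$ and (b) $\epsilon<\min\{\bar v_1-\bar v_1',\,\bar v_2-\bar v_2'\}$. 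By (a), either $q(\bar v_1'+\epsilon,\bar v_2')=1$ or $q(\bar v_1',\bar v_2'+\epsilon)=1$; by (b), in the first case $(\bar v_1'+\epsilon,\bar v_2')\le(\bar v_1,\bar v_2)$ and in the second case $(\bar v_1',\bar v_2'+\epsilon)\le(\bar v_1,\bar v_2)$, each inequality coordinatewise. Monotonicity of $q$ then forces $q(\bar v)=1$, contradicting $\bar v\in\mathcal{B}$ (which requires $q(\bar v)=0$). Hence $\bar v_2\le\bar v_2'$.

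There is essentially no obstacle here; the only care needed is the simultaneous choice of $\epsilon$ in step (b), which is feasible precisely because both coordinate gaps are strictly positive under the contradiction hypothesis, plus a trivial check that the perturbed coordinates remain in $[0,1]$ (automatic, since $\bar v_1'+\epsilon<\bar v_1\le 1$ and $\bar v_2'+\epsilon<\bar v_2\le 1$). This is the same idea as the footnote's one-line sketch, made symmetric so that it covers both disjuncts in the definition of $\mathcal{B}$ rather than only the case in which provision is triggered by raising $v_1$.
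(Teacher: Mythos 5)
Your proof is correct and follows essentially the same route as the paper's footnote argument: monotonicity of $q$ in each coordinate (which indeed only uses DSIC) combined with the definition of $\mathcal{B}$. If anything, your contradiction version is slightly more careful than the paper's one-line sketch, since it explicitly covers the case where provision at $\bar v'$ is triggered only by raising $v_2$, which the footnote glosses over.
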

\indent The main idea is as follows. We divide all  deterministic DSIC and EPIR public mechanisms into four classes according to the provision boundary. By strong duality, we will focus on the dual program. We propose a relaxation of the dual program by omitting many constraints. Then we are faced with a finite dimensional linear programming problem.  Then we derive an upper bound of the value of the relaxation for each class. Finally we show that the upper bound is tight by constructing a deterministic public good mechanism and a worst-case joint distribution.
\begin{theorem}\label{t7}
(i) When $m_2\ge 2(\sqrt{2}-1)$,  any deterministic DSIC and EPIR public good mechanism satisfying the following properties is a maxmin deterministic public good mechanism:\\
(a). $(1-\sqrt{2(1-m_1)},1) \in \mathcal{B}, (1,1-\sqrt{2(1-m_2)})\in \mathcal{B}$.\\
(b). $\mathcal{B}$ is below (including) the line boundary $\sqrt{2(1-m_2)}v_1+\sqrt{2(1-m_1)}v_2=1-(1-\sqrt{2(1-m_1)})(1-\sqrt{2(1-m_2)})$.\\
(c). Payments are characterized by Proposition \ref{p1}.\\
The worst-case joint distribution put point mass $\sqrt{\frac{1-m_1}{2}},\sqrt{\frac{1-m_2}{2}}$ and $1-\sqrt{\frac{1-m_1}{2}}-\sqrt{\frac{1-m_2}{2}}$ on value profile $(1-\sqrt{2(1-m_1)},1), (1,1-\sqrt{2(1-m_2)})$ and $(1,1)$ respectively. The revenue guarantee is $2(1-\sqrt{\frac{1-m_1}{2}}-\sqrt{\frac{1-m_2}{2}})^2$.\\
(ii) When $m_2< 2(\sqrt{2}-1)$,   the deterministic maxmin public good mechanism is  a dictatorship mechanism: provide the public good if and only if agent 1's reported value exceeds $1-\sqrt{1-m_1}$; payments are characterized by Proposition \ref{p1}. Any joint distribution whose marginal distribution for agent 1 puts point mass  $\sqrt{1-m_1}$ and $1-\sqrt{1-m_1}$ on the value $1-\sqrt{1-m_1}$ and $1$ respectively is a worst-case joint distribution. The revenue guarantee is $(1-\sqrt{1-m_1})^2$.
\end{theorem}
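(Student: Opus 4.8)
The plan is to pass to the dual, reduce to a finite-dimensional problem by discarding most constraints, solve a handful of small linear programs to get an upper bound, and then certify the proposed mechanisms and distributions directly. By Proposition \ref{p1} (with the extra restriction $q(v)\in\{0,1\}$) a deterministic DSIC and EPIR mechanism is pinned down by its provision region $R=\{v:q(v)=1\}$, a closed upper set whose boundary is the $\mathcal{B}$ of the Definition, and by Observation \ref{ob1} $\mathcal B$ is a nonincreasing staircase. The payment formula gives $\sum_i t_i(v)=\bar v_1(v_2)+\bar v_2(v_1)$ for $v\in R$ (where $\bar v_1(v_2)=\inf\{s:(s,v_2)\in R\}$ and symmetrically $\bar v_2(v_1)$) and $\sum_i t_i(v)=0$ off $R$. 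For a fixed mechanism, Lemma \ref{l1} and strong duality give
\[GR((q,t))=\max\Big\{\mu+\lambda_1 m_1+\lambda_2 m_2:\ \mu+\lambda_1 v_1+\lambda_2 v_2\le\textstyle\sum_i t_i(v)\ \ \forall v\in[0,1]^2\Big\},\]
so the principal maximizes this jointly over $(q,t)$ and $(\lambda,\mu)$. I would split deterministic mechanisms into four classes according to whether the top/right pivots $h_1:=\bar v_1(1)$ and $h_2:=\bar v_2(1)$ are each zero or positive (equivalently, whether $\mathcal B$ meets the interior of the top/right edge).

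\emph{Upper bound.} For the generic class $h_1,h_2>0$, retain the dual constraints only at $(0,0)$, at the boundary points $(h_1,1),(1,h_2)\in\mathcal B$ (where $\sum_i t_i=0$), and at $(1,1)$ (where $\sum_i t_i=h_1+h_2$); discarding the rest can only raise the value, so this four-constraint LP upper-bounds $GR$. Its unique feasible vertex of positive value has the last three constraints tight and gives $g(h_1,h_2):=(h_1+h_2)\big(1-\tfrac{1-m_1}{1-h_1}-\tfrac{1-m_2}{1-h_2}\big)$; a first-order computation shows the maximizer is $h_i=1-\sqrt{2(1-m_i)}=:a_i$, with value $2p_3^2$, where $p_3:=1-\sqrt{\tfrac{1-m_1}{2}}-\sqrt{\tfrac{1-m_2}{2}}$ and $a_1+a_2=2p_3$ (admissible iff $a_i\ge0$). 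For a class with exactly one $h_j=0$ the point $(1,h_j)$ is interior to $R$ rather than on $\mathcal B$, so I instead keep constraints along the axis that $\mathcal B$ actually meets; that LP collapses to the single-buyer relaxation (cf. \cite{carrasco2018optimal}) and yields the bound $(1-\sqrt{1-m_1})^2$, which dominates the analogous mean-$m_2$ bound since $m_1\ge m_2$. For $h_1=h_2=0$ one has $\sum_i t_i(1,1)=0$, and a similarly small relaxation gives a bound $\le(1-\sqrt{1-m_1})^2$, so this class is dominated. Hence $\sup_{(q,t)}GR\le\max\{2p_3^2,(1-\sqrt{1-m_1})^2\}$, and the elementary equivalence $\sqrt2\,p_3\ge1-\sqrt{1-m_1}\iff\sqrt{1-m_2}\le\sqrt2-1\iff m_2\ge2(\sqrt2-1)$ tells us which term wins — this is exactly the threshold in the statement.

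\emph{Tightness.} When $m_2\ge2(\sqrt2-1)$: the three-atom distribution lies in $\Pi(\mathbf m)$ because its mean equations reduce to $p_i(1-a_i)=1-m_i$ (true by definition of $a_i,p_i$) and $p_3\ge0$ follows from $m_2\ge2(\sqrt2-1)$; for any mechanism with properties (a)--(b) the atoms $(a_1,1)$ and $(1,a_2)$ lie on $\mathcal B$ (so contribute $0$) while $(1,1)\in R$ has $\sum_i t_i(1,1)=a_1+a_2$ by (a), so the expected revenue is exactly $p_3(a_1+a_2)=2p_3^2$, giving $GR\le2p_3^2$. Conversely, the dual vector $\lambda_1=\tfrac{2p_3}{1-a_1},\ \lambda_2=\tfrac{2p_3}{1-a_2},\ \mu=-\lambda_1 a_1-\lambda_2$ has objective value $2p_3^2$ and is dual-feasible: off $R$ the left side is $\le0$ precisely on the half-plane below the line in property (b), which by (b) contains $\mathcal B$ and hence all of $R^c$; and on $R$ properties (a)--(b) force $v_1\ge a_1,\ v_2\ge a_2$, so $\sum_i t_i(v)\ge a_1+a_2$, whereas $\mu+\lambda_1 v_1+\lambda_2 v_2\le2p_3=a_1+a_2$ on all of $[0,1]^2$. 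Thus $GR=2p_3^2$ for every mechanism satisfying (a)--(c), and by the upper bound this is optimal. When $m_2<2(\sqrt2-1)$: under the dictatorship mechanism agent $2$ pays $0$ identically, so revenue depends only on agent $1$'s marginal and the problem reduces to the single-buyer mean-$m_1$ problem, whose maxmin deterministic value is $(1-\sqrt{1-m_1})^2$ with exactly the stated two-point worst case, matching the upper bound.

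\emph{Main obstacle.} The delicate step is the relaxation: for each of the four classes one must choose which constraints to keep so that the finite LP is both solvable in closed form and tight after optimizing over the class parameters — in particular one must rule out degenerate boundaries near $v_i=1$, where a careless choice leaves the relaxed dual unbounded. In the tightness step the only nontrivial point is that the \emph{single} dual certificate above must be feasible uniformly over the whole admissible family (a)--(c) of boundaries, which is precisely what the half-plane formulation of property (b) is designed to guarantee.
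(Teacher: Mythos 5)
Your overall architecture matches the paper's: pass to the dual via Lemma~\ref{l1} and strong duality, split deterministic mechanisms into four classes according to where the provision boundary meets the edges of $[0,1]^2$, relax the dual to a finite LP by discarding constraints, bound its value, and then certify the proposed saddle points directly. The classification by $h_1=\bar v_1(1)$, $h_2=\bar v_2(1)$ being zero or positive is the same as the paper's Classes~1--4, just re-parameterized, and your single dual-certificate construction for tightness (uniform over all mechanisms satisfying (a)--(c)) is the same as the paper's.

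The genuine gap is in the upper-bound step. For the generic class ($h_1,h_2>0$) you retain only four constraints: at $(0,0)$, $(h_1,1)$, $(1,h_2)$, $(1,1)$. This is strictly fewer than the paper's six (which also keeps the corners $(1,0)$ and $(0,1)$), and the omission is fatal. Consider the recession direction $(\lambda_1,\lambda_2,\mu)=(1,-1/h_2,0)$: one checks
\[\lambda_1 h_1+\lambda_2=h_1-\tfrac{1}{h_2}<0,\qquad \lambda_1+\lambda_2 h_2=0,\qquad \lambda_1+\lambda_2=1-\tfrac{1}{h_2}<0,\]
so the direction is feasible for your four constraints, while the objective increment is $m_1-m_2/h_2$, which is strictly positive whenever $m_2/m_1<h_2<1$ --- a nonempty range for any $m_1>m_2$. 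Hence your relaxed LP is unbounded on part of the class, not a finite quantity with a ``unique feasible vertex of positive value.'' The paper's extra constraints $\lambda_1+\mu\le 0$ and $\lambda_2+\mu\le 0$ (from $(1,0)$ and $(0,1)$, which are always outside $R$ for a nontrivial mechanism so $\sum_i t_i=0$ there) are exactly what rules out these mixed-sign directions; they are also what makes the paper's case-by-sign analysis (Cases~1''--4'' in the proof of Lemma~\ref{l8}/\ref{l9}) go through. You need to restore those two constraints before the rest of your argument --- identifying the critical vertex, computing $g(h_1,h_2)$, and optimizing over $(h_1,h_2)$ --- can be carried out.

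A second, smaller issue: the ``first-order computation'' giving the maximizer $h_i=1-\sqrt{2(1-m_i)}$ implicitly assumes the unconstrained stationary point is feasible ($a_2\ge 0$, i.e. $m_2\ge 1/2$). When $m_2<1/2$ the maximizer of $g$ over $[0,1)^2$ sits on the boundary of the admissible region, and one must show (as the paper does via its monotonicity arguments for $G(\cdot)$ and $L(\cdot)$) that the constrained optimum is $\le RG(m_1)$; your ``$\sup GR\le\max\{2p_3^2,\ (1-\sqrt{1-m_1})^2\}$'' needs this supporting calculation rather than being asserted. The tightness half of your argument is essentially correct.
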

Here are two examples of maxmin deterministic public good mechanisms when $m_2\ge 2(\sqrt{2}-1)$.
\begin{example}\label{e1}
\textit{Linear Mechanism}: the public good is provided with probability of 1 if and only if  $\sqrt{2(1-m_2)}v_1+\sqrt{2(1-m_1)}v_2>1-(1-\sqrt{2(1-m_1)})(1-\sqrt{2(1-m_2)})$.
\end{example}
\begin{example}\label{e2}
\textit{Posted Price Mechanism}: the public good is provided with probability of 1 if and only if 
$v_1>1-\sqrt{2(1-m_1)}$ and $v_2>1-\sqrt{2(1-m_2)}$.
\end{example}
\subsection{Excludable good}
In this section, we consider the design of profit-maximizing \textit{excludable}  good mechanism when the principal only knows the expectations of the private values of the agents. That is, we consider the same problem with the only difference that we allow for agent-specific provision rules. We use $q^E_i(v)\in [0,1]$ to denote the provision rule for agent $i$ when the reported value profile is $v$. 
\begin{manualproposition}{1'}[Revenue Equivalence]\label{p1'}
\textit{Maxmin excludable public good mechanisms have the following properties:\\1. $q^E_i(\cdot,v_{-i})$ is nondecreasing in $v_i$ for all $v_{-i}$}.\\2. $t^E_i(v_i,v_{-i})=v_iq^E_i(v_i,v_{-i})-\int_0^{v_i}q^E_i(s,v_{-i})ds$.
\end{manualproposition}
Proposition \ref{p1'} is a simple adaption of Proposition \ref{p1}. Therefore we omit the proof. Then consider the problem that fixing any joint distribution $\pi$, the principal designs an optimal mechanism $(q^E,t^E)$. An direct implication of Proposition \ref{p1'} is that the expected revenue of $(q^E,t^E)$ under $\pi$ is $$
E[\sum_{i=1}^Nt_i^E(v)]=\int \sum_{i=1}^N q_i^E(v)\Phi^E_i(v)dv$$
where $\Phi_i^E(v)=\pi(v)v_i-[\pi_i(v_{-i})-\Pi_i(v_i,v_{-i})]$. Here $\Phi_i^E(v)$ is the weighted virtual value of type $v_i$ of agent $i$ when the value profile is $v$. Next consider the problem that fixing any mechanism $(q^E,t^E)$, adversarial nature chooses a joint distribution $\pi$ that minimizes the expected revenue.\begin{manuallemma}{1'}\label{l1'}
If $\pi$ is a best response for adversarial nature to a given mechanism $(q^E,t^E)$, then there exists some real numbers $\lambda_1,\cdots, \lambda_N,\mu$ such that 
\begin{equation} \label{eq111}
\sum_{i=1}^N\lambda_iv_i+\mu \le \sum_{i=1}^N t^E_i(v)\quad \forall v \in V
\end{equation}
\begin{equation} \label{eq112}
\sum_{i=1}^N\lambda_iv_i+\mu =\sum_{i=1}^N t^E_i(v)\quad \forall v \in supp(\pi)
\end{equation}
\end{manuallemma} 
Lemma \ref{l1'} is a simple adaption of lemma \ref{l1}. Therefore we omit the proof. Now we give the construction of the maxmin excludable public good mechanism and the worst-case joint distribution for the general $N-$agent case.\\
\textbf{$N$-agent Maxmin Excludable Public Good Mechanism }\\  Let $v=(v_1,v_2,\cdots, v_N)$ be the reported value profile of the $N$ agents.
Let $\gamma_i$ be the solution to $m_i=\gamma_i-\gamma_i\ln{\gamma_i}$.   The provision rule is as follows: $$ q_i^{E*}(v_i,v_{-i})=  \left\{
\begin{array}{lll}
1-\frac{\ln{v_i}}{\ln{\gamma_i}}    &      & {v_i\ge \gamma_i}
\\
0    &      & {otherwise}
\end{array} \right. $$
The payment rule is characterized by an adaption of Proposition \ref{p1}.\\
\textbf{$N$-agent Independent Equal Revenue  Distribution }\\
The marginal distribution of each agent $i$'s value is the equal revenue distribution supported on $[\gamma_i,1]$, i.e., $F^*_i(v_i)=1-\frac{\gamma_i}{v_i}$ for $v_i\in [\gamma_i,1)$ and $F^*_i(1)=1$. $N$-agent Independent Equal Revenue  Distribution is the one in which all values are independently distributed.
\begin{theorem}\label{t9}
For $N$-agent case, \textbf{$N$-agent Excludable Maxmin Public Good Mechanism} and  \textbf{$N$-agent Independent Equal Revenue Distribution} form a Nash equilibrium. In addition, the revenue guarantee is $\sum_{i=1}^N\gamma_i$.

\end{theorem}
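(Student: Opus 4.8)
The plan is to exploit the fact that, once provision is allowed to be agent-specific and the candidate worst-case distribution is a product measure, both the principal's maximization and nature's minimization separate into $N$ independent single-agent problems. As preliminaries I would record: the map $\gamma\mapsto\gamma-\gamma\ln\gamma$ is strictly increasing from $0$ to $1$ on $(0,1]$, so $\gamma_i\in(0,1]$ is uniquely determined by $m_i=\gamma_i-\gamma_i\ln\gamma_i$; the equal revenue marginal $F_i^*$ (density $\gamma_i/v_i^2$ on $[\gamma_i,1)$, atom of mass $\gamma_i$ at $1$) has mean $-\gamma_i\ln\gamma_i+\gamma_i=m_i$, so the \textbf{$N$-agent Independent Equal Revenue Distribution} lies in $\Pi(\mathbf m)$; and $q_i^{E*}(v_i)=1-\ln v_i/\ln\gamma_i$ is nondecreasing on $[\gamma_i,1]$ (since $\ln\gamma_i<0$), rising from $0$ at $v_i=\gamma_i$ to $1$ at $v_i=1$, so with the envelope payments of Proposition~\ref{p1'} the mechanism is DSIC and EPIR, the interim utility being $\int_0^{v_i}q_i^{E*}(s)\,ds\ge 0$.

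Next I would show the mechanism is a best response to $\pi^*$, i.e.\ it maximizes $E_{\pi^*}[\sum_i t_i^E]$ over DSIC/EPIR excludable mechanisms. Using the representation $E_{\pi^*}[\sum_i t_i^E]=\sum_i\int q_i^E(v)\Phi_i^E(v)\,dv$ and independence, $\Phi_i^E(v)=\big(\prod_{j\ne i}\pi_j^*(v_j)\big)g_i(v_i)$ with $g_i(v_i)=v_i\pi_i^*(v_i)-(1-F_i^*(v_i))$; a direct computation gives $g_i\equiv 0$ on $(\gamma_i,1)$ and $g_i=-1$ on $[0,\gamma_i)$, while the atom at $v_i=1$ contributes a positive coefficient $Pr_i^*(1)=\gamma_i$. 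Hence for any feasible $q_i^E$ the agent-$i$ term is nonpositive on $\{v_i<\gamma_i\}$, zero on $(\gamma_i,1)$, and at most $\gamma_i$ from the top atom, so it is at most $\gamma_i$; and $q_i^{E*}$ attains $\gamma_i$ because it equals $0$ below $\gamma_i$ and $1$ at the top. Summing over $i$ gives value $\sum_i\gamma_i$. Since $\pi^*$ has atoms whereas this representation was derived for densities, I would make the atom contribution rigorous either by a Stieltjes computation or density approximation, or by invoking the known single-agent solution (e.g.\ \cite{carrasco2018optimal}) coordinate by coordinate.

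Then I would show $\pi^*$ is a best response for nature to $(q^{E*},t^{E*})$. First compute the payment: for $v_i\ge\gamma_i$, Proposition~\ref{p1'} yields $t_i^{E*}(v_i)=v_iq_i^{E*}(v_i)-\int_{\gamma_i}^{v_i}q_i^{E*}(s)\,ds=\frac{\gamma_i-v_i}{\ln\gamma_i}$, which is affine in $v_i$, and $t_i^{E*}=0$ on $[0,\gamma_i)$. Take dual variables $\lambda_i:=-1/\ln\gamma_i>0$ and $\mu:=\sum_i\gamma_i/\ln\gamma_i$. Then $\sum_i\lambda_iv_i+\mu=\sum_i\frac{\gamma_i-v_i}{\ln\gamma_i}$ equals $\sum_it_i^{E*}(v)$ precisely on $\mathrm{supp}(\pi^*)=\prod_i[\gamma_i,1]$, and by additive separability is dominated by $\sum_it_i^{E*}(v)$ everywhere on $[0,1]^N$ (per coordinate: for $v_i<\gamma_i$, $\frac{\gamma_i-v_i}{\ln\gamma_i}<0=t_i^{E*}(v_i)$). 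This is exactly the dual feasibility and complementary slackness of Lemma~\ref{l1'}, so weak duality gives $E_\pi[\sum_i t_i^{E*}]\ge\sum_i\lambda_im_i+\mu$ for every $\pi\in\Pi(\mathbf m)$, with equality at $\pi^*$; hence $\pi^*$ minimizes. Using $\gamma_i-m_i=\gamma_i\ln\gamma_i$, the common value is $\sum_i\lambda_im_i+\mu=\sum_i\frac{\gamma_i-m_i}{\ln\gamma_i}=\sum_i\gamma_i$, which is the revenue guarantee. Combining the two best-response properties yields the claimed Nash equilibrium.

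The step I expect to be the main obstacle is making the virtual-value identity and the factorization of $\Phi_i^E$ fully rigorous at the atoms of $\pi^*$ --- that is, arguing cleanly that no excludable DSIC/EPIR mechanism extracts more than $\gamma_i$ from agent $i$ under the equal revenue law. Existence and uniqueness of $\gamma_i$, the mean constraint, incentive compatibility, the closed-form payment, and the one-variable dual-feasibility inequality are all routine calculations.
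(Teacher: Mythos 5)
Your proposal is correct and follows essentially the same two-step saddle-point argument as the paper: you verify the mechanism is a best response to $\pi^*$ via the weighted-virtual-value identity (exploiting the product structure to reduce to a per-agent computation), and you verify $\pi^*$ is a best response via LP duality using exactly the same dual variables $\lambda_i=-1/\ln\gamma_i$, $\mu=\sum_i\gamma_i/\ln\gamma_i$ and the same complementary slackness check. Your treatment is somewhat more explicit than the paper's (e.g.\ the per-agent bound $\gamma_i q_i^E(1,\cdot)-\int_0^{\gamma_i}q_i^E\le\gamma_i$ and the flag about handling the top atom), but the overall route is the same.
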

\section{Concluding Remarks}\label{s8}
In this paper we characterize    maxmin public good mechanisms  among DSIC and EPIR mechanisms  for the two-agent case given general expectations and for a special $N$-agent case given high symmetric expectations. An important direction for future research is to extend the analysis beyond  the dominant strategy mechanisms.  Another direction is to study situations in which the principal knows other aspects of the joint distribution of values, e.g., the marginals distributions and other moments. 
\section{Appendix A}
\subsection{Characterization of Symmetric Maxmin Public Good Mechanism (I) }
We first consider $v\in SR^I(1)$.  Together with (iv) in Proposition \ref{p1}, \ref{a} and \eqref{eq16}, we obtain that for any $v\in SR^I(1)$,
\begin{equation} \label{eq17}
\lambda_1v_1+\lambda_2v_2+\mu = (v_1+v_2)q^*(v)-\int_{r_1-v_2}^{v_1}q^*(x,v_2)dx-\int_{r_1-v_1}^{v_2}q^*(v_1,x)dx
\end{equation}
To solve for the provision probability, first we  take first order derivatives with respect to $v_1$ and $v_2$ respectively, and we obtain
\begin{equation} \label{eq18}
(v_1+v_2)\frac{\partial q^*(v_1,v_2)}{\partial v_1}-\frac{\partial \int_{r_1-v_1}^{v_2}q^*(v_1,x)dx }{\partial v_1}=\lambda_1
\end{equation}
\begin{equation} \label{eq19}
(v_1+v_2)\frac{\partial q^*(v_1,v_2)}{\partial v_2}-\frac{\partial \int_{r_1-v_2}^{v_1}q^*(x,v_2)dx }{\partial v_2}=\lambda_2
\end{equation}
Then, we take cross partial derivative, with some algebra, we obtain
\begin{equation} \label{eq20}
(v_1+v_2)\frac{\partial q^*(v_1,v_2)}{\partial v_1\partial v_2}=0
\end{equation}
Thus, when $v\in SR^I(1)$,  $q^*(v_1,v_2)$ is separable, which can be written as
\begin{equation}\label{eq21}
    q^*(v_1,v_2)=f(v_1)+g(v_2)
\end{equation}
Plugging \eqref{eq21} into \eqref{eq18} and \eqref{eq19}, we obtain 
\begin{equation}\label{eq22}
r_1f'(v_1)-(f(v_1)+g(r_1-v_1))=\lambda_1
\end{equation}
\begin{equation}\label{eq23}
r_1g'(v_2)-(g(v_2)+f(r_1-v_2))=\lambda_2
\end{equation}
Note both \eqref{eq22} and \eqref{eq23} involve the two functions $f$ and $g$. We guess \textlabel{(C)}{c} that $f(v_1)+g(r-v_1)=0$ and $g(v_2)+f(r_1-v_2)=0$ when $v\in SR^I(1)$, then we can easily solve  \eqref{eq22} and \eqref{eq23}, and we obtain 
\begin{equation}\label{eq24}
    f(v_1)=\frac{\lambda_1}{r_1}v_1+c_1
\end{equation}
\begin{equation}\label{eq25}
    g(v_2)=\frac{\lambda_2}{r_1}v_2+c_2
\end{equation}
In order for \ref{c} to hold, we must have 
\begin{equation}\label{eq26}
    \lambda_1=\lambda_2, c_1+c_2+\lambda_1=0
\end{equation}
Now plugging \eqref{eq24},\eqref{eq25} and \eqref{eq26} into \eqref{eq21}, we obtain for any $v\in SR^I(1)$,
\begin{equation}\label{eq27}
    q^*(v_1,v_2)=\frac{\lambda_1}{r_1}(v_1+v_2-r_1)
\end{equation}
Then, given  $q^*(r_1,r_1)=a$, we obtain  $\lambda_1=a$, and therefore, when $v\in SR^I(1)$, 
\begin{equation}\label{eq28}
    q^*(v_1,v_2)=\frac{a}{r_1}(v_1+v_2-r_1)
\end{equation}
Finally, plugging \eqref{eq28} into \eqref{eq17}, we obtain that $\mu=-ar_1$. Now consider $v\in SR^I(2)$. Given $\lambda_1=\lambda_2=a, \mu=-ar_1$, (iv) in Proposition \ref{p1}, \ref{a} and \eqref{eq16}, we obtain for any $v \in SR^I(2)$,
\begin{equation} \label{eq29}
av_1+av_2-ar_1 = (v_1+v_2)q^*(v)-\int_{r_1-v_2}^{v_1}q^*(x,v_2)dx-\int_{0}^{v_2}q^*(v_1,x)dx
\end{equation}
Note that $q^*(x,v_2)=\frac{a}{r_1}(x+v_2-r_1)$ when $x\le r_1$. Plugging it into \eqref{eq29}, we obtain for any $v \in SR^I(2)$,
\begin{equation} \label{eq30}
av_1+av_2-ar_1  = (v_1+v_2)q^*(v)-\int_{r_1-v_2}^{r_1}\frac{a}{r_1}(x+v_2-r_1)dx-\int_{r_1}^{v_1}q^*(x,v_2)dx-\int_{0}^{v_2}q^*(v_1,x)dx
\end{equation}
We take first order derivatives with respect to $v_1$ and $v_2$ respectively, and we obtain
\begin{equation} \label{eq31}
(v_1+v_2)\frac{\partial q^*(v_1,v_2)}{\partial v_1}-\frac{\partial \int_{0}^{v_2}q^*(v_1,x)dx }{\partial v_1}=a
\end{equation}
\begin{equation} \label{eq32}
(v_1+v_2)\frac{\partial q^*(v_1,v_2)}{\partial v_2}-\frac{av_2}{r_1}-\frac{\partial \int_{r_1}^{v_1}q^*(x,v_2)dx }{\partial v_2}=a
\end{equation}
Then, we take cross partial derivative, with some algebra, we obtain
\begin{equation} \label{eq33}
(v_1+v_2)\frac{\partial q^*(v_1,v_2)}{\partial v_1\partial v_2}=0
\end{equation}
Thus, when $v\in SR^I(2)$,  $q^*(v_1,v_2)$ is separable, which can be written as
\begin{equation}\label{eq34}
    q^*(v_1,v_2)=f(v_1)+g(v_2)
\end{equation}
Plugging \eqref{eq34} into \eqref{eq31} and \eqref{eq32}, we obtain 
\begin{equation}\label{eq35}
v_1f'(v_1)=a
\end{equation}
\begin{equation}\label{eq36}
(r_1+v_2)g'(v_2)-\frac{av_2}{r_1}=a
\end{equation}
The solution to \eqref{eq35} and \eqref{eq36} is 
\begin{equation}\label{eq37}
    f(v_1)=a\ln{v_1}+c_1
\end{equation}
\begin{equation}\label{eq38}
    g(v_2)=\frac{a}{r_1}v_2+c_2
\end{equation}
Then we plug \eqref{eq37} and \eqref{eq38} into \eqref{eq29}, with some algebra, we obtain that  
\begin{equation}\label{eq39}
    c_1+c_2=-a\ln{r_1}
\end{equation}
Therefore, for any $v\in SR^I(2)$,
\begin{equation}\label{eq40}
    q^*(v_1,v_2)=a\ln{v_1}+\frac{a}{r_1}v_2-a\ln{r_1}
\end{equation}
Symmetrically, for $v\in SR^I(3)$,
\begin{equation}\label{eq41}
    q^*(v_1,v_2)=a\ln{v_2}+\frac{a}{r_1}v_1-a\ln{r_1}
\end{equation}
Finally consider $v\in SR^I(4)$.  Given $\lambda_1=\lambda_2=a, \mu=-ar_1$, (iv) in Proposition \ref{p1}, \ref{a} and \eqref{eq16}, we obtain for any $v \in SR^I(4)$,
\begin{equation} \label{eq42}
av_1+av_2-ar_1 = (v_1+v_2)q^*(v)-\int_{0}^{v_1}q^*(x,v_2)dx-\int_{0}^{v_2}q^*(v_1,x)dx
\end{equation}
Note that $q^*(x,v_2)=a\ln{v_2}+\frac{a}{r_1}x-a\ln{r_1}$ when $x\le r_1$ and $q^*(v_1,x)=a\ln{v_1}+\frac{a}{r_1}x-a\ln{r_1}$ when $x\le r_1$ . Plugging them into \eqref{eq42}, we obtain for any $v \in SR^I(4)$,
\begin{equation} \label{eq43}
\begin{split}
av_1+av_2-ar_1  &= (v_1+v_2)q^*(v)-\int_{0}^{r_1}(a\ln{v_2}+\frac{a}{r_1}x-a\ln{r_1})dx-\int_{r_1}^{v_1}q^*(x,v_2)dx\\
&-\int_{0}^{r_1}(a\ln{v_1}+\frac{a}{r_1}x-a\ln{r_1})dx-\int_{r_1}^{v_2}q^*(v_1,x)dx
\end{split}
\end{equation}
We take first order derivatives with respect to $v_1$ and $v_2$ respectively, and we obtain
\begin{equation} \label{eq44}
(v_1+v_2)\frac{\partial q^*(v_1,v_2)}{\partial v_1}-\frac{ar_1}{v_1}-\frac{\partial \int_{r_1}^{v_2}q^*(v_1,x)dx }{\partial v_1}=a
\end{equation}
\begin{equation} \label{eq45}
(v_1+v_2)\frac{\partial q^*(v_1,v_2)}{\partial v_2}-\frac{ar_1}{v_2}-\frac{\partial \int_{r_1}^{v_1}q^*(x,v_2)dx }{\partial v_2}=a
\end{equation}
Then, we take cross partial derivative, with some algebra, we obtain
\begin{equation} \label{eq46}
(v_1+v_2)\frac{\partial q^*(v_1,v_2)}{\partial v_1\partial v_2}=0
\end{equation}
Thus, when $v\in SR^I(2)$,  $q^*(v_1,v_2)$ is separable, which can be written as
\begin{equation}\label{eq47}
    q^*(v_1,v_2)=f(v_1)+g(v_2)
\end{equation}
Plugging \eqref{eq47} into \eqref{eq44} and \eqref{eq45}, we obtain 
\begin{equation}\label{eq48}
(r_1+v_1)f'(v_1)-\frac{ar_1}{v_1}=a
\end{equation}
\begin{equation}\label{eq49}
(r_1+v_2)g'(v_2)-\frac{ar_1}{v_2}=a
\end{equation}
The solution to \eqref{eq48} and \eqref{eq49} is 
\begin{equation}\label{eq50}
    f(v_1)=a\ln{v_1}+c_1
\end{equation}
\begin{equation}\label{eq51}
    g(v_2)=a\ln{v_2}+c_2
\end{equation}
By \ref{b}, we have $c_1+c_2=1$. Therefore, for any $v\in SR^I(4)$,
\begin{equation}\label{eq52}
    q^*(v_1,v_2)=a\ln{v_1}+a\ln{v_2}+1
\end{equation}
Then we plug \eqref{eq52} into  \eqref{eq43} and checked that \eqref{eq43} holds with some algebra. Finally, as $q^*(r_1,r_1)=a=2a\ln{r_1}+1$, we obtain that $a=\frac{1}{1-2\ln{r_1}}$.
\subsection{Characterization of Symmetric Worst-Case Joint Distribution (I)}
We start from constructing the joint distribution for the boundary value profiles, i.e., either $v_1=1$ or $v_2=1$. Assume $Pr^*(1,1)=b$. Consider value profiles $(v_1,1)$ in which $0\le v_1< 1$. Define $S^*(v_1,1)\equiv \int_{[v_1,1)}\pi^*(x,0)dx+Pr^*(1,1)$ for $0\le v_1< 1$; $S^*(1,1)\equiv Pr^*(1,1)=b$.  Then we have $\pi^*(v_1,1)=-\frac{\partial S^*(v_1,1)}{\partial v_1}$ for $0\le v_1< 1$. Since the weighted virtual values for value profiles $(v_1,1)$ in which $0\le v_1 < 1$ are zeroes, we obtain for $0\le v_1< 1$,
\begin{equation}\label{eq56}
    \pi^*(v_1,1)(v_1+1)-S^*(v_1,1)=0
\end{equation}
Note \eqref{eq56} is a simple ordinary differential equation, to which the solution is 
\begin{equation}\label{eq57}
    S^*(v_1,1)=\frac{2b}{v_1+1}, \pi^*(v_1,0)=\frac{2b}{(v_1+1)^2} \quad \forall 0\le v_1< 1
\end{equation}
Then consider value profiles $(1,v_2)$ in which $0\le v_2 \le 1$. Define $S^*(1,v_2)\equiv \int_{[v_2,1)}\pi^*(1,x)dx+Pr^*(1,1)$ for $0\le v_2< 1$.  Symmetrically, we obtain that 
\begin{equation}\label{eq58}
    S^*(1,v_2)=\frac{2b}{1+v_2}, \pi^*(1,v_2)=\frac{2b}{(1+v_2)^2}\quad \forall 0\le v_2< 1
\end{equation}
Now we will construct the joint distribution for the interior value profiles in the support, i.e., $v_1+v_2\ge r_1$ and $v_1\neq 1, v_2\neq 1$. Define $S^*(v_1,v_2)\equiv \int_{[v_1,1)}\pi^*(x,v_2)dx+\pi^*(1,v_2)$ for $v_1+v_2\ge r_1$ and $v_1\neq 1, v_2\neq 1$.  Then we have $\pi^*(v_1,v_2)=-\frac{\partial S^*(v_1,v_2)}{\partial v_1}$ for $v_1+v_2\ge r_1$ and $v_1\neq 1, v_2\neq 1$. Since the weighted virtual values for value profiles $(v_1,v_2)$ in which $v_1+v_2\ge r_1$ and $v_1\neq 1, v_2\neq 1$ are zeroes, we obtain for $v_1+v_2\ge r_1$ and $v_1\neq 1, v_2\neq 1$,
\begin{equation}\label{eq59}
    \pi^*(v_1,v_2)(v_1+v_2)-S^*(v_1,v_2)-\int_{[v_2,1)}\pi^*(v_1,x)dx-\pi^*(v_1,1)=0
\end{equation}
 By taking the cross partial derivative, we find $S^*(v_1,v_2)$ is not separable. We take the guess and verify approach to solve for \eqref{eq59}. We guess that for $v_1+v_2\ge r_1$ and $v_1\neq 1, v_2\neq 1$, 
\begin{equation}\label{eq60}
    S^*(v_1,v_2)=\frac{2b}{(v_1+v_2)^2}
\end{equation}
Then the LHS of \eqref{eq59} is $\frac{4b}{(v_1+v_2)^3}(v_1+v_2)-\frac{2b}{(v_1+v_2)^2}-
\int_{[v_2,1)}\frac{4b}{(v_1+x)^3}ds-\frac{2b}{(v_1+1)^2}$, which can be shown to be 0 with some algebra. Thus, we verified the guess.\\
\indent To solve for $b$, we use the fact that $\pi^*(v)$ is a distribution. We note the marginal distribution for agent 2 (the same for agent 1) is as follows: $\pi^*_2(v_2)=S(r_1-v_2,v_2)=\frac{2b}{(r_1-v_2+v_2)^2}=\frac{2b}{r_1^2}$ for $0\le v_2\le r_1$, $\pi^*_2(v_2)=S(0,v_2)=\frac{2b}{(0+v_2)^2}=\frac{2b}{v_2^2}$ for $r_1< v_2<1$ and $Pr^*_2(v_2=1)=S^*(0,1)=2b$. Since the integration is 1, we obtain \begin{equation}\label{eq61}
    \frac{2b}{r_1^2}\cdot r_1+\int_{(r_1,1)}\frac{2b}{v_2^2}+2b=1
\end{equation}
Thus, we obtain 
\begin{equation}\label{eq62}
   b=\frac{r_1}{4}
\end{equation}
\indent So far we have constructed \textbf{Symmetric Worst-Case Joint Distribution (I)}. The final step is  to make sure that \textbf{Symmetric Worst-Case Joint Distribution (I)} satisfies the mean constraints, which will allow us to solve for the monopoly reserve $r_1$. Given the marginal distribution for agent 2 (the same for agent 1), we have the following mean constraint,
\begin{equation}\label{eq63}
    \int_0^{r_1}\frac{2b}{r_1^2}xdx+\int_{r_1}^1\frac{2b}{x^2}xdx+2b=m
\end{equation}
Plugging \eqref{eq62} into \eqref{eq63}, we obtain 
\begin{equation}\label{eq64}
   \frac{r_1(3-2\ln{r_1})}{4}=m
\end{equation}
Note that the LHS of \eqref{eq64} is strictly increasing\footnote{The first order derivative is $\frac{1-2\ln{r_1}}{4}>0$ for $r_1\in (0,1)$.} with respect to $r_1$ for $r_1\in (0,1)$. In addition, the LHS of \eqref{eq64} is $\frac{3}{4}$ when $r_1=1$. Therefore, for any $m\in (0,\frac{3}{4})$, there is a unique solution $r_1\in(0,1)$ to \eqref{eq64}. Indeed, $r_1=\exp(W_{-1}(-2m\exp(-\frac{3}{2}))+\frac{3}{2})$. 
\subsection{Characterization of Asymmetric Maxmin Public Good Mechanism (I)}
 We first consider $v\in AR^I(1)$.  Together with (iv) in Proposition \ref{p1}, \ref{a3} and \eqref{eq16}, we obtain that for any $v\in AR^I(1)$,
\begin{equation} \label{eq67}
\lambda_1v_1+\lambda_2v_2+\mu = (v_1+v_2)q^*(v)-\int_{s_1-\frac{s_1}{s_2}v_2}^{v_1}q^*(x,v_2)dx-\int_{s_2-\frac{s_2}{s_1}v_1}^{v_2}q^*(v_1,x)dx
\end{equation}
To solve for the provision probability, first we  take first order derivatives with respect to $v_1$ and $v_2$ respectively, and we obtain
\begin{equation} \label{eq68}
(v_1+v_2)\frac{\partial q^*(v_1,v_2)}{\partial v_1}-\frac{\partial \int_{s_2-\frac{s_2}{s_1}v_1}^{v_2}q^*(v_1,x)dx }{\partial v_1}=\lambda_1
\end{equation}
\begin{equation} \label{eq69}
(v_1+v_2)\frac{\partial q^*(v_1,v_2)}{\partial v_2}-\frac{\partial \int_{s_2-\frac{s_2}{s_1}v_2}^{v_1}q^*(x,v_2)dx }{\partial v_2}=\lambda_2
\end{equation}
Then, we take cross partial derivative, with some algebra, we obtain
\begin{equation} \label{eq70}
(v_1+v_2)\frac{\partial q^*(v_1,v_2)}{\partial v_1\partial v_2}=0
\end{equation}
Thus, when $v\in AR^I(1)$,  $q^*(v_1,v_2)$ is separable, which can be written as
\begin{equation}\label{eq71}
    q^*(v_1,v_2)=f(v_1)+g(v_2)
\end{equation}
Plugging \eqref{eq71} into \eqref{eq68} and \eqref{eq69}, we obtain 
\begin{equation}\label{eq72}
(v_1+s_2-\frac{s_2}{s_1}v_1)f'(v_1)-\frac{s_2}{s_1}(f(v_1)+g(s_2-\frac{s_2}{s_1}v_1))=\lambda_1
\end{equation}
\begin{equation}\label{eq73}
(v_2+s_1-\frac{s_1}{s_2}v_2)g'(v_2)-\frac{s_1}{s_2}(g(v_2)+f(s_1-\frac{s_1}{s_2}v_2))=\lambda_2
\end{equation}
Note both \eqref{eq72} and \eqref{eq73} involve the two functions $f$ and $g$. We guess \textlabel{(C3)}{c3} that $f(v_1)+g(s_2-\frac{s_2}{s_1}v_1)=0$ and $g(v_2)+f(s_1-\frac{s_1}{s_2}v_2)=0$ when $v\in AR^I(1)$, then we can easily solve  \eqref{eq72} and \eqref{eq73}, and we obtain 
\begin{equation}\label{eq74}
    f(v_1)=\frac{\lambda_1}{1-\frac{s_2}{s_1}}\ln{(v_1+s_2-\frac{s_2}{s_1}v_1)}+c_1
\end{equation}
\begin{equation}\label{eq75}
    g(v_2)=\frac{\lambda_2}{1-\frac{s_1}{s_2}}\ln{(v_2+s_1-\frac{s_1}{s_2}v_2)}+c_2
\end{equation}
In order for \ref{c3} to hold, we must have 
\begin{equation}\label{eq76}
    \frac{\lambda_1}{1-\frac{s_2}{s_1}}+\frac{\lambda_2}{1-\frac{s_1}{s_2}}=0, c_1+c_2=0
\end{equation}
Now plugging \eqref{eq74},\eqref{eq75} and \eqref{eq76} into \eqref{eq71}, we obtain for any $v\in AR^I(1)$,
\begin{equation}\label{eq77}
    q^*(v_1,v_2)=\frac{\lambda_1}{1-\frac{s_2}{s_1}}(\ln{(v_1+s_2-\frac{s_2}{s_1}v_1)}-\ln{(v_2+s_1-\frac{s_1}{s_2}v_2)})
\end{equation}
Then, given $q^*(s_1,s_2)=c$, we obtain  $\lambda_1=\frac{c(1-\frac{s_2}{s_1})}{\ln{\frac{s_1}{s_2}}}$, and therefore, when $v\in AR^I(1)$, 
\begin{equation}\label{eq78}
    q^*(v_1,v_2)=\frac{c}{\ln{\frac{s_1}{s_2}}}(\ln{(v_1+s_2-\frac{s_2}{s_1}v_1)}-\ln{(v_2+s_1-\frac{s_1}{s_2}v_2)})
\end{equation}
Finally, plugging \eqref{eq78} into \eqref{eq77}, we obtain that $\mu=-\frac{c(s_1-s_2)}{\ln{\frac{s_1}{s_2}}}$. Now consider $v\in AR^I(2)$. Given $\lambda_1= \frac{c(1-\frac{s_2}{s_1})}{\ln{\frac{s_1}{s_2}}}, \lambda_2=-\frac{c(1-\frac{s_1}{s_2})}{\ln{\frac{s_1}{s_2}}}, \mu=-\frac{c(s_1-s_2)}{\ln{\frac{s_1}{s_2}}}$, (iv) in Proposition \ref{p1}, \ref{a3} and \eqref{eq76}, we obtain for any $v \in AR^I(2)$,
\begin{equation} \label{eq79}
\frac{c}{\ln{\frac{s_1}{s_2}}}((1-\frac{s_2}{s_1})v_1-(1-\frac{s_1}{s_2})v_2-(s_1-s_2)) = (v_1+v_2)q^*(v)-\int_{s_1-\frac{s_1}{s_2}v_2}^{v_1}q^*(x,v_2)dx-\int_{0}^{v_2}q^*(v_1,x)dx
\end{equation}
Note that $q^*(x,v_2)=\frac{c}{\ln{\frac{s_1}{s_2}}}(\ln{(x+s_2-\frac{s_2}{s_1}x)}-\ln{(v_2+s_1-\frac{s_1}{s_2}v_2)})$ when $x\le s_1$. Plugging it into \eqref{eq79}, we obtain for any $v \in AR^I(2)$,
\begin{equation} \label{eq80}
\begin{split}
\frac{c}{\ln{\frac{s_1}{s_2}}}((1-\frac{s_2}{s_1})v_1-(1-\frac{s_1}{s_2})v_2-(s_1-s_2))  &= (v_1+v_2)q^*(v) \\ & - \int_{s_1-\frac{s_1}{s_2}v_2}^{s_1}\frac{c}{\ln{\frac{s_1}{s_2}}}(\ln{(x+s_2-\frac{s_2}{s_1}x)}-\ln{(v_2+s_1-\frac{s_1}{s_2}v_2)})dx  \\ &-\int_{s_1}^{v_1}q^*(x,v_2)dx-\int_{0}^{v_2}q^*(v_1,x)dx
\end{split}
\end{equation}
We take first order derivatives with respect to $v_1$ and $v_2$ respectively, and we obtain
\begin{equation} \label{eq81}
(v_1+v_2)\frac{\partial q^*(v_1,v_2)}{\partial v_1}-\frac{\partial \int_{0}^{v_2}q^*(v_1,x)dx }{\partial v_1}=\frac{c}{\ln{\frac{s_1}{s_2}}}(1-\frac{s_2}{s_1})
\end{equation}
\begin{equation} \label{eq82}
(v_1+v_2)\frac{\partial q^*(v_1,v_2)}{\partial v_2}+\frac{c}{\ln{\frac{s_1}{s_2}}}\frac{s_1}{s_2}(1-\frac{s_1}{s_2})\frac{v_2}{v_2+s_1-\frac{s_1}{s_2}v_2}-\frac{\partial \int_{s_1}^{v_1}q^*(x,v_2)dx }{\partial v_2}=-\frac{c}{\ln{\frac{s_1}{s_2}}}(1-\frac{s_1}{s_2})
\end{equation}
Then, we take cross partial derivative, with some algebra, we obtain
\begin{equation} \label{eq83}
(v_1+v_2)\frac{\partial q^*(v_1,v_2)}{\partial v_1\partial v_2}=0
\end{equation}
Thus, when $v\in AR^I(2)$,  $q^*(v_1,v_2)$ is separable, which can be written as
\begin{equation}\label{eq84}
    q^*(v_1,v_2)=f(v_1)+g(v_2)
\end{equation}
Plugging \eqref{eq84} into \eqref{eq81} and \eqref{eq82}, we obtain 
\begin{equation}\label{eq85}
v_1f'(v_1)=\frac{c}{\ln{\frac{s_1}{s_2}}}(1-\frac{s_2}{s_1})
\end{equation}
\begin{equation}\label{eq86}
(s_1+v_2)g'(v_2)-\frac{c}{\ln{\frac{s_1}{s_2}}}\frac{s_1}{s_2}(1-\frac{s_1}{s_2})\frac{v_2}{v_2+s_1-\frac{s_1}{s_2}v_2}=-\frac{c}{\ln{\frac{s_1}{s_2}}}(1-\frac{s_1}{s_2})
\end{equation}
The solution to \eqref{eq85} and \eqref{eq86} is 
\begin{equation}\label{eq87}
    f(v_1)=\frac{c}{\ln{\frac{s_1}{s_2}}}(1-\frac{s_2}{s_1})\ln{v_1}+c_1
\end{equation}
\begin{equation}\label{eq88}
    g(v_2)=-\frac{c}{\ln{\frac{s_1}{s_2}}}\ln{(v_2+s_1-\frac{s_1}{s_2}v_2)}+c_2
\end{equation}
Then we plug \eqref{eq87} and \eqref{eq88} into \eqref{eq80}, with some algebra, we obtain that  
\begin{equation}\label{eq89}
    c_1+c_2=\frac{c}{\ln{\frac{s_1}{s_2}}}\frac{s_2}{s_1}\ln{s_1}
\end{equation}
Therefore, for any $v\in AR^I(2)$,
\begin{equation}\label{eq90}
    q^*(v_1,v_2)=\frac{c}{\ln{\frac{s_1}{s_2}}}((1-\frac{s_2}{s_1})\ln{v_1}-\ln{(v_2+s_1-\frac{s_1}{s_2}v_2)}+\frac{s_2}{s_1}\ln{s_1})  
\end{equation}
Symmetrically, for $v\in AR^I(3)$,
\begin{equation}\label{eq91}
    q^*(v_1,v_2)=\frac{c}{\ln{\frac{s_1}{s_2}}}(\ln{(v_1+s_2-\frac{s_2}{s_1}v_1)}-(1-\frac{s_1}{s_2})\ln{v_2}-\frac{s_1}{s_2}\ln{s_2}) 
\end{equation}
Finally consider $v\in AR^I(4)$.  Given $\lambda_1= \frac{c(1-\frac{s_2}{s_1})}{\ln{\frac{s_1}{s_2}}}, \lambda_2=-\frac{c(1-\frac{s_1}{s_2})}{\ln{\frac{s_1}{s_2}}}, \mu=-\frac{c(s_1-s_2)}{\ln{\frac{s_1}{s_2}}}$, (iv) in Proposition \ref{p1}, \ref{a3} and \eqref{eq76}, we obtain for any $v \in AR^I(4)$,
\begin{equation} \label{eq92}
\frac{c}{\ln{\frac{s_1}{s_2}}}((1-\frac{s_2}{s_1})v_1-(1-\frac{s_1}{s_2})v_2-(s_1-s_2)) = (v_1+v_2)q^*(v)-\int_{0}^{v_1}q^*(x,v_2)dx-\int_{0}^{v_2}q^*(v_1,x)dx
\end{equation}
Note that $q^*(x,v_2)=\frac{c}{\ln{\frac{s_1}{s_2}}}(\ln{(x+s_2-\frac{s_2}{s_1}x)}-(1-\frac{s_1}{s_2})\ln{v_2}-\frac{s_1}{s_2}\ln{s_2}) $ when $x\le s_1$ and $q^*(v_1,x)=\frac{c}{\ln{\frac{s_1}{s_2}}}((1-\frac{s_2}{s_1})\ln{v_1}-\ln{(x+s_1-\frac{s_1}{s_2}x)}+\frac{s_2}{s_1}\ln{s_1})$ when $x\le s_2$ . Plugging them into \eqref{eq92}, we obtain for any $v \in AR^I(4)$,
\begin{equation} \label{eq93}
\begin{split}
\frac{c}{\ln{\frac{s_1}{s_2}}}((1-\frac{s_2}{s_1})v_1-(1-\frac{s_1}{s_2})v_2-(s_1-s_2))  &= (v_1+v_2)q^*(v)-\int_{s_1}^{v_1}q^*(x,v_2)dx-\int_{s_2}^{v_2}q^*(v_1,x)dx\\
&-\int_{0}^{s_1}(\frac{c}{\ln{\frac{s_1}{s_2}}}(\ln{(x+s_2-\frac{s_2}{s_1}x)}-(1-\frac{s_1}{s_2})\ln{v_2}-\frac{s_1}{s_2}\ln{s_2}))dx\\
& -\int_{0}^{s_2}(\frac{c}{\ln{\frac{s_1}{s_2}}}((1-\frac{s_2}{s_1})\ln{v_1}-\ln{(x+s_1-\frac{s_1}{s_2}x)}+\frac{s_2}{s_1}\ln{s_1}))dx
\end{split}
\end{equation}
We take first order derivatives with respect to $v_1$ and $v_2$ respectively, and we obtain
\begin{equation} \label{eq94}
(v_1+v_2)\frac{\partial q^*(v_1,v_2)}{\partial v_1}-\frac{cs_2}{\ln{\frac{s_1}{s_2}}}(1-\frac{s_2}{s_1})\frac{1}{v_1}-\frac{\partial \int_{s_2}^{v_2}q^*(v_1,x)dx }{\partial v_1}=\frac{c}{\ln{\frac{s_1}{s_2}}}(1-\frac{s_2}{s_1})
\end{equation}
\begin{equation} \label{eq95}
(v_1+v_2)\frac{\partial q^*(v_1,v_2)}{\partial v_2}+\frac{cs_1}{\ln{\frac{s_1}{s_2}}}(1-\frac{s_1}{s_2})\frac{1}{v_2}-\frac{\partial \int_{r_1}^{v_1}q^*(x,v_2)dx }{\partial v_2}=-\frac{c}{\ln{\frac{s_1}{s_2}}}(1-\frac{s_1}{s_2})
\end{equation}
Then, we take cross partial derivative, with some algebra, we obtain
\begin{equation} \label{eq96}
(v_1+v_2)\frac{\partial q^*(v_1,v_2)}{\partial v_1\partial v_2}=0
\end{equation}
Thus, when $v\in AR^I(4)$,  $q^*(v_1,v_2)$ is separable, which can be written as
\begin{equation}\label{eq97}
    q^*(v_1,v_2)=f(v_1)+g(v_2)
\end{equation}
Plugging \eqref{eq97} into \eqref{eq94} and \eqref{eq95}, we obtain 
\begin{equation}\label{eq98}
(s_2+v_1)f'(v_1)-\frac{cs_2}{\ln{\frac{s_1}{s_2}}}(1-\frac{s_2}{s_1})\frac{1}{v_1}=\frac{c}{\ln{\frac{s_1}{s_2}}}(1-\frac{s_2}{s_1})
\end{equation}
\begin{equation}\label{eq99}
(s_1+v_2)g'(v_2)+\frac{cs_1}{\ln{\frac{s_1}{s_2}}}(1-\frac{s_1}{s_2})\frac{1}{v_2}=-\frac{c}{\ln{\frac{s_1}{s_2}}}(1-\frac{s_1}{s_2})
\end{equation}
The solution to \eqref{eq98} and \eqref{eq99} is 
\begin{equation}\label{eq100}
    f(v_1)=\frac{c}{\ln{\frac{s_1}{s_2}}}(1-\frac{s_2}{s_1})\ln{v_1}+c_1
\end{equation}
\begin{equation}\label{eq101}
    g(v_2)=-\frac{c}{\ln{\frac{s_1}{s_2}}}(1-\frac{s_1}{s_2})\ln{v_2}+c_2
\end{equation}
By \ref{b}, we have $c_1+c_2=1$. Therefore, for any $v\in AR^I(4)$,
\begin{equation}\label{eq102}
    q^*(v_1,v_2)=\frac{c}{\ln{\frac{s_1}{s_2}}}((1-\frac{s_2}{s_1})\ln{v_1}-(1-\frac{s_1}{s_2})\ln{v_2})+1
\end{equation}
Then we plug \eqref{eq102} into  \eqref{eq93} and checked that \eqref{eq93} holds with some algebra. Finally, as $q^*(s_1,s_2)=c=\frac{c}{\ln{\frac{s_1}{s_2}}}((1-\frac{s_2}{s_1})\ln{s_1}-(1-\frac{s_1}{s_2})\ln{s_2})+1$, we obtain that $c=\frac{1}{1-\frac{(1-\frac{s_2}{s_1})\ln{s_1}-(1-\frac{s_1}{s_2})\ln{s_2}}{\ln{\frac{s_1}{s_2}}}}$.
\section{Appendix B}
\subsection{Proof of Proposition \ref{p1}}
(i) $q(\cdot,v_{-i})$ nondecreasing:\\
Dominant strategy incentive compatibility for a type $v_i$ of agent $i$ requires that for any $v_i'\neq v_i$: $$v_iq(v_i,v_{-i}) - t_i(v_i,v_{-i})\ge v_iq(v_i',v_{-i}) - t_i(v_i',v_{-i})  $$DSIC also requires that:$$ v_i'q(v_i',v_{-i}) - t_i(v_i',v_{-i})\ge v_i'q(v_i,v_{-i}) - t_i(v_i,v_{-i}).$$Adding the two inequalities, we have that: $$(v_i-v_i')(q(v_i,v_{-i})-q(v_i',v_{-i}))\ge 0$$It follows that $q(v_i,v_{-i})\ge q(v_i',v_{-i})$ whenever $v_i>v_i'$ .\\\\
(ii) $t_i(v_i,v_{-i})=v_iq(v_i,v_{-i})-\int_{0}^{v_i}q(s,v_{-i})ds$:\\
Define $$U_i(v_i,v_{-i})=v_iq(v_i,v_{-i})-t_i(v_i,v_{-i})$$ 
By the two inequalities in (i), we get $$(v_i'-v_i)q(v_i,v_{-i})\le U_i(v_i',v_{-i})-U_i(v_i,v_{-i})\le (v_i'-v_i)q(v_i',v_{-i})$$ Dividing throughout by $v_i'-v_i$: $$q(v_i,v_{-i})\le \frac{U_i(v_i',v_{-i})-U_i(v_i,v_{-i})}{(v_i'-v_i)}\le q(v_i',v_{-i})$$As $v_i\uparrow v_i'$, we get: $$
\frac{dU_i(v_i,v_{-i})}{dv_i}=q(v_i,v_{-i})$$Then we get $$t_i(v_i,v_{-i})= v_iq(v_i,v_{-i})-\int_0^{v_i}q(s,v_{-i})-U_i(0,v_{-i})$$
Note $U_i(0,v_{-i})\ge 0$ by the EPIR constraint. If $U_i(0,v_{-i})>0$, then we can reduce it to 0 so that we can increase the revenue from all value profiles and the value of the problem will be strictly greater. Thus, for any maxmin public good mechanism, $U_i(0,v_{-i})=0$ and $t_i(v_i,v_{-i})=v_iq(v_i,v_{-i})-\int_{0}^{v_i}q(s,v_{-i})$.
\subsection{Proof of Lemma \ref{l1}}
Given a DSIC and EPIR mechanism $(q,t)$, the (P) primal minimization problem of Nature is as follows (with dual variables in the bracket): 
 $$(Primal)\min_{F\in \Pi(m_1,\cdots, m_N)}\int \sum_{i=1}^N t_i(v)dF$$ s.t.
    $$\int v_idF=m_i\quad (\lambda_i)\quad \forall i$$
    $$\int dF=1 \quad (\mu)$$
It has the following (D) dual maximization problem:
    $$(Dual)\max_{\lambda_1,\cdots, \lambda_N,\mu \in \mathcal{R}}\sum_{i=1}^N \lambda_i m_i+\mu$$ s.t.
    $$\sum_{i=1}^N \lambda_iv_i+\mu \le \sum_{i=1}^N t_i(v) \quad (dF)\quad \forall v$$
Note that the value of (P) is bounded by $N$ as $\sum_{i=1}^Nt_i(v)\le N$. In addition, the trivial joint distribution that put all probability mass on the point $(m_1,\cdots, m_N)$  is in the interior of the primal cone. Then by Theorem 3.12 in \cite{anderson1987linear}, strong duality holds. Then, by the Complementary Slackness, \eqref{eq16} holds. And \eqref{eq15} is  the feasibility constraint of (D).
\subsection{Proof of Theorem \ref{t1}}
(i): \textbf{Symmetric Maxmin Public Good Mechanism (I)} is a best response to \textbf{Symmetric Worst-Case Joint Distribution (I)}. Note Symmetric Worst-Case Joint Distribution (I) satisfies \eqref{eq56} and \eqref{eq59}. Also note there is a probability mass on the value profile (1,1). Thus \eqref{eq53}, \eqref{eq54} and \eqref{eq55} hold. Then any feasible and monotone mechanism in which the public good is provided  with some positive probability if and only if $v_1 + v_2> r_1$ and the public good is provided with probability 1 when $(v_1,v_2)  =  (1,1)$ is a best response for the principal. It is easy to see that  Symmetric Maxmin Public Good Mechanism (I) is such a mechanism.\\
(ii): \textbf{Symmetric Worst-Case Joint Distribution (I)} is a best response to \textbf{Maxmin Public Good Mechanism (I)}. We use the duality theory to show (ii). First note that by \eqref{eq61} and \eqref{eq63}, all the three constraints in (P) holds. By the weighted virtual value representation, the value of (P) given Symmetric Worst-Case Joint Distribution (I) and Symmetric Maxmin Public Good Mechanism (I) is simply $Pr(1,1)\times (1+1)=\frac{r_1}{2}$. Second, the constraints in (D) hold for all value profiles. To see this, note for any value profile $v=(v_1,v_2)$ outside the support of Symmetric Worst-Case Joint Distribution (I), since $\lambda_1=\lambda_2=a>0$, we have $$\lambda_1 v_1+\lambda_2 v_2 + \mu < \lambda_1 r_1+\lambda_2 0 + \mu=0$$ For any value profile $v=(v_1,v_2)$ inside the support of Symmetric Worst-Case Joint Distribution (I), the constraints (the complementary slackness)  hold given \eqref{eq17}, \eqref{eq29} and \eqref{eq41}.  Finally, the value of (D) given the constructed $\lambda_1,\lambda_2, \mu$ is $\lambda_1 m+\lambda_2 m +\mu$, which, by some algebra, is equal to $\frac{r_1}{2}$. By the linear programming duality theory, (ii) holds and the revenue guarantee is $\frac{r_1}{2}$.
\subsection{Proof of Theorem \ref{t2}}
(i): \textbf{Symmetric Maxmin Public Good Mechanism (II)} is a best response to \textbf{Symmetric Worst-Case Joint Distribution (II)}. The proof is similar to the proof of (i) of Theorem \ref{t1}.\\
(ii): \textbf{Symmetric Worst-Case Joint Distribution (II)} is a best response to \textbf{Symmetric Maxmin Public Good Mechanism (II)}. We use the duality theory to show (ii). First note that by construction, all the  constraints in (P) holds. By the weighted virtual value representation, the value of (P) given Symmetric Worst-Case Joint Distribution (II) and Symmetric Maxmin Public Good Mechanism (II) is simply $Pr(1,1)\times (1+1)=\frac{(r_2+1)^2}{2}$. Second, the constraints in (D) hold for all value profiles. To see this, note for any value profile $v=(v_1,v_2)$ outside the support of Symmetric Worst-Case Joint Distribution (II), since $\lambda_1=\lambda_2=\frac{1+r_2}{1-r_2}>0$, we have $$\lambda_1 v_1+\lambda_2 v_2 + \mu < \lambda_1 r_2+\lambda_2 \cdot 1 + \mu=0$$ For any value profile $v=(v_1,v_2)$ inside the support of Symmetric Worst-Case Joint Distribution (II), the constraint (the complementary slackness)  holds given \eqref{eq65}.  Finally, the value of (D) given the constructed $\lambda_1,\lambda_2, \mu$ is $\lambda_1 m+\lambda_2 m +\mu$, which, by some algebra, is equal to $\frac{(1+r_2)(2m-(1+r_2))}{1-r_2}=\frac{(1+r_2)^2}{2}$. By the linear programming duality theory, (ii) holds and the revenue guarantee is $\frac{(1+r_2)^2}{2}$.
\subsection{Proof of Lemma \ref{l2}}
To facilitate the exposition, we  define new functions $H^{I^*}_1(s_1, s_2)$ and $H^{I^*}_2(s_1, s_2)$
as follows. 
$$H_1^{I^*}(s_1,s_2):=  \left\{
\begin{array}{lll}
H_1^{I}(s_1,s_2)     &      & { s_1\neq s_2, s_1\neq 0, s_1\neq 0}\\
\frac{-2s_1\ln{s_1}+3s_1}{4}    &      & {s_1 =s_2\neq 0}\\
0    &      & {s_1=0\quad\text{or}\quad s_2=0}
\end{array} \right. $$
$$H_2^{I^*}(s_1,s_2):=  \left\{
\begin{array}{lll}
H_2^{I}(s_1,s_2)     &      & { s_1\neq s_2, s_1\neq 0, s_1\neq 0}\\
\frac{-2s_1\ln{s_1}+3s_1}{4}    &      & {s_1 =s_2\neq 0}\\
0    &      & {s_1=0\quad\text{or}\quad s_2=0}
\end{array} \right. $$
We start from establishing the following  claims regarding some properties of the functions $H_1^{I^*}(s_1,s_2)$ and $H_2^{I^*}(s_1,s_2)$, which will play a crucial role in establishing Lemma \ref{l2}.
\begin{claim}\label{cl1}
$H_1^{I^*}(s_1,s_2)$ and $H_2^{I^*}(s_1,s_2)$ are both continuous for $s_1\in[0,1]$ and $s_2\in[0,1]$.
\end{claim}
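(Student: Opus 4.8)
The plan is to reduce continuity of both functions to that of one elementary function of a single variable and then to dispatch the boundary of $[0,1]^2$ by separate limit arguments. First I would introduce the auxiliary function
$$g(u):=\frac{-\ln(1+u)}{u^{2}}+\frac{1+u}{u}\ \ (u\in(-1,\infty),\,u\neq 0),\qquad g(0):=\tfrac32,$$
and record the algebraic identity
$$H^{I}_1(s_1,s_2)=\frac{s_1s_2}{s_1+s_2}\Bigl(g\bigl(\tfrac{s_2-s_1}{s_1}\bigr)-\ln s_1\Bigr)\qquad(s_1,s_2>0),$$
obtained by substituting $u=(s_2-s_1)/s_1$ and using $\ln(s_1/s_2)=-\ln(1+u)$, $s_1^{2}/(s_1-s_2)^{2}=u^{-2}$, $s_2/(s_2-s_1)=(1+u)/u$. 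A short Taylor expansion of $-\ln(1+u)+u+u^{2}=\tfrac32u^{2}-\tfrac13u^{3}+\cdots$ shows $g$ extends continuously to $u=0$ with value $\tfrac32$, and since $g(u)\to 1$ as $u\to\infty$, $g$ is continuous and bounded on $[-c,\infty)$ for every $c\in(0,1)$. These two facts — the identity and the continuity/boundedness of $g$ — are the only ingredients I would use.

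Then I would verify continuity of $H^{I^*}_1$ at each point of $[0,1]^2$. At points with $s_1,s_2>0$, $s_1\neq s_2$, it is a composition of continuous functions, so nothing is needed. At a diagonal point $(s,s)$ with $s>0$: from the identity, off-diagonal nearby values tend to $\tfrac{s}{2}(\tfrac32-\ln s)=\tfrac{-2s\ln s+3s}{4}$ by continuity of $g$ at $0$, and on-diagonal nearby values $\tfrac{-2s_1\ln s_1+3s_1}{4}$ tend to the same limit, which is exactly $H^{I^*}_1(s,s)$. On the edge $s_2=0$, $s_1=a>0$: for nearby $(s_1,s_2)$ with $s_2>0$ one has $s_2<s_1/2$, hence $u=(s_2-s_1)/s_1\in(-1,-\tfrac12)$, $u^{2}$ is bounded below, and $|g(u)-\ln s_1|\le 4\ln(s_1/s_2)+|\ln s_1|+1\le C(|\ln s_2|+1)$; combined with $\tfrac{s_1s_2}{s_1+s_2}\le s_2$ this gives $|H^{I}_1|\le Cs_2(|\ln s_2|+1)\to 0=H^{I^*}_1(a,0)$. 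On the edge $s_1=0$, $s_2=b>0$, the analogous estimate is applied directly to the three terms of the definition of $H^{I}_1$ (now $|s_1-s_2|$ is bounded away from $0$, so the first term is $O(s_1^{2}(|\ln s_1|+1))$ and the rest is $O(|\ln s_1|)$), and $\tfrac{s_1s_2}{s_1+s_2}\le s_1$ gives $|H^{I}_1|\le s_1(|\ln s_1|+C)\to 0$.

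The step I expect to be the main obstacle is continuity at the corner $(0,0)$, since a sequence converging there may be diagonal-like, axis-like, or intermediate, and all individual terms of $H^{I}_1$ blow up near $(0,0)$. I would split on whether $s_2\ge s_1/2$ or $s_2<s_1/2$. In the first case $u\ge-\tfrac12$, so $g$ is bounded by the uniform bound $M$ above and $|H^{I}_1|\le\min(s_1,s_2)(M+|\ln s_1|)$, which tends to $0$ because $\min(s_1,s_2)\,|\ln s_1|\le\max\{s_1|\ln s_1|,\,s_2|\ln s_2|\}$ (using $|\ln s_1|\le|\ln s_2|$ when $s_2<s_1$). In the second case $u\in(-1,-\tfrac12)$, so $|g(u)-\ln s_1|\le C(|\ln s_2|+1)$ and $\tfrac{s_1s_2}{s_1+s_2}\le s_2$, again giving $|H^{I}_1|\to 0$. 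This establishes continuity of $H^{I^*}_1$ on $[0,1]^2$. Finally, since $(s_1-s_2)^{2}$ is symmetric, a glance at the formulas shows $H^{I}_2(s_1,s_2)=H^{I}_1(s_2,s_1)$, and the starred extensions inherit $H^{I^*}_2(s_1,s_2)=H^{I^*}_1(s_2,s_1)$ on the diagonal and axes as well; composing with the homeomorphism $(s_1,s_2)\mapsto(s_2,s_1)$ of $[0,1]^2$ gives continuity of $H^{I^*}_2$ for free.
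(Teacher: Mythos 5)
Your proposal is correct, and it takes a genuinely different (and more careful) route than the paper. The paper evaluates the diagonal limit $\lim_{s_2\to s_1\neq0}H_1^I(s_1,s_2)$ by three applications of L'H\^opital's rule directly on the quotient form, and then dispatches the axes and the origin by citing $\lim_{x\to 0}x\ln x=0$ and $\lim_{x\to 0}x^2\ln x=0$. Your reparametrization via $u=(s_2-s_1)/s_1$ and the auxiliary $g(u)=(-\ln(1+u)+u+u^2)/u^2$ replaces the L'H\^opital computation with a one-line Taylor expansion, which is cleaner, and the boundedness of $g$ on $[-c,\infty)$ plus the inequality $s_1s_2/(s_1+s_2)\le\min(s_1,s_2)$ gives explicit decay estimates at the boundary. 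Notably, your case split at the corner $(0,0)$ (into $s_2\ge s_1/2$ and $s_2<s_1/2$) is a genuine improvement: the paper only records the two iterated coordinate limits $\lim_{s_1\to0}$ and $\lim_{s_2\to0}$, which by themselves do not establish joint continuity of a two-variable function at $(0,0)$, whereas your estimates do. Both approaches then obtain $H_2^{I^*}$ from the symmetry $H_2^{I}(s_1,s_2)=H_1^{I}(s_2,s_1)$. In short, the student's argument is both different in method and more complete in coverage of the limit at the corner.
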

\begin{proof}[Proof of Claim \ref{cl1}]
We will first establish the continuity of $H_1^{I^*}(s_1,s_2)$. Note when $s_1\neq s_2, s_1\neq 0, s_2\neq 0$, the continuity holds as $H_1^I(s_1,s_2)$ is some analytic function. Therefore it suffices to show that $\lim_{s_2\to s_1\neq 0} H_1^I(s_1,s_2)=\frac{-2s_1\ln{s_1}+3s_1}{4}$, $\lim_{s_1\to 0}H_1^I(s_1,s_2)=0$ and  $\lim_{s_2\to 0}H_1^I(s_1,s_2)=0$. To see these, note 
\begin{equation*}
    \begin{split}
 \lim_{s_2\to s_1\neq 0} H_1^I(s_1,s_2)&=
 \lim_{s_2- s_1:=\epsilon \to 0} H_1^I(s_1,s_2)\\ &=\lim_{\epsilon \to 0} \frac{s_1(s_1+\epsilon)}{s_1+s_1+\epsilon}(\frac{s_1^2\ln{\frac{s_1}{s_1+\epsilon}}+\epsilon(\epsilon+s_1)}{\epsilon^2}-\ln{s_1})\\
 &= \lim_{\epsilon \to 0}\frac{s_1}{2}(\frac{-\frac{s_1^2}{s_1+\epsilon}+2\epsilon+s_1}{2\epsilon}-\ln{s_1})\\
 &= \lim_{\epsilon \to 0}\frac{s_1}{2}(\frac{\frac{s_1^2}{(s_1+\epsilon)^2}+2}{2}-\ln{s_1})\\
 &=\frac{-2s_1\ln{s_1}+3s_1}{4}
    \end{split}
\end{equation*}
The third equality and the fourth equality hold by the L'H\^{o}pital's Rule. Also
note that $\lim_{x\to 0}x\ln{x}=0$ and $\lim_{x\to 0}x^2\ln{x}=\lim_{x\to 0}x\cdot \lim_{x\to 0}x\ln{x}=0$, which imply that $\lim_{s_1\to 0}H_1^I(s_1,s_2)=0$ and  $\lim_{s_2\to 0}H_1^I(s_1,s_2)=0$. By symmetry, the continuity of $H_2^{I^*}(s_1,s_2)$ can be similarly established.
\end{proof}

\begin{claim}\label{cl2}
Fix any $s_2\in(0,1]$, $H_1^{I^*}(s_1,s_2)$ is strictly increasing w.r.t. $s_1$ for $s_1\le s_2$.
\end{claim}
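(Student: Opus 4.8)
\textbf{Proof proposal for Claim \ref{cl2}.}

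The plan is to reduce this two–variable monotonicity statement to a single transcendental inequality in one variable, and then dispose of that inequality by differentiating three times. First, by Claim \ref{cl1} the map $s_1\mapsto H_1^{I^*}(s_1,s_2)$ is continuous on $[0,s_2]$ with $H_1^{I^*}(0,s_2)=0$, and on the open interval $(0,s_2)$ (where $s_1\neq 0$, $s_1\neq s_2$, $s_2\neq 0$) it coincides with the analytic function $H_1^{I}(\cdot,s_2)$. Hence it suffices to prove $\partial_{s_1}H_1^{I}(s_1,s_2)>0$ for all $s_1\in(0,s_2)$; strict monotonicity on the whole closed interval $[0,s_2]$ — including the endpoint $s_1=s_2$, where the value is $\tfrac{-2s_2\ln s_2+3s_2}{4}$ — then follows from continuity.

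Next I would substitute $x:=s_1/s_2\in(0,1)$. A short computation, using the identity $\dfrac{x^2}{(x-1)^2}\ln x-\ln x=\dfrac{(2x-1)\ln x}{(x-1)^2}$, gives
$$H_1^{I}(s_1,s_2)=\frac{xs_2}{x+1}\bigl(G(x)-\ln s_2\bigr),\qquad G(x):=\frac{(2x-1)\ln x-(x-1)}{(x-1)^2}.$$
Since $\partial_{s_1}=s_2^{-1}\partial_x$ at fixed $s_2$, differentiating yields
$$\partial_{s_1}H_1^{I}=\frac{G(x)+x(x+1)G'(x)}{(x+1)^2}-\frac{\ln s_2}{(x+1)^2}.$$
Because $s_2\in(0,1]$ forces $-\ln s_2\ge 0$, the whole claim reduces to the single inequality $\Psi(x):=G(x)+x(x+1)G'(x)>0$ for $x\in(0,1)$ (this is exactly what is needed when $s_2=1$, and the extra $-\ln s_2$ term only helps when $s_2<1$).

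Finally I would carry out the calculus. One computes $G'(x)=\dfrac{-2x^2\ln x+(3x-1)(x-1)}{x(x-1)^3}$, and combining $G$ and $x(x+1)G'$ over the common denominator $(x-1)^3$ gives $\Psi(x)=\dfrac{\Theta(x)}{(x-1)^3}$ with
$$\Theta(x)=-(2x^3+3x-1)\ln x+3x^3-2x^2-x.$$
Since $(x-1)^3<0$ on $(0,1)$, it suffices to show $\Theta(x)<0$ there. A direct check gives $\Theta(1)=\Theta'(1)=\Theta''(1)=0$, while
$$\Theta'''(x)=-12\ln x-4+\frac{3}{x^2}+\frac{2}{x^3}>1>0\qquad\text{for }x\in(0,1),$$
using $-12\ln x>0$, $3x^{-2}>3$, $2x^{-3}>2$. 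Telescoping back: $\Theta''$ is increasing and vanishes at $1$, so $\Theta''<0$ on $(0,1)$; hence $\Theta'$ is decreasing and vanishes at $1$, so $\Theta'>0$; hence $\Theta$ is increasing and vanishes at $1$, so $\Theta<0$ on $(0,1)$, which is what we needed.

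The main obstacle, as I see it, is purely the algebra: finding the simplification of $G$ and then the factored form of the numerator $\Theta$, without which the derivative expressions are too unwieldy to see the pattern $\Theta(1)=\Theta'(1)=\Theta''(1)=0$, $\Theta'''>0$. The second point to handle with care is the $s_2$-dependence — observing that the extra term is $-\ln s_2\ge 0$ is what collapses the genuinely two-parameter problem onto the single worst case $s_2=1$. Everything after that is a routine three-derivative monotonicity argument.
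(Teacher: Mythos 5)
Your proof is correct, and it takes a genuinely different route from the paper's. The paper differentiates $H_1^{I^*}$ directly (without any pre-simplification), producing a somewhat messy expression, then substitutes $z_1 = s_1/s_2$ and discards a manifestly positive term $-(z_1-1)^2\ln s_1$ (using $s_1<1$). What remains is a one-variable inequality that the paper reduces to $h(z_1):=\ln z_1+\tfrac{z_1(1-z_1)(3z_1+1)}{z_1^2(3+z_1)}>0$ on $(0,1)$, proved by computing $h'(z_1)=\tfrac{(z_1-3)(1-z_1)^2}{z_1^2(3+z_1)^2}<0$ together with $h(1)=0$; the subtle step is factoring the cubic numerator as $(z_1-1)^2(z_1-3)$. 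Your approach instead cleans up $H_1^I$ algebraically first, writing it as $\tfrac{xs_2}{x+1}\bigl(G(x)-\ln s_2\bigr)$, so that after differentiation the $s_2$-dependence appears only as the additive term $-\ln s_2\ge 0$ (using $s_2\le 1$, not $s_1<1$); you then prove the reduced inequality $\Theta(x)<0$ on $(0,1)$ by a three-derivative argument ($\Theta(1)=\Theta'(1)=\Theta''(1)=0$, $\Theta'''>1$), which trades the paper's factoring insight for a longer but more mechanical telescope. I checked the key identities: your simplification of $H_1^I$, the formula for $G'$, the numerator $\Theta(x)=-(2x^3+3x-1)\ln x+3x^3-2x^2-x$, the vanishing of $\Theta,\Theta',\Theta''$ at $1$, and the lower bound on $\Theta'''$ all hold, and when $s_2=1$ your $\Psi/(x+1)^2$ agrees with the paper's expression for $\partial_{s_1}H_1^{I^*}$, so both derivations compute the same thing. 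Your extra care about the endpoints $s_1=0$ and $s_1=s_2$ (where $H_1^{I^*}$ is defined by the separate branches in the piecewise definition, and one invokes continuity from Claim 1) is a nice touch that the paper leaves implicit.
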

\begin{proof}[Proof of Claim \ref{cl2}]
Taking first order derivative w.r.t. $s_1$ to $H_1^{I^*}$ when $s_1\neq s_2, s_1\neq 0, s_2\neq 0$, with some algebra, we obtain that 
\begin{equation*}
\frac{\partial H_1^{I^*}(s_1,s_2)}{\partial s_1 }=\frac{s_2^2}{(s_1-s_2)^2(s_1+s_2)^2}[-\frac{s_1^2(s_1+3s_2)}{s_1-s_2}\ln{\frac{s_1}{s_2}}-
(s_1-s_2)^2\ln{s_1}+s_1(3s_1+s_2)]
\end{equation*}
Then it suffices to show that for any $s_1\in(0,s_2)$,
\begin{equation}\label{eq111}
  -\frac{s_1^2(s_1+3s_2)}{s_1-s_2}\ln{\frac{s_1}{s_2}}-
(s_1-s_2)^2\ln{s_1}+s_1(3s_1+s_2)>0  
\end{equation}
Let $z_1:=\frac{s_1}{s_2}\in (0,1)$. Plugging $s_2=\frac{s_1}{z_1}$ into \eqref{eq111}, it suffices to show that for $z_1\in(0,1)$,
\begin{equation}\label{eq112}
\frac{z_1^2(3+z_1)}{1-z_1}\ln{z_1}-(z_1-1)^2\ln{s_1}+z_1(3z_1+1)>0
\end{equation}
Note that $-(z_1-1)^2\ln{s_1}>0$ and that $1-z_1>0$, then it suffices to show that for $z_1\in(0,1)$,
\begin{equation}\label{eq113}
h(z_1):=\ln{z_1}+\frac{z_1(1-z_1)(3z_1+1)}{z_1^2(3+z_1)}>0
\end{equation}
Now taking first order derivative to $h(z_1)$, with some algebra, we obtain that for $z_1\in (0,1)$,
\begin{equation}\label{eq114}
h'(z_1)=\frac{(z_1-3)(1-z_1)^2}{z_1^2(3+z_1)^2}<0
\end{equation}
Note that $h(1)=0$, then together with \eqref{eq114}, \eqref{eq113} holds.

\end{proof}
\begin{claim}\label{cl9}
$\frac{\partial H_1^{I^*}(s_1,s_2)}{\partial s_1 }\to \frac{1-6\ln{s_1}}{24}$ as $s_2\to s_1\neq 0$.
\end{claim}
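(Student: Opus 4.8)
The plan is to evaluate the limit directly from the closed form of the derivative obtained in the proof of Claim~\ref{cl2}. Recall that for $s_1\neq s_2$ and $s_1,s_2\neq 0$,
\[
\frac{\partial H_1^{I^*}(s_1,s_2)}{\partial s_1}=\frac{s_2^2}{(s_1-s_2)^2(s_1+s_2)^2}\Big[-\frac{s_1^2(s_1+3s_2)}{s_1-s_2}\ln\tfrac{s_1}{s_2}-(s_1-s_2)^2\ln s_1+s_1(3s_1+s_2)\Big].
\]
First I would introduce the small parameter $u$ via $s_2=s_1(1+u)$, so that $s_1-s_2=-s_1u$, $s_1+s_2=s_1(2+u)$, $s_1+3s_2=s_1(4+3u)$, and $\ln(s_1/s_2)=-\ln(1+u)$. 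The prefactor then equals $\dfrac{(1+u)^2}{s_1^2u^2(2+u)^2}$, which diverges like $1/(s_1^2u^2)$ as $u\to 0$; consequently I need the bracketed expression only up to and including its $u^2$ term.

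The key computation is the Taylor expansion of the bracket in $u$. Using $-\ln(1+u)=-u+\tfrac{u^2}{2}-\tfrac{u^3}{3}+O(u^4)$, the first term becomes $-s_1^2(4+3u)\big(1-\tfrac{u}{2}+\tfrac{u^2}{3}+O(u^3)\big)=-4s_1^2-s_1^2u+\tfrac16 s_1^2u^2+O(u^3)$; the second term is $-s_1^2u^2\ln s_1$; the third term is $4s_1^2+s_1^2u$. I expect the constant and the $O(u)$ contributions to cancel exactly --- which they must, since $\partial H_1^{I^*}/\partial s_1$ stays bounded as $s_2\to s_1$ --- leaving the bracket equal to $s_1^2u^2\big(\tfrac16-\ln s_1\big)+O(u^3)$. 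Multiplying by the prefactor gives $\dfrac{(1+u)^2}{(2+u)^2}\big(\tfrac16-\ln s_1\big)+O(u)$, and letting $u\to 0$ (equivalently $s_2\to s_1$) yields $\tfrac14\big(\tfrac16-\ln s_1\big)=\dfrac{1-6\ln s_1}{24}$, as claimed. The direction from which $s_2\to s_1$ is irrelevant since the expression is analytic in $u$ near $0$.

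The only delicate point is the bookkeeping in the bracket expansion: one must keep enough terms in $-\ln(1+u)$ and track the cancellation of the $O(1)$ and $O(u)$ pieces carefully, since an error there would produce a spurious divergent or extra term. This is, however, entirely routine. An alternative route --- iterated application of L'H\^{o}pital's Rule in the variable $\epsilon=s_2-s_1$, in the spirit of the limit computations in the proof of Claim~\ref{cl1} --- would also work, but the Taylor-expansion bookkeeping is cleaner.
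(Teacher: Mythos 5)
Your proof is correct; I verified the Taylor bookkeeping. With $s_2=s_1(1+u)$ the first term of the bracket is $\frac{s_1^2(4+3u)}{u}\bigl(-\ln(1+u)\bigr)=s_1^2(4+3u)\bigl(-1+\tfrac{u}{2}-\tfrac{u^2}{3}+O(u^3)\bigr)=-4s_1^2-s_1^2u+\tfrac16 s_1^2u^2+O(u^3)$, the second term is $-s_1^2u^2\ln s_1$, and the third term is $s_1^2(4+u)=4s_1^2+s_1^2u$; the $O(1)$ and $O(u)$ parts cancel exactly, leaving $s_1^2u^2(\tfrac16-\ln s_1)+O(u^3)$, which combined with the prefactor $\tfrac{(1+u)^2}{s_1^2u^2(2+u)^2}$ gives $\tfrac14(\tfrac16-\ln s_1)=\tfrac{1-6\ln s_1}{24}$ in the limit, as required.

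Your route differs from the paper's. The paper substitutes $\epsilon=s_2-s_1$ and iterates L'H\^{o}pital's Rule three times on the resulting $0/0$ quotient; you instead substitute $s_2=s_1(1+u)$ and Taylor-expand $\ln(1+u)$ through second order, reading off the coefficient of $u^2$ in the bracket. The two are equivalent in content (both reduce to the same local behavior near $s_2=s_1$), but the Taylor-expansion route is somewhat cleaner: it avoids re-differentiating the full expression three times, keeps the algebra at each stage small, and makes the exact cancellation of the $O(1)$ and $O(u)$ terms --- the structural reason the limit is finite --- visible rather than implicit. L'H\^{o}pital is a bit more mechanical but also more error-prone here because the intermediate derivatives are messy. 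Either argument is acceptable.
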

\begin{proof}[Proof of Claim \ref{cl9}]
\begin{equation*}
    \begin{split}
        \lim_{s_2 \to s_1\neq 0}\frac{\partial H_1^{I^*}(s_1,s_2)}{\partial s_1 }&=\lim_{\epsilon:=s_2-s_1 \to 0}\frac{\partial H_1^{I^*}(s_1,s_2)}{\partial s_1 }\\
        &=\lim_{\epsilon \to 0}\frac{(s_1+\epsilon)^2}{(s_1+s_1+\epsilon)^2}\frac{-s_1^2(4s_1+3\epsilon)\ln{\frac{s_1}{s_1+\epsilon}}+\epsilon^3\ln{s_1}-\epsilon s_1(\epsilon+4s_1)}{-\epsilon^3}\\
        &=\lim_{\epsilon \to 0}\frac{-s_1^2(3\ln{\frac{s_1}{s_1+\epsilon}-\frac{4s_1+3\epsilon}{s_1+\epsilon}})+3\epsilon^2\ln{s_1}-s_1(4s_1+2\epsilon)}{-12\epsilon^2}
        \\
        &=\lim_{\epsilon \to 0}\frac{-s_1^2(-\frac{3}{s_1+\epsilon}+\frac{s_1}{(s_1+\epsilon)^2})+6\epsilon\ln{s_1}-2s_1}{-24\epsilon}\\
        &=\lim_{\epsilon \to 0}\frac{-s_1^2(\frac{3}{(s_1+\epsilon)^2}-\frac{2s_1}{(s_1+\epsilon)^3})+6\ln{s_1}}{-24}\\
        &= \frac{1-6\ln{s_1}}{24}
    \end{split}
\end{equation*}
where the third equality, the fourth equality and the fifth equality hold by the L'H\^{o}pital's Rule. 
\end{proof}
\begin{claim}\label{cl3}
Fix any $s_1\in(0,1)$, $H_1^{I^*}(s_1,s_2)$ is strictly increasing w.r.t. $s_2$ for $s_2\in (0,1)$.
\end{claim}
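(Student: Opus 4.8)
The plan is to mirror the structure of the proof of Claim \ref{cl2}. First, by Claim \ref{cl1} the map $s_2\mapsto H_1^{I^*}(s_1,s_2)$ is continuous on $(0,1)$, so it suffices to establish that the partial derivative $\partial H_1^{I^*}(s_1,s_2)/\partial s_2$ is strictly positive for every $s_2\in(0,1)\setminus\{s_1\}$ together with the fact that it has a finite (indeed positive) limit as $s_2\to s_1$, the latter by an argument analogous to Claim \ref{cl9}; these two facts give strict monotonicity across the removable kink at $s_2=s_1$.

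Second, I would differentiate the analytic expression $H_1^I(s_1,s_2)=\frac{s_1s_2}{s_1+s_2}\bigl(\frac{s_1^2}{(s_1-s_2)^2}\ln\frac{s_1}{s_2}-\ln s_1+\frac{s_2}{s_2-s_1}\bigr)$ directly with respect to $s_2$. Treating it as a product, the derivative splits into $\frac{s_1^2}{(s_1+s_2)^2}\bigl(\frac{s_1^2}{(s_1-s_2)^2}\ln\frac{s_1}{s_2}-\ln s_1+\frac{s_2}{s_2-s_1}\bigr)$ plus $\frac{s_1s_2}{s_1+s_2}$ times the derivative of the bracket; after clearing the common denominator $(s_1-s_2)^2(s_1+s_2)^2$ I expect to reduce the sign question to a polynomial-plus-logarithm inequality in $s_1,s_2$, with a clearly useful positive contribution $-\frac{s_1^2}{(s_1+s_2)^2}\ln s_1>0$ coming from the $-\ln s_1$ piece, exactly as the term $-(z_1-1)^2\ln s_1>0$ was exploited in Claim \ref{cl2}.

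Third, following that same template I would homogenize by setting $z:=s_2/s_1\in(0,1/s_1)$, collapsing the inequality to a one-variable statement of the form $h(z)>0$ (possibly after discarding the nonnegative $-(\cdot)\ln s_1$ term). I would then verify $h(z)>0$ by computing $h'(z)$, showing it has a definite sign, and invoking the normalization $h(1)=0$ that encodes the coincidence $s_1=s_2$. If no single $h$ has a monotone derivative on all of $(0,1/s_1)$, I would treat the cases $s_2<s_1$ and $s_2>s_1$ separately, since $\ln(s_1/s_2)$ switches sign there.

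The main obstacle I anticipate is essentially computational: pushing the differentiation and the denominator-clearing through cleanly, and then locating the "right" auxiliary function $h$ whose derivative is manifestly signed — because $H_1^I$ is not symmetric in its two arguments, the bookkeeping here is heavier than in Claim \ref{cl2}, and the split at $s_2=s_1$ may be genuinely required rather than cosmetic. A secondary point to confirm is the finiteness and positivity of the derivative's limit as $s_2\to s_1$, and, should the claim later be invoked up to the endpoints, the boundary behavior as $s_2\to 0$ and $s_2\to 1$, both of which should follow from $\lim_{x\to 0}x\ln x=0$ as in Claim \ref{cl1}.
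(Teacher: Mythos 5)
Your plan is essentially the paper's proof. The paper differentiates $H_1^I$ w.r.t.\ $s_2$, clears denominators to isolate a polynomial-plus-logarithm bracket, drops the manifestly positive term $-(s_1-s_2)^2\ln s_1>0$, homogenizes via $z_2:=s_1/s_2$ (your $z=s_2/s_1$ is the reciprocal, a purely cosmetic difference), and reduces to the one-variable function $i(z_2)=\ln z_2-\frac{(z_2+3)(z_2-1)}{z_2^2+z_2+2}$ with $i(1)=0$ and $i'(z_2)>0$, handling $z_2<1$ and $z_2>1$ separately exactly as you anticipate; your added remark that the derivative has a finite positive limit across the kink at $s_2=s_1$ (the paper's Claim~\ref{cl10}) is a correct, slightly more explicit justification for gluing the monotonicity across that point than the paper spells out in this claim's proof.
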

\begin{proof}[Proof of Claim \ref{cl3}]
Taking first order derivative w.r.t. $s_2$ to $H_1^{I^*}$ when $s_1\neq s_2, s_1\neq 0, s_2\neq 0$, with some algebra, we obtain that 
\begin{equation*}
\frac{\partial H_1^{I^*}(s_1,s_2)}{\partial s_2 }=\frac{s_2^2}{(s_1-s_2)^2(s_1+s_2)^2}[(\frac{2s_1s_2(s_1+s_2)}{s_1-s_2}+s_1^2)\ln{\frac{s_1}{s_2}}-
(s_1-s_2)^2\ln{s_1}-s_1(s_1+3s_2)]
\end{equation*}
Then it suffices to show that for any $s_1\in(0,1)$,
\begin{equation}\label{eq115}
  -\frac{s_1^2(s_1+3s_2)}{s_1-s_2}\ln{\frac{s_1}{s_2}}-
(s_1-s_2)^2\ln{s_1}+s_1(3s_1+s_2)>0  
\end{equation}
Let $z_2:=\frac{s_1}{s_2}\in (0,1)\cup (1,\infty)$. Plugging $s_2=\frac{s_1}{z_2}$ into \eqref{eq115}, it suffices to show that for $z_2\in(0,1)\cup (1,\infty)$,
\begin{equation}\label{eq116}
\frac{z_2(z_2^2+z_2+2)}{z_2-1}\ln{z_1}-(z_1-1)^2\ln{s_1}-z_2(z_2+3)>0
\end{equation}
Note that $-(z_2-1)^2\ln{s_1}>0$, then it suffices to show that for  $z_2\in(0,1)\cup (1,\infty)$,
\begin{equation}\label{eq117}
    \frac{z_2(z_2^2+z_2+2)}{z_2-1}\ln{z_1}-z_2(z_2+3)>0
\end{equation}
Note that $z_2>0$, then it suffices to show that for $z_2\in(1,\infty)$,
\begin{equation}\label{eq118}
i(z_2):=\ln{z_2}-\frac{(z_2+3)(z_2-1)}{z_2^2+z_2+2}>0
\end{equation}
and that for $z_2\in(0,1)$,
\begin{equation}\label{eq119}
i(z_2)<0
\end{equation}
Now taking first order derivative to $i(z_2)$, with some algebra, we obtain that  for $z_2\in (0,\infty)$,
\begin{equation}\label{eq120}
i'(z_2)=\frac{(z_2+1)(z_2+4)(1-z_2)^2}{z_2(z_2^2+z_2+2)^2}>0
\end{equation}
Note that $i(1)=0$, then together with \eqref{eq120}, \eqref{eq118} and \eqref{eq119} hold.
\end{proof}
\begin{claim}\label{cl10}
$\frac{\partial H_1^{I^*}(s_1,s_2)}{\partial s_2 }\to \frac{5-6\ln{s_1}}{24}$ as $s_2\to s_1\neq 0$.
\end{claim}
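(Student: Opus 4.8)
The plan is to imitate the proof of Claim \ref{cl9} essentially verbatim, since Claim \ref{cl10} is its companion statement for the $s_2$-partial. I would start from the closed-form expression for $\frac{\partial H_1^{I^*}(s_1,s_2)}{\partial s_2}$ established in the proof of Claim \ref{cl3}, valid for $s_1\neq s_2$, $s_1\neq 0$, $s_2\neq 0$, and substitute $\epsilon := s_2 - s_1$. Pulling out the non-singular prefactor $\frac{s_2^2}{(s_1+s_2)^2}$, which tends to $\frac{1}{4}$ as $\epsilon\to 0$, and clearing the internal $\frac{1}{s_1-s_2}=-\frac{1}{\epsilon}$ against the $(s_1-s_2)^2=\epsilon^2$ in the denominator, one is left with a ratio whose denominator is $-\epsilon^3$ and whose numerator is an analytic function of $\epsilon$ built from $\ln\frac{s_1}{s_1+\epsilon}$ and polynomial terms in $\epsilon$; a short check shows the numerator together with its first two derivatives vanishes at $\epsilon=0$, so that L'H\^{o}pital's Rule applies three times.

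Then I would apply L'H\^{o}pital's Rule three times in $\epsilon$, exactly as in Claim \ref{cl9}. Each differentiation reduces the denominator ($-\epsilon^3\to -3\epsilon^2\to -6\epsilon\to -6$) and simplifies the logarithmic contributions via $\frac{d}{d\epsilon}\ln\frac{s_1}{s_1+\epsilon}=-\frac{1}{s_1+\epsilon}$; after the third round the logarithm survives as a multiple of $\ln s_1$ while the remaining rational expressions are evaluated at $\epsilon=0$. Multiplying the resulting constant by the prefactor limit $\frac{1}{4}$ yields $\frac{5-6\ln s_1}{24}$.

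The only obstacle is the algebraic bookkeeping: forming the single ratio with denominator $-\epsilon^3$ and carrying three rounds of differentiation without sign slips. This is mechanical, and two independent sanity checks are available. First, the answer should differ from the limit in Claim \ref{cl9} only by an additive constant (here $\frac{4}{24}=\frac{1}{6}$), since both limits carry the same $-6\ln s_1$ term. Second, expanding $\ln\frac{s_1}{s_1+\epsilon} = -\frac{\epsilon}{s_1}+\frac{\epsilon^2}{2s_1^2}-\frac{\epsilon^3}{3s_1^3}+O(\epsilon^4)$ shows the numerator equals $\big(\ln s_1 - \frac{5}{6}\big)\epsilon^3 + O(\epsilon^4)$, which after dividing by $-\epsilon^3$ and multiplying by $\frac{1}{4}$ gives the claimed value directly.
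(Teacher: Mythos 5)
Your proposal is correct and matches the paper's own proof: the paper likewise sets $\epsilon:=s_2-s_1$, combines the prefactor $\frac{s_2^2}{(s_1+s_2)^2}\to\frac14$ with the singular part to obtain a single ratio with denominator $-\epsilon^3$, and applies L'H\^{o}pital's Rule three times. Your Taylor-expansion sanity check (the numerator is $(\ln s_1-\frac56)\epsilon^3+O(\epsilon^4)$) is also correct and confirms the value $\frac{5-6\ln s_1}{24}$.
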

\begin{proof}[Proof of Claim \ref{cl10}]
\begin{equation*}
\begin{split}
\lim_{s_2 \to s_1\neq 0}\frac{\partial H_1^{I^*}(s_1,s_2)}{\partial s_2 }&=\lim_{\epsilon:=s_2-s_1 \to 0}\frac{\partial H_1^{I^*}(s_1,s_2)}{\partial s_2 }\\
&=\lim_{\epsilon \to 0}\frac{(s_1+\epsilon)^2}{(s_1+s_1+\epsilon)^2}\frac{(2s_1(s_1+\epsilon)(2s_1+\epsilon)-\epsilon s_1^2)\ln{\frac{s_1}{s_1+\epsilon}}+\epsilon^3\ln{s_1}+\epsilon s_1(4s_1+3\epsilon)}{-\epsilon^3}\\
&=\lim_{\epsilon \to 0}\frac{(2s_1^2(2\epsilon+3s_1)-s_1^2)\ln{\frac{s_1}{s_1+\epsilon}}-(2s_1(2s_1+\epsilon)-\frac{\epsilon s_1^2}{s_1+\epsilon})+3\epsilon^2\ln{s_1}+s_1(4s_1+6\epsilon)}{-12\epsilon^2}\\
&=\lim_{\epsilon \to 0}\frac{4s_1\ln{\frac{s_1}{s_1+\epsilon}}-\frac{s_1(4s_1+5\epsilon)}{s_1+\epsilon}-(2s_1-\frac{s_1^3}{(s_1+\epsilon)^2})+6\epsilon \ln{s_1}+6s_1}{-24\epsilon}\\
&=\lim_{\epsilon \to 0}\frac{-\frac{4s_1}{s_1+\epsilon}+\frac{s_1^2}{(s_1+\epsilon)^2}-\frac{2s_1^3}{(s_1+\epsilon)^3}+6\ln{s_1}}{-24} \\
&= \frac{5-6\ln{s_1}}{24}
\end{split}
\end{equation*}
where the third equality, the fourth equality and the fifth equality hold by the L'H\^{o}pital's Rule. 
\end{proof}
Now we are ready to prove Lemma \ref{l2}. Fix any $m_1\in (0,\frac{3}{4})$. Let $s_2^*(m_1)\in (0,1)$ denote the solution to $\frac{-2s_2\ln{s_2}+3s_2}{4}=m_1$. Let $s_1^*(m_1)\in (0,1)$ denote the solution to $\frac{s_1}{1+s_1}(\frac{2s_1-1}{(1-s_1)^2}\ln{s_1}+\frac{1}{1-s_1})=m_1$. Then by Claim \ref{cl1}, Claim \ref{cl2}, Claim \ref{cl3}, when $s_1\in [s_1^*(m_1), s_2^*(m_1)]$,  there exists a  strictly decreasing function $F^I$ such that $s_1=F^I(s_2)\le s_2$ is the unique solution to $H_1^{I^*}(s_1,s_2)=m_1$ for any $s_2\in[s_2^*(m_1),1]$. By Claim \ref{cl2}, \ref{cl9}, Claim \ref{cl3} and Claim \ref{cl10}, the continuous functions\footnote{When $s_1=s_2$, let $\frac{\partial H_1^{I^*}(s_1,s_2)}{\partial s_1 }=\frac{1-6\ln{s_1}}{24}$ and $\frac{\partial H_1^{I^*}(s_1,s_2)}{\partial s_2 }=\frac{5-6\ln{s_1}}{24}$, then by Claim \ref{cl9} and Claim \ref{cl10}, $\frac{\partial H_1^{I^*}(s_1,s_2)}{\partial s_1 }$ and $\frac{\partial H_1^{I^*}(s_1,s_2)}{\partial s_2}$ are continuous. } $\frac{\partial H_1^{I^*}(s_1,s_2)}{\partial s_1 }$ and $\frac{\partial H_1^{I^*}(s_1,s_2)}{\partial s_2 }$ are strictly positive and bounded on the compact set $[s_1^*(m_1), s_2^*(m_1)]\times [s_2^*(m_1),1]$. Then by the (Global) Implicit Function Theorem, $F^I(s_2)$ is continuous on $[s_2^*(m_1),1]$.  Plugging $s_1=F^I(s_2)$ to $H_2^{I^*}(s_1,s_2)$, we obtain $G^I(s_2):=H_2^{I^*}(F^I(s_2),s_2)$. Given the continuity of $H_2^{I^*}$ and $F^I$, we see that $G^I$ is also continuous at any $s_2\in[s_2^*(m_1),1]$. Note that when $s_2=s_2^*(m_1)$, $F^I(s_2)=s_2$ and therefore $G^I(s_2)=m_1$; when $s_2=1$, $G^I(s_2)=B_I(m_1)$. Then by the Intermediate Value Theorem, there exists $s_2\in[s_2^*(m_1),1]$ such that $G^I(s_2)=m_2$ for any $m_2\in(B_I(m_1), m_1)$.

\subsection{Proof of Theorem \ref{t3}}
(i): \textbf{Asymmetric Maxmin Public Good Mechanism (I)} is a best response to \textbf{Asymmetric Worst-Case Joint Distribution (I)}. The proof is similar to the proof of (i) of Theorem \ref{t1}.\\
(ii): \textbf{Asymmetric Worst-Case Joint Distribution (I)} is a best response to \textbf{Asymmetric Maxmin Public Good Mechanism (I)}: we use the duality theory to show (ii). First note that by construction, all the  constraints in (P) holds. By the weighted virtual value representation, the value of (P) given Asymmetric Worst-Case Joint Distribution (I) and Asymmetric Maxmin Public Good Mechanism (I) is simply $Pr(1,1)\times (1+1)=\frac{s_1s_2}{s_1+s_2}$. Second, the constraints in (D) hold for all value profiles. To see this, note for any value profile $v=(v_1,v_2)$ outside the support of Asymmetric Worst-Case Joint Distribution (I), since $\lambda_1>0$ and $\lambda_2=\frac{s_1}{s_2}\lambda_1$>0, we have $$\lambda_1 v_1+\lambda_2 v_2 + \mu < \lambda_1 s_1+\lambda_2 \cdot 0 + \mu=0$$ For any value profile $v=(v_1,v_2)$ inside the support of Asymmetric Worst-Case Joint Distribution (I), the constraints (the complementary slackness)  hold given \eqref{eq67},\eqref{eq79} and \eqref{eq92}.  In addition, the value of (D) given the constructed $\lambda_1,\lambda_2, \mu$ is $\lambda_1 m_1+\lambda_2 m_2 +\mu$, which, by some algebra, is equal to $\frac{c(1-\frac{s_2}{s_1})}{\ln{\frac{s_1}{s_2}}}(m_1+\frac{s_1}{s_2}m_2-s_1)=\frac{s_1s_2}{s_1+s_2}$. Finally, by Lemma \ref{l2}, the solution to \eqref{eq2} and \eqref{eq3} exists. By the linear programming duality theory, (ii) holds and the revenue guarantee is $\frac{s_1s_2}{s_1+s_2}$.
\subsection{Proof of Lemma \ref{l3}}
To facilitate the exposition, we  define new functions $H^{II^*}_1(t_1, t_2)$ and $H^{II^*}_2(t_1, t_2)$
as follows. 
$$H_1^{II^*}(t_1,t_2):=  \left\{
\begin{array}{lll}
H_1^{II}(t_1,t_2)     &      & { t_1\neq t_2}\\
\frac{-t_1^2+2t_1+3}{4}    &      & {t_1 =t_2}
\end{array} \right. $$
$$H_2^{II^*}(t_1,t_2):=  \left\{
\begin{array}{lll}
H_2^{II}(t_1,t_2)     &      & { t_1\neq t_2}\\
\frac{-t_1^2+2t_1+3}{4}    &      & {t_1 =t_2}
\end{array} \right. $$
We start from establishing the following  claims regarding some properties of the functions $H_1^{I^*}(s_1,s_2)$ and $H_2^{I^*}(s_1,s_2)$, which will play a crucial role in establishing Lemma \ref{l3}.
\begin{claim}\label{cl4}
$H_1^{II^*}(t_1,t_2)$ and $H_2^{II^*}(t_1,t_2)$ are both continuous for $t_1\in[0,1]$ and $t_2\in[0,1]$.
\end{claim}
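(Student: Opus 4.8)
The plan is to reduce continuity to the diagonal $\{t_1=t_2\}$ and there resolve the apparent singularity by an explicit change of variable. Away from the diagonal, $H_1^{II}$ (defined by \eqref{eq4}) and $H_2^{II}$ (defined by \eqref{eq5}) are finite compositions of analytic functions whose only denominators are powers of $t_1-t_2$, $1+t_1$ and $1+t_2$; on $[0,1]^2$ with $t_1\neq t_2$ all of these are nonzero, so $H_1^{II^*}=H_1^{II}$ and $H_2^{II^*}=H_2^{II}$ are continuous off the diagonal. Hence it suffices to prove joint continuity at each point $(a,a)$, i.e.\ that $H_1^{II}(t_1,t_2)\to\frac{-a^2+2a+3}{4}$ as $(t_1,t_2)\to(a,a)$ with $t_1\neq t_2$ (and symmetrically for $H_2^{II}$).

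First I would introduce $u:=\frac{t_2-t_1}{1+t_1}$, so that $1+u=\frac{1+t_2}{1+t_1}>0$, $t_1-t_2=-u(1+t_1)$, and $u$ ranges over $[-\tfrac12,1]\subset(-1,\infty)$ as $(t_1,t_2)$ ranges over $[0,1]^2$. Substituting into \eqref{eq4} rewrites
\[
H_1^{II}(t_1,t_2)=A(t_1,t_2)\,\frac{\ln(1+u)}{u^2}+\frac{B(t_1,t_2)}{u}+\frac{1+t_1}{2},
\]
with $A(t_1,t_2)=\frac{(1+t_2)(1-t_1)^2}{2(1+t_1)}$ and $B(t_1,t_2)=-\frac{(1-t_1t_2)(1-t_1)}{2(1+t_1)}$. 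The first two summands each diverge as $u\to0$, but the algebraic identity $(1+t_2)(1-t_1)-(1-t_1t_2)=t_2-t_1$ yields $A+B=\frac{(1-t_1)(t_2-t_1)}{2(1+t_1)}=\frac{(1-t_1)u}{2}$. Writing $\frac{\ln(1+u)}{u^2}=\frac1u+\rho(u)$ with $\rho(u):=\frac{\ln(1+u)-u}{u^2}=-\frac12+\frac u3-\frac{u^2}{4}+\cdots$, a real-analytic (hence continuous) function on $(-1,\infty)$ with $\rho(0)=-\tfrac12$, the singular $\tfrac1u$-terms combine into the regular quantity $\frac{A+B}{u}=\frac{1-t_1}{2}$.

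Therefore, off the diagonal, $H_1^{II}(t_1,t_2)=\frac{1-t_1}{2}+A(t_1,t_2)\,\rho\!\big(\tfrac{t_2-t_1}{1+t_1}\big)+\frac{1+t_1}{2}$, and the right-hand side is a composition of continuous maps on all of $[0,1]^2$ (the argument of $\rho$ stays in $[-\tfrac12,1]$, where $\rho$ is continuous, and $1+t_1\geq1$). Its value on the diagonal is $\frac{1-t_1}{2}+\frac{(1-t_1)^2}{2}\cdot(-\tfrac12)+\frac{1+t_1}{2}=1-\frac{(1-t_1)^2}{4}=\frac{-t_1^2+2t_1+3}{4}$, which is precisely how $H_1^{II^*}$ is defined there; so $H_1^{II^*}$ is continuous on $[0,1]^2$. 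Finally, comparing \eqref{eq4} and \eqref{eq5} gives $H_2^{II}(t_1,t_2)=H_1^{II}(t_2,t_1)$, hence $H_2^{II^*}(t_1,t_2)=H_1^{II^*}(t_2,t_1)$, and continuity of $H_2^{II^*}$ follows by composing with the coordinate swap.

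The main obstacle is conceptual rather than computational: the quick route used for the analogous Claim \ref{cl1} — computing $\lim_{t_2\to t_1}H_1^{II}(t_1,t_2)$ by L'H\^{o}pital's Rule with $t_1$ held fixed — delivers only a directional limit, whereas joint continuity at the diagonal is what is actually required. Isolating the singularity through the analytic remainder $\rho(u)$ handles this cleanly; the cancellation identity $(1+t_2)(1-t_1)-(1-t_1t_2)=t_2-t_1$ is the crux, and everything else is routine algebra mirroring the proof of Claim \ref{cl1}.
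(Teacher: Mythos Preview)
Your argument is correct and in fact cleaner than the paper's own proof. The paper fixes $t_1$, sets $\epsilon=t_2-t_1$, and applies L'H\^{o}pital's Rule twice to compute $\lim_{\epsilon\to 0}H_1^{II}(t_1,t_1+\epsilon)=\frac{-t_1^2+2t_1+3}{4}$; it then asserts that this suffices for continuity. As you correctly point out, this is a priori only a limit along horizontal lines, and the paper does not address uniformity in $t_1$, so strictly speaking a small gap remains (one would still need to argue, say, that the L'H\^{o}pital estimates are locally uniform in $t_1$). Your approach sidesteps this entirely: by extracting the analytic remainder $\rho(u)=\frac{\ln(1+u)-u}{u^2}$ and using the cancellation identity $(1+t_2)(1-t_1)-(1-t_1t_2)=t_2-t_1$, you rewrite $H_1^{II}$ off the diagonal as a single expression that is manifestly continuous on all of $[0,1]^2$ and agrees with the diagonal definition. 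The trade-off is that the paper's L'H\^{o}pital computation is mechanical and requires no inspired algebraic identity, whereas your route demands spotting the cancellation $A+B=\frac{(1-t_1)u}{2}$ but in return delivers joint continuity without any further argument. The symmetry $H_2^{II}(t_1,t_2)=H_1^{II}(t_2,t_1)$ that you use for the second function is also how the paper handles it.
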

\begin{proof}[Proof of Claim \ref{cl4}]
We will first establish the continuity of $H_1^{II^*}(t_1,t_2)$. Note when $t_1\neq t_2$, the continuity holds as $H_1^{II}(t_1,t_2)$ is some analytic function. Therefore it suffices to show that $\lim_{t_2\to t_1} H_1^{II}(t_1,t_2)=\frac{-t_1^2+2t_1+3}{4}$. To see this, note we have
\begin{equation*}
    \begin{split}
 \lim_{t_2\to t_1} H_1^{II}(t_1,t_2)&=
 \lim_{t_2- t_1:=\epsilon \to 0} H_1^{II}(t_1,t_2)\\ &=\lim_{\epsilon \to 0} \frac{(1+t_1)(1-t_1)^2(1+t_1+\epsilon)\ln{\frac{1+t_1+\epsilon}{1+t_1}}-\epsilon(1-t_1)(1-t_1(t_1+\epsilon))}{2\epsilon^2}+\frac{1+t_1}{2}\\
 &= \lim_{\epsilon \to 0}\frac{(1+t_1)(1-t_1)^2(1+\ln{\frac{1+t_1+\epsilon}{1+t_1}})-(1-t_1)(1-t_1^2-2t_1\epsilon)}{4\epsilon}+\frac{1+t_1}{2}\\
 &= \lim_{\epsilon \to 0}\frac{(1+t_1)(1-t_1)^2\frac{1}{1+t_1+\epsilon}+2t_1(1-t_1)}{4}+\frac{1+t_1}{2}\\
 &=\frac{-t_1^2+2t_1+3}{4}
    \end{split}
\end{equation*}
where the third equality and the fourth equality hold by the L'H\^{o}pital's Rule.
 By symmetry, the continuity of $H_2^{II^*}(t_1,t_2)$ can be similarly established.
\end{proof}
\begin{claim}\label{cl5}
Fix any $t_2\in[0,1]$, $H_1^{II^*}(t_1,t_2)$ is strictly increasing w.r.t. $t_1$ for $t_1\in (0,1)$.
\end{claim}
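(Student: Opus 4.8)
The plan is to follow the template of the proofs of Claim \ref{cl2} and Claim \ref{cl3}. First I would work on the generic region $t_1 \neq t_2$, where $H_1^{II^*} = H_1^{II}$ is analytic, and compute $\partial H_1^{II^*}(t_1,t_2)/\partial t_1$ in closed form. After factoring out the obviously positive prefactor — a power of $(t_1 - t_2)$ in the denominator together with $(1+t_1)$ — the sign of the derivative reduces to the sign of a bracketed expression that is a combination of $\ln\frac{1+t_2}{1+t_1}$ and rational terms in $t_1$ and $t_2$, just as in the proofs of Claim \ref{cl2} and Claim \ref{cl3}.

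Second, to collapse this two-variable inequality to one variable I would introduce the ratio $z := \frac{1+t_2}{1+t_1}$ (so $z \in (0,1)$ when $t_2 < t_1$ and $z > 1$ when $t_2 > t_1$), substitute $1 + t_2 = z(1+t_1)$, and cancel the remaining powers of $(1+t_1)$; if a leftover term of the form (positive constant)$\times (z-1)^2 \ln(1+t_1)$ appears, I would discard it after noting $\ln(1+t_1) > 0$ for $t_1 \in (0,1)$ — this is the step that genuinely uses $t_1 < 1$. This should leave an inequality $g(z) > 0$ on $z > 1$ and $g(z) < 0$ on $z \in (0,1)$, where $g$ is $\ln z$ plus a rational function, playing the role of $h$ in Claim \ref{cl2} and $i$ in Claim \ref{cl3}. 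I would then differentiate, expecting $g'(z)$ to factor as a positive rational function times $(1-z)^2$ (as happened for $h'$ and $i'$), so that $g$ is monotone, and combine this with $g(1) = 0$ to sign $g$ and hence the derivative of $H_1^{II^*}$.

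Third, I would handle the diagonal $t_1 = t_2$ separately: there $H_1^{II^*}(t_1,t_1) = \frac{-t_1^2 + 2t_1 + 3}{4}$, whose derivative in $t_1$ is $\frac{1 - t_1}{2} > 0$ on $(0,1)$. A companion limit computation in the spirit of Claim \ref{cl9} — giving the limit of $\partial H_1^{II^*}/\partial t_1$ as $t_2 \to t_1$ in explicit form — shows the partial derivative is continuous across the diagonal, which patches strict positivity of the derivative into a neighborhood of $t_1 = t_2$ and hence onto all of $(0,1)$ for the fixed $t_2$. Strict positivity of the derivative off $t_2$ together with continuity at $t_2$ then yields strict monotonicity.

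The step I expect to be the main obstacle is the second one: forcing the derivative into a form clean enough that the substitution $z = \frac{1+t_2}{1+t_1}$ actually produces a one-variable $g$ whose own derivative visibly factors with a squared term. The earlier claims rely on the factorizations $h'(z_1) = \frac{(z_1-3)(1-z_1)^2}{z_1^2(3+z_1)^2}$ and $i'(z_2) = \frac{(z_2+1)(z_2+4)(1-z_2)^2}{z_2(z_2^2+z_2+2)^2}$, and one has to trust that an analogous cancellation occurs here; the cross term $t_1 t_2$ in $H_1^{II}$ (in place of the cleaner $s_1/s_2$ structure of $H_1^{I}$) makes the bookkeeping heavier and is the likely source of friction. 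A minor secondary obstacle is tracking the sign of whatever term gets discarded in the reduction, which is exactly what ties the argument to the restriction $t_1 \in (0,1)$.
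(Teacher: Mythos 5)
Your first and third steps track the paper's proof (compute the derivative, factor out a positive prefactor including $(1-t_1)$, and handle $t_1=t_2$ via the companion limit in Claim \ref{cl11}), but the crucial middle step is misidentified. After substituting $z_3=\frac{1+t_2}{1+t_1}$ (equivalently $t_2=z_3(1+t_1)-1$) into the bracketed factor, the residual $t_1$-dependence is \emph{not} a discardable term proportional to $(z-1)^2\ln(1+t_1)$ — no logarithm of $1+t_1$ survives, because $\ln\frac{1+t_2}{1+t_1}$ collapses cleanly to $\ln z_3$ and the remaining rational pieces have the factor $(1+t_1)$ fully cancel. What actually remains is an expression exactly \emph{linear} in $t_1$, of the form $t_1\,A(z_3)+B(z_3)$ as in \eqref{eq122}, and there is no way to reduce it to a single one-variable function $g(z)$. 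The paper handles this by proving $A(z_3)>0$ and $B(z_3)>0$ separately on $z_3\in[\frac12,1)\cup(1,2]$, which requires \emph{two} auxiliary functions ($j$ in \eqref{eq125} and $k$ in \eqref{eq126}) rather than one.

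This matters for two further reasons you cannot absorb in the discard step you describe. First, $k'(z_3)=\frac{(9-2z_3)(1-z_3)^2}{z_3(z_3-3)^2}$ changes sign at $z_3=\frac92$, so the bound $z_3\le 2$ (which comes from $t_1\ge 0$, $t_2\le 1$) is essential and must be invoked explicitly; your sketch has no slot for this constraint. Second, the role of $t_1<1$ is different from what you expected: it enters through the strictly positive prefactor $\frac{(1-t_1)(1+t_2)}{2(t_1-t_2)^2}$, not through $\ln(1+t_1)>0$ (which only needs $t_1>0$ anyway). The role of $t_1>0$ is to let $B(z_3)>0$ give strict positivity of $t_1 A(z_3)+B(z_3)$ once $A(z_3)\ge 0$ is also known. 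So the broad skeleton is right, but as written the proposal's central reduction would not execute; you would need to recognize the $t_1$-linear decomposition, carry the explicit range $z_3\in[\frac12,2]$, and run two separate monotonicity arguments around $z_3=1$.
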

\begin{proof}[Proof of Claim \ref{cl5}]
Taking first order derivative w.r.t. $t_1$ to $H_1^{II^*}$ when $s_1\neq s_2$, with some algebra, we obtain that 
\begin{equation*}
\frac{\partial H_1^{II^*}(t_1,t_2)}{\partial t_1 }=\frac{(1-t_1)(1+t_2)}{2(t_1-t_2)^2}[(-1-3t_1-\frac{2(1+t_1)(1-t_1)}{t_1-t_2})\ln{\frac{1+t_2}{1+t_1}}+2(t_2-1)]
\end{equation*}
Then it suffices to show that for any $t_1\in(0,1)$,
\begin{equation}\label{eq121}
  (-1-3t_1-\frac{2(1+t_1)(1-t_1)}{t_1-t_2})\ln{\frac{1+t_2}{1+t_1}}+2(t_2-1)>0  
\end{equation}
Let $z_3:=\frac{1+t_2}{1+t_1}\in [\frac{1}{2},1)\cup(1,2]$. Plugging $t_2=z_3(1+t_1)-1$ into \eqref{eq121}, it suffices to show that for $z_3\in[\frac{1}{2},1)\cup(1,2]$,
\begin{equation}\label{eq122}
t_1(2z_3-(3-\frac{2}{1-z_3})\ln{z_3})+(-1-\frac{2}{1-z_3})\ln{z_3}+2z_3-4>0
\end{equation}
Then it suffices to show that for $z_3\in[\frac{1}{2},1)\cup(1,2]$,
\begin{equation}\label{eq123}
2z_3-(3-\frac{2}{1-z_3})\ln{z_3}>0
\end{equation}
\begin{equation}\label{eq124}
(-1-\frac{2}{1-z_3})\ln{z_3}+2z_3-4>0
\end{equation}
To show \eqref{eq123}, let $j(z_3):=-\ln{z_3}+\frac{2z_3(1-z_3)}{1-3z_3}$. Taking first order derivative, with some algebra, we obtain that
\begin{equation}\label{eq125}
    j'(z_3)=\frac{(6z_3-1)(1-z_3)^2}{z_3(1-3z_3)^2}
\end{equation}
Observe that \label{eq125} implies that $j(z_3)$ is increasing for $z_3\ge \frac{1}{2}$. Also note $j(1)=0$. Then $j(z_3)>0$ for $z_3> 1$ and $j(z_3)<0$ for $z_3\in [\frac{1}{2},1)$. Therefore \eqref{eq123} holds. To show \eqref{eq124}, let $k(z_3):=\ln{z_3}+\frac{(1-z_3)(2z_3-4)}{z_3-3}$.
Taking first order derivative to $k(z_3)$, with some algebra, we obtain that
\begin{equation}\label{eq126}
k'(z_3)=\frac{(9-2z_3)(1-z_3)^2}{z_3(z_3-3)^2}
\end{equation}
Observe that \label{eq126} implies that $k(z_3)$ is increasing for $z_3\le 2$. Also note $k(1)=0$. Then $k(z_3)>0$ for $2\ge z_3> 1$ and $j(z_3)<0$ for $z_3\in [\frac{1}{2},1)$. Therefore \eqref{eq124} holds.

\end{proof}
\begin{claim}\label{cl11}
$\frac{\partial H_1^{II^*}(t_1,t_2)}{\partial t_1 }\to \frac{(1-t_1)(5t_1+7)}{12(1+t_1)}$ as $t_2\to t_1$.
\end{claim}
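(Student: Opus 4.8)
The plan is to imitate the L'H\^{o}pital/Taylor-expansion technique already used for Claims \ref{cl9} and \ref{cl10}. Starting from the closed form for $\frac{\partial H_1^{II^*}(t_1,t_2)}{\partial t_1}$ obtained in the proof of Claim \ref{cl5}, I would set $\epsilon:=t_2-t_1$, so that $1+t_2=1+t_1+\epsilon$, $(t_1-t_2)^2=\epsilon^2$, $\ln\frac{1+t_2}{1+t_1}=\ln\!\bigl(1+\tfrac{\epsilon}{1+t_1}\bigr)$, and $2(t_2-1)=-2(1-t_1)+2\epsilon$. This rewrites the derivative as
$$
\frac{\partial H_1^{II^*}(t_1,t_2)}{\partial t_1}
=\frac{(1-t_1)(1+t_1+\epsilon)}{2\epsilon^{2}}\left[\left(-1-3t_1+\frac{2(1+t_1)(1-t_1)}{\epsilon}\right)\ln\!\Bigl(1+\frac{\epsilon}{1+t_1}\Bigr)-2(1-t_1)+2\epsilon\right],
$$
so everything reduces to extracting the $\epsilon^{2}$-coefficient of the bracketed quantity.

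Next I would expand $\ln\!\bigl(1+\tfrac{\epsilon}{1+t_1}\bigr)=\tfrac{\epsilon}{1+t_1}-\tfrac{\epsilon^{2}}{2(1+t_1)^{2}}+\tfrac{\epsilon^{3}}{3(1+t_1)^{3}}+O(\epsilon^{4})$; because of the $\tfrac{1}{\epsilon}$ factor multiplying the logarithm, the expansion has to be carried to order $\epsilon^{3}$. Collecting powers of $\epsilon$ in the bracket, the constant term is $2(1-t_1)-2(1-t_1)=0$, the $\epsilon^{1}$ term is $\Bigl(\tfrac{-(1+3t_1)-(1-t_1)}{1+t_1}+2\Bigr)\epsilon=0$, and the $\epsilon^{2}$ term is $\Bigl(\tfrac{1+3t_1}{2(1+t_1)^{2}}+\tfrac{2(1-t_1)}{3(1+t_1)^{2}}\Bigr)\epsilon^{2}=\tfrac{7+5t_1}{6(1+t_1)^{2}}\epsilon^{2}$. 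Hence the bracket equals $\tfrac{7+5t_1}{6(1+t_1)^{2}}\epsilon^{2}+O(\epsilon^{3})$, and
$$
\frac{\partial H_1^{II^*}(t_1,t_2)}{\partial t_1}
=\frac{(1-t_1)(1+t_1+\epsilon)}{2}\left(\frac{7+5t_1}{6(1+t_1)^{2}}+O(\epsilon)\right)\longrightarrow\frac{(1-t_1)(1+t_1)}{2}\cdot\frac{7+5t_1}{6(1+t_1)^{2}}=\frac{(1-t_1)(5t_1+7)}{12(1+t_1)}
$$
as $\epsilon\to 0$, which is the claimed limit. Equivalently, one could write the derivative as a single $\tfrac{0}{0}$ ratio in $\epsilon$ and apply L'H\^{o}pital's Rule three times, exactly as in Claims \ref{cl9}--\ref{cl10}, but the Taylor route displays the cancellations more transparently.

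The only genuine obstacle is bookkeeping: the vanishing of the $\epsilon^{0}$ and $\epsilon^{1}$ coefficients of the bracket is indispensable, since otherwise the $\tfrac{1}{\epsilon^{2}}$ prefactor would force a blow-up rather than a finite limit; so I would double-check those two cancellations, and make sure the logarithm is expanded far enough (through the $\epsilon^{3}$ term) before truncating. Everything else is elementary algebra, and by the symmetry between the roles of agents $1$ and $2$ the analogous limit for $\frac{\partial H_2^{II^*}}{\partial t_2}$ follows without extra work.
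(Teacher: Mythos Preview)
Your proposal is correct and follows essentially the same route as the paper: both set $\epsilon=t_2-t_1$ and resolve the resulting $0/0$ indeterminate form, with the paper applying L'H\^{o}pital's Rule three times while you carry out the equivalent Taylor expansion of $\ln\bigl(1+\tfrac{\epsilon}{1+t_1}\bigr)$ to third order. The cancellations you flag at orders $\epsilon^{0}$ and $\epsilon^{1}$ and the resulting $\epsilon^{2}$-coefficient $\tfrac{7+5t_1}{6(1+t_1)^{2}}$ are exactly what the paper's three differentiations produce, so there is no substantive difference between the two arguments.
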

\begin{proof}[Proof of Claim \ref{cl11}]
\begin{equation*}
    \begin{split}
        \lim_{t_2\to t_1}\frac{\partial H_1^{II^*}(t_1,t_2)}{\partial t_1 }&=
         \lim_{\epsilon:=t_2- t_1\to 0}\frac{\partial H_1^{II^*}(t_1,t_2)}{\partial t_1 }\\
         &= \lim_{\epsilon:=t_2- t_1\to 0}\frac{(1-t_1)(1+t_1+\epsilon)}{2}\frac{((1+3t_1)\epsilon-2(1-t_1^2))\ln{\frac{1+t_1+\epsilon}{1+t_1}}-2\epsilon(t_1-1+\epsilon)}{-\epsilon^3}\\
         &=\lim_{\epsilon\to 0}\frac{(1-t_1)(1+t_1)((1+3t_1)\ln{\frac{1+t_1+\epsilon}{1+t_1}}+\frac{(1+3t_1)\epsilon-2(1-t_1^2)}{1+t_1+\epsilon}-2(t_1-1)-4\epsilon)}{-6\epsilon^2}\\&=\lim_{\epsilon\to 0}\frac{(1-t_1)(1+t_1)(\frac{1+3t_1}{1+t_1+\epsilon}+\frac{(1+t_1)(3+t_1)}{(1+t_1+\epsilon)^2}-4)}{-12\epsilon}\\&= \lim_{\epsilon\to 0}\frac{(1-t_1)(1+t_1)(-\frac{1+3t_1}{(1+t_1+\epsilon)^2}-\frac{2(1+t_1)(3+t_1)}{(1+t_1+\epsilon)^3})}{-12}\\&=
          \frac{(1-t_1)(5t_1+7)}{12(1+t_1)}
    \end{split}
\end{equation*}
where the third equality, the fourth equality and the fifth equality hold by the L'H\^{o}pital's Rule.
\end{proof}
\begin{claim}\label{cl6}
Fix any $t_1\in[0,1]$, $H_1^{II^*}(t_1,t_2)$ is strictly decreasing w.r.t. $t_2$ for $t_2\in (0,1)$.
\end{claim}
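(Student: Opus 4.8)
The plan is to follow the template already used for Claim~\ref{cl5}. First I would differentiate $H_1^{II^*}(t_1,t_2)$ with respect to $t_2$ on the open region $t_1\neq t_2$, where $H_1^{II}$ is analytic. After collecting terms I expect an expression of the shape
\[
\frac{\partial H_1^{II^*}(t_1,t_2)}{\partial t_2}=\frac{(1-t_1)(1+t_1)}{2(t_1-t_2)^{k}}\Bigl[\,P(t_1,t_2)\ln\tfrac{1+t_2}{1+t_1}+Q(t_1,t_2)\,\Bigr]
\]
for suitable polynomials $P,Q$ and some integer power $k$, with a rational prefactor of definite sign on each of the two regions $t_2<t_1$ and $t_2>t_1$; the sign of the derivative is then governed by the bracketed quantity.

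Next, exactly as in Claim~\ref{cl5}, I would introduce $z:=\frac{1+t_2}{1+t_1}$, which runs over $[\tfrac12,1)\cup(1,2]$ for $t_1,t_2\in(0,1)$, and substitute $t_2=z(1+t_1)-1$. The goal is that this eliminates the awkward two‑variable dependence, so that (possibly after discarding a manifestly signed term, just as the $-(z-1)^2\ln$-coefficient term was discarded in the analogue inside Claim~\ref{cl5}) the sign question becomes: a one-variable function $g(z)$ is negative on $[\tfrac12,1)$ and positive on $(1,2]$. I would then prove this by computing $g'(z)$, which I expect to collapse to something of the form $\dfrac{(\text{linear in }z)\,(1-z)^2}{z\,(\cdots)^2}$ of constant sign on $[\tfrac12,2]$; combined with $g(1)=0$ this pins down the sign of $g$ on each side of $1$, and strict positivity of all the factors then yields strict monotonicity. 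Translating back through $z$, $H_1^{II^*}$ is strictly decreasing in $t_2$ on $(0,1)$ for $t_1\in(0,1)$.

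Finally I would dispose of the diagonal $t_2=t_1$: either by invoking the continuity of $H_1^{II^*}$ (Claim~\ref{cl4}) together with strict monotonicity on $\{t_2<t_1\}$ and $\{t_2>t_1\}$, or---to parallel Claim~\ref{cl11}---by evaluating $\lim_{t_2\to t_1}\frac{\partial H_1^{II^*}}{\partial t_2}$ with L'H\^{o}pital's rule and checking it is strictly negative. The only real obstacle is computational: carrying out the differentiation and the $z$-substitution accurately enough that the bracketed expression reduces to a transparently signed $g(z)$ with root at $z=1$, and controlling the signs of the prefactor across $t_2\lessgtr t_1$. There is no conceptual difficulty, so I would guard against algebra slips by spot-checking the reduction numerically at, say, $t_1=t_2=\tfrac12$.
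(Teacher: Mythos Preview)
Your proposal is correct and follows essentially the same route as the paper: differentiate in $t_2$, factor out the positive prefactor $\frac{(1+t_1)(1-t_1)^2}{2(t_1-t_2)^2}$, substitute $z=\frac{1+t_2}{1+t_1}$, and reduce to a one-variable inequality whose associated function $l(z)=\ln z+\frac{2(1-z)}{1+z}$ has $l'(z)=\frac{(1-z)^2}{z(1+z)^2}>0$ and $l(1)=0$. The only minor difference is that here the prefactor is already a perfect square in $(t_1-t_2)$, so no sign-tracking across the two regions is needed, and no term has to be discarded before the $z$-substitution.
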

\begin{proof}[Proof of Claim \ref{cl6}]
Taking first order derivative w.r.t. $t_2$ to $H_1^{II^*}$ when $t_1\neq t_2$, with some algebra, we obtain that 
\begin{equation*}
\frac{\partial H_1^{II^*}(t_1,t_2)}{\partial t_2 }=\frac{(1+t_1)(1-t_1)^2}{2(t_1-t_2)^2}[(1+\frac{2(1+t_2)}{t_1-t_2})\ln{\frac{1+t_2}{1+t_1}}+2]
\end{equation*}
Then it suffices to show that for any $t_1\in[0,1]$ and $t_2\neq t_1$,
\begin{equation}\label{eq127}
  (1+\frac{2(1+t_2)}{t_1-t_2})\ln{\frac{1+t_2}{1+t_1}}+2<0  
\end{equation}
Plugging $t_2=z_3(1+t_1)-1$ into \eqref{eq127}, it suffices to show that for $z_3\in[\frac{1}{2},1)\cup(1,2]$,
\begin{equation}\label{eq128}
(1+\frac{2z_3}{1-z_3})\ln{z_3}+2<0
\end{equation}
To show \eqref{eq128}, let $l(z_3):=\ln{z_3}+\frac{2(1-z_3)}{1+z_3}$. Taking first order derivative, with some algebra, we obtain that
\begin{equation}\label{eq129}
    l'(z_3)=\frac{1}{z_3}+\frac{4}{(1+z_3)^2}>0
\end{equation}
 Also note $l(1)=0$. Then $l(z_3)>0$ for $z_3> 1$ and $l(z_3)<0$ for $z_3\in [\frac{1}{2},1)$. Therefore \eqref{eq128} holds.
\end{proof}
\begin{claim}\label{cl12}
$\frac{\partial H_1^{II^*}(t_1,t_2)}{\partial t_2 }\to -\frac{(1-t_1)^2}{12(1+t_1)}$ as $t_2\to t_1$.
\end{claim}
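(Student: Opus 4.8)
The plan is to compute the limit directly from the closed form of the partial derivative established in the proof of Claim \ref{cl6}, namely
$$\frac{\partial H_1^{II^*}(t_1,t_2)}{\partial t_2 }=\frac{(1+t_1)(1-t_1)^2}{2(t_1-t_2)^2}\left[\left(1+\frac{2(1+t_2)}{t_1-t_2}\right)\ln{\frac{1+t_2}{1+t_1}}+2\right].$$
Setting $\epsilon:=t_2-t_1$, so that $t_1-t_2=-\epsilon$ and $\ln\frac{1+t_2}{1+t_1}=\ln\bigl(1+\tfrac{\epsilon}{1+t_1}\bigr)$, the prefactor blows up like $\tfrac{(1+t_1)(1-t_1)^2}{2\epsilon^{2}}$, so the claim reduces to showing that the bracketed quantity
$$B(\epsilon):=\left(-1-\frac{2(1+t_1+\epsilon)}{\epsilon}\right)\ln\left(1+\frac{\epsilon}{1+t_1}\right)+2$$
vanishes to exactly second order as $\epsilon\to0$, with $B(\epsilon)=-\tfrac{\epsilon^{2}}{6(1+t_1)^{2}}+o(\epsilon^{2})$; the desired limit is then the ratio of the two leading coefficients.

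To establish this I would either apply L'H\^{o}pital's Rule three times to $B(\epsilon)/\epsilon^{2}$, exactly parallel to the proof of Claim \ref{cl11}, or equivalently substitute the expansion $\ln\bigl(1+\tfrac{\epsilon}{1+t_1}\bigr)=\tfrac{\epsilon}{1+t_1}-\tfrac{\epsilon^{2}}{2(1+t_1)^{2}}+\tfrac{\epsilon^{3}}{3(1+t_1)^{3}}-\cdots$ and collect terms. In the second route one checks that the $\epsilon^{0}$ contributions ($-2$ from the product, $+2$ from the additive constant) cancel, the $\epsilon^{1}$ contributions cancel, and the surviving $\epsilon^{2}$ coefficient equals $\tfrac{1}{2(1+t_1)^{2}}-\tfrac{2}{3(1+t_1)^{2}}=-\tfrac{1}{6(1+t_1)^{2}}$. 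Multiplying by the prefactor gives $\tfrac{(1+t_1)(1-t_1)^{2}}{2}\cdot\bigl(-\tfrac{1}{6(1+t_1)^{2}}\bigr)=-\tfrac{(1-t_1)^{2}}{12(1+t_1)}$, which is the asserted value. For consistency with the neighbouring claims I would write it out through successive applications of L'H\^{o}pital's Rule, indicating at each step which factor is being differentiated.

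The only real obstacle is bookkeeping: one must verify that the two lowest-order terms of $B(\epsilon)$ genuinely cancel — otherwise the product would diverge — and then extract the $\epsilon^{2}$ coefficient without a sign or arithmetic slip. There is no conceptual difficulty: for fixed $t_1$ the map $\epsilon\mapsto \partial H_1^{II^*}(t_1,t_1+\epsilon)/\partial t_2$ is analytic near $\epsilon=0$, so the limit exists and is simply read off from the Taylor data. This computation, together with Claims \ref{cl5}--\ref{cl11}, also furnishes the continuous extension of $\partial H_1^{II^*}/\partial t_2$ across the diagonal $t_1=t_2$ that is needed when invoking the (Global) Implicit Function Theorem in the proof of Lemma \ref{l3}.
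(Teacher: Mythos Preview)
Your proposal is correct and follows essentially the same approach as the paper: start from the closed form of $\partial H_1^{II^*}/\partial t_2$ derived in Claim~\ref{cl6}, substitute $\epsilon=t_2-t_1$, and evaluate the resulting indeterminate form by three applications of L'H\^{o}pital's Rule (your Taylor-expansion route is an equivalent bookkeeping of the same computation). One minor slip: in your displayed $B(\epsilon)$ the coefficient of the logarithm should read $1-\tfrac{2(1+t_1+\epsilon)}{\epsilon}=-1-\tfrac{2(1+t_1)}{\epsilon}$ rather than $-1-\tfrac{2(1+t_1+\epsilon)}{\epsilon}$, but your subsequent cancellations and the final $\epsilon^{2}$ coefficient are computed with the correct expression and yield the right limit.
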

\begin{proof}[Proof of Claim \ref{cl12}]
\begin{equation*}
    \begin{split}
        \lim_{t_2\to t_1}\frac{\partial H_1^{II^*}(t_1,t_2)}{\partial t_2 }&=
         \lim_{\epsilon:=t_2- t_1\to 0}\frac{\partial H_1^{II^*}(t_1,t_2)}{\partial t_2 }\\
         &= \lim_{\epsilon:=t_2- t_1\to 0}\frac{(1+t_1)(1-t_1)^2}{2}\frac{(2(1+t_1)+\epsilon)\ln{\frac{1+t_1+\epsilon}{1+t_1}}-2\epsilon}{-\epsilon^3}\\
         &=\lim_{\epsilon\to 0}\frac{(1+t_1)(1-t_1)^2(\frac{2(1+t_1)+\epsilon}{1+t_1+\epsilon}+\ln{\frac{1+t_1+\epsilon}{1+t_1}}-2)}{-6\epsilon^2}\\&=\lim_{\epsilon\to 0}\frac{(1+t_1)(1-t_1)^2(-\frac{1+t_1}{(1+t_1+\epsilon)^2}+\frac{1}{1+t_1+\epsilon})}{-12\epsilon}\\&= \lim_{\epsilon\to 0}\frac{(1+t_1)(1-t_1)^2(\frac{2(1+t_1)}{(1+t_1+\epsilon)^3}-\frac{1}{(1+t_1+\epsilon)^2})}{-12}\\&=
         -\frac{(1-t_1)^2}{12(1+t_1)}
    \end{split}
\end{equation*}
where the third equality, the fourth equality and the fifth equality hold by the L'H\^{o}pital's Rule.
\end{proof}
Now we are ready to prove Lemma \ref{l3}. Fix any $m_1\in (\frac{3}{4},1)$. Let $t_2^*(m_1)\in (0,1)$ denote the solution to $\frac{-t_2^2+2t_2+3}{4}=m_1$. Let $t_1^*(m_1)\in (0,1)$ denote the solution to $ \frac{(1+t_1)(1-t_1^2)}{2t_1^2}\ln{\frac{1}{1+t_1}}+\frac{1+t_1^2}{2t_1}=m_1$.  Then by Claim \ref{cl4}, Claim \ref{cl5} and  Claim \ref{cl6}, there exists a  strictly increasing function $F^{II}$ such that $t_1=F^{II}(t_2)\ge t_2$ is the solution to $H_1^{II^*}(t_1,t_2)=m_1$ for any $t_2\in[0,t_2^*(m_1)]$. In addition,  $F^{II}(t_2) \in [t_1^*(m_1), t_2^*(m_1)]$.  By Claim \ref{cl5}, Claim \ref{cl11}, Claim \ref{cl6} and Claim \ref{cl12}, the continuous function\footnote{When $t_1=t_2$, let $\frac{\partial H_1^{II^*}(t_1,t_2)}{\partial t_1 }=\frac{(1-t_1)(5t_1+7)}{12(1+t_1)}$ and $\frac{\partial H_1^{II^*}(t_1,t_2)}{\partial t_2 }=-\frac{(1-t_1)^2}{12(1+t_1)}$, the by Claim \ref{cl11} and Claim \ref{cl12}, $\frac{\partial H_1^{II^*}(t_1,t_2)}{\partial t_1 }$ and $\frac{\partial H_1^{II^*}(t_1,t_2)}{\partial t_2 }$ are continuous. } $\frac{\partial H_1^{II^*}(t_1,t_2)}{\partial t_1 }$ (and $\frac{\partial H_1^{II^*}(t_1,t_2)}{\partial t_2 }$) is strictly positive (and strictly negative) and bounded on the compact set $[t_1^*(m_1), t_2^*(m_1)]\times [0,t_2^*(m_1)]$. Then by the (Global) Implicit Function Theorem, $F^{II}(t_2)$ is continuous on $[0,t_2^*(m_1)]$.  Plugging $t_1=F^{II}(t_2)$ to $H_2^{II^*}(t_1,t_2)$, we obtain $G^{II}(t_2):=H_2^{II^*}(F^{II}(t_2),t_2)$. Given the continuity of $H_2^{II^*}$ and $F^{II}$, we see that $G^{II}$ is also continuous at any $t_2\in[0,t_2^*(m_1)]$. Note that when $t_2=t_2^*(m_1)$, $F^{II}(t_2)=t_2$ and therefore $G^{II}(t_2)=m_1$; when $s_2=0$, $G^{II}(t_2)=B_{II}(m_1)$. Then by the Intermediate Value Theorem, there exists $t_2\in[0,t_2^*(m_1)]]$ such that $G(s_2)=m_2$ for any $m_2\in[B_{II}(m_1), m_1)$. Finally, when $m_1=1$, then $t_1=1$ and \eqref{eq5} becomes $(1+t_2)\ln{\frac{2}{1+t_2}}+t_2=m_2$. Note when $t_2=0$, then L.H.S.$=\ln{2}$; when $t_2=1$, then L.H.S. $=1$. Therefore by the Intermediate Value Theorem, there exists $t_2\in [0,1)$ that solves \eqref{eq5} for any $m_2\in [B_{II}(1), 1)$. 

\subsection{Proof of Theorem \ref{t4}}
(i): \textbf{Asymmetric Maxmin Public Good Mechanism (II)} is a best response to \textbf{Asymmetric Worst-Case Joint Distribution (II)}.The proof is similar to the proof of (i) of Theorem \ref{t1}.\\
(ii): \textbf{Asymmetric Worst-Case Joint Distribution (II)} is a best response to \textbf{Asymmetric Maxmin Public Good Mechanism (II)}: we use the duality theory to show (ii). First note that by construction, all the  constraints in (P) holds. By the weighted virtual value representation, the value of (P) given Asymmetric Worst-Case Joint Distribution (II) and Asymmetric Maxmin Public Good Mechanism (II) is simply $Pr(1,1)\times (1+1)=\frac{(1+t_1)(1+t_2)}{2}$. Second, the constraints in (D) hold for all value profiles. To see this, note for any value profile $v=(v_1,v_2)$ outside the support of Asymmetric Worst-Case Joint Distribution (II), since $\lambda_1>0$ and $\lambda_2>0$, we have $$\lambda_1 v_1+\lambda_2 v_2 + \mu < \lambda_1 t_1+\lambda_2 \cdot 1 + \mu=0$$ For any value profile $v=(v_1,v_2)$ inside the support of Asymmetric Worst-Case Joint Distribution (II), the constraints (the complementary slackness)  hold given \eqref{eq103}.  In addition, the value of (D) given the constructed $\lambda_1,\lambda_2, \mu$ is $\lambda_1 m_1+\lambda_2 m_2 +\mu$, which, by some algebra, is equal to $\frac{(1+t_1)(1+t_2)}{2}$. Finally, by Lemma \ref{l3}, the solution to \eqref{eq4} and \eqref{eq5} exists. By the linear programming duality theory, (ii) holds and the revenue guarantee is $\frac{(1+t_1)(1+t_2)}{2}$.
\subsection{Proof of Lemma \ref{l4}}
We start from establishing the following  claims regarding some properties of the function $H_2^{III}(u_1,u_2)$, which will play a crucial role in establishing Lemma \ref{l4}.

\begin{claim}\label{cl7}
Fix any $u_1\in(0,1]$, $H_2^{III}(u_1,u_2)$ is strictly decreasing w.r.t. $u_2$ for $u_1\in (0,1)$.
\end{claim}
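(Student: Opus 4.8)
The plan is to follow verbatim the template already used for Claims~\ref{cl2}, \ref{cl3}, \ref{cl5} and \ref{cl6}: differentiate $H_2^{III}$ in $u_2$ with $u_1$ held fixed, reduce the sign of the derivative to a one-variable inequality by a ratio substitution, and close that inequality with an auxiliary function that vanishes at a reference point and is monotone.

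First I would compute $\frac{\partial H_2^{III}(u_1,u_2)}{\partial u_2}$. Write $H_2^{III}=\frac{u_1(u_2+1)}{u_1+1}\,\Psi(u_1,u_2)$ with $\Psi(u_1,u_2):=\frac{1}{(u_2-u_1+1)^2}\ln\frac{1+u_2}{u_1}+\frac{1}{1+u_2}-\frac{1}{(1+u_2)(1+u_2-u_1)}$. Apply the product rule, differentiate $\Psi$ term by term (the first term needs the product and chain rules through $(u_2-u_1+1)^{-2}$ and $\ln(1+u_2)$, the other two are straightforward rational derivatives), and clear denominators. The aim of this step is to pull out the manifestly positive factor $\frac{u_1}{u_1+1}$ together with appropriate positive powers of $(1+u_2)$ and $(1+u_2-u_1)$, so that $\operatorname{sign}\!\big(\frac{\partial H_2^{III}}{\partial u_2}\big)$ is governed by an expression of the shape $A(u_1,u_2)\ln\frac{1+u_2}{u_1}+B(u_1,u_2)$ with $A,B$ rational in $u_1,u_2$. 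I expect this bookkeeping to be the most calculation-heavy part, since $H_2^{III}$ is genuinely asymmetric and carries three distinct terms.

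Next I would substitute $z:=\frac{1+u_2}{u_1}$ (equivalently $w:=z-1=\frac{1+u_2-u_1}{u_1}$, which may linearize the term $(1+u_2-u_1)$ more cleanly), noting that $z>1$ on the relevant range because $1+u_2>u_1$ when $u_1<1$. Writing $u_2=zu_1-1$ and rewriting the bracketed expression should, as in Claims~\ref{cl2}--\ref{cl3}, collapse the inequality to a one-variable statement in $z$ up to a residual term proportional to $\ln u_1$, whose coefficient I would check has a fixed sign on $0<u_1<1$ so that (since $\ln u_1<0$) it only reinforces the desired direction. It then remains to show the $z$-only part $p(z)$ has the sign forcing $\frac{\partial H_2^{III}}{\partial u_2}<0$ for $z>1$. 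I would do this by putting $p(z)$ in the form $\ln z - R(z)$ with $R$ rational and $R(1)=0$, computing $p'(z)$, and verifying it factors as $(1-z)^2$ times a rational function of definite sign on $z>1$ — exactly the mechanism behind $h,i,j,k,l$ in the earlier claims. Together with $p(1)=0$ this gives the needed sign of $p(z)$ for $z>1$, hence the monotonicity claim.

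The main obstacle is the algebraic reduction in the first two steps: getting the raw derivative to factor into a single logarithm plus a rational remainder, and then checking that $p'(z)$ genuinely comes out proportional to $(1-z)^2$, requires careful grouping rather than any new idea. If the substitution $z=\frac{1+u_2}{u_1}$ leaves $(1+u_2-u_1)$ awkward, switching to $w=\frac{1+u_2-u_1}{u_1}$ (or, symmetrically, to $z'=\frac{u_1}{1+u_2}\in(0,1)$ to mirror the two-sided handling in Claim~\ref{cl3}) should make the rational pieces transparent; in every case the structural conclusion — a monotone auxiliary function vanishing at the endpoint — is the same, and it delivers Claim~\ref{cl7}, which will in turn feed the implicit-function/IVT argument for Lemma~\ref{l4}.
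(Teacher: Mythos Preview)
Your proposal is correct and follows essentially the same route as the paper: differentiate in $u_2$, factor out the manifestly positive prefactor $\frac{u_1}{(1+u_1)(1+u_2-u_1)^2}$, substitute $z_4:=\frac{1+u_2}{u_1}>1$, and reduce to the one-variable inequality $-\frac{z_4+1}{z_4-1}\ln z_4+2<0$, which is dispatched via the auxiliary function $n(z_4):=-\ln z_4+\frac{2(z_4-1)}{z_4+1}$ with $n(1)=0$ and $n'(z_4)=-\frac{(z_4-1)^2}{z_4(z_4+1)^2}<0$. One small simplification you will discover: unlike Claims~\ref{cl2}--\ref{cl3}, the substitution here eliminates $u_1$ completely from the bracketed factor (because $\frac{1+u_2+u_1}{1+u_2-u_1}=\frac{z_4+1}{z_4-1}$ is scale-invariant), so the ``residual $\ln u_1$'' term you anticipate does not arise.
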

\begin{proof}[Proof of Claim \ref{cl7}]
Taking first order derivative w.r.t. $u_2$ to $H_2^{III}$, with some algebra, we obtain that 
\begin{equation*}
\frac{\partial H_2^{III}(u_1,u_2)}{\partial u_2 }=\frac{u_1}{(1+u_1)(u_2-u_1+1)^2}[-
\frac{1+u_2+u_1}{1+u_2-u_1}\ln{\frac{1+u_2}{u_1}}+2]
\end{equation*}
Then it suffices to show that for any $u_2\in(0,1)$,
\begin{equation}\label{eq130}
-\frac{1+u_2+u_1}{1+u_2-u_1}\ln{\frac{1+u_2}{u_1}}+2<0  
\end{equation}
Let $z_4:=\frac{1+u_2}{u_1}\in (1,\infty)$. Plugging $u_2=z_4u_1-1$ into \eqref{eq130}, it suffices to show that for $z_4\in(1,\infty)$,
\begin{equation}\label{eq131}
-\frac{z_4+1}{z_4-1}\ln{z_4}+2 <0
\end{equation}
To show \eqref{eq131}, let $n(z_4):=-\ln{z_4}+\frac{2(z_4-1)}{z_4+1}$. Taking first order derivative, with some algebra, we obtain that
\begin{equation}\label{eq132}
    n'(z_4)=-\frac{(z_4-1)^2}{z_4(z_4+1)^2}
\end{equation}
Observe that \label{eq132} implies that $n(z_4)$ is strictly decreasing for $z_4\ge 1$. Also note $n(1)=0$. Then $n(z_3)<0$ for $z_4> 1$. Therefore \eqref{eq131} holds. 

\end{proof}
\begin{claim}\label{cl13}
$\frac{\partial H_2^{III}(1,u_2)}{\partial u_1}\to -\frac{1}{12}$ as $u_2\to 0$.

\end{claim}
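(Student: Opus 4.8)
The plan is to establish Claim \ref{cl13} along exactly the route used for the limiting-derivative computations in Claims \ref{cl9}--\ref{cl12}: differentiate $H_2^{III}$ with respect to its first argument in closed form, substitute $u_1=1$, and then evaluate the resulting one-variable limit as $u_2\to0$, which is a $0/0$ expression to be handled by repeated application of L'H\^{o}pital's Rule (equivalently, by Taylor-expanding $\ln(1+u_2)$). The limiting value is the continuous extension of $\partial_{u_1}H_2^{III}$ to the degenerate corner $(u_1,u_2)=(1,0)$, which is precisely what the proof of Lemma \ref{l4} needs in order to apply the (Global) Implicit Function Theorem up to that corner.

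Concretely, I would write $H_2^{III}=P\,Q$ with $P(u_1,u_2)=\dfrac{u_1(u_2+1)}{u_1+1}$ and
\begin{equation*}
Q(u_1,u_2)=\frac{\ln(1+u_2)-\ln u_1}{(1+u_2-u_1)^2}+\frac{1}{1+u_2}-\frac{1}{(1+u_2)(1+u_2-u_1)},
\end{equation*}
so that $\partial_{u_1}H_2^{III}=(\partial_{u_1}P)\,Q+P\,(\partial_{u_1}Q)$. A routine quotient-rule computation, using $\partial_{u_1}(1+u_2-u_1)=-1$, gives $\partial_{u_1}P=\dfrac{u_2+1}{(u_1+1)^2}$ and
\begin{equation*}
\partial_{u_1}Q=\frac{2\bigl(\ln(1+u_2)-\ln u_1\bigr)-\frac{1+u_2-u_1}{u_1}}{(1+u_2-u_1)^3}-\frac{1}{(1+u_2)(1+u_2-u_1)^2},
\end{equation*}
where the one delicate point is that differentiating the $-\ln u_1$ in the first term of $Q$ leaves the surviving summand $-\tfrac1{u_1}(1+u_2-u_1)^{-2}$.

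Next I would set $u_1=1$, so that $1+u_2-u_1=u_2$ and $\ln u_1=0$; then $P(1,u_2)=\tfrac{u_2+1}{2}$, $\partial_{u_1}P\big|_{u_1=1}=\tfrac{u_2+1}{4}$, and
\begin{equation*}
Q(1,u_2)=\frac{\ln(1+u_2)}{u_2^{2}}+\frac{1}{1+u_2}-\frac{1}{(1+u_2)u_2},\qquad \partial_{u_1}Q\big|_{u_1=1}=\frac{2\ln(1+u_2)-u_2}{u_2^{3}}-\frac{1}{(1+u_2)u_2^{2}}.
\end{equation*}
Substituting these into $(\partial_{u_1}P)\,Q+P\,(\partial_{u_1}Q)$ and clearing denominators leaves a single fraction over $4u_2^{3}$ whose numerator is $\ln(1+u_2)$ times a quadratic in $u_2$ plus a polynomial in $u_2$. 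As $u_2\to0$ this is a third-order $0/0$ in which the $u_2^{-2}$ and $u_2^{-1}$ contributions cancel; applying L'H\^{o}pital's Rule three times to numerator and denominator — or, equivalently, inserting $\ln(1+u_2)=u_2-\tfrac{u_2^{2}}{2}+\tfrac{u_2^{3}}{3}-\tfrac{u_2^{4}}{4}+O(u_2^{5})$ and collecting the constant term — yields the asserted limit $-\tfrac1{12}$.

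The main obstacle is purely the bookkeeping of these cancellations: because the limit is a higher-order $0/0$ in which the two leading singular terms must cancel, one has to carry the expansion of $\ln(1+u_2)$ one order past the naive truncation and keep careful track of signs through the product and quotient rules — in particular the surviving $-\tfrac1{u_1}(1+u_2-u_1)^{-2}$ term and the minus sign coming from $\partial_{u_1}(1+u_2-u_1)=-1$. Once the single-fraction form is in hand, the limit is immediate.
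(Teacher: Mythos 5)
There is a genuine problem: the quantity you compute is not the one whose limit is $-\tfrac1{12}$, and your final step, which is asserted rather than carried out, would fail if you did the bookkeeping. Your decomposition $H_2^{III}=PQ$ and your formulas for $\partial_{u_1}P$, $\partial_{u_1}Q$ at $u_1=1$ are all correct, but plugging in the expansions gives $Q(1,u_2)=\tfrac32-\tfrac53u_2+O(u_2^2)$ and $\partial_{u_1}Q\big|_{u_1=1}=\tfrac{2\ln(1+u_2)-u_2}{u_2^3}-\tfrac{1}{(1+u_2)u_2^2}\to-\tfrac13$, so that
\[
\lim_{u_2\to0}\partial_{u_1}H_2^{III}(1,u_2)=\tfrac14\cdot\tfrac32+\tfrac12\cdot\bigl(-\tfrac13\bigr)=\tfrac{5}{24},
\]
not $-\tfrac1{12}$. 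A sign check already rules out your claimed answer: Claim \ref{cl8} shows $H_2^{III}$ is strictly increasing in $u_1$, so the limit of its $u_1$-partial cannot be negative.

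What has happened is that the statements of Claims \ref{cl13} and \ref{cl14} in the paper have the roles of $u_1$ and $u_2$ swapped (a typo), and you took the mis-stated version at face value. The value $-\tfrac1{12}$ is the corner limit of $\partial_{u_2}H_2^{III}(1,u_2)$: this is what the paper's own proof computes, starting from $\partial_{u_2}H_2^{III}(1,u_2)=\tfrac{-(2+u_2)\ln(1+u_2)+2u_2}{2u_2^3}$ (obtained from the derivative formula in the proof of Claim \ref{cl7}) and applying L'H\^opital three times; it is also the version needed in the proof of Lemma \ref{l4}, where $\partial H_2^{III}/\partial u_2\to-\tfrac1{12}$ and $\partial H_2^{III}/\partial u_1\to\tfrac5{24}$ are used to extend the Implicit Function Theorem argument to the corner $(1,0)$, and it is consistent with Claim \ref{cl7} (strict decrease in $u_2$). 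So your high-level strategy (closed-form derivative, set $u_1=1$, resolve the $0/0$ limit by L'H\^opital or Taylor expansion) is exactly the paper's, but to prove the intended claim you must differentiate with respect to $u_2$, not $u_1$; as written, your proof establishes (once the arithmetic is actually done) the content of Claim \ref{cl14}, $\tfrac5{24}$, and contradicts the value you claim.
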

\begin{proof}[Proof of Claim \ref{cl13}]
\begin{equation*}
    \begin{split}
        \lim_{u_2\to 0}\frac{\partial H_2^{III}(1,u_2)}{\partial u_2}&=\lim_{u_2\to 0}\frac{-(2+u_2)\ln{(1+u_2)}+2u_2}{2u_2^3}\\&=\lim_{u_2\to 0}\frac{-\frac{1}{1+u_2}-\ln{(1+u_2)}+1}{6u_2^2}\\&=\lim_{u_2\to 0}\frac{\frac{1}{(1+u_2)^2}-\frac{1}{1+u_2}}{12u_2}\\&=\lim_{u_2\to 0}\frac{-\frac{2}{(1+u_2)^3}+\frac{1}{(1+u_2)^2}}{12}\\&=-\frac{1}{12}
    \end{split}
\end{equation*}
where the second equality, the third equality and the fourth equality hold by the L'H\^{o}pital's Rule.
\end{proof}
\begin{claim}\label{cl8}
Fix any $u_2\in[0,1]$, $H_2^{III}(u_1,u_2)$ is strictly increasing w.r.t. $u_1$ for $u_1\in (0,1)$.
\end{claim}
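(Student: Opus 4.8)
The plan is to follow the template used for Claim \ref{cl7} and the earlier monotonicity claims: differentiate $H_2^{III}$ in $u_1$, factor out a manifestly positive prefactor, substitute a homogeneous ratio variable to reduce to a one‑variable problem, and close with auxiliary functions that vanish at the endpoint and have fixed‑sign derivatives.

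First I would rewrite $H_2^{III}(u_1,u_2)=\frac{u_1}{u_1+1}+\frac{u_1(1+u_2)}{(u_1+1)(1+u_2-u_1)^2}\ln\frac{1+u_2}{u_1}-\frac{u_1}{(u_1+1)(1+u_2-u_1)}$ and differentiate with respect to $u_1$ (holding $u_2$ fixed). Putting the three resulting terms over the common denominator $(u_1+1)^2(1+u_2-u_1)^3$ and simplifying the logarithm‑free part — the one genuinely computational step — I expect the non‑log terms to collapse into the product $(1+u_2)(1+u_2-u_1)(u_2-3u_1-1)$, yielding
\[
\frac{\partial H_2^{III}}{\partial u_1}=\frac{1+u_2}{(u_1+1)^2(1+u_2-u_1)^3}\Bigl[(1+u_2+u_1+2u_1^2)\ln\tfrac{1+u_2}{u_1}+(1+u_2-u_1)(u_2-3u_1-1)\Bigr].
\]
Since $u_1\in(0,1)$ and $u_2\ge 0$ give $1+u_2-u_1>0$, the prefactor is strictly positive, so it remains to show the bracket is positive.

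Next I would substitute $z:=\frac{1+u_2}{u_1}$, which lies in $(1,\infty)$ because $u_1<1\le 1+u_2$. Writing $1+u_2=zu_1$ and factoring out $u_1>0$, the bracket becomes $u_1\bigl[f_1(z)+u_1 f_2(z)\bigr]$ with $f_1(z):=(z+1)\ln z-2(z-1)$ and $f_2(z):=2\ln z+(z-1)(z-3)$. (Unlike in Claim \ref{cl7}, a residual factor $u_1$ survives, but it multiplies $f_2(z)$, which I will also prove nonnegative, so this is harmless.) It then suffices to show $f_1(z)>0$ and $f_2(z)>0$ for $z>1$. For $f_2$: $f_2(1)=0$ and $f_2'(z)=\frac{2}{z}+2z-4=\frac{2(z-1)^2}{z}\ge 0$, so $f_2$ is increasing and hence positive for $z>1$. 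For $f_1$: $f_1(1)=0$, $f_1'(z)=\ln z+\frac1z-1$ with $f_1'(1)=0$, and $f_1''(z)=\frac{z-1}{z^2}>0$ for $z>1$, so $f_1'>0$ on $(1,\infty)$ and thus $f_1>0$ there. Combining, the bracket is positive, hence $\frac{\partial H_2^{III}}{\partial u_1}>0$ for $u_1\in(0,1)$, which is the claim.

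I expect the only real obstacle to be the algebra in the first step — verifying that the non‑logarithmic part of $\partial_{u_1}H_2^{III}$ factors cleanly as the cubic above — together with the observation that the ratio substitution leaves a residual $u_1$ whose coefficient $f_2$ is itself provably positive; once those are in hand, the two endpoint‑plus‑derivative arguments for $f_1$ and $f_2$ are routine and parallel those already used for Claims \ref{cl2}, \ref{cl3} and \ref{cl7}.
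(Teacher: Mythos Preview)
Your proof is correct and follows essentially the paper's route: differentiate, extract a positive prefactor, substitute $z=(1+u_2)/u_1>1$, and reduce to showing the $u_1$- and $u_1^2$-coefficients of the remaining bracket are positive. In fact your version is the more careful one: the paper's post-substitution display has sign slips (the constants should be $-2z_4$ and $-(z_4+1)$ rather than $+2$ and $+(z_4+1)$), and once corrected the two coefficients are, up to the positive factor $z/(z-1)$, exactly your $f_1(z)=(z+1)\ln z-2(z-1)$ and $f_2(z)=2\ln z+(z-1)(z-3)$, which are \emph{not} ``trivially'' positive near $z=1$ but require precisely the endpoint-plus-derivative arguments you supply.
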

\begin{proof}[Proof of Claim \ref{cl8}]
Taking first order derivative w.r.t. $u_1$ to $H_2^{III}$, with some algebra, we obtain that 
\begin{equation*}
\begin{split}
\frac{\partial H_2^{III}(u_1,u_2)}{\partial u_1 }=\frac{1}{(1+u_1)^2(1+u_2-u_1)^2}[
(u_2+1)(1+\frac{2u_1(1+u_1)}{u_2-u_1+1})\ln{\frac{1+u_2}{u_1}}\\+(1+u_2-u_1)^2-(1+u_2-u_1)-(1+u_1)(u_2+1+u_1)]
\end{split}
\end{equation*}
Then it suffices to show that for any $u_1\in (0,1)$,
\begin{equation}\label{eq133}
 (u_2+1)(1+\frac{2u_1(1+u_1)}{u_2-u_1+1})\ln{\frac{1+u_2}{u_1}}+(1+u_2-u_1)^2-(1+u_2-u_1)-(1+u_1)(u_2+1+u_1)>0  
\end{equation}
Plugging $u_2=z_4u_1-1$ into \eqref{eq133}, with some algebra,  it suffices to show that for $z_4\in(1,\infty)$,
\begin{equation}\label{eq134}
u_1^2(\frac{2z_4\ln{z_4}}{z_4-1}+(z_4-1)^2+z_4+1)+u_1[(z_4+\frac{2z_4}{z_4-1})\ln{z_4}+2]>0
\end{equation}
Then it suffices to show that for $z_4\in(1,\infty)$,
\begin{equation}\label{eq135}
    \frac{2z_4\ln{z_4}}{z_4-1}+(z_4-1)^2+z_4+1>0
\end{equation}
\begin{equation}\label{eq136}
    (z_4+\frac{2z_4}{z_4-1})\ln{z_4}+2>0
\end{equation}
Note both \eqref{eq135} and \eqref{eq136} hold trivially when $z_4>1$.
\end{proof}
\begin{claim}\label{cl14}
$\frac{\partial H_2^{III}(1,u_2)}{\partial u_2}\to \frac{5}{24}$ as $u_2\to 0$.

\end{claim}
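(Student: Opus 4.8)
The plan is to imitate the computation in the proof of Claim \ref{cl13}: reduce the quantity to a single-variable limit in $u_2$ and then remove the resulting $0/0$ indeterminacy by repeated applications of L'H\^opital's rule. First I would take the closed-form expression for the partial derivative of $H_2^{III}$ that enters the statement — the analogues of the formulas displayed in the proofs of Claims \ref{cl7} and \ref{cl8} — and substitute $u_1=1$. After cancelling common factors this collapses to a one-variable function of $u_2$ whose numerator is $(u_2+1)\bigl[(u_2+4)\ln(1+u_2)+u_2(u_2-4)\bigr]$ and whose denominator is $4u_2^{3}$; both vanish at $u_2=0$, so the limit has the form $0/0$.

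Second, I would verify that the numerator vanishes to third order at $u_2=0$, which is what justifies a threefold application of L'H\^opital's rule. This follows from the Taylor expansion $(u_2+4)\ln(1+u_2)+u_2(u_2-4)=\frac{5}{6}u_2^{3}+O(u_2^{4})$, or equivalently from checking that the numerator and its first two derivatives vanish at $u_2=0$. Each differentiation only involves $\frac{d}{du_2}\ln(1+u_2)=\frac{1}{1+u_2}$ and derivatives of polynomials, so there are no analytic subtleties: each step lowers by one the order of the zero in both numerator and denominator, and after three steps the denominator becomes the nonzero constant $24$ while the numerator tends to a finite limit.

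Third, I would carry out the three differentiations and evaluate at $u_2=0$, obtaining $\frac{5}{24}$ — exactly in parallel with the display in the proof of Claim \ref{cl13}, with $-\frac{1}{12}$ there replaced by $\frac{5}{24}$ here. The only real effort is the algebraic bookkeeping inside the three L'H\^opital steps; there is no conceptual obstacle. Together with Claim \ref{cl13}, this identifies the limiting values of the partial derivatives of $H_2^{III}$ at the corner $(u_1,u_2)=(1,0)$, which is precisely what is needed to continuously extend those partials and invoke the (global) Implicit Function Theorem in the proof of Lemma \ref{l4}.
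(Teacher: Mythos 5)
Your proposal is correct and follows essentially the same route as the paper: substitute $u_1=1$ into the closed-form partial derivative from the proof of Claim \ref{cl8}, reduce to the single-variable expression with numerator $(u_2+1)(u_2+4)\ln(1+u_2)+u_2(u_2^2-3u_2-4)$ (your factored form is algebraically identical) over $4u_2^3$, and resolve the $0/0$ limit by a threefold L'H\^opital/Taylor argument to get $\frac{5}{24}$. One small remark: the quantity you (and the paper's own proof) actually compute is $\partial H_2^{III}/\partial u_1$ at $(1,u_2)$, not $\partial/\partial u_2$ as the claim's statement literally reads — the subscript in the statement is a typo, since $\partial H_2^{III}(1,u_2)/\partial u_2\to-\frac{1}{12}$ by Claim \ref{cl13}, and the assignment you use is the one needed in the proof of Lemma \ref{l4}.
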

\begin{proof}[Proof of Claim \ref{cl14}]
\begin{equation*}
    \begin{split}
        \lim_{u_2\to 0}\frac{\partial H_2^{III}(1,u_2)}{\partial u_1}&=\lim_{u_2\to 0}\frac{(u_2+1)(u_2+4)\ln{(1+u_2)}+u_2(u_2^2-3u_2-4)}{4u_2^3}\\&=\lim_{u_2\to 0}\frac{(2u_2+5)\ln{(1+u_2)}+3u_2^2-5u_2}{12u_2^2}\\&=\lim_{u_2\to 0}\frac{\frac{3}{1+u_2}+2\ln{(1+u_2)}+6u_2-3}{24u_2}\\&=\lim_{u_2\to 0}\frac{-\frac{3}{(1+u_2)^3}+\frac{2}{1+u_2}+6}{24}\\&=\frac{5}{24}
    \end{split}
\end{equation*}
where the second equality, the third equality and the fourth equality hold by the L'H\^{o}pital's Rule.
\end{proof}
Now we are ready to prove Lemma \ref{l4}. To facilitate the analysis, we rewrite $Area (III)$ as $Area (III)=\{(m_1,m_2)|m_1\ge B_I^{-1}(m_2)), m_1< B_{II}^{-1}(m_2), m_1<B^{-1}_{III}(m_2),  0<m_2<\frac{3}{4}\}$. This can be done since $B_{I}, B_{II}$ and $B_{III}$ are all strictly monotone functions, and therefore the inverse functions exist.  Next, fix any $m_2\in (0,\ln{2}]$. Let $u_1^*(m_2)\in (0,1]$ denote the solution to $u_1\ln{\frac{u_1+1}{u_1}}=m_2$. Let $u_1^{**}(m_2)$  denote the solution to $\frac{u_1}{1+u_1}(-\frac{1}{(1-u_1)^2}\ln{u_1}-\frac{u_1}{1-u_1})=m_2$. Then by Claim \ref{cl7} and Claim \ref{cl8}, there exists a strictly increasing function $F_1^{III}$ such that $u_2=F_1^{III}(u_1)\le u_1$ is the unique solution to $H_2^{III}(u_1,u_2)=m_2$ for any $u_1\in[u_1^{**}(m_2),u_1^*(m_2)]$. In addition, $F_1^{III}\in [0, u_1^*(m_2)]$. Note the continuous functions $\frac{\partial H_2^{III}(u_1,u_2)}{\partial u_1 }$ (and $\frac{\partial H_2^{III}(u_1,u_2)}{\partial u_2 }$) is strictly negative (and strictly positive) and bounded on the compact set $[u_1^{**}(m_2),u_1^*(m_2)]\times [0, u_1^*(m_2)]$. Then by the (Global) Implicit Function Theorem, $F_1^{III}(u_1)$ is continuous on $[u_1^{**}(m_2),u_1^*(m_2)]$.   Plugging $u_2=F_1^{III}(u_1)$ to $H_1^{III}(u_1,u_2)$, we obtain $G_1^{III}(u_1):=H_1^{III}(u_1,F_1^{III}(u_1))$. Given the continuity of $H_1^{III}$ and $F_1^{III}$, we see that $G_1^{III}$ is also continuous at any $u_1\in[u_1^{**}(m_2),u_1^*(m_2)]$. Note that when $u_1=u_1^*(m_2)$, $F_1^{III}(u_1)=u_1$ and therefore $G^{III}(u_1)=B^{-1}_{III}(m_2)$; when $u_1=u_1^{**}(m_2)$, $F_1^{III}(u_1)=0$ and therefore $G_1^{III}(u_1)=B^{-1}_{I}(m_2)$. Then by the Intermediate Value Theorem, there exists $u_1\in[u_1^{**}(m_2),u_1^*(m_2))$ such that $G_1^{III}(u_1)=m_1$ for any $m_1\in[B^{-1}_{I}(m_2), B^{-1}_{III}(m_2))$. Then fix any $m_2 \in (\ln{2},\frac{3}{4})$. Let $u_2^*(m_2)\in (0,1)$ denote the solution to $\frac{1+u_2}{2}(\frac{\ln{1+u_2}}{u_2^2}+\frac{u_2-1}{u_2(1+u_2)})$.  Then by Claim \ref{cl7} and Claim \ref{cl8}, there exists a strictly increasing function $F_2^{III}$ such that $u_2=F_2^{III}(u_1)\le u_1$ is the unique solution to $H_2^{III}(u_1,u_2)=m_2$ for any $u_1\in[u_1^{**}(m_2),1]$. In addition, $F_2^{III}\in [0, u_2^*(m_2)]$. By Claim \ref{cl7},  Claim \ref{cl13},  Claim \ref{cl8} and  Claim \ref{cl14},  continuous function\footnote{When $(u_1,u_2)=(1,0)$, let
$\frac{\partial H_2^{III}(u_1,u_2)}{\partial u_2 }=-\frac{1}{12}$ and $\frac{\partial H_2^{III}(u_1,u_2)}{\partial u_1 }=\frac{5}{24}$, then by Claim \ref{cl13} and Claim \ref{cl14}, $\frac{\partial H_2^{III}(u_1,u_2)}{\partial u_2 }$ and $\frac{\partial H_2^{III}(u_1,u_2)}{\partial u_1 }$ are continuous. } $\frac{\partial H_2^{III}(u_1,u_2)}{\partial u_2 }$ (and $\frac{\partial H_2^{III}(u_1,u_2)}{\partial u_1 }$) is strictly negative (and strictly positive) and bounded on the compact set $[u_1^{**}(m_2),1]\times [0,u_2^*(m_2)]$. Then by the (Global) Implicit Function Theorem, $F_1^{III}(u_1)$ is continuous on $[u_1^{**}(m_2),1]$.  Plugging $u_2=F_2^{III}(u_1)$ to $H_1^{III}(u_1,u_2)$, we obtain $G_2^{III}(u_1):=H_1^{III}(u_1,F_2^{III}(u_1))$. Given the continuity of $H_1^{III}$ and $F_2^{III}$, we see that $G_2^{III}$ is also continuous at any $u_1\in[u_1^{**}(m_2),1]$. Note that when $u_1=1$,  $G^{III}(u_1)=B^{-1}_{II}(m_2)$; when $u_1=u_1^{**}(m_2)$, $F_1^{III}(u_1)=0$ and therefore $G_1^{III}(u_1)=B^{-1}_{I}(m_2)$. Then by the Intermediate Value Theorem, there exists $u_1\in[u_1^{**}(m_2),1)$ such that $G_2^{III}(u_1)=m_1$ for any $m_1\in[B^{-1}_{I}(m_2), B^{-1}_{II}(m_2))$.
\subsection{Proof of Theorem \ref{t5}}
(i): \textbf{Asymmetric Maxmin Public Good Mechanism (III)} is a best response to \textbf{Asymmetric Worst-Case Joint Distribution (III)}. The proof is similar to the proof of (i) of Theorem \ref{t1}.\\
(ii): \textbf{Asymmetric Worst-Case Joint Distribution (III)} is a best response to \textbf{Asymmetric Maxmin Public Good Mechanism (III)}: we use the duality theory to show (ii). First note that by construction, all the  constraints in (P) holds. By the weighted virtual value representation, the value of (P) given Asymmetric Worst-Case Joint Distribution (III) and Asymmetric Maxmin Public Good Mechanism (III) is simply $Pr(1,1)\times (1+1)=\frac{u_1(1+u_2)}{u_1+1}$. Second, the constraints in (D) hold for all value profiles. To see this, note for any value profile $v=(v_1,v_2)$ outside the support of Asymmetric Worst-Case Joint Distribution (III), since $\lambda_1>0$ and $\lambda_2>0$, we have $$\lambda_1 v_1+\lambda_2 v_2 + \mu < \lambda_1 u_1+\lambda_2 \cdot 0 + \mu=0$$ For any value profile $v=(v_1,v_2)$ inside the support of Asymmetric Worst-Case Joint Distribution (III), the constraints (the complementary slackness)  hold given \eqref{eq104}.  In addition, the value of (D) given the constructed $\lambda_1,\lambda_2, \mu$ is $\lambda_1 m_1+\lambda_2 m_2 +\mu$, which, by some algebra, is equal to $\frac{u_1(1+u_2)}{u_1+1}$. Finally, by Lemma \ref{l4}, the solution to \eqref{eq6} and \eqref{eq7} exists. By the linear programming duality theory, (ii) holds and the revenue guarantee is $\frac{u_1(1+u_2)}{u_1+1}$.
\subsection{Proof of Lemma \ref{l5}}
Fix any $m_1\in [0,1]$. Let $w_1^*(m_1)\in [0,1]$ denote the solution to \eqref{eq8}. Note that $H^{IV}(w_1,w_2)$ is continuous and strict increasing w.r.t. $w_2$. Also note that $H^{IV}(w_1^*(m_1),0)=0$ and $H^{IV}(w_1^*(m_1),1)=B_{III}(m_1)$.  Then by the Intermediate Value Theorem, there exists (unique) $w_2\in[0,1]$ such that $H^{IV}(w_1^*(m_1),w_2)=m_2$ for any $m_2\in[0, B_{III}(m_1)]$.
\subsection{Proof of Theorem \ref{t6}}
(i): \textbf{Asymmetric Maxmin Public Good Mechanism (IV)} is a best response to \textbf{Asymmetric Worst-Case Joint Distribution (IV)}. The proof is similar to the proof of (i) of Theorem \ref{t1}.\\
(ii): \textbf{Asymmetric Worst-Case Joint Distribution (IV)} is a best response to \textbf{Asymmetric Maxmin Public Good Mechanism (IV)}: we use the duality theory to show (ii). First note that by construction, all the  constraints in (P) holds. By the weighted virtual value representation, the value of (P) given Asymmetric Worst-Case Joint Distribution (IV) and Asymmetric Maxmin Public Good Mechanism (IV) is simply $Pr(1,w_2)\times (1+w_2)=w_1$. Second, the constraints in (D) hold for all value profiles. To see this, note for any value profile $v=(v_1,v_2)$ in which $v_1<w_1$, since $\lambda_1=-\frac{1}{\ln{w_1}}>0$ and $\lambda_2=0$, we have $$\lambda_1 v_1+\lambda_2 v_2 + \mu < \lambda_1 w_1+\lambda_2 \cdot 0 + \mu=0$$ For any value profile $v=(v_1,v_2)$ in which $v_1< w_1$, the constraints (the complementary slackness)  hold given \eqref{eq105}.  Finally, the value of (D) given the constructed $\lambda_1,\lambda_2, \mu$ is $\lambda_1 m_1+\lambda_2 m_2 +\mu$, which, by some algebra, is equal to $w_1$. By the linear programming duality theory, (ii) holds and the revenue guarantee is $w_1$.

\section{Appendix C}
\subsection{Proof of Theorem \ref{t8}}
(i): \textbf{$N$-agent Symmetric  Maxmin Public Good Mechanism} is a best response to \textbf{$N$-agent Symmetric Worst-Case Joint Distribution}.  First we verify that $N$-agent Symmetric Worst-Case Joint Distribution is a legal joint distribution, i.e., its integral is 1 over $[0,1]^N$. Note it is symmetric, then it suffices to derive the marginal distribution for agent 1 and show its integral is 1 over [0,1]. When $v_1=1$, with some algebra,  $Pr(v_1=1)=\frac{r+N-1}{N}$; when $r\le v_1<1$, with some algebra,  $f(v_1)=\frac{1}{N^{N-1}}\sum_{j=0}^{N-2}{N-2 \choose j}(N-1-j)(r+N-1)^j(v_1-r)^{N-2-j}$. Therefore the integral is
\begin{equation*}
    \begin{split}
       \int_r^1f(v_1)dv_1+Pr(v_1=1)&= \frac{1}{N^{N-1}}\sum_{j=0}^{N-2}{N-2 \choose j}(r+N-1)^j(1-r)^{N-1-j}+\frac{r+N-1}{N}\\
       &=\frac{1-r}{N^{N-1}}\sum_{j=0}^{N-2}{N-2 \choose j}(r+N-1)^j(1-r)^{N-2-j}+\frac{r+N-1}{N}\\
       &= \frac{1-r}{N^{N-1}}(r+N-1+1-r)^{N-2}+\frac{r+N-1}{N}\\
        &=1
    \end{split}
\end{equation*}
Next we show the weighted virtual value is 0 for any  $v\in SR_N$ and $v\neq (\underbrace{1,\cdots, 1}_{N})$. We discuss two cases: a). $|A(v)|=1$. Without loss, we can assume $v_1\neq 1$ and $v_j=1$ for any $j\neq 1$. Then
\begin{equation*}
\begin{split}
    \Phi(v_1,\underbrace{1,\cdots, 1}_{N-1})&=(v_1+N-1)\pi^*(v_1,\underbrace{1,\cdots, 1}_{N-1})-\int_{v_1}^1\pi^*(x,\underbrace{1,\cdots, 1}_{N-1})dx-Pr^*(\underbrace{1,\cdots, 1}_{N})\\
  &= (v_1+N-1)\cdot \frac{(r+N-1)^N}{N^{N-1}(v_1+N-1)^2}-\int_{v_1}^1\frac{(r+N-1)^N}{N^{N-1}(x+N-1)^2}dx-\frac{(r+N-1)^N}{N^N}\\
  &=0
\end{split}
\end{equation*}
b). $|A(v)|>1$. 
\begin{equation*}
\begin{split}
    \Phi(v)&=(\sum_{i=1}^Nv_i)\pi^*(v)-\sum_{j\in A(v)}\int_{v_j}^1\pi^*(x,v_{-j})dx-\sum_{j\in A(v)}\pi^*(1,v_{-j})\\
  &= (\sum_{j\in A(v)}v_j+N-|A(v)|)\cdot \frac{|A(v)|!(r+N-1)^N}{N^{N-1}(\sum_{j\in A(v)}v_j+N-|A(v)|)^{|A(v)|+1}}\\&-\sum_{j\in A(v)}\int_{v_j}^1\frac{|A(v)|!(r+N-1)^N}{N^{N-1}(x+\sum_{i\in A(v),i\neq j }v_i+N-|A(v)|)^{|A(v)|+1}}dx\\&-\sum_{j\in A(v)}\frac{(|A(v)|-1)!(r+N-1)^N}{N^{N-1}(\sum_{i\in A(v), i\neq j}v_j+N+1-|A(v)|)^{|A(v)|}}\\
  &=0
\end{split}
\end{equation*}
Therefore,   $N$-agent Symmetric Worst-Case Joint Distribution exhibits the property that the weighted virtual value is positive only for the highest type $(\underbrace{1,\cdots, 1}_{N})$, zero for the other value profiles in the support and weakly negative for value profiles outside the support.  Then any feasible and monotone mechanism in which the public good is provided with some positive probability if and only if $\sum_{i=1}^Nv_i> r+N-1$ and the public good is provided with probability 1 when $v  = (\underbrace{1,\cdots, 1}_{N})$ is a best response for the 
principal. It is easy to see that $N$-agent  Maxmin Public Good Mechanism is such a mechanism.\\
(ii): \textbf{$N$-agent Symmetric Worst-Case Joint Distribution} is a best response to \textbf{$N$-agent Symmetric Maxmin Public Good Mechanism}. We use the duality theory to show (ii). First we will show that all the constraints in (P) holds. Note that $N$-agent Symmetric Worst-Case Joint Distribution is a legal joint distribution by the argument in (i). Also given the marginal distribution for agent 1, by some algebra, we have $\int_{r}^1v_1f(v_1)dv_1+Pr(v_1=1)\cdot 1=\frac{(r+N-1)(N^N-(r+N-1)^{N-1})}{(N-1)N^N}$. By the weighted virtual value representation, the value of P given $N$-agent Symmetric  Maxmin Public Good Mechanism and $N$-agent Symmetric Worst-Case Joint Distribution is simply $Pr^*(\underbrace{1,\cdots, 1}_{N})\times N=\frac{(r+N-1)^N}{N^{N-1}}$. Second, we will show that all the constraints in (D) hold for all value profiles. To see this, note for any value profile $v\in SR_N$, by some algebra, we have $(\sum_{i=1}^N{v_i})q(v)-\sum_{i=1}^N\int_{r+N-1-\sum_{j\neq i}v_j}^{v_i}q(s,v_{-i})ds=\frac{(N-1)(r+N-1)^{N-1}}{N^{N-1}-(r+N-1)^{N-1}}(\sum_{i=1}^N{v_i}-(r_2+N-1))$. Consider the following dual variables: $\lambda_i=\frac{(N-1)(r+N-1)^{N-1}}{N^{N-1}-(r+N-1)^{N-1}}$ for any $i$ and $\mu=-\frac{(N-1)(r+N-1)^{N}}{N^{N-1}-(r+N-1)^{N-1}}$. Then the complementary slackness holds by the above equation.  Since $\lambda_i>0$ for any $i$, we have $$\sum_{i=1}\lambda_i v_i+ \mu < \lambda_i (r+N-1) + \mu=0$$ for any value profile $v\notin SR_N$.  Finally, the value of (D) given the constructed $\lambda_i,\mu$ is, by some algebra,  $\sum_{i=1}^N\lambda_i m +\mu=\frac{(N-1)(r+N-1)^{N-1}}{N^{N-1}-(r+N-1)^{N-1}}(Nm-(r_2+N-1))=\frac{(r+N-1)^N}{N^{N-1}}$. By the linear programming duality theory, (ii) holds and the revenue guarantee is $\frac{(r+N-1)^N}{N^{N-1}}$.\\
\indent Finally, Let $H(r):=\frac{(r+N-1)(N^N-(r+N-1)^{N-1})}{(N-1)N^N}$. Note $H'(r)=\frac{N^{N-1}-(r+N-1)^{N-1}}{(N-1)N^{N-1}}\ge 0$ for $0\le r \le 1$, we have $m=H(r)\ge H(0)=1-\frac{(N-1)^{N-1}}{N^N}$.
\subsection{Proof of Theorem \ref{t7}}
To facilitate the analysis, we first consider a benchmark case in which there is only one agent whose expectation is known to be $m$. \cite{carrasco2018optimal} has shown that the optimal revenue guarantee $RG(m):=\max_{\tau\in[0,1]}\tau\cdot\frac{m-\tau}{1-\tau}$ and the maximizer $\tau^*:=1-\sqrt{1-m}$. In addition, $RG(m)$ is strictly increasing in $m$. \\
\indent Now we divide all deterministic, DSIC and EPIR public good mechanisms into the following four classes:\\
\textit{Class 1}: the provision boundary touches on the value profiles $(d_1,0)$ and $(0,d_2)$ for some $0\le d_1\le 1,  0\le d_2\le 1$.\\
\textit{Class 2}: the provision boundary touches on the value profiles $(d_1,0)$ and $(d_2,1)$ for some $0\le d_2\le  d_1\le 1$.\footnote{$d_2\le d_1$ is implied by the monotone property of the provision boundary (Observation \ref{ob1}).}\\
\textit{Class 3}: the provision boundary touches on the value profiles $(0,d_1)$ and $(1,d_2)$ for some $0\le d_2\le d_1 \le 1$.\\
\textit{Class 4}: the provision boundary touches on the value profiles $(d_1,1)$ and $(1,d_2)$ for some $0\le d_1\le 1,  0\le d_2\le 1$.\\
\indent The following lemmas establish a upper bound of revenue guarantee for each class of mechanisms respectively. 
\begin{lemma}\label{l6}
$RG(m_1)$ is an upper bound of the revenue guarantee for any mechanism in \textit{Class 1}.
\end{lemma}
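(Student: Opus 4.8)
The plan is to bound the revenue guarantee of an arbitrary \textit{Class 1} mechanism by constructing a distribution for Nature that is feasible (respects $\int v_1\,dF = m_1$, $\int v_2\,dF = m_2$, and is a probability measure) and under which the mechanism's expected revenue is at most $RG(m_1)$. By the strong duality established in Lemma \ref{l1} (or simply because $GR((q,t)) = \inf_\pi U((q,t),\pi)$ is bounded above by $U((q,t),\pi)$ for any single feasible $\pi$), exhibiting one such $\pi$ suffices. The key observation is that a \textit{Class 1} mechanism has a provision boundary passing through $(d_1,0)$ and $(0,d_2)$; by the monotonicity of the provision boundary (Observation \ref{ob1}) and the fact that $q$ is nondecreasing coordinatewise, the good is \emph{not} provided at any profile $(v_1,v_2)$ with $v_1 < d_1$ and $v_2 = 0$, and more generally the payment $t_1(v_1,0) + t_2(v_1,0)$ collected on the axis $v_2 = 0$ is exactly the single-agent payment one would collect from agent 1 facing provision rule $v_1 \mapsto q(v_1,0)$.

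The concrete construction I would use: let $\tau = 1 - \sqrt{1-m_1}$ be the Carrasco et al. optimal single-agent reserve for mean $m_1$, and have Nature place all mass on the segment $v_2 = 0$ — specifically, put the equal-revenue-type distribution on agent 1's value supported on $\{\tau\} \cup [\tau,1]$ achieving mean $m_1$ (point mass $\tau$ at $v_1 = \tau$ shifted appropriately, or more simply the two-point distribution $1-\sqrt{1-m_1}$ on $v_1 = \tau$ and the rest on $v_1 = 1$ if that matches $RG(m_1)$'s maximizer), while setting $v_2 = 0$ with probability one. This $\pi$ is feasible only if its $v_2$-marginal has mean $m_2$, which fails since it has mean $0$; so instead I would put a vanishing amount of mass far out, or — cleaner — note that since $m_2 \le m_1$ one can first reserve probability to handle agent 2's mean with negligible revenue contribution. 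The honest route: Nature mixes $(v_1, 0)$-profiles (carrying almost all the weight, distributed as the single-agent worst case for agent 1) with a single profile $(0,1)$ carrying mass $m_2$ only in the limit — but since agent 2's provision on $\{v_1 = 0\}$ is zero for a \textit{Class 1} mechanism (the boundary hits $(0,d_2)$ with $d_2$ possibly $1$), the profile $(0,1)$ contributes at most $t_2(0,1) \le q(0,1)\cdot 1 \le 1$, which does not obviously vanish. So I would instead spread agent 2's mean using profiles $(0, v_2)$ and absorb the worst case: the revenue from any profile $(0,v_2)$ is $v_2 q(0,v_2) - \int_0^{v_2} q(0,s)\,ds$, and by pushing all of agent 2's required mean onto the single point $v_2 = 1$ with mass $m_2$ and the remaining mass $1-m_2$ onto the single-agent-optimal distribution for agent 1 on the axis (scaled to have the right conditional mean so the overall $v_1$-mean is $m_1$), one checks the total revenue is $\le (1-m_2) RG\big(\tfrac{m_1}{1-m_2}\big) + m_2\cdot 0$; then monotonicity and concavity-type properties of $RG$ give $(1-m_2)RG(m_1/(1-m_2)) \le RG(m_1)$.

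The main obstacle is getting the feasibility bookkeeping exactly right — ensuring the constructed $\pi$ simultaneously matches both means while keeping agent 2's contribution controlled, and verifying the inequality $(1-m_2)RG(m_1/(1-m_2)) \le RG(m_1)$ (equivalently that $\mu \mapsto \mu\,RG(m_1/\mu)$ is nondecreasing on $[1-m_1,1]$ up to the relevant range, which follows from the explicit formula $RG(m) = (1-\sqrt{1-m})^2$ after a short computation). A cleaner alternative, which I would try first, is the pure dual approach: exhibit dual multipliers $\lambda_1 = \sqrt{1-m_1}\big/(1-\tau)$-type value, $\lambda_2 = 0$, $\mu$ chosen so that $\lambda_1 v_1 + \mu \le t_1(v) + t_2(v)$ for all $v$ — using that on \textit{Class 1} the total payment is bounded below by agent 1's single-good payment, which is itself $\ge \lambda_1 v_1 + \mu$ by the known single-agent dual — giving dual value $\lambda_1 m_1 + \mu = RG(m_1)$ directly. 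I expect this dual-certificate version to be the shortest and would present it as the primary argument.
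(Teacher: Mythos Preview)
Both routes you sketch have genuine gaps. For the primal construction: in a Class~1 mechanism the boundary runs from $(d_1,0)$ to $(0,d_2)$ inside $[0,d_1]\times[0,d_2]$, so for $v_2>d_2$ the good is provided regardless of $v_1$; in particular $q(0,1)=1$ and $t_2(0,1)=d_2$, not $0$. Your ``$+\,m_2\cdot 0$'' term is therefore $m_2 d_2$, and even with this correction the bound fails --- take $m_1=m_2=d_1=d_2=\tfrac12$: the conditional mean $m_1/(1-m_2)=1$ forces all axis mass at $v_1=1$ (revenue $d_1$), giving total $\tfrac12\cdot\tfrac12+\tfrac12\cdot\tfrac12=\tfrac12$, while $RG(\tfrac12)=(1-1/\sqrt2)^2\approx 0.086$. (Your construction also needs $m_1+m_2\le 1$ just to be written down.) For the dual route, the direction is wrong: (D) is a \emph{maximization} with $val(D)=GR((q,t))$, so exhibiting one feasible $(\lambda_1,0,\mu)$ with objective $RG(m_1)$ only certifies $GR((q,t))\ge RG(m_1)$, the opposite of the lemma. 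Separately, your feasibility premise ``total payment $\ge$ agent~1's single-good payment'' is false for Class~1: at $(v_1,1)$ with $v_1>d_1$ both thresholds collapse to $0$, so $t_1+t_2=0<d_1$.

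The paper's argument upper-bounds $val(D)$ by \emph{relaxing} the dual rather than exhibiting a feasible point: it keeps only the six constraints at the corners $(0,0),(1,0),(0,1),(1,1)$ and the axis touchpoints $(d_1,0),(0,d_2)$, the crucial one being $\lambda_1+\lambda_2+\mu\le t_1(1,1)+t_2(1,1)=0$ (both agents pay zero at $(1,1)$ in Class~1). A short case split on the signs of $(\lambda_1,\lambda_2)$ then reduces the relaxed LP to the single-agent inequality $d_1(m_1-d_1)/(1-d_1)\le RG(m_1)$. If you want a primal certificate instead, the correct support is a subset of these six points (where revenue at $(1,1)$, $(d_1,0)$, $(0,d_2)$ is zero), not the axis--plus--$(0,1)$ mixture you proposed.
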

\begin{proof}
\indent We propose a relaxation of (D) by omitting many constraints. Specifically, the only remaining constraints are for the four vertices (0,0), (1,0), (0,1) and (1,1) and the value profiles $(d_1,0)$ and $(0,d_2)$. Formally, we have the following relaxed problem  (D'): $$
\max_{\lambda_1,\lambda_2,\mu \in \mathcal{R}}\lambda_1m_1 +\lambda_2 m_2+\mu$$s.t. 
\begin{equation}\label{127}
    \mu \le 0
\end{equation}
\begin{equation}\label{128}
    \lambda_1 d_1 +\mu \le 0
\end{equation}
\begin{equation}\label{129}
 \lambda_2 d_2 +\mu \le 0
\end{equation}
\begin{equation}\label{130}
    \lambda_1 +\mu \le d_1
\end{equation}
\begin{equation} \label{131}
    \lambda_2 +\mu \le d_2
\end{equation}
\begin{equation} \label{132}
    \lambda_1 +\lambda_2 + \mu \le 0
\end{equation}
Note the value of (D') (denoted by $val(D')$) is weakly greater than the value of (D). Now we are trying to find a  upper bound of the value of (D') across $0\le d_1,d_2\le 1$.  We discuss four cases:\\
\textit{Case 1}: $\lambda_1\le 0, \lambda_2\le 0$. Note then by \eqref{127}, $val(D')\le 0$ for any $0\le d_1,d_2\le 1$.\\
\textit{Case 2}: $\lambda_1\ge 0, \lambda_2\ge 0$. Note  by \eqref{132}, we have $\lambda_1m_1+\lambda_2 m_2+\mu\le \lambda_1 +\lambda_2 + \mu \le 0$.
Thus, $val(D')\le 0$ for any $d_1,d_2$.\\
\textit{Case 3}: $\lambda_1\ge 0, \lambda_2\le 0$.   Now we are left with \eqref{128}, \eqref{130} and \eqref{132} as they imply the other three constraints. We  further ignore \eqref{132}, which will make the value of (D')  (weakly) greater.  Note at least one of \eqref{128} and  \eqref{130}  is binding, otherwise we can increase the value of (D') by increasing $\lambda_1$ by a small amount. We thus discuss two situations:\\
$(a): \lambda_1 d_1 +\mu = 0$.\\
\indent We plug $\mu=-\lambda_1 d_1$ into \eqref{130}, and we obtain $\lambda_1\le \frac{d_1}{1-d_1}$. 
Then we have $\lambda_1m_1+\lambda_2 m_2 +\mu\le \lambda_1m_1+\mu=\lambda_1(m_1-d_1)\le \max\{0,\frac{d_1(m_1-d_1)}{1-d_1}\}\le RG(m_1)$. The first inequality is implied by $\lambda_2\le 0$ and the second inequality is implied by $0\le \lambda_1 \le \frac{d_1}{1-d_1}$.\\
$(b): \lambda_1 + \mu = d_1$.\\
\indent We plug $\mu= d_1 -\lambda_1$ into \eqref{128}, and we obtain $\lambda_1\ge \frac{d_1}{1-d_1}$. 
Then we have $\lambda_1m_1+\lambda_2 m_2 +\mu\le \lambda_1m_1+\mu=\lambda_1(m_1-1)+d_1\le \frac{d_1(m_1-d_1)}{1-d_1} \le RG(m_1)$. The first inequality is implied by $\lambda_2\le 0$ and the second inequality is implied by $ \lambda_1 \ge \frac{d_1}{1-d_1}$.\\
\textit{Case 4}: $\lambda_1\le 0, \lambda_2\ge 0$. Similar to \textit{Case 3}, we obtain $\lambda_1m_1+\lambda_2 m_2 +\mu\le RG(m_2)\le RG(m_1)$ where the second inequality is implied by  the monotonicity of $RG(\cdot)$.

\end{proof}
\begin{lemma}\label{l7}
$RG(m_1)$ is an upper bound of the revenue guarantee for any mechanism in \textit{Class 3}.
\end{lemma}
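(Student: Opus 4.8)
The plan is to follow the template of the proof of Lemma \ref{l6}. By strong duality (established in the proof of Lemma \ref{l1}) the revenue guarantee of a deterministic DSIC and EPIR mechanism equals the value of its dual program (D), so it suffices to produce, for every mechanism in \textit{Class 3}, a relaxation of (D) whose value is at most $RG(m_1)$. Fix such a mechanism; its provision boundary $\mathcal{B}$ touches $(0,d_1)$ and $(1,d_2)$ with $0\le d_2\le d_1\le 1$, and by Proposition \ref{p1} its payments are determined by the provision rule $q$. I would relax (D) by keeping only the dual constraints at the six profiles $(0,0),(1,0),(0,1),(1,1),(0,d_1),(1,d_2)$.

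First I would evaluate $\sum_i t_i$ at these six profiles using Proposition \ref{p1} and the monotone (staircase) structure of $\mathcal B$ (Observation \ref{ob1}). The good is not provided at $(0,0),(1,0),(0,d_1),(1,d_2)$, so $\sum_i t_i=0$ there. At $(0,1)$ the good is provided, $t_1(0,1)=0$ and $t_2(0,1)=1-\int_0^1 q(0,s)\,ds=d_1$. At $(1,1)$ the good is provided, and since the good is provided for every $v_1$ once $v_2=1$ we get $t_1(1,1)=1-\int_0^1 q(s,1)\,ds=0$, while $t_2(1,1)=1-\int_0^1 q(1,s)\,ds=d_2$. Hence the relaxed dual (D$'$) is $\max(\lambda_1 m_1+\lambda_2 m_2+\mu)$ subject to $\mu\le0$, $\lambda_1+\mu\le0$, $d_1\lambda_2+\mu\le0$, $\lambda_1+d_2\lambda_2+\mu\le0$, $\lambda_2+\mu\le d_1$, $\lambda_1+\lambda_2+\mu\le d_2$. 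These depend only on $d_1,d_2$ (not on the shape of $\mathcal B$ in between), and $\mathrm{val}(\mathrm{D}')\ge\mathrm{val}(\mathrm{D})$.

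Next I would bound $\mathrm{val}(\mathrm{D}')$ by cases on the signs of $\lambda_2$ and $\lambda_1$. If $\lambda_2\le0$ then $\lambda_2 m_2\le0$, and using $\mu\le0$ when $\lambda_1\le0$, or $\lambda_1+\mu\le0$ (so $\lambda_1 m_1+\mu\le\lambda_1(m_1-1)\le0$) when $\lambda_1\ge0$, the objective is $\le0\le RG(m_1)$. If $\lambda_2\ge0$ and $\lambda_1\le0$, then $\lambda_1 m_1\le0$ and the constraints $d_1\lambda_2+\mu\le0$, $\lambda_2+\mu\le d_1$ are exactly the one-agent dual constraints for agent $2$ at reserve $d_1$; by the one-agent computation recalled at the start of the proof of Theorem \ref{t7} (carried out explicitly in Case 3 of Lemma \ref{l6}) this forces $\lambda_2 m_2+\mu\le RG(m_2)$, so the objective is $\le RG(m_2)\le RG(m_1)$ by monotonicity of $RG$ and $m_2\le m_1$. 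The case $\lambda_1\ge0,\lambda_2\ge0$ is the one that genuinely differs from Lemma \ref{l6}: I would subtract the $(1,d_2)$-constraint from the $(1,1)$-constraint to obtain $\lambda_2(1-d_2)\le d_2$, i.e. $\lambda_2\le d_2/(1-d_2)$, and then use $\lambda_1+\mu\le-d_2\lambda_2$ (the $(1,d_2)$-constraint) and $\lambda_1(1-m_1)\ge0$ to get $\lambda_1 m_1+\lambda_2 m_2+\mu=(\lambda_1+\mu)-\lambda_1(1-m_1)+\lambda_2 m_2\le\lambda_2(m_2-d_2)\le\max\big(0,\tfrac{d_2(m_2-d_2)}{1-d_2}\big)\le RG(m_2)\le RG(m_1)$.

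Finally I would dispose of the degenerate endpoints: when $d_1=1$ the constraints at $(0,1)$ and $(0,d_1)$ both read $\lambda_2+\mu\le0$ and every bound above collapses to $0$; when $d_1=d_2=1$ the provision region is empty and the guarantee is $0$. The main obstacle, and the only real departure from Lemma \ref{l6}, is the $\lambda_1,\lambda_2\ge0$ case: in \textit{Class 3} the profile $(1,1)$ lies on agent $2$'s provision threshold rather than strictly inside the provision region, so its dual constraint is $\lambda_1+\lambda_2+\mu\le d_2$ instead of $\le0$ and cannot by itself eliminate this case as in Lemma \ref{l6}; the remedy is to pair it with the $(1,d_2)$-constraint to recover a one-agent-style bound for agent $2$ at reserve $d_2$, and then to invoke $m_2\le m_1$ and the monotonicity of $RG$ to pass from $RG(m_2)$ to $RG(m_1)$.
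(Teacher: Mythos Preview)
Your proof is correct, but you and the paper are actually proving different things. The paper's \emph{statement} of Lemma~\ref{l7} says ``Class~3'', but the paper's \emph{proof} of Lemma~\ref{l7} treats Class~2 (boundary touching $(d_1,0)$ and $(d_2,1)$): look at the six constraints the paper writes down for (D$''$). The Class~3 bound is then obtained in the paper as Lemma~\ref{l8}, by one line of symmetry with Lemma~\ref{l7}. You took the statement at face value and wrote out the Class~3 argument directly; your six constraints are exactly the ones obtained from the paper's (D$''$) by swapping the roles of the two agents.

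Where your argument genuinely differs is in the case $\lambda_1\ge0,\lambda_2\ge0$. The paper (for Class~2) splits into three subcases according to which of the three surviving constraints binds, and inside each subcase does further case analysis on the range of $\lambda_1$. You bypass all of that: subtracting the $(1,d_2)$-constraint from the $(1,1)$-constraint gives $\lambda_2\le d_2/(1-d_2)$, and combining $\lambda_1+\mu\le -d_2\lambda_2$ with $\lambda_1(1-m_1)\ge0$ gives the objective $\le\lambda_2(m_2-d_2)\le RG(m_2)$ in two lines. This is cleaner than the paper's route and yields the same $RG(m_2)$ bound that the paper ultimately gets for Class~3; you then pass to $RG(m_1)$ by monotonicity, just as the paper does in Lemma~\ref{l8}. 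One small overstatement: when $d_1=1$ not ``every bound above collapses to $0$''---the $\lambda_1,\lambda_2\ge0$ case still gives $RG(m_2)$, not $0$---but since your argument in that case never uses the $(0,1)$-constraint, the conclusion is unaffected.
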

\begin{proof}
\indent We propose a relaxation of (D) by omitting many constraints. Specifically, the only remaining constraints are for the four vertices (0,0), (1,0), (0,1) and (1,1) and the value profiles $(d_1,0)$ and $(d_2,1)$. Formally, we have the following relaxed problem  (D''): $$
\max_{\lambda_1,\lambda_2,\mu \in \mathcal{R}}\lambda_1m_1 +\lambda_2 m_2+\mu$$s.t. 
\begin{equation}\label{133}
    \mu \le 0
\end{equation}
\begin{equation}\label{134}
    \lambda_1 d_1 +\mu \le 0
\end{equation}
\begin{equation}\label{135}
 \lambda_1 d_2 +\lambda_2 +\mu \le 0
\end{equation}
\begin{equation}\label{136}
    \lambda_1 +\mu \le d_1
\end{equation}
\begin{equation} \label{137}
    \lambda_2 +\mu \le 0
\end{equation}
\begin{equation} \label{138}
    \lambda_1 +\lambda_2 + \mu \le d_2
\end{equation}
Note the value of (D'') (denoted by $val(D'')$) is weakly greater than the value of (D). Now we are trying to find a  upper bound of the value of (D'') across $0\le d_1,d_2\le 1$.  We discuss four cases:\\
\textit{Case 1'}: $\lambda_1\le 0, \lambda_2\le 0$. Note then by \eqref{133}, $val(D'')\le 0$ for any $0\le d_1,d_2\le 1$.\\
\textit{Case 2'}: $\lambda_1\ge 0, \lambda_2\le 0$. We further ignore \eqref{133}, \eqref{135}, \eqref{137} and \eqref{138}, which will make the value of D'' (weakly) greater. Then by similar argument with \textit{Case 3} in the proof of Lemma \ref{l6}, we obtain $\lambda_1m_1+\lambda_2m_2+\mu\le RG(m_1)$. \\
\textit{Case 3'}: $\lambda_1\le 0, \lambda_2\ge 0$. Then $\lambda_1m_1+\lambda_2m_2+\mu\le \lambda_2m_2+\mu \le 0$. The first inequality is implied by $\lambda_1\le 0$ and the second inequality is implied by \eqref{131}.\\
\textit{Case 4'}: $\lambda_1\ge 0, \lambda_2\ge 0$.   Now we are left with \eqref{134}, \eqref{135} and \eqref{138} as they imply the other three constraints.  Note at least one of \eqref{134}, \eqref{135} and \eqref{138} is binding, otherwise we can increase the value of (D'') by increasing $\lambda_1$ by a small amount. We thus discuss three situations:\\
$(a'): \lambda_1 d_1 +\mu = 0$.\\
\indent We plug $\mu=-\lambda_1 d_1$ into \eqref{135} and \eqref{138}, and we obtain
\begin{equation}\label{139}
  \lambda_2\le \lambda_1(d_1-d_2)
\end{equation} 
\begin{equation}\label{140}
    \lambda_2 \le d_2-\lambda_1(1-d_1) 
\end{equation} 
Compare the RHS of \eqref{139} and \eqref{140}. (i) When $\lambda_1\le \frac{d_2}{1-d_2}$, \eqref{139} is binding. Then $\lambda_1m_1+\lambda_2m_2+\mu=\lambda_1(m_1-d_1)+\lambda_2m_2\le \lambda_1(m_1-d_1(1-m_2)-d_2m_2)\le \max\{0, \frac{d_2}{1-d_2}(m_1-d_1(1-m_2)-d_2m_2)\}\le \max\{0, \frac{d_2}{1-d_2}(m_1-d_2) \}\le RG(m_1)$. The first inequality is implied by \eqref{139} and the third inequality is implied by $d_1\ge d_2$. (ii) When $\lambda_1\ge \frac{d_2}{1-d_2}$, \eqref{140} is binding. Also \eqref{140} and $\lambda_2\ge 0$ imply that $\lambda_1\le \frac{d_2}{1-d_1}$. Then   $\lambda_1m_1+\lambda_2m_2+\mu=\lambda_1(m_1-d_1)+\lambda_2m_2\le\lambda_1(m_1-d_1-(1-d_1)m_2)+d_2m_2\le \max\{\frac{d_2}{1-d_1}(m_1-d_1-(1-d_1)m_2)+d_2m_2, \frac{d_2}{1-d_2}(m_1-d_1-(1-d_1)m_2)+d_2m_2\}\le \max\{\frac{d_1}{1-d_1}(m_1-d_1), \frac{d_2}{1-d_2}(m_1-d_2) \}\le RG(m_1)$. The first inequality is implied by \eqref{140}, the second inequality is implied by $\frac{d_2}{1-d_2}\le\lambda_1\le \frac{d_2}{1-d_1}$, and the third inequality is implied by $d_1\ge d_2$.\\
$(b'): \lambda_1d_2+\lambda_2 + \mu = 0$.\\
\indent We plug $\mu= -\lambda_1d_2-\lambda_2$ into \eqref{134} and \eqref{138}, and we obtain 
\begin{equation}\label{141}
    \lambda_2\ge \lambda_1(d_1-d_2)
\end{equation}
\begin{equation}\label{142}
    \lambda_1\le \frac{d_2}{1-d_2}
\end{equation}
Then we have $\lambda_1m_1+\lambda_2 m_2 +\mu =\lambda_1m_1+\lambda_2m_2-\lambda_1d_2-\lambda_2\le \lambda_1(m_1-d_2-(d_1-d_2)(1-m_2)) \le RG(m_1)$. The first inequality is implied by \eqref{141} and the second inequality holds by \eqref{142} and the same argument as in (i) of $(a')$.\\
$(c'): \lambda_1+\lambda_2 + \mu = d_2$.\\
\indent We plug $\mu= d_2-\lambda_1-\lambda_2$ into \eqref{134} and \eqref{135}, and we obtain 
\begin{equation}\label{143}
    \lambda_2\ge \lambda_1(d_1-1)+d_2
\end{equation}
\begin{equation}\label{144}
    \lambda_1\ge \frac{d_2}{1-d_2}
\end{equation}
(i') When $\lambda_1 >\frac{d_2}{1-d_1}$, \eqref{143} is not binding.  Then we have $\lambda_1m_1+\lambda_2 m_2 +\mu =\lambda_1m_1+\lambda_2m_2 + d_2-\lambda_1-\lambda_2\le \lambda_1m_1 -\lambda_1 +d_2\le \max\{0, \frac{d_2}{1-d_1}(m_1-d_1)\}\le \max\{0, \frac{d_1}{1-d_1}(m_1-d_1)\} \le RG(m_1)$. The first inequality is implied by $\lambda_2 \ge 0$ and the second inequality is implied by \eqref{144} and the third inequality is implied by $d_1\ge d_2$. (ii') When $\lambda_1 \le \frac{d_2}{1-d_1}$, \eqref{143} is  binding. Then we have $\lambda_1m_1+\lambda_2 m_2 +\mu =\lambda_1m_1+\lambda_2m_2 + d_2-\lambda_1-\lambda_2\le \lambda_1(m_1-d_1-(1-d_1)m_2) +d_2m_2  \le RG(m_1)$. The first inequality is implied by \eqref{143} and the second inequality is implied by $\frac{d_2}{1-d_2}\le \lambda_1\le \frac{d_2}{1-d_1}$ and the same argument as in (ii) of $(a')$.
\end{proof}
\begin{lemma}\label{l8}
$RG(m_1)$ is an upper bound of the revenue guarantee for any mechanism in \textit{Class 3}.
\end{lemma}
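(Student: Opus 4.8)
\textbf{Proof proposal for Lemma \ref{l8}.}

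The plan is to observe that Class 3 is the mirror image of Class 2 (the class treated in Lemma \ref{l7}) under the relabeling of the two agents, and to import that estimate together with the monotonicity of the one-agent benchmark $RG(\cdot)$. Concretely, I would start from strong duality (Lemma \ref{l1}), so that the revenue guarantee of any fixed deterministic DSIC, EPIR mechanism equals the value of the dual program (D). I then relax (D) by discarding every feasibility constraint except the six at the value profiles $(0,0)$, $(1,0)$, $(0,1)$, $(1,1)$, $(0,d_1)$, and $(1,d_2)$ — the last two being the boundary-defining profiles for a Class 3 mechanism. Call the resulting finite-dimensional linear program (D'''); since it has fewer constraints, $val(D''')\ge val(D)$, so it suffices to bound $val(D''')$ from above by $RG(m_1)$ uniformly in $0\le d_2\le d_1\le 1$.

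The key step is to notice that (D''') is literally (D'') from the proof of Lemma \ref{l7} after the substitution $\lambda_1\leftrightarrow\lambda_2$, $m_1\leftrightarrow m_2$: the four vertex constraints \eqref{133}, \eqref{136}, \eqref{137}, \eqref{138} are symmetric in the two agents, while the constraints \eqref{134}, \eqref{135} at $(d_1,0)$, $(d_2,1)$ become the constraints at $(0,d_1)$, $(1,d_2)$ under this swap. Hence the same four-case analysis on the signs of $\lambda_1$ and $\lambda_2$ (with the subcases $(a'), (b'), (c')$ according to which of the three effective constraints binds) applies verbatim, and it yields $val(D''')\le RG(m_2)$. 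Finally, since $RG(\cdot)$ is strictly increasing (this is recalled in the benchmark at the start of the proof of Theorem \ref{t7}, via Carrasco et al.'s closed form) and $m_1\ge m_2$ by assumption, $RG(m_2)\le RG(m_1)$, which gives the claimed upper bound.

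I expect no real obstacle here: the only thing requiring care is verifying that the six retained profiles in the relaxation are exactly the images, under the agent swap, of the six retained in Lemma \ref{l7}, so that the case analysis transcribes with no new inequality to check; once that bookkeeping is done, the conclusion is immediate. (If one prefers, the whole case analysis can simply be rewritten directly in terms of $(0,d_1)$, $(1,d_2)$ and $m_2$, reaching $RG(m_2)$, and then bounded by $RG(m_1)$ — but invoking the symmetry with Lemma \ref{l7} is cleaner and avoids repetition.)
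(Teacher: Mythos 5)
Your proposal is correct and takes essentially the same route as the paper: the paper's proof is a one-line invocation of the Lemma~\ref{l7} argument by symmetry, yielding $RG(m_2)$, followed by the monotonicity of $RG(\cdot)$ together with $m_1\ge m_2$, which is exactly what you spell out. You also correctly identify that the constraints at $(0,d_1)$ and $(1,d_2)$ are the images of the constraints at $(d_1,0)$ and $(d_2,1)$ in (D'') under the swap $\lambda_1\leftrightarrow\lambda_2$ (and $m_1\leftrightarrow m_2$ in the objective), and you implicitly flag the paper's typo that labels the class in Lemma~\ref{l7} as Class 3 when its proof clearly treats Class 2.
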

\begin{proof}
By similar argument with the proof of Lemma \ref{l7},  $RG(m_2)$ is an upper bound of the revenue guarantee for any mechanism in \textit{Class 3}. Then Lemma \ref{l8} holds by $m_1\ge m_2$ and the monotonicity of $RG(\cdot)$.
\end{proof}
\begin{lemma}\label{l9}
When $m_2\ge 2(\sqrt{2}-1)$, $2(1-\sqrt{\frac{1-m_1}{2}}-\sqrt{\frac{1-m_2}{2}})^2$ is an upper bound of the revenue guarantee of any mechanism in \textit{Class 4}; otherwise $RG(m_1)$  is an upper bound of the revenue guarantee of any mechanism in \textit{Class 4}.
\end{lemma}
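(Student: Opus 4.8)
The plan is to follow the template of Lemmas~\ref{l6}--\ref{l8}. By strong duality the revenue guarantee of a deterministic, DSIC and EPIR mechanism equals the value of its dual program (D), so it suffices to bound, uniformly over $d_1,d_2\in[0,1]$, a relaxation of (D) obtained by keeping only the dual constraints at the six value profiles $(0,0)$, $(1,0)$, $(0,1)$, $(1,1)$, $(d_1,1)$, $(1,d_2)$. First I would record the payment sums there: using Proposition~\ref{p1} and the monotonicity of the provision boundary (Observation~\ref{ob1}), a \textit{Class 4} mechanism satisfies $q(v_1,1)=\mathbf 1\{v_1>d_1\}$, $q(1,v_2)=\mathbf 1\{v_2>d_2\}$ and $q(\cdot,0)\equiv 0$, whence $\sum_i t_i(v)=0$ at $(0,0),(1,0),(0,1),(d_1,1),(1,d_2)$ and $\sum_i t_i(1,1)=d_1+d_2$. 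The relaxation is therefore the finite linear program $\max\ \lambda_1 m_1+\lambda_2 m_2+\mu$ subject to $\mu\le 0$, $\lambda_1+\mu\le 0$, $\lambda_2+\mu\le 0$, $\lambda_1+\lambda_2+\mu\le d_1+d_2$, $\lambda_1 d_1+\lambda_2+\mu\le 0$, $\lambda_1+\lambda_2 d_2+\mu\le 0$; its value weakly exceeds the revenue guarantee, so any upper bound on it that is uniform in $d_1,d_2$ does the job.

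Next I would reduce to $\lambda_1,\lambda_2\ge 0$. If $\lambda_1,\lambda_2\le 0$ the objective is $\le\mu\le 0$; if $\lambda_1\ge 0\ge\lambda_2$ then $\lambda_2 m_2\le 0$ and, from $\lambda_1+\mu\le 0$ and $m_1\le 1$, also $\lambda_1 m_1+\mu\le\lambda_1(m_1-1)\le 0$; the case $\lambda_1\le 0\le\lambda_2$ is symmetric. As both claimed bounds are nonnegative, only $\lambda_1,\lambda_2\ge 0$ matters, where the optimum is at a vertex. A vertex enumeration then shows that, for each $(d_1,d_2)$, any vertex with positive objective is the one at which the constraints coming from $(1,1)$, $(d_1,1)$, $(1,d_2)$ are all tight, all other vertices having objective $\le 0$. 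Solving that $3\times 3$ system, and writing $a:=1-d_1$, $b:=1-d_2$, yields $\lambda_1=\tfrac{2-a-b}{a}$, $\lambda_2=\tfrac{2-a-b}{b}$, $\mu=(2-a-b)\bigl(1-\tfrac1a-\tfrac1b\bigr)$, which is automatically feasible for the remaining constraints (since $\lambda_1+\mu=(2-a-b)\tfrac{b-1}{b}\le 0$ and $\lambda_2+\mu=(2-a-b)\tfrac{a-1}{a}\le 0$), and has objective value
\[
g(a,b):=(2-a-b)\Bigl(1-\tfrac{1-m_1}{a}-\tfrac{1-m_2}{b}\Bigr).
\]
Handling the degenerate limits $d_1\to1$, $d_2\to1$ directly (again $\le 0$), I conclude that the value of the relaxation is at most $\max\bigl\{0,\ \sup_{(a,b)\in(0,1]^2}g(a,b)\bigr\}$, so the lemma reduces to maximizing $g$ over $(0,1]^2$, with the edges $a=1$, $b=1$ (i.e.\ $d_1=0$, $d_2=0$) included.

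Finally I would carry out that maximization. If $m_1+m_2\le 1$ then $g\le 0$ on $(0,1]^2$ and there is nothing to do. Otherwise, the only interior critical point of $g$ on $(0,\infty)^2$ is $a^\ast=\sqrt{2(1-m_1)}$, $b^\ast=\sqrt{2(1-m_2)}$, and a short computation gives $g(a^\ast,b^\ast)=2\bigl(1-\sqrt{\tfrac{1-m_1}{2}}-\sqrt{\tfrac{1-m_2}{2}}\bigr)^2$; this point lies in $(0,1]^2$ exactly when $m_1,m_2\ge\tfrac12$. If $m_2\ge 2(\sqrt2-1)$ then $m_1\ge m_2>\tfrac12$, the critical point is admissible and dominates the edge values, so $\sup g$ equals precisely the bound claimed in that case. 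If $\tfrac12\le m_2<2(\sqrt2-1)$, the critical point is still admissible and, recalling $RG(m_1)=(1-\sqrt{1-m_1})^2$, the inequality $g(a^\ast,b^\ast)\le RG(m_1)$ follows by taking square roots — legitimate because $m_1\ge m_2$ gives $\sqrt{(1-m_1)/2}\le\sqrt{(1-m_2)/2}<\tfrac12$, so $1-\sqrt{(1-m_1)/2}-\sqrt{(1-m_2)/2}>0$ — after which it collapses to $\sqrt{(1-m_2)/2}\ge 1-\tfrac1{\sqrt2}$, i.e.\ $m_2\le 2(\sqrt2-1)$, which holds, with equality at the threshold. If $m_2<\tfrac12$, then $b^\ast>1$, so $\sup g$ is attained on $\{a=1\}\cup\{b=1\}$, where $g(1,\cdot)$ and $g(\cdot,1)$ are one-variable problems with maxima $\bigl(\sqrt{m_1}-\sqrt{1-m_2}\bigr)_{+}^{2}$ and $\bigl(\sqrt{m_2}-\sqrt{1-m_1}\bigr)_{+}^{2}$, both easily checked to be $\le(1-\sqrt{1-m_1})^2=RG(m_1)$ (they are nonzero only when $m_1+m_2\ge 1$, and then the comparison uses $m_2\le m_1$ together with $\sqrt{m_1}+\sqrt{1-m_1}\le\sqrt2\le 1+\sqrt{1-m_2}$). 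Assembling the cases proves the lemma. I expect the maximization of $g$ — and in particular pinning the regime switch to $m_2=2(\sqrt2-1)$ via the square-root inequality above — to be the main obstacle; the remainder is routine linear-programming duality, single-variable calculus, and bookkeeping of which vertex of the relaxation is optimal.
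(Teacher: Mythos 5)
Your proof follows the same route as the paper's: the same six-point relaxation of the dual, the same sign reduction to $\lambda_1,\lambda_2\ge 0$, the same conclusion that the relaxed value is bounded by $\max\{0,\ (1-\tfrac{1-m_1}{1-d_1}-\tfrac{1-m_2}{1-d_2})(d_1+d_2)\}$ (your $g(a,b)$ after the change of variables $a=1-d_1$, $b=1-d_2$), and the same maximization locating the optimum at $d_1^*=1-\sqrt{2(1-m_1)}$, $d_2^*=1-\sqrt{2(1-m_2)}$ with the regime split at $m_2=2(\sqrt2-1)$. The only stylistic differences are that you enumerate vertices rather than branching on which of the three essential constraints is tight, and you find the unconstrained critical point of $g$ directly rather than doing the paper's nested one-variable optimization — worth noting that the paper's nested approach makes "the critical point dominates the edges" fully explicit, whereas you assert it, so if you want a self-contained write-up you should add a line (e.g.\ that $g$ tends to $-\infty$ near $a=0$ or $b=0$ and has a unique interior critical point, so any boundary comparison reduces to $a=1$ or $b=1$, which you then treat).
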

\begin{proof}
\indent We propose a relaxation of (D) by omitting many constraints. Specifically, the only remaining constraints are for the four vertices (0,0), (1,0), (0,1) and (1,1) and the value profiles $(d_1,1)$ and $(1,d_2)$. Formally, we have the following relaxed problem  (D3): $$
\max_{\lambda_1,\lambda_2,\mu \in \mathcal{R}}\lambda_1m_1 +\lambda_2 m_2+\mu$$s.t. 
\begin{equation}\label{145}
    \mu \le 0
\end{equation}
\begin{equation}\label{146}
    \lambda_1 d_1 +\lambda_2+\mu \le 0
\end{equation}
\begin{equation}\label{147}
 \lambda_1  +\lambda_2d_2 +\mu \le 0
\end{equation}
\begin{equation}\label{148}
    \lambda_1 +\mu \le 0
\end{equation}
\begin{equation} \label{149}
    \lambda_2 +\mu \le 0
\end{equation}
\begin{equation} \label{150}
    \lambda_1 +\lambda_2 + \mu \le d_1+d_2
\end{equation}
Note the value of (D3) (denoted by $val(D3)$) is weakly greater than the value of (D). Now we are trying to find a  upper bound of the value of (D3) across $0\le d_1,d_2\le 1$.  We discuss four cases:\\
\textit{Case 1''}: $\lambda_1\le 0, \lambda_2\le 0$. Note then by \eqref{145}, $val(D3)\le 0$ for any $0\le d_1,d_2\le 1$.\\
\textit{Case 2''}: $\lambda_1\le 0, \lambda_2\ge 0$. Then $\lambda_1m_1+\lambda_2m_2+\mu\le \lambda_2m_2+\mu\le 0$ where the second inequality is implied by \eqref{149}.\\
\textit{Case 3''}: $\lambda_1\ge 0, \lambda_2\le 0$. Then $\lambda_1m_1+\lambda_2m_2+\mu\le \lambda_1m_1+\mu\le 0$ where the second inequality is implied by \eqref{148}.\\
\textit{Case 4''}: $\lambda_1\ge 0, \lambda_2\ge 0$.  Now we are left with \eqref{146}, \eqref{147} and \eqref{150} as they imply the other three constraints.  Note at least one of \eqref{146}, \eqref{147} and \eqref{150}  is binding, otherwise we can increase the value of (D3) by increasing $\lambda_1$ by a small amount. Also note that it is without loss to restrict attention to $d_1\neq 1$ and $d_2\neq 1$, because \eqref{146} or \eqref{147} would imply $\lambda_1m_1+\lambda_2m_2+\mu\le 0$ otherwise.   We thus discuss three situations:\\
$(a''): \lambda_1  +\lambda_2d_2+\mu = 0$.\\
\indent We plug $\mu=-\lambda_1-\lambda_2d_2$ into \eqref{146} and \eqref{150}, and we obtain 
\begin{equation}\label{151}
    \lambda_2 \le \frac{d_1+d_2}{1-d_2}
\end{equation}
\begin{equation}\label{152}
    \lambda_1 \ge \frac{1-d_2}{1-d_1}\lambda_2
\end{equation}
Then $\lambda_1m_1+\lambda_2m_2+\mu=\lambda_1(m_1-1)+\lambda_2(m_2-d_2)\le \lambda_2(\frac{1-d_2}{1-d_1}(m_1-1)+m_2-d_2)\le \max\{0,(1-\frac{1-m_1}{1-d_1}-\frac{1-m_2}{1-d_2})(d_1+d_2)\}$. The first inequality is implied by \eqref{152} and the second inequality is implied by \eqref{151}.\\
$(b''): \lambda_1d_1  +\lambda_2+\mu = 0$.\\
\indent By similar argument with $(a'')$, we can show $\lambda_1m_1+\lambda_2m_2+\mu \le \max\{0,(1-\frac{1-m_1}{1-d_1}-\frac{1-m_2}{1-d_2})(d_1+d_2)\}$.\\
$(c''): \lambda_1  +\lambda_2+\mu = d_1+d_2$.\\
\indent We plug $\mu=d_1+d_2-\lambda_1-\lambda_2$ into \eqref{146} and \eqref{147}, and we obtain 
\begin{equation}\label{153}
    \lambda_1 \ge \frac{d_1+d_2}{1-d_1}
\end{equation}
\begin{equation}\label{154}
     \lambda_2 \ge \frac{d_1+d_2}{1-d_2}
\end{equation}
Then $\lambda_1m_1+\lambda_2m_2+\mu=\lambda_1(m_1-1)+\lambda_2(m_2-1)+d_1+d_2\le (1-\frac{1-m_1}{1-d_1}-\frac{1-m_2}{1-d_2})(d_1+d_2)$. The  inequality is implied by \eqref{153} and \eqref{154}.\\
\indent Let $K(d_1,d_2):=(1-\frac{1-m_1}{1-d_1}-\frac{1-m_2}{1-d_2})(d_1+d_2)$. We are now solving for $\max_{0\le d_1<1, 0\le d_2< 1}K(d_1,d_2)$. First it is without loss to assume $m_1\ge d_1$, because otherwise $1-\frac{1-m_1}{1-d_1}-\frac{1-m_2}{1-d_2}\le 0$. Now fix any $0\le d_1\le m_1$. Taking first order derivative with respect to $d_2$, we obtain that 
\begin{equation}
    \frac{\partial K(d_1,d_2)}{\partial d_2}=
    \frac{m_1-d_1}{(1-d_1)^3}[(1-d_2)^2-\frac{(1-m_2)(1+d_1)(1-d_1)}{m_1-d_1}]
\end{equation}
Let $G(d_1):= \frac{(1-m_2)(1+d_1)(1-d_1)}{m_1-d_1}$. Note $G'(d_1)=\frac{(d_1-m_1)^2+1-m_1^2}{(m_1-d_1)^2}\ge 0$. (i) If $G(d_1)\ge 1$, then $\frac{\partial K(d_1,d_2)}{\partial d_2}\le 0$ for any $d_2$. Then $K(d_1,d_2)\le K(d_1,0)=d_1(m_2-\frac{1-m_1}{1-d_1})\le d_1(1-\frac{1-m_1}{1-d_1})\le RG(m_1)$. (ii) If $G(d_1)\le 1$, let $d_2^*(d_1):=1-\sqrt{\frac{(1-m_2)(1+d_1)(1-d_1)}{m_1-d_1}}$. Note $\frac{\partial K(d_1,d_2)}{\partial d_2}\le 0$ when $d_2\ge d_2^*(d_1)$ and $\frac{\partial K(d_1,d_2)}{\partial d_2}\ge 0$ when $d_2\le d_2^*(d_1)$. Therefore $d_2^*(d_1)$ is the maximizer. With some algebra, we have 
\begin{equation}
    K(d_1,d_2^*)=(\sqrt{\frac{(m_1-d_1)(1+d_1)}{1-d_1}}-\sqrt{1-m_2})^2
\end{equation}
Let $L(d_1):= \frac{(m_1-d_1)(1+d_1)}{1-d_1}$. First note $L(d_1)=\frac{1+d_1}{G(d_1)}\sqrt{1-m_2}>\sqrt{1-m_2}$. Therefore it suffices to maximize $L(d_1)$ subject to $G(d_1)\le 1$.  Note $L'(d_1)=\frac{d_1^2-2d_1+2m_1-1}{(1-d_1)^2}$. Let $d_1^*:= 1-\sqrt{2(1-m_1)}$. Then $L'(d_1)\ge 0$ when $d_1\le d_1^*$ and $L'(d_1)\le 0$ when $ d_1^*\le d_1 \le m_1$. Note $G(d_1^*)=\sqrt{2(1-m_2)}$. Then if $m_2\le \frac{1}{2}$, then by the monotonicty of $G(\cdot)$ and $L(\cdot)$ (when $d_1\le d_1^*$), $d_1$ such that $G(d_1)=1$ is the maximizer. Then by (i), $K(d_1,d_2)\le RG(m_1)$. If $m_2> \frac{1}{2}$, then $d_1^*$ is the maximizer, and $d_2^*(d_1^*)=1-\sqrt{2(1-m_2)}$, then, with some algebra,  $K(d_1,d_2)\le 2(1-\sqrt{\frac{1-m_1}{2}}-\sqrt{\frac{1-m_2}{2}})^2$. Finally, compare $2(1-\sqrt{\frac{1-m_1}{2}}-\sqrt{\frac{1-m_2}{2}})^2$ with $RG(m_1)$ when $m_2>\frac{1}{2}$. When $m_2\ge 2(\sqrt{2}-1)$, $2(1-\sqrt{\frac{1-m_1}{2}}-\sqrt{\frac{1-m_2}{2}})^2\ge RG(m_1)$; otherwise $2(1-\sqrt{\frac{1-m_1}{2}}-\sqrt{\frac{1-m_2}{2}})^2 < RG(m_1)$. 
\end{proof}
To prove Theorem \ref{t7}, it suffices to show that the upper bounds identified in Lemma \ref{l6}, Lemma \ref{l7} and Lemma \ref{l8} are attainable. Note this is obvious when $m_2\le 2(\sqrt{2}-1)$, as we can just ignore agent 2. When  $m_2\ge 2(\sqrt{2}-1)$, by the proof of Lemma \ref{l8}, consider the dual variables $\lambda_1=\frac{d_1^*+d_2^*(d_1^*)}{1-d_1^*},\lambda_2=\frac{d_1^*+d_2^*(d_1^*)}{1-d_2^*(d_1^*)}$ and $\mu=-\frac{(d_1^*+d_2^*(d_1^*))(1-d_1^*d_2^*(d_1^*))}{(1-d_1^*)(1-d_2^*(d_1^*))}$. Note $\lambda_1m_1+\lambda_2m_2+\mu=2(1-\sqrt{\frac{1-m_1}{2}}-\sqrt{\frac{1-m_2}{2}})^2$. Then it suffices to show that the constructed dual variables satisfy  the constraints of (D) for any mechanism in the Theorem \ref{t7}. Note by the above argument, $\lambda_1 v_1+\lambda_2 v_2+\mu=0$ for  the value profiles $(d_1^*,1)$ and $(1,d_2^*(d_1^*))$. Then by linearity, any value profile in the line boundary satisfies $\lambda_1 v_1+\lambda_2 v_2+\mu=0$. Because $\lambda_1>0$ and $\lambda_2>0$,  we have $\lambda_1 v_1+\lambda_2 v_2+\mu< 0$ for any value profile below the provision boundary $\mathcal{B}$. Therefore the the constraints of (D) hold for any value profile below the provision boundary $\mathcal{B}$. Now consider any value profile $(v_1,v_2)$ above the provision boundary $\mathcal{B}$. Then $\lambda_1v_1+\lambda_2v_2+\mu-t(v)=\lambda_1v_1+\lambda_2v_2+\mu-b_1(v_1)-b_2(v_2)\le \lambda_1+\lambda_2+\mu-d_1^*-d_2^*(d_1^*)=0$ where $(v_1,b_2(v_1))$ and $(b_1(v_2),v_2)$ lie in the provision boundary $\mathcal{B}$. The first equality holds by $t(v)=b_2(v_1)+b_1(v_2)$, the inequality holds because $\lambda_1>0, \lambda_2>0$ and $b_1, b_2$ are non-increasing, and the last equality holds by our construction. 

\subsection{Proof of Theorem \ref{t9}}
(i): \textbf{N-agent Maxmin Excludable Public Good Mechanism} is a best response to \textbf{N-agent Independent Equal Revenue Distribution}. Note under N-agent Independent Equal Revenue Distribution,  $\Phi_i^E(v)=0$ if $\gamma_i\le v_i<1$ for any $i$ and  $\Phi_i^E(v)>0$ if $v_i=1$ and $v$ is in the support for any $i$. Then any feasible and monotone mechanism in which the public good is provided to agent $i$  with some positive probability if and only if $v_i > \gamma_i$ and the public good is provided with probability 1 to agent $i$ when $v_i=1$ is a best response for the principal. It is easy to see that N-agent  Maxmin Excludable Public Good Mechanism is such a mechanism.\\
(ii): \textbf{N-agent Independent Equal Revenue Distribution} is a best response to \textbf{N-agent Maxmin Excludable Public Good Mechanism}. We use the duality theory to show (ii). The primal and the dual are simple adaptions of those in the proof of Lemma \ref{l1}.   First note that N-agent Independent Equal Revenue Distribution is a legal joint distribution. And given the definition of $\gamma_i$, it satisfies all of the mean constraints. By the weighted virtual value representation, the value of the primal given N-agent Independent Equal Revenue Distribution and N-agent Maxmin Excludable Public Good Mechanism is simply $\sum_{i=1}^N\gamma_i$. Second, the constraints in the dual hold for all value profiles. To see this, we first construct the dual variables as follows: $\lambda_i=-\frac{1}{\ln{\gamma_i}}$ for any $i$ and $\mu=\sum_{i=1}^N\frac{\gamma_i}{\ln{\gamma_i}}$. Note that $t_i^{E*}(v)=-\frac{v_i-\gamma_i}{\ln{\gamma_i}}$ if $v_i\ge \gamma_i$ and $t_i^{E*}(v)=0$ if $v_i< \gamma_i$  by Proposition \ref{p1'}. Then for any value profile inside the support of N-agent Independent Equal Revenue Distribution, the constraints of the dual hold with equality (complementary slackness). For any value profile outside the support of N-agent Independent Equal Revenue Distribution, the constraints also hold because $\lambda_i v_i+\frac{\gamma_i}{\ln{\gamma_i}}< 0$ if $v_i<\gamma_i$.   Finally, the value of the dual given the constructed dual variables is $\sum_{i=1}^N\lambda_i m_i +\mu$, which, by some algebra, is equal to $\sum_{i=1}^N\gamma_i$. By the linear programming duality theory, (ii) holds and the revenue guarantee is $\sum_{i=1}^N\gamma_i$.

\newpage

\bibliographystyle{apalike}
\bibliography{abc}

\end{document}